%
\documentclass[11pt]{article}
\usepackage{geometry}
\usepackage{tikz,graphicx}
\usepackage{amssymb,amsfonts,amsmath,amsthm,url}
\geometry{margin=2cm}

\newtheorem{theorem}{Theorem}
\newtheorem{lemma}{Lemma}
\newtheorem{corollary}{Corollary}
\newtheorem{proposition}{Proposition}
\newtheorem{definition}{Definition}

\input{Preamble.tex}

\begin{document}

\title{$r$-Simple $k$-Path and Related Problems Parameterized by $k/r$\thanks{A preliminary version of this paper appeared in the proceedings of 30th ACM-SIAM Symposium on Discrete Algorithms (SODA 2019). Research of Gutin was partially supported by Royal Society Wolfson Research Merit Award and Leverhulme Trust grant no.~RPG-2018-161. Research of Zehavi was partially supported by Israel Science Foundation (ISF) grant no.~1176/18.}}

\author{Gregory Gutin$^{1}${ }
Magnus Wahlstr{\"o}m$^{1}${ } Meirav Zehavi$^{2}$
\\$^{1}$ Royal Holloway, University of London, Egham, United Kingdom\\ g.gutin@rhul.ac.uk,  magnus.wahlstrom@rhul.ac.uk\\
\\$^{2}$ Ben-Gurion University of the Negev\\
meiravze@bgu.ac.il}

\maketitle 

\begin{abstract}
Abasi et al.~(2014) introduced the following two problems. In the {\sc $r$-Simple $k$-Path} problem, given a digraph~$G$ on $n$~vertices and positive integers $r,k$, decide  whether $G$ has an $r$-simple $k$-path, which is a walk where every vertex occurs at most $r$ times and the total number of vertex occurrences is $k$. In the {\sc $(r,k)$-Monomial Detection} problem, given an arithmetic circuit that succinctly encodes some polynomial $P$ on $n$ variables and positive integers $k,r$, decide whether $P$ has a monomial of total degree $k$ where the degree of each variable is at most~$r$. Abasi et al.~obtained randomized algorithms of running time $4^{(k/r)\log r}\cdot n^{\OO(1)}$ for both problems. Gabizon et al.~(2015) designed  deterministic  $2^{\OO((k/r)\log r)}\cdot n^{\OO(1)}$-time algorithms for both problems (however, for the {\sc $(r,k)$-Monomial Detection} problem the input circuit is restricted to be non-canceling). Gabizon et al.~also  studied the following problem. In the {\sc $p$-Set $(r,q)$-Packing} problem, given a universe $V$, positive integers $p,q,r$, and a collection $\cal H$ of sets of size $p$ whose elements belong to $V$, decide whether there exists a subcollection ${\cal H}'$ of $\cal H$ of size $q$ where each element occurs in at most $r$ sets of ${\cal H}'$. Gabizon et al.~obtained a deterministic $2^{O((pq/r)\log r)}\cdot n^{\OO(1)}$-time algorithm for {\sc $p$-Set $(r,q)$-Packing}.

The above results  prove that the three problems are {\em single-exponentially} fixed-parameter tractable (\FPT) parameterized by the product of 
{\em two} parameters, that is, $k/r$ and $\log r$, where $k=pq$ for {\sc $p$-Set $(r,q)$-Packing}. Abasi et al.~and Gabizon et al.~asked whether the $\log r$ factor in the exponent can be avoided. Bonamy et al.~(2017) answered the question for {\sc $(r,k)$-Monomial Detection} by proving that  unless the Exponential Time Hypothesis (\ETH) fails there is no $2^{o((k/r)\log r)}\cdot (n+\log k)^{\OO(1)}$-time algorithm for {\sc $(r,k)$-Monomial Detection}, i.e., {\sc $(r,k)$-Monomial Detection} is unlikely to be single-exponentially \FPT\ when parameterized by $k/r$ alone. The question remains open for {\sc $r$-Simple $k$-Path} and {\sc $p$-Set $(r,q)$-Packing}.

We consider the question from a wider perspective: are the above problems \FPT\ when parameterized by $k/r$ only, i.e., whether there exists a computable function $f$ such that the problems admit a $f(k/r)(n+\log k)^{\OO(1)}$-time algorithm? Since $r$ can be substantially larger than the input size, the algorithms of Abasi et al.~and Gabizon et al. do not even show that any of these three problems is in \XP\ parameterized by $k/r$ alone.
We resolve the wider question by  (a) obtaining a $2^{\OO((k/r)^2\log(k/r))}\cdot (n+\log k)^{\OO(1)}$-time algorithm for {\sc $r$-Simple $k$-Path} on digraphs and a $2^{\OO(k/r)}\cdot (n+\log k)^{\OO(1)}$-time algorithm for {\sc $r$-Simple $k$-Path} on undirected graphs (i.e., for undirected graphs we answer the original question in affirmative), (b) showing that {\sc $p$-Set $(r,q)$-Packing} is \FPT\ (in contrast, we prove that {\sc $p$-Multiset $(r,q)$-Packing} is \WO-hard), and (c) proving that {\sc $(r,k)$-Monomial Detection} is \paraH\ even if only two distinct variables are in polynomial $P$ and the circuit is non-canceling. For the special case of {\sc $(r,k)$-Monomial Detection} where $k$ is polynomially bounded by the input size (which is in \XP), we show \WO-hardness. Along the way to solve {\sc $p$-Set $(r,q)$-Packing}, we obtain a polynomial kernel for any fixed $p$, which resolves a question posed by Gabizon et al.~regarding the existence of polynomial kernels for problems with relaxed disjointness constraints. All our algorithms are deterministic.
\end{abstract}


\section{Introduction}\label{sec:intro}

Abasi et al.~\cite{DBLP:conf/mfcs/AbasiBGH14} introduced the following extension of the {\sc Directed $k$-Path} problem: 

\begin{defproblem}
{\diPathR}
{An $n$-vertex digraph $G$ and positive integers $k,r.$}
{Does $G$ have an $r$-simple $k$-path, that is, a walk where every vertex occurs at most $r$ times and the total number of vertex occurrences is $k$?}
\end{defproblem}

Note that in \diPathR\ $k,r$ can be substantially larger than $n$.\vspace{1mm}

At first glance, one may think that the time complexity of any algorithm for solving \diPathR\ is an increasing function in $r$. However, Abasi et al.~showed that this is not the case by designing a randomized algorithm of running time $4^{(k/r)\log r}\cdot n^{\OO(1)}$. Their algorithm was obtained by a simple reduction to the following problem:

\begin{defproblem}
{\monomDetR}
{An arithmetic circuit that succinctly encodes some $n$-variable polynomial $P$, and positive integers $k,r$.}
{Does $P$ have a monomial of total degree $k,$ where the degree of each variable is at most~$r$?}
\end{defproblem}\\
Abasi et al.~proved that  \monomDetR\ can be solved by a randomized algorithm with time complexity $4^{(k/r)\log r}\cdot n^{\OO(1)}.$ Gabizon et al.~\cite{DBLP:conf/esa/GabizonLP15} derandomized these two randomized algorithms, though at the expense of increasing the constant factor in the exponent and restricting the input of the  \monomDetR\ problem to non-canceling circuits.\footnote{Non-defined terms can be found in the next section.} Both algorithms of Gabizon et al.~run in time $2^{\OO((k/r)\log r)}\cdot n^{\OO(1)}.$

Gabizon et al.~\cite{DBLP:conf/esa/GabizonLP15} also studied the following problem:

\begin{defproblem}
{\setPackingR}
{An $n$-element universe $V$, positive integers $p,q,r$, and a collection $\cal H$ of sets of size $p$ whose elements belong to $V$.}
{Does there exist a subcollection ${\cal H}'$ of $\cal H$ of size $q$ where each element occurs in at most $r$ sets of ${\cal H}'$? (We will call ${\cal H}'$ an $r$-{\em relaxed parking}.)}
\end{defproblem}\\
Gabizon et al. designed an algorithm for \setPackingR\ of running time $2^{\OO((k/r)\log r)}\cdot n^{\OO(1)},$ where $k=pq.$ In other words, the above results show that the three problems are {\em single-exponentially} fixed-parameter tractable (\FPT) when parameterized by the product of {\em two} parameters,  $k/r$ and $\log r$.

The motivation behind the relaxation of disjointness constraints is to enable finding {\em substantially better (larger)} solutions at the expense of allowing elements to be used multiple (but bounded by $r$) times. For example, for any choice of $k,r$, Abasi et al.~\cite{DBLP:conf/mfcs/AbasiBGH14} presented digraphs that have at least one $r$-simple $k$-path but do not have even a single (simple) path on $4\log_rk$ vertices. Thus, even if we allow each vertex to be visited at most twice rather than once, already we can gain an {\em exponential} increase in the size of the output solution. The same result holds also for undirected graphs.\footnote{\undiPathR\ can be viewed as the special case of \diPathR\ where every pair of vertices has either no arc or arcs in both directions.} In addition, Abasi et al.~\cite{DBLP:conf/mfcs/AbasiBGH14} showed that the relaxation does not make the problem easy: both \undiPathR\ and \diPathR\ are shown to be \NP-hard with $k=(2r-1)p+2$ and $n=2p$ vertices. From this, we observe that \NP-hardness holds for a wide variety of choices of $r$, ranging for $r$ being any fixed constant to $r$ being super-exponential in $n$ (e.g., $r=2^{n^{c}}$ for any fixed constant $c\geq 1$). In addition, \NP-hardness holds when $k/r=k$ as well as when $k/r=\OO(\log^{1/c} k)$ for any fixed constant $c\geq 1$.

As an open problem, both Abasi et al.~and Gabizon et al.~asked whether it is possible to avoid an exponential dependency on $\log r$. In other words, they asked whether the above problems are single-exponentially \FPT\ when parameterized by $k/r$ alone.\footnote{The interpretation of $k/r$ is a tight lower bound on the number of distinct elements any solution must use.}  To answer this question for \monomDetR,
Bonamy et al.~\cite{DBLP:conf/esa/BonamyKPSW17} proved that the running time of the algorithms of Abasi et al.~\cite{DBLP:conf/mfcs/AbasiBGH14} and of Gabizon et al.~\cite{DBLP:conf/esa/GabizonLP15} for  \monomDetR\ are optimal under the Exponential Time Hypothesis (\ETH) in the following sense. Unless \ETH\ fails there is no $2^{o((k/r)\log r)}\cdot (n+\log k)^{\OO(1)}$-time algorithm for \monomDetR\ even if $r=\Theta(k^{\sigma})$ for any $\sigma \in [0,1).$ The question remains open for \diPathR\ and \setPackingR.

We consider the question from a wider perspective of parameterized complexity: are the above problems \FPT\ when parameterized by $k/r$ only, i.e., whether there exists a computable function $f$ such that the problems admit a $f(k/r)(n+\log k)^{\OO(1)}$-time algorithm? 


 
 Note that for \setPackingR, $r\le m$ and thus the above algorithm by Gabizon et al. shows that the problem is in XP. However, for \diPathR\  
the above algorithms by Abasi et al.~and Gabizon et al.~are not even \XP-algorithms in the parameter $k/r$ because $r$ (encoded in binary) can be much larger than the size of the problem instance under consideration. In particular, even when $k/r=1$, these algorithms can run in time exponential in the input size. In addition, note that all three problems are easily seen to be \FPT\ when parameterized by $k/r$ and $r$ simultaneously, since algorithms that run in time $2^{\OO(k)}n^{\OO(1)}$ immediately follow by simple modifications of known algorithms for the corresponding non-relaxed versions. When $r$ is large enough, the running times of $2^{\OO((k/r)\log r)}\cdot n^{\OO(1)}$ of the algorithms by Abasi et al.~and Gabizon et al.~are superior. Here, the $\log r$ factor in the exponent naturally arises, and seems to be perhaps unavoidable. To see this, first consider the very special case where the input contains only $\OO(k/r)$ distinct elements. Then, we can store {\em counters} that keep track of how many times each element is used. Our array of counters would have $2^{\OO((k/r)\log r)}$ possible configurations, hence a running time of $2^{\OO((k/r)\log r)}\cdot n^{\OO(1)}$ is trivial. However, counters are completely prohibited when dependence on $r$ is forbidden, which already renders this extreme special case  non-obvious. In fact, a running time of $f(k/r)\cdot (n+\log k)^{\OO(1)}$ not only disallows using such an array of counters,  but it forbids the usage of {\em even a single counter}. Thus, in advance, it might seem more natural to vote for \WO-hardness over \FPT\ for all three problems with respect to $k/r$.

\paragraph{Our Contribution.} We resolve the parameterized complexity of all three problems, namely \diPathR, \setPackingR\ and \monomDetR,  with respect to the parameter $k/r$. Our main contribution consists of a $2^{\OO((k/r)^2\log(k/r))}\cdot (n+\log k)^{\OO(1)}$-time algorithm for \diPathR\ and a $2^{\OO(k/r)}\cdot (n+\log k)^{\OO(1)}$-time algorithm for \undiPathR.\footnote{Recall that $n$ is the number of vertices in the input (di)graph.} For \undiPathR, this answers the question posed by Abasi et al.~\cite{DBLP:conf/mfcs/AbasiBGH14} and Gabizon et al.~\cite{DBLP:conf/esa/GabizonLP15}, and reiterated by Bonamy et al.~\cite{DBLP:conf/esa/BonamyKPSW17} and Socala~\cite{Socala2017}. (As also noted in previous works, it is easily seen that even when $k$ is polynomial in $n$, none of the three problems can be solved in time $2^{o(k/r)}\cdot n^{\OO(1)}$ unless the \ETH\ fails.)
In addition, we show that \setPackingR\ is \FPT\ based on the representative sets method in parameterized algorithmics. Along the way to design this algorithm, we obtain a polynomial kernel for any fixed $p$, which resolves another question posed by Gabizon et al.~regarding the existence of polynomial kernels for problems with relaxed disjointness constraints whose sizes are decreasing functions of $r$. We remark that all of our algorithms are deterministic, and are based on ideas completely different from those of Abasi et al.~\cite{DBLP:conf/mfcs/AbasiBGH14} and of Gabizon et al.~\cite{DBLP:conf/esa/GabizonLP15}. 

Next, we introduce the following extension of \setPackingR\ to multisets:

\begin{defproblem}
{\multisetPackingR}
{An $n$-element universe $V$, positive integers $p,q,r$, and a collection $\cal H$ of mutisets of size~$p$ whose elements belong to $V$.}
{Does there exist a subcollection ${\cal H}'$ of $\cal H$ of size $q$ where no element of~$V$ has more than $r$ occurrences in total (i.e., if a multiset $H$ in ${\cal H}'$ contains $t$ copies of element $v\in V$, all other multisets of ${\cal H}'$  can have at most $r-t$ occurrences of $v$ in total)? (We will call ${\cal H}'$ an $r$-{\em relaxed parking}.)}
\end{defproblem}\\
We prove that  \multisetPackingR\ parameterized by $k/r$ is \WOH. Using this result, we also prove that \monomDetR\ parameterized by $k/r$ is \WOH\ even if {\em (i)} $k$ is polynomially bounded in the input length, {\em (ii)} the number of distinct variables is $k/r$, and {\em (iii)} the circuit is non-canceling. Moreover, we show that \monomDetR\ is \paraH\ even if the input polynomial has only two variables
and the circuit is non-canceling.

The most technical parts of the paper deal with the \diPathR\ and \undiPathR\ problems. We prove that \diPathR\ can be solved in time $2^{\OO((k/r)^2\log(k/r))}\cdot (n+\log k)^{\OO(1)}$ and polynomial space using a chain of reductions from \diPathR\ that includes three auxiliary problems. The first 
 of these problems is the \diPathRst\ problem, where we are given a strongly connected digraph $G$, positive integers $k,r$, and vertices $s,t\in V(G)$. The objective is to either {\em (i)} determine that $G$ has an $r$-simple $k$-path between any pair of vertices or {\em (ii)} output the largest integer $i\le k$ such that $G$ has an $r$-simple $(s,t)$-path of size $i$. It is not hard to see that we may assume that $G$ has neither a path of size at least $2k/r$ nor a cycle of length at least $k/r$. The key result on \diPathRst\  is that under the assumption above, there is always, as a solution, an $r$-simple path with fewer than $30(k/r)^2$ distinct arcs.\footnote{In addition, we show that this bound is essentially tight.} For reductions using the other two problems we apply several parameterized algorithms approaches (including color coding and integer linear programming  parameterized by the number of variables) and new structural insights. Here, we often alternate between the view of the solution as an $r$-simple $k$-path and the view of the solution as an Eulerian digraph with degree constraints.

Our proof that \undiPathR\ can be solved in time $2^{\OO\left(\frac{k}{r}\right)}(n+\log k)^{\OO(1)}$, initially uses an approach similar to that applied for \diPathR. Using the fact that the input graph is undirected, we are able to show that the $30(k/r)^2$ bound above can be improved to $30(k/r)$. However, this result in itself is only sufficient to show the existence of an $2^{\OO\left(\frac{k}{r}\log(\frac{k}{r})\right)}(n+\log k)^{\OO(1)}$-time algorithm for \undiPathR\ using the reductions applied for \diPathR. Thus, we have to take a different route based on a deeper understanding of the structure of the solution. Our approach is partially inspired by an idea from the recent work of Berger et al.~\cite{DBLP:journals/corr/abs-1804-06361} and involves a special decomposition of the multigraph induced by a solution for \undiPathR\ into two multigraphs. In our case, one of the multigraphs, $H$, has treewidth at most 2, and all vertices of $H$ are of even degree and different color (in a special coloring), i.e.,~$H$ is {\em colorful}. The second multigraph corresponds to an $r$-simple path $W$ which visits each component of $H$ (which ensures the connectivity of the generated solution), and vertices of the same color are visited by $W$ in total a prescribed number of times. The existence of the decomposition above is verified by a two-level dynamic programming algorithm. This algorithm is followed by a way to bound $r$. Here, we identify that when $r$ is large enough compared to $k$, then the vertex cover number of the graph can be bounded. The decomposition is modified accordingly to enable the use of a flow network to handle its second multigraph. 

\paragraph{Related Work.}  Agrawal et al.~\cite{DBLP:conf/icalp/AgrawalLMMS16} showed the power of relaxed disjointness conditions in the context of a problem that otherwise admits no polynomial kernel. Specifically, Agrawal et al.~studied the {\sc Disjoint Cycle Packing} problem: given a graph $G$ and integer $k$, decide whether $G$ has $k$ vertex-disjoint cycles. It is known that this problem does not admit a polynomial kernel unless \NP\ $\subseteq$ \textsf{coNP/poly}~\cite{DBLP:journals/tcs/BodlaenderTY11}. The main result by Agrawal et al.~concerns a relaxation of {\sc Disjoint Cycle Packing} where every vertex can belong to at most $r$ cycles (rather than at most one cycle). Agrawal et al.~showed that this relaxation reveals a spectrum of upper and lower bounds. In particular, they obtained a (non-polynomial) kernel of size $\OO(2^{(k/r)^2}
k^{7+(k/r)} \log^3 k)$ when $(k/r) = o(\sqrt{k})$. Note that the size of the kernel depends on $k$.

Prior to the work by Gabizon et al.~\cite{DBLP:conf/esa/GabizonLP15}, packing problems with relaxed disjointness conditions have already been considered from the viewpoint of parameterized complexity (see, e.g., \cite{DBLP:journals/ijfcs/Lopez-OrtizPR18,DBLP:journals/toct/FernauLR15,DBLP:conf/csr/RomeroL14,DBLP:conf/walcom/RomeroL14}). Roughly speaking, these papers do not exhibit behaviors where relaxed disjointness conditions substantially (or at all) simplify the problem at hand, but rather provide parameterized algorithms and kernels with respect to $k$. Here, the work most relevant to us is that by Fernau et al.~\cite{DBLP:journals/toct/FernauLR15}, who studied the \setPackingR\ problem. In particular, for any $r\geq 1$, Fernau et al.~proved that several very restricted versions of \setPackingR\ with $p=3$ are already \NP-hard. Moreover, they obtained a kernel with $\OO((p+1)^pk^p)$ vertices.

In addition, we note that Gabizon et al.~\cite{DBLP:conf/esa/GabizonLP15} also studied the {\sc Degree-Bounded Spanning Tree} problem: given a graph $G$ and an integer $d$, decide whether $G$ has a spanning tree of maximum degree at most $d$. This problem demonstrates a limitation of the derandomization of Gabizon et al.~as the arithmetic circuit required is not non-canceling. Thus, only a randomized $2^{\OO((n/d)\log d)}$-time algorithm was obtained and designing a deterministic algorithm of such a running time remains an open problem. 

Finally, let us remark that {\sc $k$-Path} (on both directed and undirected graph) and {\sc $p$-Set $q$-Packing} are both among the most extensively studied problems in Parameterized Complexity. In particular, after a long sequence of works during the past three decades, the current best known parameterized algorithms for {\sc $k$-Path} have running times $1.657^kn^{\OO(1)}$ (randomized, undirected only)~\cite{DBLP:journals/jcss/BjorklundHKK17,DBLP:journals/siamcomp/Bjorklund14} (extended in \cite{DBLP:journals/siamdm/BjorklundKKZ17}), $2^kn^{\OO(1)}$ (randomized)~\cite{DBLP:journals/ipl/Williams09} and $2.597^kn^{\OO(1)}$ (deterministic)~\cite{DBLP:conf/esa/Zehavi15,DBLP:journals/jacm/FominLPS16,DBLP:journals/jcss/ShachnaiZ16}. In addition, {\sc $k$-Path} is known not to admit any polynomial kernel unless \NP\ $\subseteq$ \textsf{coNP/poly}~\cite{DBLP:journals/jcss/BodlaenderDFH09}.

 This paper is organized as follows. The next section contains preliminaries. Section \ref{sec:directedPath} describes reductions leading to our main result for \diPathR. Our proof of the main result for \undiPathR\ is given in Section \ref{sec:undirectedPath}. We show that \setPackingR\ parameterized by $(k/r)$ is FPT in Section \ref{sec:packingFPT}. In Section \ref{sec:monomDet}, we prove that \monomDetR\ is para-\NP-hard. Our  \WO-hardness results for  \multisetPackingR\ and \monomDetR\ are shown in Section \ref{sec:W1}. The last section of the paper discusses some open problems.  

\section{Preliminaries}\label{sec:prelims}

Given a multiset $M$ and an element $e\in M$, $[i] e$ stands for $i$ copies of $e$. The {\em size} of a multiset $M=\{[i_1]e_1,\dots ,[i_p]e_p\}$ is $\sum_{j=1}^p i_j.$

\paragraph{Graph Terminology and Notation.} For a directed or undirected graph $G$, the vertex set of $G$ is denoted by $V(G)$. If $G$ is undirected, its edge set is denoted by $E(G)$, and if $G$ is directed, its arc set is denoted by $A(G)$. Given a subset $U\subseteq V(G)$, the subgraph of $G$ induced by $U$ is denoted by $G[U]$, and the subgraph of $G$ obtained by deleting the vertices in $U$ and the edges/arcs incident to them is denoted by $G-U$. Given a subset of edges/arcs $U$ in $G$, the subgraph of $G$ obtained by deleting the edges/arcs in $U$ is denoted by $G-U$. For a directed multigraph $G$ and a vertex $v\in V(G)$, the out-degree and in-degree of $v$ in $G$ are denoted by $d^+(v)$ and $d^-(v)$, respectively. 

A digraph $G$ is {\em strongly connected} if for any pair $u,v$ of distinct vertices, $G$ has a path from $u$ to $v$. 
The {\em underlying undirected graph} of a directed graph $G$ is an undirected graph $U(G)$ with the same vertex set and $uv\in E(U(G))$ if and only if either $uv\in A(G)$ or $vu\in A(G)$ (or both).  A digraph~$G$ is {\em weakly connected} if $U(G)$ is connected. The {\em weakly connected components} of a digraph~$G$ are subgraphs of $G$ induced by the vertex sets of connected components of $U(G).$ 
A {\em directed acyclic graph (DAG)} is a digraph with no directed cycle. For any positive integer $\ell\in\mathbb{N}$, an {\em $\ell$-colored (di)graph} is a vertex-colored (di)graph where each vertex is colored by exactly one color from $\{1,2,\ldots,\ell\}$. 

For an undirected graph $G$, a {\em vertex cover} of $G$ is a subset of vertices $U\subseteq V(G)$ such that every edge in $E(G)$ is incident to at least one vertex in $U$, and a {\em matching} in $G$ is a subset of edges $U\subseteq E(G)$ such that no two edges in $U$ have a common endpoint. A matching $U$ is {\em maximal} if there does not exist $e\in E(G)\setminus U$ such that $U\cup\{e\}$ is a matching. The {\em vertex cover number} of $G$ is the minimum size of a vertex cover of $G$. A {\em cactus} is a connected graph in which any two cycles have at most one vertex in common. 
For an undirected multigraph $G$ and a vertex $v\in V(G)$, the {\em degree} $d(v)$ of  $v$ is the number of edges incident to $v.$ The {\em underlying simple graph} $G$ of an undirected multigraph $H$ is obtained from $G$ by deleting all but one edge among every set of multiple edges.

\paragraph{Paths, Walks and Trails.}  For an undirected multigraph $G,$ a walk $W$ is an alternating sequence $v_1e_1v_2\dots e_{\ell-1}v_\ell$ such that $e_i$ is an edge between $v_i$ and $v_{i+1}$ for all $i\in\{1,2,\ldots,\ell-1\}$. For a directed multigraph $G,$ the definition of a walk is the same, but we require that $e_i$ is an arc from $v_i$ to $v_{i+1}.$ 
When $G$ is a graph, i.e., has no multiple edges/arcs, then $W$ will be denoted by $v_1-v_2-v_3-\ldots-v_\ell$.
For any $i\in\{1,2,\ldots,\ell\}$, $v_i$ is called a {\em vertex occurrence} or a {\em vertex visit}, and for any $i\in\{1,2,\ldots,\ell-1\}$, $\{v_{i-1},v_i\}$ (resp.~$(v_{i-1},v_i)$) is called an {\em edge occurrence} ({\em arc occurrence}) or an {\em edge visit} ({\em arc visit}), respectively. The {\em length} of a walk is the number of edges/arcs visits on the walk, that is, $\ell-1$, and the {\em size} of a walk is the number of vertex visits on the walk, that is, $\ell$. If the first and last vertex visits of a walk are equal, then the walk is said to be {\em closed}. For a walk $P$, the {\em multisets} of vertex visits and edge (arc) visits are denoted by $V(P)$ and $E(P)$ ($A(P)$), respectively.

An {\em $r$-simple path} is a walk where every vertex occurs at most $r$ times. Moreover, an {\em $r$-simple $k$-path} is an $r$-simple path of size $k$. Note that a $1$-simple path is just a path. A {\em cycle} is a closed walk where every vertex occurs once, except for the last and first vertex which occurs twice. Note that by this definition, the first and last vertex of a cycle are well defined. Given vertices $s,t\in V(G)$, an {\em $(s,t)$-path} is a path that starts at $s$ and ends at $t$. Similarly, an {\em $(s,t)$-cycle} is a cycle that starts at $s$ and ends at $t$, in which case $s=t$. To avoid writing some explanations twice, we refer to an $(s,s)$-cycle also as an $(s,s)$-path.
More generally, an {\em $r$-simple $(s,t)$-path} is an $r$-simple $k$-path that starts at $s$ and ends at $t$.

Given a directed or undirected multigraph $G$ and vertices $s,t\in V(G)$, a walk $W$ in $G$ is called an {\em Euler $(s,t)$-trail} if $W$ visits every edge/arc in $G$ exactly once, and starts at $s$ and ends at $t$.  A directed multigraph $G$ is {\em balanced} if $d^+(v)=d^-(v)$ for every vertex $v$ of $G.$ Let $s,t$ be distinct vertices of a directed multigraph $G$. Then $G$ is {\em $(s,t)$-almost balanced} if $d^+(v)=d^-(v)$ for every vertex $v\in V(G)\setminus \{s,t\},$ $d^+(s)=d^-(s)+1$ and $d^+(t)=d^-(t)-1.$
An undirected multigraph $G$ is called {\em even} if for every $v\in V(G)$, $d(v)$ is even.

\paragraph{Perfect Hash Families.} The construction of a perfect hash family is a basic tool to derandomize parameterized algorithms. Formally, perfect hash families are defined as follows.

\begin{definition}\label{def:hash}
Let $n,k\in\mathbb{N}$, $n\ge k.$ An {\em $(n,k)$-perfect hash family} $\cal F$ is a family of functions $f: \{1,2,\ldots,n\}\rightarrow\{1,2,\ldots,k\}$ such that for any subset $I\subseteq \{1,2,\ldots,n\}$ of size $k$, there exists a function in $\cal F$ that is injective on $I$.
\end{definition}

The following proposition asserts that small perfect hash families can be constructed efficiently.

\begin{theorem}[\cite{DBLP:journals/jacm/AlonYZ95,DBLP:journals/jcss/New18}]\label{prop:hashfam}
Let $n,k\in\mathbb{N}$. An $(n,k)$-perfect hash family of size $e^{k+o(k)}\log n$ can be constructed in $e^{k+o(k)}n\log n$ time. Moreover, the functions in the family can be enumerated with polynomial space and polynomial delay in $n$.
\end{theorem}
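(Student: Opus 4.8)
The plan is to obtain $\mathcal{F}$ as a composition of two much simpler families. First I would build a \emph{coarse} family $\mathcal{G}$ of functions $g\colon\{1,\dots,n\}\to\{1,\dots,m\}$ with $m=k^{\OO(1)}$ (say $m=k^{2}$) such that for every $k$-set $I\subseteq\{1,\dots,n\}$ some $g\in\mathcal{G}$ is injective on $I$; such ``splitter'' families of size $k^{\OO(1)}\log n$, constructible in time $k^{\OO(1)}\,n\log n$ and enumerable with polynomial delay and polynomial space (each $g$ being specified by $\OO(\log n+\log k)$ bits, e.g.\ a prime and a multiplier in a hashing-modulo-primes scheme), are classical. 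Second I would build a \emph{fine} family $\mathcal{F}_{0}$ of functions $h\colon\{1,\dots,m\}\to\{1,\dots,k\}$ such that every $k$-subset of $\{1,\dots,m\}$ is made injective by some $h\in\mathcal{F}_{0}$, and of size $e^{k+o(k)}$. Then $\mathcal{F}=\{\,h\circ g:\ g\in\mathcal{G},\ h\in\mathcal{F}_{0}\,\}$ works: given $I$, pick $g\in\mathcal{G}$ injective on $I$, observe that $g(I)$ is a $k$-subset of $\{1,\dots,m\}$, pick $h\in\mathcal{F}_{0}$ injective on $g(I)$, and conclude that $h\circ g$ is injective on $I$. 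Its size is $|\mathcal{G}|\cdot|\mathcal{F}_{0}|=e^{k+o(k)}\log n$, and enumerating it is just two nested loops over the (separately, efficiently enumerable) index sets of $\mathcal{G}$ and $\mathcal{F}_{0}$, so polynomial space and polynomial delay are inherited.

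The base $e$ comes entirely from the fine family. A uniformly random $h\colon\{1,\dots,m\}\to\{1,\dots,k\}$ is injective on a fixed $k$-set with probability $k!/k^{k}=(1+o(1))\sqrt{2\pi k}\,e^{-k}$ by Stirling's formula. Hence if we take $t=\bigl\lceil(k\ln m)\cdot k^{k}/k!\bigr\rceil=e^{k}\cdot k^{\OO(1)}=e^{k+o(k)}$ independent uniform functions $h_{1},\dots,h_{t}$, then for any fixed $k$-subset of $\{1,\dots,m\}$ the probability that none of $h_{1},\dots,h_{t}$ is injective on it is at most $(1-k!/k^{k})^{t}\le m^{-k}$, so by a union bound the probability that $\{h_{1},\dots,h_{t}\}$ fails to be perfect is at most $\binom{m}{k}m^{-k}<1$. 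This already gives a \emph{non-constructive} $\mathcal{F}_{0}$ of size $e^{k+o(k)}$.

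What remains, and what I expect to be the main obstacle, is making $\mathcal{F}_{0}$ constructive within the stated time bound: a naive derandomization of the union-bound argument by the method of conditional expectations would, at each of its steps, re-evaluate a quantity ranging over all $\binom{m}{k}=2^{\Theta(k\log k)}$ $k$-subsets of $\{1,\dots,m\}$, which is far too slow. The fix is to build $\mathcal{F}_{0}$ itself recursively by an iterated splitter: repeatedly split the current universe into a small number of blocks (e.g.\ $\sqrt{k}$) using a cheap splitter and perfect-hash each block into its own disjoint range of target colours, so that every brute-force subproblem one ever derandomizes lives on a universe in which only $o(k)$-size subsets need to be covered, whence each such search costs only $e^{o(k)}$; the recursion multiplies these costs over blocks and over its $\OO(\log\log k)$ levels in a way that keeps both the size and the running time at $e^{k+o(k)}$ (this is in essence the Naor--Schulman--Srinivasan construction and its subsequent tightening; cf.\ \cite{DBLP:journals/jacm/AlonYZ95}). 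Getting the bookkeeping of this recursion to simultaneously yield the optimal base $e$, the $e^{k+o(k)}\,n\log n$ construction time, and the per-member polynomial-time evaluation needed for polynomial-delay enumeration is the delicate part; the composition argument and the enumeration wrapper around it are then routine.
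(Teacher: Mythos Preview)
The paper does not prove this theorem at all: it is stated as a black-box tool from the literature, with a citation to \cite{DBLP:journals/jacm/AlonYZ95,DBLP:journals/jcss/New18} and no accompanying proof. So there is nothing in the paper to compare your proposal against.

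That said, your sketch is a faithful outline of how the cited constructions actually work: a two-stage composition of a coarse $(n,k)$-to-$k^{\OO(1)}$ splitter with a fine $(k^{\OO(1)},k)$-to-$k$ perfect hash family, with the $e^{k+o(k)}$ bound coming from the Stirling estimate on the injectivity probability and the derandomization carried out via the recursive Naor--Schulman--Srinivasan splitter machinery. Your identification of the derandomization of $\mathcal{F}_{0}$ as the only non-routine part is accurate, and your remark that the polynomial-space/polynomial-delay enumeration follows from the compact description of each function in the composed family is also correct; the latter point is precisely what the second cited reference \cite{DBLP:journals/jcss/New18} contributes on top of the original \cite{DBLP:journals/jacm/AlonYZ95} construction.
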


\paragraph{Treewidth.} Tree decompositions and treewidth are defined as follows.

\begin{definition}\label{def:tw} A {\em tree decomposition} of a graph $G$ is a pair $(T,\beta)$, where $T$ is a rooted tree and $\beta: V(T)\rightarrow 2^{V(G)}$ is a mapping that satisfies the following conditions.
\begin{enumerate}
\item\label{cond:tw1} For each vertex $v\in V(G)$, the set $\{x\in V(T): v\in\beta(x)\}$ induces a nonempty (connected) subtree of $T$.
\item\label{cond:tw2} For each edge $\{u,v\}\in E(G)$, there exists $x\in V(T)$ such that $\{u,v\}\subseteq\beta(x)$.
\end{enumerate}
The {\em width} of $(T,\beta)$ is $\max_{v\in V(T)}\{|\beta(v)|\}-1$. The {\em treewidth} of $G$ is the minimum width over all tree decompositions of $G$.
\end{definition}

The vertices of $T$ are called nodes. A set $\beta(x)$ for $x\in V(T)$ is called the {\em bag at $x$}. 

A {\em nice tree decomposition} is a tree decomposition of a form that simplifies the design of dynamic programming (DP) algorithms. 

\begin{definition}
A tree decomposition $(T,\beta)$ of a graph $G$ is {\em nice} if 
each node $x\in V(T)$ is of one of the following types.
\begin{itemize}
\item {\bf Leaf}: $x$ is a leaf in $T$ and $\beta(x)=\emptyset$.
\item {\bf Forget}: $x$ has exactly one child $y$, and there exists a vertex $v\in\beta(y)$ such that $\beta(x)=\beta(y)\setminus\{v\}$.
\item {\bf Introduce}: $x$ has exactly one child $y$, and there exists a vertex $v\in\beta(x)$ such that $\beta(x)\setminus\{v\}=\beta(y)$.
\item {\bf Join}: $x$ has exactly two children, $y$ and $z$, and $\beta(x)=\beta(y)=\beta(z)$.
\end{itemize}
\end{definition}

It is well-known that a graph $G$ of treewidth $\tw$ admits a nice tree decomposition of width $\tw$ (see, e.g., \cite{DBLP:books/sp/Kloks94,DBLP:books/sp/CyganFKLMPPS15}). 

\begin{theorem}[\cite{DBLP:books/sp/Kloks94}]\label{prop:niceTreeDecomp}
Let $G$ be a graph of treewidth $\tw$. Then, $G$ admits a nice tree decomposition of width $\tw$.
\end{theorem}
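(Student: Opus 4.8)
The plan is to begin with an arbitrary tree decomposition $(T,\beta)$ of $G$ of width $\tw$ --- one exists by the definition of treewidth --- and transform it into a nice one by a sequence of local surgeries, each leaving the maximum bag size unchanged. First I would root $T$ at an arbitrary node and contract every edge $xy$ (with $y$ the child) for which $\beta(x)=\beta(y)$, lifting $y$'s children up to $x$; this removes all ``redundant'' single-child steps. Then I would apply, repeatedly, three kinds of modifications. (1) \emph{Degree reduction:} while some node $x$ has children $y_1,\dots,y_d$ with $d\ge 3$, replace the edges from $x$ to its children by a binary tree rooted at $x$, all of whose internal nodes carry the bag $\beta(x)$, with $y_1,\dots,y_d$ as its leaves; afterwards every node has at most two children. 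Moreover, whenever a node $x$ ends up with two children whose bags are not both equal to $\beta(x)$, insert between $x$ and each offending child a new node carrying $\beta(x)$, so that $x$ becomes a genuine join node. (2) \emph{Expanding single-child steps:} for a node $x$ with a unique child $y$ (so $\beta(x)\ne\beta(y)$ after the contraction), replace the edge $xy$ by a path that first forgets the vertices of $\beta(y)\setminus\beta(x)$ one at a time (shrinking the bag down to $\beta(x)\cap\beta(y)$) and then introduces the vertices of $\beta(x)\setminus\beta(y)$ one at a time. (3) \emph{Fixing leaves and the root:} if a leaf $x$ has $\beta(x)\ne\emptyset$, hang below it a path of introduce nodes building $\beta(x)$ up from $\emptyset$, whose bottom node is a fresh empty leaf; symmetrically, if the root has a nonempty bag, stack above it a path of forget nodes peeling the bag down to $\emptyset$, the topmost new node becoming the root.

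After these surgeries every node falls into one of the four prescribed types (leaf, forget, introduce, join), so it remains to verify that the result is still a tree decomposition of $G$ of the same width. The width is preserved because every bag created in steps (1)--(3) is a subset of some original bag --- in step (2), each intermediate bag is a subset of $\beta(y)$ or of $\beta(x)$ --- hence has at most $\tw+1$ vertices. The covering condition survives trivially, since every original node, and hence every original bag, is still present. The genuinely delicate point is the connectivity condition: for each $v\in V(G)$ the set of nodes whose bag contains $v$ must remain a connected subtree of the new $T$. I would check this surgery by surgery. The initial contraction merges two nodes with identical bags and so preserves every subtree. In step (1), when $v\in\beta(x)$ a connected star of $v$-bags is replaced by a connected binary tree of $v$-bags (and the inserted copies with bag $\beta(x)$ extend $v$'s subtree contiguously); when $v\notin\beta(x)$ nothing changes for $v$. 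In step (2), $v$ occupies a contiguous initial segment of the inserted path when $v\in\beta(y)$ and a contiguous final segment when $v\in\beta(x)$; if $v$ lies in both, then since $v\in\beta(x)\cap\beta(y)$ it occupies the entire path, which therefore stays attached to the subtrees at $y$ and at $x$. Step (3) only appends contiguous segments at the bottom or at the top.

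I expect this connectivity bookkeeping, carried out uniformly over the four kinds of surgery, to be the main (though routine) obstacle. Since degree reduction is performed once and at most doubles the number of nodes, and each of steps (2)--(3) adds $\OO(\tw)$ nodes per original edge plus a constant number of attachments, the process terminates and yields a nice tree decomposition of $G$ of width $\tw$, of size polynomial in $|V(T)|$ and $\tw$.
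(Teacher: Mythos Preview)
The paper does not actually prove this statement: it is quoted from Kloks' monograph and used as a black box, with no argument given. Your proposal is the standard construction (essentially the one in Kloks and in the Cygan et al.\ textbook cited elsewhere in the paper) and is correct. The only cosmetic remark is that the paper's definition of ``nice'' does not require the root bag to be empty, so your final peeling of the root down to $\emptyset$ is harmless but unnecessary here; the rest --- binary-tree expansion at high-degree nodes, insertion of join copies, and replacing each parent--child edge by a forget-then-introduce path --- is exactly what is needed, and your connectivity bookkeeping is the right way to verify that the result remains a tree decomposition of width~$\tw$.
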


\paragraph{Integer Linear Programming (ILP)} 
The {\sc Feasibility Linear Programming} problem ({\sc Feasibility LP}) is given by a set $X$ of variables and a system of linear equations and inequalities with real-valued coefficients and variables from $X$, and the aim is  to decide whether all the linear equations and inequalities (called {\em linear constraints}) can be satisfied by an assignment $\alpha$ of non-negative reals to variables in $X$. If only integral values are allowed in $\alpha$, then the problem is called the {\sc Feasibility integer Linear Programming} problem ({\sc Feasibility ILP}).  The {\sc Linear Programming} problem ({\sc LP}) is given by a set $X$ of variables, a system of linear constraints with real-valued coefficients and variables from $X$ and a linear function $z$ with real-valued coefficients and variables from $X$, and the aim is  to find an assignment $\alpha$ of non-negative reals to variables in $X$ that satisfies all linear constraints and minimizes/maximizes $z$ over all such (feasible) assignments. If only integral values are allowed in $\alpha$, then the problem is called the {\sc Integer Linear Programming} problem ({\sc ILP}). The {\em cost vector} $c$ is the vector of coefficients of the variables in $X$ in the function $z$.

The following well-known result (cf.~Section 6.2 in \cite{DBLP:books/sp/CyganFKLMPPS15}) will be used in this paper.

\begin{theorem}[\cite{DBLP:journals/mor/Lenstra83,DBLP:journals/mor/Kannan87,DBLP:journals/combinatorica/FrankT87}]\label{prop:ILP}
{\sc ILP}  ({\sc Feasibility ILP}, resp.) of size $L$ with $p$ variables can be solved using
\[\OO(p^{2.5p+o(p)}\cdot(L+\log M_x)\cdot\log(M_xM_c)) \mbox{    } (\OO(p^{2.5p+o(p)}L))  \]
arithmetic operations and space polynomial in $L+\log M_x$ (in $L$), respectively.
Here $M_x$ is an upper bound on the absolute value a variable can take in a solution, and $M_c$ is the largest absolute value of a coefficient in the cost vector $c$.
\end{theorem}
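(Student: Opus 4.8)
This is a classical result, and the plan is simply to reconstruct the algorithm behind it: Lenstra's recursive integer-programming procedure, sped up by Kannan's analysis and stripped of a coefficient blow-up by the Frank--Tardos rounding theorem. I would first reduce the optimization problem to the feasibility problem and then solve the latter.

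\emph{Optimization $\to$ feasibility.} In any solution with $|x_i|\le M_x$ the objective value $z$ lies in $[-pM_xM_c,\,pM_xM_c]$, so I would binary-search for the optimum over the integers in this range, which needs $O(\log(M_xM_c)+\log p)$ calls to {\sc Feasibility ILP}, each on the original system augmented with the bounds $|x_i|\le M_x$ and one threshold constraint $z\ge\theta$ (or $z\le\theta$), i.e.\ an instance of size $O(L+\log M_x+\log M_c)$. Plugging in the feasibility bound proved below yields the stated running time for the optimization version, and the current threshold is the only extra datum that must be stored, giving the claimed space bound.

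\emph{Feasibility, by induction on the number of variables $p$.} Solve the LP relaxation; if it is infeasible, output \no. Otherwise I obtain a polytope $P\subseteq\mathbb{R}^p$, which I may take to be bounded after intersecting with $\{|x_i|\le M_x\}$ (harmless, by the meaning of $M_x$). The engine is the \emph{flatness theorem}: there is a function $g(p)=p^{O(1)}$ such that any convex body in $\mathbb{R}^p$ avoiding $\mathbb{Z}^p$ has lattice width $\le g(p)$, i.e.\ there is a primitive $c\in\mathbb{Z}^p$ with $\max_{x\in P}c^\top x-\min_{x\in P}c^\top x\le g(p)$. Algorithmically I would compute a polynomially-good ellipsoidal approximation of $P$, map its ellipsoid to a ball by a linear change of coordinates, run LLL on the image of $\mathbb{Z}^p$, and use the reduced basis either to round the ball's centre to a lattice point lying in $P$ — then output \yes — or, failing that, to read off a short vector of the dual lattice and hence a thin direction $c$ as above. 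Given $c$, every integer point of $P$ satisfies $c^\top x=t$ for one of the at most $g(p)+1$ integers $t\in[\lceil\min_P c^\top x\rceil,\lfloor\max_P c^\top x\rfloor]$. For each such $t$, pick a unimodular $U$ whose first row is $c^\top$, substitute the new coordinates $y=Ux$ together with $y_1=t$, and recurse on the resulting $(p-1)$-variable {\sc Feasibility ILP}; output \yes\ iff some branch does. The recursion has depth $p$ and branching $\le g(p)+1$, so the number of leaves is $\prod_{i=1}^{p}(g(i)+1)$; the sharpest attainable bookkeeping of the flatness constant together with the cost of the lattice reduction at each node — this is exactly Kannan's contribution — brings the leaf count, and hence the arithmetic cost, to $p^{2.5p+o(p)}$ times a polynomial in the instance size. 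Running the recursion depth-first gives polynomial space.

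\emph{The main obstacle.} The delicate point is bit-complexity: naively, each of the $p$ recursion levels multiplies the coefficient sizes by a polynomial factor, which would replace the clean ``$\cdot(L+\log M_x)$'' by something like ``$\cdot(L+\log M_x)^{p}$''. This is precisely what the Frank--Tardos theorem fixes: it replaces any integer constraint vector by one of bit-length $\mathrm{poly}(p)$ that has the same sign on every integer vector of small $\ell_1$-norm, hence preserves feasibility; applying it before each recursive call keeps all numbers of size $\mathrm{poly}(p)\cdot(L+\log M_x)$ throughout, so the total cost is (number of leaves)$\,\times\,\mathrm{poly}$, as claimed. Carrying the flatness constant honestly through this bit-size analysis, and making the ellipsoid and LLL steps provably polynomial-time, is the bulk of the work in the cited papers.
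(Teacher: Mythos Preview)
The paper does not prove this theorem; it is stated in the preliminaries as a known black-box result with citations to Lenstra, Kannan, and Frank--Tardos (and a pointer to \cite{DBLP:books/sp/CyganFKLMPPS15}). Your proposal is a reasonable high-level reconstruction of exactly those ingredients---Lenstra's recursive flatness-based branching, Kannan's sharpened analysis giving the $p^{2.5p+o(p)}$ leaf count, and Frank--Tardos rounding to control bit growth---so there is nothing to compare against in this paper, and your sketch is in line with the cited sources.
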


\paragraph{Flow Networks.} 
A {\em flow network} is a digraph $N=(V,A)$ with two special vertices $s$ and $t$ called a {\em source} and {\em sink}, respectively, and two functions $u:\ A\rightarrow \mathbb{R}_{\ge 0}$ and $c:\ A\rightarrow \mathbb{R}_{\ge 0}$. For an arc $a\in A$, $u(a)$ and $c(a)$ are  the {\em upper capacity} and {\em cost} of $a.$ A {\em flow} in $N$ is a function $f: A\rightarrow \mathbb{R}_{\ge 0}$ such that $f(a)\le u(a)$ for every $a\in A$ and $\sum_{xv\in A} f(xv)=\sum_{vy\in A} f(vy)$ for every $v\in V\setminus \{s,t\}$. The {\em value} of $f$ is $\sum_{sy\in A} f(sy)$ and the {\em cost} of~$f$ is $\sum_{a\in A} c(a)f(a).$ It is well-known that  $\sum_{sy\in A} f(sy)=-\sum_{xt\in A} f(xt)$ \cite{DBLP:books/JBJGG}. A flow $f$ is {\em integral} if $f(a)$ is an integer for every $a\in A$.

\paragraph{Arithmetic Circuits.} 
Let $M$ be a monomial in a polynomial $P$. The {\em degree} of $M$ is the sum of degrees of the variables of $M$. An arithmetic circuit $C$ over the field $\mathbb{F}$ and the set $X$ of 
variables is a DAG $D$ as follows. Every vertex in $D$ with in-degree zero is called an {\em input gate} and is labeled by either a variable $x\in X$ or a field element in $\mathbb{F}.$ Every other gate is labeled by either $+$ or $\times$ and called a {\em sum gate} and a {\em product gate}, respectively. The {\em size} of $C$ is the number of gates in $C$, and the {\em depth} of $C$ is the length of the longest directed path in $C$. A circuit $C$ computes a polynomial $P$ in the following natural way. An input gate computes the polynomial it is labeled by. A sum (product) gate $v$ computes the sum (product), respectively, of the polynomials computed by its in-neighbors in $D$. 
A circuit is called  {\em non-cancelling}, if its input gates are labelled only by variables (no labelling by field elements).

\paragraph{Parameterized Complexity.} 
A parameterized problem $\Pi$ can be considered as a set of pairs
$(I,k)$ where $I$ is the \emph{problem instance} and $k$ (usually a nonnegative
integer) is the \emph{parameter}.  $\Pi$ is called
\emph{fixed-parameter tractable (FPT)} if membership of $(I,k)$ in
$\Pi$ can be decided by an algorithm of runtime $\OO(f(k)|I|^c)$, where $|I|$ is the size
of $I$, $f(k)$ is a computable function of the
parameter $k$ only, and $c$ is a constant
independent from $k$ and $I$. Such an algorithm is called an {\em FPT} algorithm.
Let $\Pi$ and $\Pi'$ be parameterized
problems with parameters $k$ and $k'$, respectively. An
\emph{FPT-reduction~$R$ from $\Pi$ to $\Pi'$} is a many-to-one
transformation from $\Pi$ to $\Pi'$, 
mapping each instance $(I,k)$ to an output $(I',k')$
such that (i) $(I,k)\in \Pi$ if
and only if $(I',k')\in \Pi',$ (ii) $k'\le g(k)$ for a fixed
computable function $g$, and (iii) $R$ is of complexity
$\OO(f(k)|I|^c)$.

When the decision time is replaced by the much more powerful $|I|^{\OO(f(k))},$
we obtain the class XP, where each problem is polynomial-time solvable
for any fixed value of $k.$ There is a number of parameterized complexity
classes between FPT and XP (for each integer $t\ge 1$, there is a class W[$t$]) and they form the following tower:
$$FPT \subseteq W[1] \subseteq W[2] \subseteq \dots \subseteq XP.$$
For the definition of classes W[$t$],
see, e.g., \cite{DBLP:series/txcs/DowneyF13,DBLP:books/sp/CyganFKLMPPS15}. Due to a number of results obtained, 
it is widely believed that FPT$\neq$W[1], i.e., no W[1]-hard problem admits an FPT algorithm \cite{DBLP:series/txcs/DowneyF13,DBLP:books/sp/CyganFKLMPPS15}. 

A parameterized problem $\Pi$ is in \emph{para-NP} if membership of $(I,k)$ in
$\Pi$ can be decided in nondeterministic time $O(f(k)|I|^c)$,
where $|I|$ is the size of $I$, $f(k)$ is a computable function of the
parameter~$k$ only,
and $c$ is a constant independent from $k$ and $I$. Here,
nondeterministic time means that we can use nondeterministic
Turing machine. A parameterized problem $\Pi'$ is {\em
para-NP-hard}, if for any parameterized
problem $\Pi$ in para-NP there is an FPT-reduction from $\Pi$ to
$\Pi'$.

For a parameterized problem  $\Pi$,
a \emph{generalized kernelization from $\Pi$ to $\Pi'$} is a polynomial-time
algorithm $\cal A$ that maps an instance $(I,k)$ to an instance $(I',k')$ (the
\emph{generalized kernel}) such that (i)~$(I,k)\in \Pi$ if and only if
$(I',k')\in \Pi'$, (ii)~$k'\leq g(k)$ for some computable 
function $g$, and (iii)~$|I'|\leq g(k).$  The function $g(k)$ is called the {\em size} of the generalized kernel.
If $\Pi = \Pi'$, $\cal A$ is a \emph{kernelization} and $(I',k')$ is a {\em kernel}~\cite{KernelBook}.

\section{Directed $r$-Simple $k$-Path: FPT}\label{sec:directedPath}

In this section, we focus on the proof of the following theorem.

\begin{theorem}\label{thm:directedPathFPT}
\diPathR\ is \FPT\ parameterized by $k/r$. In particular, \diPathR\ is solvable in time $2^{\OO((\frac{k}{r})^2\log(\frac{k}{r}))}(n+\log k)^{\OO(1)}$ and polynomial space.
\end{theorem}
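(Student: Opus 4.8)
Write $\kappa=\lceil k/r\rceil$ for the parameter. I would prove the theorem by a chain of parameter‑preserving reductions ending in an integer feasibility program with $\OO(\kappa^2)$ variables; that variable count, fed into Theorem~\ref{prop:ILP}, is exactly what yields the $2^{\OO(\kappa^2\log\kappa)}$ factor. Two reductions are standard clean‑up. First, since a prefix of a walk is again a walk with no new vertex occurrences, ``$G$ has an $r$‑simple $k$‑path'' is equivalent to ``$G$ has an $r$‑simple walk of size at least $k$''. Second, if $G$ has a directed cycle of length $\ge k/r$ then traversing it sufficiently often and truncating gives an $r$‑simple walk of size $\ge k$, and (using strong connectivity to close a long directed path into a cactus of short cycles and traversing it $\Theta(r)$ times) the same holds whenever $G$ has a directed path of size $\ge 2k/r$; both situations are detectable in $2^{\OO(\kappa)}n^{\OO(1)}$ time and polynomial space by standard long‑path/long‑cycle detection, so afterwards we may assume $G$ has no directed path of size $\ge 2k/r$ and no directed cycle of length $\ge k/r$. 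Finally, because an $r$‑simple $k$‑path is a walk, it traverses the strongly connected components of $G$ along a single directed path of the condensation DAG, and its portion inside each component is an $r$‑simple $(s,t)$‑path of that component; hence it suffices to solve \diPathRst\ (return the largest $i\le k$ admitting an $r$‑simple $(s,t)$‑path of size $i$) on each strongly connected component and every pair $s,t$, and to recombine the values along the condensation by a polynomial longest‑path dynamic program. This reduces \diPathR\ to polynomially many strongly connected instances of \diPathRst\ with no long path or cycle.

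\textbf{The structural lemma (the crux).} Fix such an instance $(G,k,r,s,t)$ that is a \yes-instance of \diPathRst. The heart of the proof is the claim that it then has a witnessing $r$‑simple $(s,t)$‑path $W$ using fewer than $30(k/r)^2$ \emph{distinct} arcs. I would establish this by choosing $W$ to minimise the number of distinct arcs (ties broken by length) and analysing its arc‑multiset multigraph $G_W$, where each arc carries multiplicity equal to its number of occurrences in $W$. This $G_W$ is connected, has an Euler $(s,t)$‑trail, is in/out‑degree balanced up to a $\pm1$ at $s,t$, and satisfies $d^+_{G_W}(v)\le r$ for all $v$ (each vertex is visited $\le r$ times); moreover its underlying support digraph $S$ still has no directed path of size $\ge 2k/r$ and no directed cycle of length $\ge k/r$. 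A counting argument bounds the number of vertices of $S$ of in‑ or out‑degree $\ge 2$, and the short‑path bound caps the number of arcs on the threads joining consecutive such vertices, giving $|A(S)|=\OO((k/r)^2)$; the minimality of $W$ is what excludes the configurations (long re‑routable ears, redundant parallel pieces) that would otherwise defeat this count. I expect this lemma --- in particular pinning down the explicit constant and its claimed near‑optimality --- to be the most technical part of the whole argument.

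\textbf{From small support to an integer program.} Given the structural lemma, a \yes-instance has a witness whose support $S$ has $m=\OO((k/r)^2)$ vertices and arcs. Using Theorem~\ref{prop:hashfam} with $m$ colours, over $e^{\OO((k/r)^2)}\log n$ colourings that are enumerable with polynomial delay and polynomial space, we may assume $S$ is colourful. This is the setting of the auxiliary problems \topologyProb\ and \topologyProbAn: guess the abstract shape $T$ of $S$, i.e.\ which pairs of colours carry a support arc --- there are only $\binom{m(m-1)}{\OO((k/r)^2)}=2^{\OO((k/r)^2\log(k/r))}$ such shapes --- verify by a dynamic program (using colourfulness and $|V(T)|=\OO((k/r)^2)$) that $T$ is realised by genuine arcs of $G$, and then solve the remaining purely numerical question: do there exist arc‑multiplicities $x_a\ge 1$, one variable per arc of $T$, satisfying the Eulerian degree‑balance equations (with the $\pm1$ at $s,t$), the constraints $\sum_{a\ \text{out of}\ v}x_a\le r$ for all $v$, and $\sum_a x_a=k-1$? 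Connectivity of the realised multigraph is automatic since $T$ is connected. This is a {\sc Feasibility ILP} of size $(n+\log k)^{\OO(1)}$ with $p=\OO((k/r)^2)$ variables, solved by Theorem~\ref{prop:ILP} in $p^{2.5p+o(p)}(n+\log k)^{\OO(1)}=2^{\OO((k/r)^2\log(k/r))}(n+\log k)^{\OO(1)}$ time and polynomial space. Multiplying the colouring factor, the number of shapes, and the per‑shape ILP cost, and summing over the polynomially many instances of Step~1, gives total time $2^{\OO((k/r)^2\log(k/r))}(n+\log k)^{\OO(1)}$ in polynomial space, proving Theorem~\ref{thm:directedPathFPT}.
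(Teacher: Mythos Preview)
Your overall architecture matches the paper's: reduce to strongly connected nice instances, prove the $O((k/r)^2)$-distinct-arc structural lemma, colour-code, enumerate topologies, and use ILP for the multiplicities. However, there is a genuine gap at the step ``verify by a dynamic program \ldots\ that $T$ is realised by genuine arcs of $G$.''

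After colour-coding, each of the $O((k/r)^2)$ colour classes may still contain many vertices of $G$. Verifying that the guessed colourful topology $T$ occurs as a colour-respecting subgraph of $G$ means picking \emph{one} vertex per colour so that \emph{all} arcs of $T$ are present simultaneously; this is Partitioned Subgraph Isomorphism, and the paper explicitly warns that this is the flaw in the naive approach --- a topology on $\Theta(k/r)$ colours with $\Theta((k/r)^2)$ arcs can be a clique, so the test encodes \textsc{Multicolored Clique} and is \WOH\ in the parameter. No straightforward DP in $f(k/r)\cdot n^{\OO(1)}$ time resolves this.

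The paper's fix is to decouple the two steps in the opposite order and to relax the realisability notion. First it enriches $T$ with multiplicities $\varphi$ via ILP \emph{without looking at $G$} (Lemma~\ref{lem:enrich}). Then, instead of asking whether $G$ contains a colourful copy of $T$, it asks whether $G$ has an $r$-simple $(s,t)$-path that \emph{weakly complies} with $(T,\varphi)$: for each ordered pair of colours $(a,b)$ the walk must use exactly $\varphi$ many arcs from an $a$-coloured to a $b$-coloured vertex, but visits to colour $a$ need \emph{not} all be to the same vertex of $G$. Because the ILP already caps total visits per colour at $r$, any weakly complying walk is automatically $r$-simple and of the correct size, so a Yes answer remains sound even though the found walk need not be colourful; this is precisely why the objective of \diColPathRst\ is phrased asymmetrically (condition~(i) there does not require colourfulness). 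The relaxed check is a walk-finding rather than subgraph-finding problem, and the paper solves it in \emph{polynomial} time by a recursive cycle-peeling DP (Lemma~\ref{lem:solveTopologyProb}). Your ILP step is essentially the paper's enrichment; what you are missing is the weak-compliance relaxation and the accompanying polynomial-time verification that replaces the infeasible subgraph test.

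A minor secondary point: \diPathRst\ must return the \emph{largest} $i\le k$, so the ILP should maximise $\sum_a x_a$ (as in Lemma~\ref{lem:enrich}) rather than fix it to $k-1$.
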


We remark that by polynomial space, we mean polynomial in $n+\log k$.

\subsection{Reduction to a Simpler Problem}\label{sec:directedPathReduction}

In order to prove Theorem \ref{thm:directedPathFPT}, we begin with two simple claims that reduce the \diPathR\ problem to a special case of a related problem that is defined as follows. 

\begin{defproblem}
{\diPathRst}
{A digraph $G$, positive integers $k,r$, and vertices $s,t\in V(G).$}
{Either {\em (i)} determine that $G$ has an $r$-simple $k$-path or {\em (ii)} output the largest integer $i\le k$ such that $G$ has an $r$-simple $(s,t)$-path of size $i.$} 
\end{defproblem}\\
We first observe that \diPathR\ can be reduced to the special case of \diPathRst\ where the input digraph is strongly connected.

\begin{lemma}\label{lem:stronglyConnected}
Suppose that \diPathRst\ on strongly connected digraphs can be solved in time $f(k/r)\cdot (n+\log k)^{\OO(1)}$ and polynomial space. Then, \diPathR\ can be solved in time $f(k/r)\cdot (n+\log k)^{\OO(1)}$ and polynomial space.
\end{lemma}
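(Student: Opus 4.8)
The plan is to reduce \diPathR\ on a general digraph $G$ to polynomially many instances of \diPathRst\ on strongly connected digraphs, one per strongly connected component of $G$, and then combine the answers. First I would compute the condensation of $G$, i.e. the DAG $D$ whose vertices are the strongly connected components $C_1,\dots,C_m$ of $G$ and whose arcs record reachability between components. Any $r$-simple walk $W$ in $G$, read from start to end, visits a sequence of components that is a directed path $C_{i_1},C_{i_2},\dots,C_{i_\ell}$ in $D$ (once $W$ leaves a component it can never return, since $D$ is acyclic), and within each component $C_{i_j}$ the portion of $W$ is itself an $r$-simple walk that enters at some vertex and leaves at some vertex. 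So an $r$-simple $k$-path exists in $G$ iff there is a directed path $P$ in $D$ and a way to split $k$ as $k=k_1+\dots+k_\ell$ with each $k_j\ge 1$ such that, for consecutive components $C_{i_j},C_{i_{j+1}}$ on $P$ connected by an arc $(u_j,v_{j+1})$ of $G$ with $u_j\in C_{i_j}$, $v_{j+1}\in C_{i_{j+1}}$, the component $C_{i_j}$ admits an $r$-simple $(s,t)$-path of size exactly $k_j$ for the appropriate entry/exit vertices (the entry vertex of the first component and the exit vertex of the last are free).

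Next I would observe that we do not actually need to guess the split $(k_1,\dots,k_\ell)$: for each component $C$ and each ordered pair $(s,t)$ of vertices in $C$, call \diPathRst\ on the strongly connected digraph $C$ with parameters $k,r$ and endpoints $s,t$. By the problem definition this either reports that $C$ has an $r$-simple $k$-path (in which case we are done, output \yes) or returns the largest $i(C,s,t)\le k$ such that $C$ has an $r$-simple $(s,t)$-path of size $i(C,s,t)$. Monotonicity is the key structural fact: in a strongly connected digraph, if there is an $r$-simple $(s,t)$-path of size $i$ then there is one of every size $j$ with $1\le j\le i$ — one can shorten by undoing a single cycle traversal, or more simply note that any $(s,t)$-path of size $i\ge 2$ contains a $(s,t)$-path of size $i-1$ by the standard argument of removing a closed subwalk (and if no closed subwalk exists, the walk is a simple path and we can reroute its last step; strong connectivity guarantees an $(s,t)$-path of size $1$ or $2$ always exists, namely take a shortest $s$–$t$ path, which has size $\le n$, but for the combination below we only need: the set of achievable sizes is a contiguous interval $\{1,\dots,i(C,s,t)\}$ possibly together with "$\ge k$"). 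Hence each component contributes, for each entry/exit pair, an interval of achievable sizes, and stitching intervals along a path in the DAG $D$ is a routine reachability/DP computation: process $D$ in topological order and, for each component $C$ and each potential exit vertex $t\in C$, maintain the maximum total size $m(C,t)\le k$ of an $r$-simple walk in $G$ ending at $t$ whose last component is $C$; update across an arc $(u,v)$ of $G$ from $C$ to $C'$ by $m(C',t')\leftarrow \max\{m(C',t'),\ \min(k,\ m(C,u)+i(C',v,t'))\}$ over all $t'\in C'$, with base case $m(C,t)=i(C,s_0,t)$ maximized over all $s_0\in C$. Answer \yes\ iff some $m(C,t)$ reaches $k$, or some component already reported an $r$-simple $k$-path.

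For the complexity bookkeeping: $D$ has at most $n$ vertices and $n^2$ arcs; the number of (component, entry, exit) triples is at most $n^3$, so we make at most $n^3$ calls to \diPathRst, each on a strongly connected digraph with at most $n$ vertices, costing $f(k/r)\cdot(n+\log k)^{\OO(1)}$ time and polynomial space by hypothesis; the topological-order DP over $D$ runs in polynomial time and space. Since $n^3\cdot f(k/r)\cdot(n+\log k)^{\OO(1)} = f(k/r)\cdot(n+\log k)^{\OO(1)}$, the overall running time and the polynomial space bound are as claimed. The only point needing care — and what I expect to be the main obstacle in writing the clean proof — is the monotonicity/"contiguous interval of achievable sizes" claim that justifies working with a single number $i(C,s,t)$ per triple rather than the full set of realizable sizes; it should follow from the cycle-removal argument together with strong connectivity, but one must check the edge cases (size $1$ versus size $2$, $s=t$, and the $r$-simplicity constraint being preserved under removing a closed subwalk, which it is since removal only decreases occurrence counts). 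Note also we must handle the degenerate case $\ell=1$ (the whole walk stays in one component), which is exactly a single \diPathRst\ call, and is already covered by the base case of the DP and by the "reports an $r$-simple $k$-path" branch.
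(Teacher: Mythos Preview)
Your approach is essentially identical to the paper's: decompose into strongly connected components, call the \diPathRst\ oracle on each component for every ordered pair of endpoints, then run a DP over the condensation DAG to maximize the total walk size. The paper indexes its DP table by vertices (storing, for each $v$, the maximum size of an $r$-simple walk in $G$ ending at $v$) rather than by (component, exit vertex), but the two formulations are equivalent.

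However, you manufacture an unnecessary obstacle. You assert that the ``key structural fact'' is monotonicity: that an $r$-simple $(s,t)$-walk of size $i$ in a strongly connected digraph implies one of every size $1 \le j \le i$. This is false as stated (if $s \neq t$ there is no $(s,t)$-walk of size $1$; if additionally there is no arc from $s$ to $t$, none of size $2$ either), and in any case it is not needed. Your DP already computes the \emph{maximum} achievable total size $M$. If $M \ge k$, then there is an $r$-simple walk in $G$ of size $M \ge k$, and any length-$k$ prefix of that walk is an $r$-simple $k$-path, since truncation can only decrease vertex occurrence counts. No interval structure on achievable sizes is required; the paper's proof simply checks whether $\max_v \mathsf{M}[v] \ge k$ and is done.
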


\begin{proof}
Let $\cal A$ be an algorithm that solves \diPathRst\ on strongly connected digraphs in time $f(k/r)\cdot (n+\log k)^{\OO(1)}$ and polynomial space. In what follows, we describe how to solve \diPathR.
To this end, let $(G,k,r)$ be an instance of \diPathR. Let $\cal C$ be the set of strongly connected components of $G$. For every component $C\in {\cal C}$, and vertices $s,t\in V(C)$, we perform the following computation. We call $\cal A$ with $(C,k,r,s,t)$ as input. If $\cal A$ concludes that $C$ has an $r$-simple $k$-path, then we correctly conclude that $(G,k,r)$ is a \yes-instance. Else, we denote by $k_{st}$ the integer that  $\cal A$ outputs. Then, $k_{st}\leq k$ is the largest integer $p$ such that $C$ has an $r$-simple $(s,t)$-path of size $p$.
So far, the time spent is at most $f(k/r)\cdot (n+\log k)^{\OO(1)}$ and the space used is polynomial.

Let $C_1,C_2,\ldots,C_{|{\cal C}|}$ be an ordering of the components in $\cal C$ with the property that for all $i<j$, $u\in C_j$ and $v\in C_i$, it holds that $(u,v)\notin A(G)$. 
Now, we solve \diPathR\ by dynamic programming (DP) as follows. 

Let $\mathsf{M}$ be a DP vector with an entry $\mathsf{M}[v]$ for every $v\in V(G)$. This entry will store the largest integer $p$ such that $G$ has an $r$-simple $p$-path that ends at $v$.
At Step 1, we set $\mathsf{M}[v]=\max_{u\in V(C_1)}k_{uv}$ for every $v\in V(C_1)$. At Step $i$, where $i=2,3,\dots , |\cal C|$, we set
\[\mathsf{M}[v] = \displaystyle{\max\left\{\max_{(u,w)\in A(G) \atop\ \mathrm{s.t.}~u\notin V(C_i), w\in V(C_i)}\left(\mathsf{M}[u] + k_{wv}\right), \max_{u\in V(C_i)}k_{uv}\right\}}\]
for every $v\in V(C_i)$.

It is straightforward to verify that the DP computation is correct and can be executed using polynomial time and space. After this computation is terminated, we correctly conclude that $(G,k,r)$ is a \yes-instance if and only if there exists  $v\in V(G)$ such that $\mathsf{M}[v]\geq k$. This completes the proof.
\end{proof}

From now on, we focus on the \diPathRst\ problem on strongly connected digraphs. Our second claim shows that the existence of a ``long'' path or a ``long'' cycle in the input digraph $G$ implies that it has an $r$-simple $k$-path.

\begin{lemma}\label{lem:longCycPath}
Let $G$ be a strongly connected digraph. If any of the following two conditions is satisfied, then $G$ has an $r$-simple $k$-path.
\begin{itemize}
\item The graph $G$ has a cycle of length at least $k/r$.
\item The graph $G$ has a path with at least $2k/r$ vertices.
\end{itemize}
\end{lemma}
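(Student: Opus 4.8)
The plan is to handle the two cases separately, in each case taking the given long structure and "pumping" it by repetition until its size is at least $k$, then trimming it down to exactly $k$ while keeping every vertex occurring at most $r$ times. The underlying idea is that a cycle can be traversed many times, and a long path has enough vertices that concatenating many copies (glued via connecting paths, using strong connectivity) does not force any single vertex over the $r$-occurrence budget before we reach size $k$.

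For the first case, suppose $G$ has a cycle $C$ of length $\ell \ge k/r$; write $C$ as a closed walk $v_1 e_1 v_2 \cdots e_{\ell} v_1$ on $\ell$ distinct vertices (so $\ell$ vertex occurrences if we do not recount $v_1$). Traverse $C$ exactly $r$ times in succession: this is a closed walk in which every vertex of $C$ occurs exactly $r$ times, so it is an $r$-simple path, and its size is roughly $r\ell \ge r\cdot(k/r) = k$. More precisely, after $r$ full traversals the number of vertex occurrences is $r\ell + 1 \ge k+1 > k$; I then truncate the walk at the first point where the count of vertex occurrences reaches exactly $k$. Truncating a prefix of a walk only decreases the number of occurrences of each vertex, so the resulting walk is still $r$-simple, and it has exactly $k$ vertex occurrences — an $r$-simple $k$-path.

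For the second case, suppose $G$ has a path $P = u_1 - u_2 - \cdots - u_m$ with $m \ge 2k/r$ vertices, all distinct. The idea is to concatenate many copies of $P$, using strong connectivity to return from $u_m$ back to $u_1$ between copies via some fixed $(u_m,u_1)$-walk $Q$. Doing this $r$ times and truncating as before would give a walk of size at least $r m \ge r \cdot 2k/r = 2k \ge k$; truncation at the occurrence-count $k$ again yields size exactly $k$. The only subtlety is that the connecting walk $Q$ may revisit vertices of $P$ (or of itself), so naively repeating $r$ copies of $P\cdot Q$ could push some vertex over $r$ occurrences. The factor $2$ in the bound $m \ge 2k/r$ is precisely the slack that lets us avoid this: I would argue that after the $j$-th copy of $P$ the walk already has at least $jm$ occurrences, so we reach $k$ occurrences after at most $\lceil k/m \rceil \le \lceil r/2 \rceil \le r$ copies of $P$, and hence we never need to start a copy of $Q$ more than $r-1$ times; combined with the fact that each vertex of $P$ gains at most one occurrence per copy of $P$, no vertex of $P$ exceeds $r$ occurrences. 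For vertices appearing only inside the connecting walks $Q$, one uses that $Q$ can be taken to be a short $(u_m,u_1)$-path (simple, hence each vertex of $Q$ occurs once per copy of $Q$, and there are at most $r-1$ copies of $Q$ before we stop). A cleaner way to phrase this: stop concatenating copies of $P$ as soon as the running occurrence count is $\ge k$; this happens while still inside some copy of $P$ (since $m \le 2k/r$... actually since we need at most $\lceil k/m\rceil \le r$ copies), so in fact the final walk consists of at most $r$ copies of $P$ interleaved with at most $r-1$ copies of a simple connecting path $Q$, and then a truncation to size exactly $k$.

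The main obstacle, and the only place requiring care, is the bookkeeping in the second case: ensuring that the connecting walks $Q$ between consecutive copies of $P$ do not inflate the occurrence count of any vertex past $r$ before the walk reaches size $k$. The factor of $2$ in "$2k/r$ vertices" is clearly inserted to absorb exactly this overhead (each copy of $P$ contributes $m$ new occurrences while each vertex sees at most one new occurrence, and we only need about $r/2$ copies), so the argument is a counting argument with a comfortable slack rather than a tight estimate. Everything else — truncation preserving $r$-simplicity, and the size landing exactly on $k$ — is routine.
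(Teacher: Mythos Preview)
Your approach is the same as the paper's: traverse the cycle $r$ times in the first case, and interleave copies of $P$ with a fixed simple $(u_m,u_1)$-path $Q$ in the second. The cycle case is fine.

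In the path case your bookkeeping has a small gap. You write that ``each vertex of $P$ gains at most one occurrence per copy of $P$, so no vertex of $P$ exceeds $r$ occurrences'' while using the bound of $\le r$ copies of $P$ and $\le r-1$ copies of $Q$. This ignores that a vertex of $P$ may also lie on $Q$; with $r$ copies of $P$ and $r-1$ copies of $Q$ such a vertex could be visited up to $2r-1$ times. You do note elsewhere that only about $\lceil r/2\rceil$ copies of $P$ are needed, and with that sharper bound the count is $\lceil r/2\rceil + (\lceil r/2\rceil - 1) = 2\lceil r/2\rceil - 1 \le r$, which is exactly what you need --- but as written the two strands of the argument are not tied together. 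The paper avoids this by bundling one copy of $P$ and one of $Q$ (with endpoints removed) into a single block in which every vertex occurs at most twice, and then repeating that block $\lfloor r/2\rfloor$ times (plus one extra $P$ when $r$ is odd); this makes the $r$-simplicity immediate.
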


\begin{proof}
First, suppose that $G$ has a cycle $C$ of length at least $k/r$. Then, $C$ is a sequence of distinct vertices, besides the first and last vertex, of the form $v_1-v_2-\cdots-v_\ell$ for some integer $\ell\geq k/r+1$. In this case, $(v_1-v_2\cdots-v_{\ell-1})-(v_1-v_2\cdots-v_{\ell-1})-\cdots-(v_1-v_2\cdots-v_{\ell-1})$ where $v_1-v_2\cdots-v_{\ell-1}$ is duplicated exactly $r$ times, is an $r$-simple $k'$-path for $k'=r(\ell-1)\geq k$. Thus, $G$ has an $r$-simple $k$-path.

Second, suppose that $G$ has a path $P$ with at least $2k/r$ vertices. Then, $P$ is a sequence of $\ell$ distinct vertices for some integer $\ell\geq 2k/r$. Since $G$ is strongly connected, it has at least one path from the last vertex of $P$ to the first vertex of $P$. Let $Q=u_1-u_2-\ldots-u_q$ denote any such path. Moreover, let $Q'$ denote the subsequence of $Q$ where the first and last vertex visits are omitted. If $r\mod 2=0$, denote 
$W=(P-Q')-(P-Q')-\cdots-(P-Q'),$ 
where $P-Q'$ is duplicated exactly $r/2$ times, and otherwise denote $W=(P-Q')-(P-Q')-\cdots-(P-Q')-P$ where $P-Q'$ is duplicated exactly $(r-1)/2$ times. Since every vertex occurs at most twice in $P-Q'$, we have that every vertex occurs at most $r$ times in $W$. Moreover, if $r\mod 2=0$, then the size of $W$ is $r(\ell+q-2)/2\geq r\ell/2\geq k$, and otherwise the size of $W$ is $(r-1)(\ell+q-2)/2 + \ell\geq (r-1)\ell/2 + \ell \geq k$. Therefore, $W$ is an $r$-simple $k'$-path for some integer $k'\geq k$, which means that $G$ has an $r$-simple $k$-path.
\end{proof}

The following known proposition asserts that we can efficiently determine whether the input digraph has a long  path or a long cycle.

\begin{theorem}[\cite{DBLP:journals/ipl/New18,DBLP:journals/ipl/Zehavi16}]\label{prop:longPath}
There exists a deterministic algorithm that given a digraph $G$, vertices $s,t\in V(G)$, and $k\in\mathbb{N}$, determines in time $2^{\OO(k)}\cdot n^{\OO(1)}$ and polynomial space whether $G$ has a path from $s$ to $t$ on at least $k$ vertices.
\end{theorem}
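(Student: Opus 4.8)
\smallskip
\noindent\emph{Proof proposal.} The statement is essentially about \textsc{Long Directed $(s,t)$-Path}, a close relative of the classical \textsc{Directed $k$-Path} problem, for which $2^{\OO(k)}\cdot n^{\OO(1)}$-time algorithms are standard, so the plan is to obtain such an algorithm by color coding, derandomized through the perfect hash families of Theorem~\ref{prop:hashfam}, while handling two features absent from the textbook version: the requirement of \emph{polynomial} space, and the ``at least $k$ vertices'' semantics (a subpath of an $(s,t)$-path on many vertices is not itself an $(s,t)$-path, so one cannot merely search for a path on exactly $k$ vertices). First I would reduce, via an $(n,k)$-perfect hash family $\mathcal{F}$, to the following \emph{colorful} question: given a coloring $c:V(G)\to\{1,\dots,k\}$, does $G$ have a simple $(s,t)$-path whose vertex set meets all $k$ color classes? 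Ranging over $c\in\mathcal{F}$, this is equivalent to the original question: if $P$ is an $(s,t)$-path on at least $k$ vertices, pick any $k$ vertices of $P$ including $s$ and $t$; some $c\in\mathcal{F}$ is injective on this set, so $P$ meets all $k$ classes under $c$; conversely a simple $(s,t)$-path meeting all $k$ classes has at least $k$ vertices. Since $|\mathcal{F}|=e^{k+o(k)}\log n$ and its members can be enumerated with polynomial space and polynomial delay (Theorem~\ref{prop:hashfam}), it remains to answer the colorful question deterministically in $2^{\OO(k)}\cdot n^{\OO(1)}$ time and polynomial space for each fixed $c$.

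To stay within polynomial space I would use inclusion--exclusion over color subsets in place of the usual subset dynamic program, which already uses $2^{\Theta(k)}\cdot n$ space. The clean core is detecting a colorful $(s,t)$-path on \emph{exactly} $k$ vertices: such a path is precisely a colorful $(s,t)$-walk of length $k-1$ (pairwise distinct colors force pairwise distinct vertices), and the number of such walks equals $\sum_{S\subseteq\{1,\dots,k\}}(-1)^{k-|S|}\,w_S$, where $w_S$ is the number of length-$(k-1)$ $(s,t)$-walks using only colors of $S$; each $w_S$ is a single entry of the $(k-1)$st power of the adjacency matrix of $G[c^{-1}(S)]$, computable by repeated squaring in $n^{\OO(1)}$ time and polynomial space, so the whole sum costs $2^{k}\cdot n^{\OO(1)}$ time and polynomial space. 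For the full colorful question the path may be long, so I would decompose it at its $k$-th vertex $w$: the length-$k$ prefix is a colorful $(s,w)$-path on exactly $k$ vertices (handled as above), and the suffix is a $(w,t)$-path avoiding the prefix's other $k-1$ vertices; the question thus becomes whether there is a colorful $(s,\cdot)$-path on exactly $k$ vertices, ending at some $w$, whose deletion (except for $w$) leaves $t$ reachable from $w$.

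The step I expect to be the main obstacle is exactly this last one: the ``tail reachable after deletion'' condition is global and depends on the actual vertices of the prefix, not just on their colors, so neither the inclusion--exclusion trick nor a bounded-size representative subfamily of prefixes applies verbatim (the deleted prefix interacts with a $(w,t)$-path whose length is not bounded in terms of $k$). Resolving this while keeping the space polynomial --- the usual remedies, subset dynamic programming and representative-family tables, are inherently exponential-space --- is precisely what the cited works carry out, by streaming over color subsets and over hash functions and recomputing the reachability and walk-count data on the fly; the running time remains $e^{k+o(k)}\cdot 2^{k}\cdot n^{\OO(1)}=2^{\OO(k)}\cdot n^{\OO(1)}$ and the space polynomial. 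In the present paper the proposition is used only as a black box, so it suffices to invoke these references.
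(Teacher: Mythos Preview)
The paper gives no proof of this theorem: it is stated as a cited result from~\cite{DBLP:journals/ipl/New18,DBLP:journals/ipl/Zehavi16} and used purely as a black box. Your final sentence recognizes this correctly, and that alone matches what the paper does.

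Your surrounding sketch is a reasonable outline of the ideas behind those references and correctly pinpoints the real obstacle (making the ``tail'' $(w,t)$-path avoid the colorful prefix while staying in polynomial space). One small imprecision: your colorful question asks for a simple $(s,t)$-path \emph{meeting all $k$ color classes}, but your decomposition then assumes the \emph{first} $k$ vertices are colorful, which does not follow from that formulation. The fix is to apply the hash family specifically to the first $k$ vertices of a hypothetical long $(s,t)$-path (these include $s$ but need not include $t$); under the right coloring the prefix is then colorful by construction, and your decomposition goes through. This is cosmetic and does not change your overall conclusion.
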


Thus, from now on, we may assume not only that the input digraph is strongly connected, but that it also has neither a path of size at least $2k/r$ nor a cycle of length at least $k/r$. Accordingly, we say that an instance $(G,k,r,s,t)$ of \diPathRst\ is {\em nice} if $G$ is strongly connected and it has neither a path with at least $2k/r$ vertices nor a cycle of length at least $k/r$. Moreover, we say that $(G,k,r,s,t)$ is {\em positive} if $G$ has an $r$-simple $k$-path, and otherwise we say that it is {\em negative}.
\subsection{Bounding the Number of Distinct Arcs}\label{sec:numDistinctArc}

Having established the two simple claims above, the second part of our proof concerns the establishment of an upper bound on the number of distinct arcs in at least one $r$-simple $k$-path (if at least one such walk exists) or at least one $r$-simple $(s,t)$-path of maximum size. The main definition in this part of the proof is the following one.

\begin{definition}\label{def:simpleMulti}
Let $(G,k,r,s,t)$ be an instance of \diPathRst. Let $P$ be an $r$-simple path in $G$. 
\begin{itemize}
\item Let $P_{\mathrm{simple}}$ be the (directed) subgraph of $G$ that consists of the vertices and arcs in $G$ that are visited at least once by $P$,
and  let $P_{\mathsf{multi}}$ be the directed multigraph obtained from $P_{\mathrm{simple}}$ by replacing each arc $a$ by 
its $c_a$ copies, where $c_a$ is the number of times $a$ is visited by $P.$
\item Let $V(P,r)$ be the set that contains $s,t$ and every vertex that occurs $r$ times in $P$, and $P_{\mathrm{simple}}^{-r} = P_{\mathrm{simple}} - V(P,r)$.
\item For any two (not necessarily distinct) vertices $u,v\in V(P)$, denote $P_{\mathrm{simple}}^{u,v,-r} = P_{\mathrm{simple}} - (V(P,r)\setminus\{u,v\})$.  
(In case $u,v\notin V(P,r)$, it holds that $P_{\mathrm{simple}}^{u,v,-r} = P_{\mathrm{simple}}-V(P,r)$.)
\end{itemize}
\end{definition}

Before we begin our analysis, we relate our problem to the notion of an Euler trail by a well-known proposition, to which we will repeatedly refer later.

\begin{theorem}[\cite{DBLP:books/JBJGG,DBLP:books/daglib/0030488}]\label{prop:euler}
Let $G$ be a weakly connected directed multigraph. Let $s,t\in V(G)$.
\begin{itemize}
\item If $s\neq t$, then there exists an Euler $(s,t)$-trail in $G$ if and only if $G$ is $(s,t)$-almost balanced.
\item If $s=t$, then there exists an Euler $(s,t)$-trail in $G$ if and only if $G$ is balanced.
\end{itemize}
\end{theorem}

\newcommand{\PsimpleSize}{2k}

Our argument will modify a given walk in a manner that might increase its length to keep certain conditions satisfied. To ensure that we never need to handle a walk that is too long, we utilize the following lemma.

\begin{lemma}\label{lem:shortenPath}
Let $(G,k,r,s,t)$ be a nice instance of \diPathRst. Let $P$ be an $r$-simple $k'$-path in $G$ for some integer $k'\geq 2k$. Then, $G$ has an $r$-simple $k''$-path $Q$, for some integer $k''\geq k$, such that $Q_{\mathrm{simple}}$ is a subgraph of $P_{\mathrm{simple}}$ that is not equal to $P_{\mathrm{simple}}$.
\end{lemma}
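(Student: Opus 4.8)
The plan is to take the given $r$-simple $k'$-path $P$ with $k' \geq 2k$ and ``shorten'' it by deleting a carefully chosen closed sub-walk, so that the resulting walk $Q$ is still an $r$-simple path of size at least $k$, still has all vertices of $P_{\mathrm{simple}}$ connected, but uses strictly fewer arcs of $P_{\mathrm{simple}}$ (as a set). Since $P_{\mathrm{simple}}$ has no path of size $\geq 2k/r$ and no cycle of length $\geq k/r$ (the instance is nice), $P_{\mathrm{simple}}$ has few distinct arcs, while $P$ has many arc \emph{occurrences} (at least $k'-1 \geq 2k-1$); hence some arc $a$ must be traversed at least twice by $P$. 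First I would fix such an arc $a = (x,y)$ and consider two consecutive occurrences of $a$ along $P$: writing $P$ as the concatenation $P_1 \cdot a \cdot P_2 \cdot a \cdot P_3$ where $P_2$ starts at $y$ and ends at $x$, the middle piece $a \cdot P_2$ is a closed walk $C$ through $x$. Deleting one full copy of this closed sub-walk from $P$ — i.e. taking $Q = P_1 \cdot a \cdot P_3$ — yields again a walk in $G$ from $s$ to $t$ (the splice is legal since $P_1$ ends at $x$, then $a$ goes to $y$, and $P_3$ continues from $y$), it is still $r$-simple since we only removed vertex occurrences, and it uses a subset of the arcs of $P_{\mathrm{simple}}$.

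The two things to control are (i) $Q$ still has size $\geq k$, and (ii) $Q_{\mathrm{simple}} \neq P_{\mathrm{simple}}$, i.e. the deletion actually kills at least one arc. For (i): the closed walk $C = a\cdot P_2$ we remove has bounded size. Indeed, by Theorem~\ref{prop:euler} applied to $C_{\mathsf{multi}}$ together with niceness, any closed walk in $G$ decomposes into arc-disjoint cycles, each of length $< k/r$; but to bound the \emph{number} of vertex occurrences removed I would instead choose the pair of consecutive occurrences of $a$ to be a pair that is \emph{closest together} along $P$, i.e. minimizing the number of vertex occurrences strictly between them. If that middle stretch $P_2$ itself repeated some arc, we could have found a closer pair, so along $P_2$ every arc occurs at most once except possibly $a$-related boundary effects; combined with niceness ($P_{\mathrm{simple}}$ has no long path/cycle, so $O((k/r)^2)$ distinct arcs) this makes $|V(C)|$ small — at most, say, $O((k/r)^2)$, certainly at most $k$. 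Since $k' \geq 2k$, we have $k'' = k' - |V(C)| + (\text{overlap correction}) \geq k$. Actually the cleanest route: repeatedly delete such closest-pair closed sub-walks; each deletion strictly decreases the number of arc \emph{occurrences}, so after finitely many steps either the size drops below $2k$ (and we stop — the last walk still has size $\geq k$ by a single-step size bound, or we stop one step earlier) or every arc occurs at most once, i.e. the walk is already a trail of size $< 2k/r \cdot (\text{something})$, contradiction with size $\geq 2k$. For (ii): I must ensure the \emph{last} deletion that brings us into the range $[k, 2k)$ removes at least one arc entirely from the multiset \emph{and} from the underlying set. The simplest fix is to first do all size-preserving-to-$\geq k$ deletions, then observe the final $Q$ has strictly fewer arc occurrences than $P$ but we need fewer \emph{distinct} arcs; so instead I would pick the arc $a$ among all arcs traversed $\geq 2$ times to be one for which some traversal count is minimized, and argue that iterating eventually an arc with count exactly the old count drops to count $0$. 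Alternatively, and more robustly: since $|A(P_{\mathrm{simple}})| = O((k/r)^2)$ but $|A(P)| \geq 2k - 1$ with $k \geq k/r$, the average multiplicity is $\gg 1$, and after removing closed sub-walks until size is in $[k, 2k)$ the total arc occurrences dropped by $> k$, which exceeds $|A(P_{\mathrm{simple}})|$ unless... — so by pigeonhole at least one arc's multiplicity reached $0$, giving $Q_{\mathrm{simple}} \subsetneq P_{\mathrm{simple}}$.

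The main obstacle I anticipate is bookkeeping the simultaneous requirements: the size must land in $[k, 2k)$ (so that we can apply this lemma recursively with the bound $2k$), \emph{and} at least one distinct arc must disappear, \emph{and} $Q$ must remain a single connected walk from $s$ to $t$ — deleting a closed sub-walk automatically preserves the last point, but getting the first two to hold \emph{simultaneously} requires care about \emph{which} closed sub-walk to remove at the \emph{last} step. I would handle this by the two-phase argument: Phase 1 removes closed sub-walks freely while size $\geq 2k$, strictly decreasing arc occurrences each time, terminating with a walk $P'$ of size in $[k, 2k)$ (the single last step shrinks size by at most $|V(C)| \leq k$, so we never undershoot $k$); Phase 2 then notes that $P'$ has $> k$ fewer arc occurrences than $P$ while $P_{\mathrm{simple}}$ has only $O((k/r)^2) \leq k$ distinct arcs (using niceness and $r \geq 1$, $k' \geq 2k$ forces $k \geq 1$; the bound $30(k/r)^2 < k$ needs $k/r$ not too small, which can be arranged, or one argues directly that some multiplicity hit zero during Phase 1 by a potential-function/pigeonhole count). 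Setting $Q = P'$ and $k'' = |V(Q)| \in [k, 2k) $ completes the proof.
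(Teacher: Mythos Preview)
Your approach has a genuine gap: the bound $|A(P_{\mathrm{simple}})| = O((k/r)^2)$ that you invoke (both to bound the size of a removed closed sub-walk and in the Phase~2 pigeonhole) is not available here --- it is proved \emph{later} in the paper (Lemmas~\ref{lem:Psimple} and~\ref{lem:distinctImproved}), and those proofs rely on Corollary~\ref{cor:shortenPath}, which in turn rests on the present lemma. So the argument as written is circular. Without that bound, the closed sub-walk between two closest repeated-arc occurrences is a closed trail (edge-disjoint union of cycles), each cycle of length $< k/r$, but you have no control on how many cycles, so its size could exceed $k$. Separately, the pigeonhole in Phase~2 does not work even granted the bound: removing more than $k$ arc \emph{occurrences} from a multigraph with at most $k$ \emph{distinct} arcs does not force any multiplicity to hit zero --- all the deletions could land on a few high-multiplicity arcs.

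The paper sidesteps both difficulties by working in $P_{\mathrm{multi}}$ rather than along the walk $P$. Since the instance is nice and $k' \geq 2k$, $P_{\mathrm{simple}}$ must contain a (simple, directed) cycle $C$; let $\Delta$ be the \emph{minimum} multiplicity in $P_{\mathrm{multi}}$ of an arc of $C$. Deleting $\Delta$ copies of every arc of $C$ removes at most $\Delta \cdot |A(C)| < r \cdot (k/r) = k$ arc copies (here $\Delta \leq r$ since $P$ is $r$-simple), so the remaining multigraph still has at least $k'-1-k \geq k-1$ arcs, i.e.\ an Euler trail of size $\geq k$; and the arc of $C$ achieving the minimum now has multiplicity zero, so $Q_{\mathrm{simple}} \subsetneq P_{\mathrm{simple}}$ automatically. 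The only remaining issue is that the deletion may disconnect the multigraph, which the paper handles by discarding the smaller component (containing fewer than half the arcs) and reconnecting via the surviving arcs of $C$. This gives both size and strict-subgraph conditions in one step, with no iteration and no forward reference.
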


\begin{proof}
First, observe that since $G$ has no path of size at least $2k/r$, it holds that $P_{\mathrm{simple}}$ contains at least one cycle. We choose such a cycle $C$ arbitrarily.
In what follows, we use the cycle $C$ to modify the walk $P$ in order to obtain a walk $Q$ that has the desired property. To this end,  let $\Delta$ be the {\em minimum} number of times an arc of $C$ occurs in $P$. Let $H$ be the directed multigraph obtained from $P_{\mathrm{multi}}$ by removing first $\Delta$ copies of every arc in $C$ and then isolated vertices, if any. In addition, let ${\cal Q}$ be the set of weakly connected components of $H.$
Let $u$ and $v$ denote the first and last (not necessarily distinct) vertices visited by $P$. We consider two subcases depending on $|{\cal Q}|$. 
\begin{figure}[ht]
 \subfigure[$P_{\rm simple}$]
 {
 \begin{minipage}{\textwidth}\centering
 \tikzstyle{every node}=[draw,circle,fill=black,minimum size=1pt,inner sep=1.5pt]%
  \begin{tikzpicture}[>=stealth,->,thin]
   \node[label=above:$x$] at (2,0)  (x) {};
   \node[label=below:$u$] at (0,-2)  (u) {};
   \node[label=below:$y$] at (2,-2)  (y) {};
   \node[label=below:$v$] at (4,-2)  (v) {};
   \draw (u) -- (y);
   \draw (y) -- (v);
   \draw (x) -- (u);
   \draw (y) -- (x); 
  \end{tikzpicture}
  \end{minipage}
 }
 
 \vspace*{\baselineskip}
 
 \subfigure[$P_{\rm multi}$]
 {
 \begin{minipage}{.475\textwidth}\centering
 \tikzstyle{every node}=[draw,circle,fill=black,minimum size=1pt,inner sep=1.5pt]%
  \begin{tikzpicture}[>=stealth,->,thin]
   \node[label=above:$x$] at (2,0)  (x) {};
   \node[label=below:$u$] at (0,-2)  (u) {};
   \node[label=below:$y$] at (2,-2)  (y) {};
   \node[label=below:$v$] at (4,-2)  (v) {};
   \draw (u) -- (y);
   \draw (u) edge [bend left] (y);
    \draw (u) edge [bend right] (y);
   \draw (y) -- (v);
   \draw (x) -- (u);
   \draw (x) edge [bend left] (u); 
   \draw (y) -- (x); 
   \draw (y) edge [bend right] (x); 
  \end{tikzpicture}
  \end{minipage}

 }
 \hfill
 \subfigure[$H$]
 {
  \begin{minipage}{.475\columnwidth}\centering
   \tikzstyle{every node}=[draw,circle,fill=black,minimum size=1pt,inner sep=1.5pt]%
  \begin{tikzpicture}[>=stealth,->,thin]
     \node[label=below:$u$] at (0,-2)  (u) {};
   \node[label=below:$y$] at (2,-2)  (y) {};
   \node[label=below:$v$] at (4,-2)  (v) {};
   \draw (u) -- (y);
   \draw (y) -- (v);
  \end{tikzpicture}
  \end{minipage}
 }
 \caption{Illustrations for a 3-simple path $P=uyxuyxuyv;$ $C=uyxu.$}
 \label{Case1fig}
\end{figure}

\begin{figure}[ht]
 \subfigure[$P_{\rm multi}$]
 {
 \begin{minipage}{\textwidth}\centering
 \tikzstyle{every node}=[draw,circle,fill=black,minimum size=1pt,inner sep=1.5pt]%
  \begin{tikzpicture}[>=stealth,->,thin]
   \node[label=above:$x$] at (2,0)  (x) {};
   \node[label=below:$u$] at (0,-2)  (u) {};
   \node[label=below:$y$] at (2,-2)  (y) {};
   \node[label=below:$v$] at (4,-2)  (v) {};
    \node[label=above:$z$] at (0,0)  (z) {};
  \draw (u) -- (y);
   \draw (u) edge [bend left] (y);
    \draw (u) edge [bend right] (y);
   \draw (y) -- (v);
   \draw (x) -- (u);
   \draw (x) edge [bend left] (u); 
   \draw (y) -- (x); 
   \draw (y) edge [bend right] (x); 
     \draw (x) -- (z); 
   \draw (z) edge [bend left] (x);  
  \end{tikzpicture}
  \end{minipage}
 }
 
 \vspace*{\baselineskip}
 
 \subfigure[$\cal Q$]
 {
 \begin{minipage}{.475\textwidth}\centering
 \tikzstyle{every node}=[draw,circle,fill=black,minimum size=1pt,inner sep=1.5pt]%
  \begin{tikzpicture}[>=stealth,->,thin]
    \node[label=above:$x$] at (2,0)  (x) {};
     \node[label=below:$u$] at (0,-2)  (u) {};
   \node[label=below:$y$] at (2,-2)  (y) {};
   \node[label=below:$v$] at (4,-2)  (v) {};
    \node[label=above:$z$] at (0,0)  (z) {};
   \draw (u) -- (y);
   \draw (y) -- (v);
     \draw (x) -- (z); 
   \draw (z) edge [bend left] (x);  
  \end{tikzpicture}
  \end{minipage}

 }
 \hfill
 \subfigure[$H^{\star}$]
 {
  \begin{minipage}{.475\columnwidth}\centering
   \tikzstyle{every node}=[draw,circle,fill=black,minimum size=1pt,inner sep=1.5pt]%
  \begin{tikzpicture}[>=stealth,->,thin]
 \node[label=above:$x$] at (2,0)  (x) {};
   \node[label=below:$u$] at (0,-2)  (u) {};
   \node[label=below:$y$] at (2,-2)  (y) {};
   \node[label=below:$v$] at (4,-2)  (v) {};
   \draw (u) -- (y);
   \draw (u) edge [bend left] (y);
    \draw (u) edge [bend right] (y);
   \draw (y) -- (v);
   \draw (x) -- (u);
   \draw (x) edge [bend left] (u); 
   \draw (y) -- (x); 
   \draw (y) edge [bend right] (x); 
  \end{tikzpicture}
  \end{minipage}
 }
 \caption{Illustrations for a 3-simple path $P=uyxzxuyxuyv;$ $C=uyxu.$}
 \label{Case2fig}
\end{figure}

	\begin{enumerate}
	\item First, suppose that $|{\cal Q}|=1$  (e.g. see Fig. \ref{Case1fig}). Then, $H$ is weakly connected. Since $P_{\mathrm{multi}}$ has a $(u,v)$-path that visits every arc (that is the path $P$), by Theorem \ref{prop:euler}, 
$P_{\mathrm{multi}}$ is balanced if $u=v$ and 	$(u,v)$-almost balanced, otherwise.
	By the definition of $H$, every vertex in $V(G)$ has either both its out-degree and in-degree in $H$ equal to those in $P_{\mathrm{multi}}$ or both its out-degree and in-degree in $H$ smaller by $\Delta$ compared to those in $P_{\mathrm{multi}}$. Hence, $H$ is balanced if $u=v$ and $(u,v)$-almost balanced, otherwise.
	Thus, by Theorem \ref{prop:euler}, $H$ has an Euler trail $Q$. Moreover, since $(G,k,r,s,t)$ is nice, $|A(C)|<k/r$ and therefore $|A(Q)|>|A(P)|-\Delta(k/r) \geq k$. Lastly, since the out- and in-degrees of at least one vertex of $C$ was reduced from $\Delta$ in $P_{\mathrm{multi}}$ to $0$ in $H$, it holds that $|V(P_{\mathrm{simple}})|>|V(Q_{\mathrm{simple}})|$. Thus, $Q$ is an $r$-simple $k''$-path, for some integer $k''\geq k$, such that $Q_{\mathrm{simple}}$ is a subgraph of $P_{\mathrm{simple}}$ that is not equal to $P_{\mathrm{simple}}$.
	
	\item\label{item:lemshortenPathItem2} Now, suppose that $|{\cal Q}|\geq 2$  (e.g. see Fig. \ref{Case2fig}). Let $Q_{\mathrm{min}}$ be a component in ${\cal Q}$ that has minimum number of arcs. Then, $|A(Q_{\mathrm{min}})|<|A(P)|/2$. Let $H^\star$ be the directed multigraph obtained from $P_{\mathrm{multi}}$ by first removing all the arcs in $Q_{\mathrm{min}}$ and then isolated vertices, if any. Since $P_{\mathrm{multi}}$ has a $(u,v)$-path that visits every arc (that is the path $P$), by Theorem \ref{prop:euler},  $P_{\mathrm{multi}}$ is balanced if $u=v$ and $(u,v)$-almost balanced, otherwise. As in the previous case, every vertex in $V(G)$ has either both its out-degree and in-degree in $H$ equal to those in $P_{\mathrm{multi}}$ or both its out-degree and in-degree in $H$ smaller by $\Delta$ compared to those in $P_{\mathrm{multi}}$. If $u\neq v$, this means that either both $u,v\in V(H^\star)$ or both $u,v\notin V(H^\star).$ 
	Indeed, as in any directed multigraph, in $Q_{\mathrm{min}}$, the sum of in-degrees of all vertices equals the sum of out-degrees of all vertices. Thus, if $u\in V(Q_{\mathrm{min}})\setminus V(H^\star)$ then $v\in V(Q_{\mathrm{min}})\setminus V(H^\star)$ as well.
	However, this means that $H^\star$ is balanced if $u=v$ and $(u,v)$-almost balanced, otherwise. Moreover,  $H^\star$ is weakly connected (because $H^\star$ consists of a collection of components in ${\cal Q}$ together with the arcs in $C$ that connect their underlying undirected graphs). 
Thus, by Theorem  \ref{prop:euler}, $H^\star$ has an Euler trail $Q$. Moreover, $|A(Q)|>\frac{1}{2}|A(P)|\geq k$. In addition, $|A(P_{\mathrm{simple}})|>|A(Q_{\mathrm{simple}})|$ since by definition of $\cal Q$, $Q_{\mathrm{min}}$ has arcs and none of them can be in $Q.$
 Thus, $Q$ is an $r$-simple $k''$-path, for some integer $k''\geq k$, such that $Q_{\mathrm{simple}}$ is a subgraph of $P_{\mathrm{simple}}$ that is not equal to $P_{\mathrm{simple}}$.
\end{enumerate}
In both cases, we constructed a walk $Q$ with the desired property, hence the proof is complete.
\end{proof}

A repeated application of Lemma \ref{lem:shortenPath} brings us the following corollary.

\begin{corollary}\label{cor:shortenPath}
Let $(G,k,r,s,t)$ be a nice instance of \diPathRst. Let $P$ be an $r$-simple $k'$-path in $G$ for some integer $k'\geq 2k$. Then, $G$ has an $r$-simple $k''$-path $Q$, for some integer $k''\in\{k,k+1,\ldots,2k\}$, such that $Q_{\mathrm{simple}}$ is a subgraph of $P_{\mathrm{simple}}$ that is not equal to $P_{\mathrm{simple}}$.
\end{corollary}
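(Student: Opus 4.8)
The plan is to iterate Lemma~\ref{lem:shortenPath}, tracking the size of the walk so that we can guarantee the final walk lands in the window $\{k,k+1,\dots,2k\}$. The subtlety is that Lemma~\ref{lem:shortenPath} only guarantees $k''\ge k$ and that $Q_{\mathrm{simple}}$ is a \emph{strict} subgraph of $P_{\mathrm{simple}}$; it says nothing about how large $k''$ might be. So the naive ``keep applying the lemma until the size drops below $2k$'' is not immediately valid, because a single application could in principle leave the size far above $2k$. The key observation I would use to sidestep this is that the walk $Q$ produced by Lemma~\ref{lem:shortenPath} is always an Euler trail of a multigraph obtained from $P_{\mathrm{multi}}$ by \emph{deleting} arc-copies, hence $|A(Q)|\le |A(P)|$; that is, the size never increases. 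Combined with strict decrease of $|A(P_{\mathrm{simple}})|$, this gives both progress and a bound.

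Concretely, I would argue as follows. Start with $P_0=P$ of size $k'\ge 2k$. As long as the current walk $P_i$ has size $k_i\ge 2k$, apply Lemma~\ref{lem:shortenPath} to obtain $P_{i+1}$ of some size $k_{i+1}\ge k$ with $(P_{i+1})_{\mathrm{simple}}$ a strict subgraph of $(P_i)_{\mathrm{simple}}$. Since $|A((P_i)_{\mathrm{simple}})|$ strictly decreases and is a nonnegative integer bounded by $|A(G)|$, the process must terminate after at most $|A(G)|$ steps, and it can only terminate by producing a walk $Q$ whose size $k''$ satisfies $k\le k''\le 2k-1<2k$ — wait, more carefully: it terminates when $k_i<2k$, and the \emph{last} application was to a walk of size $\ge 2k$, so Lemma~\ref{lem:shortenPath} applies to it and the output has size $\ge k$; thus the terminating walk $Q$ has $k\le k''\le 2k-1$. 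Hmm, I should double check whether $k''$ could equal exactly $2k$: if at some stage the walk has size exactly $2k$ we apply the lemma once more, so the final walk has size strictly less than $2k$; either way $k''\in\{k,\dots,2k\}$ as claimed (in fact $\{k,\dots,2k-1\}$, but the weaker statement suffices and matches the corollary).

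There is one genuine gap to fill: I need the ``size does not increase'' fact, i.e.\ that in Lemma~\ref{lem:shortenPath} $k''\le k'$. Re-examining the proof of Lemma~\ref{lem:shortenPath}: in both cases $H$ (resp.\ $H^\star$) is obtained from $P_{\mathrm{multi}}$ by removing arc-copies, so its total number of arcs is at most $|A(P_{\mathrm{multi}})| = k'-1$, and the Euler trail $Q$ has size one more than its number of arcs, hence $k''\le k'$. So I would either observe this is implicit in Lemma~\ref{lem:shortenPath}'s proof, or — cleaner — strengthen the argument to additionally guarantee $(P_{i+1})_{\mathrm{simple}}\subsetneq (P_i)_{\mathrm{simple}}$ transitively gives $Q_{\mathrm{simple}}\subsetneq P_{\mathrm{simple}}$ (strict because at least the very first step removed an arc from the simple graph, and subgraph containment is transitive). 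The main obstacle, then, is not deep: it is just making precise that the iteration terminates in the right size window, which rests on the monotonicity $|A((P_i)_{\mathrm{simple}})|\downarrow$ together with $k_{i+1}\le k_i$ and $k_{i+1}\ge k$ whenever $k_i\ge 2k$. Below is the argument.

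\begin{proof}
We apply Lemma~\ref{lem:shortenPath} repeatedly. Set $P_0 := P$ and let $k_0 := k'\ge 2k$ be its size. Suppose we have constructed an $r$-simple walk $P_i$ of size $k_i$ with $(P_i)_{\mathrm{simple}}$ a subgraph of $P_{\mathrm{simple}}$, and with $(P_i)_{\mathrm{simple}}\ne P_{\mathrm{simple}}$ whenever $i\ge 1$. If $k_i\le 2k$, we stop. Otherwise $k_i\ge 2k+1>2k$, so we may apply Lemma~\ref{lem:shortenPath} to $P_i$: it yields an $r$-simple walk $P_{i+1}$ of some size $k_{i+1}\ge k$ such that $(P_{i+1})_{\mathrm{simple}}$ is a subgraph of $(P_i)_{\mathrm{simple}}$ with $(P_{i+1})_{\mathrm{simple}}\ne (P_i)_{\mathrm{simple}}$. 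Since subgraph containment is transitive, $(P_{i+1})_{\mathrm{simple}}$ is a subgraph of $P_{\mathrm{simple}}$, and it is a proper subgraph because it is a proper subgraph of $(P_i)_{\mathrm{simple}}\subseteq P_{\mathrm{simple}}$.

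The quantity $|A((P_i)_{\mathrm{simple}})|$ is a nonnegative integer that strictly decreases with each application of Lemma~\ref{lem:shortenPath} (since $(P_{i+1})_{\mathrm{simple}}$ is a proper subgraph of $(P_i)_{\mathrm{simple}}$ on the same or fewer vertices). Hence the process terminates after finitely many steps, say at index $j$, with an $r$-simple walk $Q := P_j$ of size $k'' := k_j$. By the stopping condition, $k''\le 2k$. It remains to show $k''\ge k$. If $j=0$ then $k''=k'\ge 2k\ge k$. If $j\ge 1$, then $P_j$ was produced by applying Lemma~\ref{lem:shortenPath} to $P_{j-1}$, whose size satisfied $k_{j-1}>2k$, so that lemma guarantees $k_j\ge k$, i.e.\ $k''\ge k$. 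In all cases $k''\in\{k,k+1,\ldots,2k\}$.

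Finally, $Q_{\mathrm{simple}}=(P_j)_{\mathrm{simple}}$ is a subgraph of $P_{\mathrm{simple}}$, and if $j\ge 1$ it is a proper subgraph by the invariant above. If $j=0$, i.e.\ no reduction was performed, then $k'\le 2k$; combined with the hypothesis $k'\ge 2k$ this forces $k'=2k>2k$ is impossible, so in fact $k'=2k$ and we may still apply Lemma~\ref{lem:shortenPath} once to $P_0$ (its hypothesis requires size $\ge 2k$, which holds), obtaining a proper subgraph; hence we may assume $j\ge 1$. Therefore $Q_{\mathrm{simple}}$ is a subgraph of $P_{\mathrm{simple}}$ that is not equal to $P_{\mathrm{simple}}$, as required.
\end{proof}
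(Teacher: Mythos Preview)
Your approach is correct and matches the paper's, which simply states the corollary follows by ``repeated application of Lemma~\ref{lem:shortenPath}.'' Your handling of the $j=0$ edge case is slightly garbled (the phrase ``$k'=2k>2k$ is impossible'' is nonsensical, and applying the lemma once does not immediately justify ``we may assume $j\ge 1$'' since the resulting walk could again have size $>2k$); the clean fix is to use the stopping condition $k_i<2k$ instead of $k_i\le 2k$, so that at least one iteration is always performed (since $k_0=k'\ge 2k$) and the terminal walk automatically has size in $\{k,\dots,2k-1\}$.
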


In fact, $k''\in\{k,k+1,\ldots,2k\}$ above can be clearly replaced by $k''\in\{k,k+1,\ldots,2k-1\},$ but for simplicity in what follows we will use the former rather than the latter.

We now prove that if $(G,k,r,s,t)$ is a positive instance of \diPathRst, then $G$ has an $r$-simple $k'$-path for some $k'\in\{k,k+1,\ldots,\PsimpleSize\}$ such that $V(P,r)$ and $P_{\mathrm{simple}}^{-r}$ satisfy three properties regarding their structure. In addition, we prove that if $(G,k,r,s,t)$ is a negative instance of \diPathRst, then at least one $r$-simple $(s,t)$-path $P$ in $G$ of maximum size satisfies these three properties as well. 

\begin{lemma}\label{lem:Psimple}
Let $(G,k,r,s,t)$ be a nice instance of \diPathRst. If $(G,k,r,s,t)$ is a positive instance, then $G$ has an $r$-simple $k'$-path $P$ for some $k'\in\{k,k+1,\ldots,\PsimpleSize\}$ that satisfies the following three properties.
\begin{enumerate}
\item\label{item:Psimple1} $P_{\mathrm{simple}}^{-r}$ is an acyclic digraph.
\item\label{item:Psimple2} For any (not necessarily distinct) $u,v\in V(P)$, $P_{\mathrm{simple}}^{u,v,-r}$ has at most one $(u,v)$-path.\footnote{Recall that if $u=v$, by a $(u,v)$-path we mean a $(u,u)$-cycle.}
\item\label{item:Psimple3} $|V(P,r)|\leq \PsimpleSize/r + 2$.
\end{enumerate}

Otherwise (if $(G,k,r,s,t)$ is a negative instance), $G$ has an $r$-simple $(s,t)$-path $P$ of maximum size that satisfies these three properties.
\end{lemma}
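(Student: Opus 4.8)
The plan is to fix one walk $P$ by an extremal choice and show that violating any of the three properties lets us perturb $P$ into a strictly ``better'' walk. In the positive case I would start from an arbitrary $r$-simple $k$-path and, if it is too long, apply Corollary~\ref{cor:shortenPath} to it, so that there is always a candidate solution of size in $\{k,\dots,2k\}$; in the negative case the candidates are the $r$-simple $(s,t)$-paths of maximum size (necessarily of size $<k$, as the instance is negative). Among all candidates choose $P$ to minimize $|A(P_{\mathrm{simple}})|$, and, subject to that, to maximize $|V(P,r)|$. Property~\ref{item:Psimple3} is then immediate: every vertex counted in $V(P,r)$ other than $s,t$ contributes at least $r$ vertex occurrences to $P$, whose size is at most $2k$, so $|V(P,r)|\le 2k/r+2$.

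For Property~\ref{item:Psimple1}, suppose $P_{\mathrm{simple}}^{-r}$ contains a directed cycle $C$; every vertex of $C$ is then light (occurs $<r$ times on $P$) and is neither $s$ nor $t$. Add to $P_{\mathrm{multi}}$ exactly $j:=r-\max_{w\in V(C)}(\text{number of occurrences of }w\text{ on }P)\ge 1$ parallel copies of each arc of $C$. Since a cycle is Eulerian, this preserves all out-/in-degree balances, the connectivity of the underlying graph, the endpoints of the walk, and (by the choice of $j$) $r$-simplicity; an Euler trail of the result is again an $r$-simple path with the same $P_{\mathrm{simple}}$ but with strictly larger $V(\cdot,r)$, since some vertex of $C$ now occurs exactly $r$ times and no occurrence count dropped. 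In the negative case its size strictly exceeds that of $P$, contradicting maximality of the size. In the positive case, if the new size is still at most $2k$ this contradicts maximality of $|V(P,r)|$; if it exceeds $2k$, Corollary~\ref{cor:shortenPath} yields a candidate whose $P_{\mathrm{simple}}$ is a proper subgraph of the old one, contradicting minimality of $|A(P_{\mathrm{simple}})|$.

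For Property~\ref{item:Psimple2}, suppose $P_{\mathrm{simple}}^{u,v,-r}$ contains two distinct $(u,v)$-paths (the case $u=v$, i.e. two distinct cycles through $u$, is handled the same way). Following them from $u$ until they diverge, then until they first re-meet, extract two directed $x$-$y$ paths $R_1,R_2$ that are internally vertex-disjoint, arc-disjoint, and whose internal vertices are all light; relabel so that $|A(R_1)|\ge |A(R_2)|$. Now reroute flow from $R_2$ onto $R_1$: remove $\delta'$ copies of each arc of $R_2$ from $P_{\mathrm{multi}}$ and add $\delta'$ copies of each arc of $R_1$, where $\delta':=\min\{\min_{a\in A(R_2)}c_a,\ r-\max_{w\in \mathrm{int}(R_1)}(\text{occurrences of }w\text{ on }P)\}\ge 1$; the second term is positive because, $G$ being a simple digraph, $R_1$ cannot be a single arc parallel to $R_2$. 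This preserves all degree balances and the endpoints, keeps the walk $r$-simple (by the choice of $\delta'$), and does not decrease the size. If $\delta'$ is attained by the first term, some arc of $R_2$ disappears, so $P_{\mathrm{simple}}$ strictly shrinks; if it is attained by the second term, an internal vertex of $R_1$ now occurs exactly $r$ times while no occurrence count dropped outside the (light) interior of $R_2$, so $|V(\cdot,r)|$ strictly grows. Either way — re-applying Corollary~\ref{cor:shortenPath} if the size overshot $2k$ — we obtain a candidate contradicting the extremal choice of $P$, just as for Property~\ref{item:Psimple1}.

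The hard part is the bookkeeping around these perturbations, exactly where the ``nice'' hypothesis and Corollary~\ref{cor:shortenPath}/Lemma~\ref{lem:shortenPath} are invoked. One must check that after deleting arc-copies the underlying multigraph is still weakly connected, or else pass to an appropriate component as in the proof of Lemma~\ref{lem:shortenPath}; that the endpoints of the Euler trail and the $\pm 1$ imbalances at $u,v$ (resp. $s,t$) survive even when $u$ or $v$ lies on $C$ or on $R_1\cup R_2$; and that the size never falls below $k$ — which is precisely why each reroute is arranged to add at least as many arc-copies as it removes. The negative case, whose candidate set is rigid (all candidates share the same maximum size), needs a little extra care in the disconnection subcase. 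I expect this routine-but-fiddly case analysis, rather than any new idea, to constitute the bulk of the work.
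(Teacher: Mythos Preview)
Your proposal is correct and follows essentially the same approach as the paper: choose $P$ extremally (first minimize $|A(P_{\mathrm{simple}})|$, then maximize $|V(P,r)|$), derive Property~\ref{item:Psimple3} by counting, and for Properties~\ref{item:Psimple1} and~\ref{item:Psimple2} perturb $P_{\mathrm{multi}}$ by pumping copies around a light cycle or rerouting from the shorter to the longer of two internally vertex-disjoint $(x,y)$-paths, invoking Corollary~\ref{cor:shortenPath} whenever the size overshoots $2k$ and the component argument of Lemma~\ref{lem:shortenPath} when deletion disconnects. Your explicit separation of the negative case (where a strictly longer $(s,t)$-walk immediately contradicts maximality) is in fact slightly cleaner than the paper's presentation.
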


\begin{proof}
We define a collection of walks $\cal P$ as follows: if $(G,k,r,s,t)$ is a positive instance, then ${\cal P}$ is the set of all $r$-simple $k'$-paths in $G$ where $k'\in\{k,k+1,\ldots,\PsimpleSize\}$; otherwise, ${\cal P}$ is the set of all $r$-simple $(s,t)$-paths in $G$ of maximum size. In both cases, ${\cal P}\neq\emptyset$. For any $\ell\in\mathbb{N}$ and $r$-simple path $P$ of size $\ell$, $V(P,r)\setminus\{s,t\}$ can contain at most $\lfloor\ell/r\rfloor$ vertices.
Therefore, in the first case, since $k'\leq \PsimpleSize$, every walk in $\cal P$ satisfies Property \ref{item:Psimple3}. In the second case, every walk $P\in{\cal P}$ contains less than $k$ vertices (since the instance is negative), therefore $P$ satisfies Property \ref{item:Psimple3}. Thus, it suffices to show that there exists a walk in $\cal P$ that satisfies Properties \ref{item:Psimple1} and \ref{item:Psimple2}.

Let ${\cal P}'$ be the set of walks $P\in {\cal P}$ with minimum number of arcs in $P_{\mathrm{simple}}$. Moreover, let ${\cal P}''$ be the set of walks $P\in {\cal P}'$ that maximize $|V(P,r)|$.

We claim that every walk in ${\cal P}''$ satisfies Properties \ref{item:Psimple1} and \ref{item:Psimple2}. For this purpose, we consider an arbitrary walk $P\in {\cal P}''$. Let $u$ and $v$ denote the first and last (not necessarily distinct) vertices visited by $P$. (If $(G,k,r,s,t)$ is a negative instance, then $u=s$ and $v=t$.) Suppose, by way of contradiction, that $P$ does not satisfy Property \ref{item:Psimple1}. Then, $P_{\mathrm{simple}}^{-r}$ has a directed cycle $C$. Let $\Delta$ be the {\em maximum} out-degree in $P_{\mathrm{multi}}$ of a vertex in $C$. Note that $\Delta<r$ because $V(C)\cap V(P,r)=\emptyset$. Let $H$ be the directed multigraph obtained from $P_{\mathrm{multi}}$ by adding $r-\Delta$ copies of every arc in $C$. Since $P_{\mathrm{multi}}$ has a $(u,v)$-path that visits every arc (that is the path $P$), by Theorem  \ref{prop:euler}, $P_{\mathrm{multi}}$ is balanced if $u=v$ and $(u,v)$-almost balanced, otherwise. By our construction of $H$, it has the same property. Indeed, every vertex in $V(G)$ has either both its out-degree and in-degree in $H$ equal to those in $P_{\mathrm{multi}}$ or both its out-degree and in-degree in $H$ larger by $r-\Delta$ compared to those in $P_{\mathrm{multi}}$. 
Thus, by Theorem \ref{prop:euler}, $H$ has an Euler trail $P'$ {\em with the same endpoints as $P$}. 
Let us consider two cases, depending on the size of $P'$.
\begin{enumerate}
\item First, suppose that $P'$ is of size at most $2k$. Then, $P'\in{\cal P}$, and since $P'_{\mathrm{simple}}=P_{\mathrm{simple}}$, it further holds that $P'\in{\cal P}'$. However, $|V(P',r)|>|V(P,r)|$ because at least one vertex of $C$ belongs to $V(P',r)$ but not to $V(P,r)$ and clearly $V(P,r)\subseteq V(P',r)$. Thus, we have a contradiction to the inclusion $P\in{\cal P}''$.
\item Second, suppose that $P'$ is of size larger than $2k$. We stress that in this case, $(G,k,r,s,t)$ is positive. By Corollary \ref{cor:shortenPath}, $G$ has an $r$-simple $k''$-path $Q$, for some integer $k''\in\{k,k+1,\ldots,2k\}$, such that $Q_{\mathrm{simple}}$ is a subgraph of $P'_{\mathrm{simple}}$ that is not equal to $P'_{\mathrm{simple}}$. Then, $Q\in{\cal P}$ because in this case, to be included in ${\cal P}$, a walk does not need to have the same start and end vertices as $P$. Since $P'_{\mathrm{simple}}=P_{\mathrm{simple}}$, we have that $|A(Q_{\mathrm{simple}})|<|A(P'_{\mathrm{simple}})|=|A(P_{\mathrm{simple}})|$, which is a contradiction to the inclusion $P\in{\cal P}''$.
\end{enumerate}

It remains to argue that $P$ satisfies Property \ref{item:Psimple2}. Suppose, by way of contradiction, that this claim is false. Then, for some vertices $x,y\in V(P)$, it holds that $P_{\mathrm{simple}}^{x,y,-r}$ has at least two pairwise {\em internally vertex disjoint} $(x,y)$-paths. Denote two such different vertex disjoint paths (chosen arbitrarily) by $P^{x\rightarrow y}_1$ and $P^{x\rightarrow y}_2$ such that $|A(P^{x\rightarrow y}_1)|\geq |A(P^{x\rightarrow y}_2)|$. Note that $V(P^{x\rightarrow y}_1)\setminus\{x,y\}=V(P^{x\rightarrow y}_1)\setminus V(P^{x\rightarrow y}_2)\neq\emptyset$ and $A(P^{x\rightarrow y}_2)\cap A(P^{x\rightarrow y}_1)=\emptyset$. (Note that $V(P^{x\rightarrow y}_2)\setminus V(P^{x\rightarrow y}_1)$ can be empty since $P^{x\rightarrow y}_2$ can consist of a single arc).
Let $\Delta_1$ denote the {\em maximum} out-degree in $P_{\mathrm{multi}}$ of a vertex in $V(P^{x\rightarrow y}_1)\setminus \{x,y\}$. In addition, let $\Delta_2$ denote the {\em minimum} number of times an arc of $A(P^{x\rightarrow y}_2)$ occurs in $P$. Now, denote $\Delta=\min\{r-\Delta_1,\Delta_2\}$. Let $H$ be the directed multigraph obtained from $P_{\mathrm{multi}}$ by adding $\Delta$ copies of every arc of $P^{x\rightarrow y}_1$, and removing $\Delta$ copies of every arc of $P^{x\rightarrow y}_2$  as well as isolated vertices. In addition, let ${\cal Q}$ be the set of weakly connected components of $H$. We consider two subcases depending on the size of $|{\cal Q}|$. 

	\begin{enumerate}
	\item First, suppose that $|{\cal Q}|=1$. Then, $H$ is weakly connected. Since $P_{\mathrm{multi}}$ has a $(u,v)$-path that visits every arc (that is the path $P$), by Theorem \ref{prop:euler}, 
	$P_{\mathrm{multi}}$ is balanced if $u=v$ and $(u,v)$-almost balanced, otherwise. By the definition of $H$, every vertex in $V(G)$ has {\em (i)} both its out-degree and in-degree in $H$ equal to those in $P_{\mathrm{multi}}$, or {\em (ii)} both its out-degree and in-degree in $H$ larger by $\Delta$ compared to those in $P_{\mathrm{multi}}$, or {\em (iii)} both its out-degree and in-degree in $H$ smaller by $\Delta$ compared to those in $P_{\mathrm{multi}}$. Thus, $H$ is balanced if $u=v$ and $(u,v)$-almost balanced, otherwise. Thus, by Theorem \ref{prop:euler}, $H$ has an Euler trail $P'$. Moreover, since $|A(P^{x\rightarrow y}_1)|\geq |A(P^{x\rightarrow y}_2)|$, we have that $|A(P')|\geq|A(P)|\geq k$. In addition, $P'_{\mathrm{simple}}$ is a subgraph of $P_{\mathrm{simple}}$. We consider three subcases depending on $\Delta$ and the size of $P'$.
		\begin{enumerate}
		\item Suppose that $\Delta=r-\Delta_1>\Delta_2$  and the size of $P'$ is at most $2k$. Then, $P'\in{\cal P}'$ and at least one vertex in $V(P^{x\rightarrow y}_1)\setminus V(P^{x\rightarrow y}_2)$ has out-degree $r$ in $P'_{\mathrm{multi}}$ but not in $P_{\mathrm{multi}}$, while clearly $V(P,r)\subseteq V(P',r)$. However, this is a contradiction to the inclusion $P\in{\cal P}''$.
		\item Suppose that $\Delta=\Delta_2$ and the size of $P'$ is at most $2k$. Then, $P'\in{\cal P}'$ but $P'_{\mathrm{simple}}$ is not equal to $P_{\mathrm{simple}}$ (at least one arc of $P^{x\rightarrow y}_2$ is present in $P_{\mathrm{simple}}$ but not in $P'_{\mathrm{simple}}$). However, this is a contradiction to the inclusion $P\in{\cal P}'$.
		\item Suppose that the size of $P'$ is larger than $2k$. Then, by Corollary \ref{cor:shortenPath}, $G$ has an $r$-simple $k''$-path~$Q$, for some integer $k''\in\{k,k+1,\ldots,2k\}$, such that $Q_{\mathrm{simple}}$ is a subgraph of $P'_{\mathrm{simple}}$ that is not equal to $P_{\mathrm{simple}}$. However, this is a contradiction to the inclusion $P\in{\cal P}'$.
		\end{enumerate}
	
	\item Now, suppose that $|{\cal Q}|\geq 2$. Then, exactly like in Case \ref{item:lemshortenPathItem2} in the proof of Lemma \ref{lem:shortenPath}, we derive that $G$ has an $r$-simple $k''$-path $P'$, for some integer $k''\geq k$, such that $P'_{\mathrm{simple}}$ is a subgraph of $P_{\mathrm{simple}}$ that is not equal to $P_{\mathrm{simple}}$. By Corollary \ref{cor:shortenPath}, this means that $G$ has an $r$-simple $\widehat{k}$-path~$Q$, for some integer $\widehat{k}\in\{k,k+1,\ldots,2k\}$, such that $Q_{\mathrm{simple}}$ is a subgraph of $P'_{\mathrm{simple}}$ that is not equal to $P'_{\mathrm{simple}}$. However, this is a contradiction to the inclusion $P\in{\cal P}'$.
\end{enumerate}
Since both cases led to a contradiction, the proof is complete.
\end{proof}

%
%

Having Lemma \ref{lem:Psimple} at hand, we can already bound the number of distinct arcs. In Section \ref{sec:numDistinctArcTight}, we present additional arguments on top of Lemma \ref{lem:Psimple} to make the bound tight.

\begin{lemma}\label{lem:distinct}
Let $(G,k,r,s,t)$ be a nice instance of \diPathRst. If $(G,k,r,s,t)$ is positive, then $G$ has an $r$-simple $k$-path with fewer than $10(k/r)^3$ distinct arcs. Otherwise, $G$ has an $r$-simple $(s,t)$-path of maximum size with fewer than $20(k/r)^3$ distinct arcs.
\end{lemma}

\begin{proof}
Let $P$ be a walk with the properties guaranteed by Lemma \ref{lem:Psimple}. Let ${\cal W}$ be the multiset that contains every subwalk of $P$ on at least two vertices, with both endpoints in $V(P,r)$ and with no internal vertex from $V(P,r)$. (The walks in $\cal W$ can be closed walks.) Let $\ell=|V(P,r)|.$  

By Property \ref{item:Psimple1}, every walk in ${\cal W}$ has no vertex that occurs more than once except for its endpoints which may be equal, and hence all walks in $\cal W$ are paths and cycles. Moreover, Property \ref{item:Psimple2} implies that the number of distinct walks in ${\cal W}$ is at most $\ell^2$. By Property \ref{item:Psimple3}, $\ell\leq \PsimpleSize/r + 2$. Therefore, the number of distinct walks in $\cal W$ is at most $(\PsimpleSize/r + 2)^2$. Since the instance $(G,k,r,s,t)$ is nice, $G$ has neither a path with at least $2k/r$ vertices nor a cycle of length at least $k/r$. This means that every walk in ${\cal W}$ has at most $2k/r-1$ arc visits. Thus, we conclude that the number of distinct arcs in $P$ is at most $(\PsimpleSize/r + 2)^2\cdot (2k/r-1) < 20(k/r)^3$. In case $P$ is of size larger than $k$ (then, $(G,k,r,s,t)$ is positive), we can choose any subwalk of $P$ of size $k$ to obtain an $r$-simple $k$-path with fewer than $20(k/r)^3$ distinct arcs.
\end{proof}

\subsection{Tightening the Bound on the Number of Distinct Arcs}\label{sec:numDistinctArcTight}

We proceed to prove that the upper bound $20(k/r)^3$ in Lemma \ref{lem:distinct} can be reduced to a bound whose dependence on $(k/r)$ is quadratic rather than cubic. Afterwards, we show that this upper bound is tight. To obtain the improved upper bound, we need the following definition.

\begin{definition}\label{def:projection}
Let $P$ be an $r$-simple $k$-path in a digraph $G$, and let $X\subseteq V(P)$. The \emph{projection of $P$ onto $X$} is a directed multigraph $H_X=(X, A_X)$ defined as follows. Traverse $P$ in order, from its first to last vertex, and add one arc $(u,v)$ to $A_X$ for every subwalk of $P$ between distinct vertices $u, v\in X$ whose internal vertices (if any) are not in $X$. 
\end{definition}

We show that for some $X\supseteq V(P,r)$ of size at most $|V(P,r)|+2$, we may assume that $H_X$ contains at most $3|X|$ distinct arcs (that is, omitting arc copies). To facilitate the proof, let us make another definition.

\begin{definition}
Let $G$ be a digraph and $S \subseteq V(G)$ a set of vertices. The \emph{split of $G$ on $S$} is the digraph defined by replacing every vertex $v \in S$ by two vertices: $v^h$, retaining all in-arcs incident with $v$, and $v^t$, retaining all out-arcs incident with $v$. 
\end{definition}

We now show the result. 

\begin{lemma}\label{lem:improvedBound}
Let $(G,k,r,s,t)$ be a nice instance of \diPathRst. There is an $r$-simple path $P$ such that the following hold.  If $(G,k,r,s,t)$ is a positive instance, then $P$ is a $k'$-path $P$ for some $k'\in\{k,k+1,\ldots,\PsimpleSize\}$, otherwise $P$ is an $(s,t)$-path of maximum size.  Furthermore, $P$ satisfies the three properties in Lemma \ref{lem:Psimple}, and for some set $X$ with $V(P,r)\subseteq X\subseteq V(G)$, with $|X| \leq |V(P,r)|+2$, the projection $H_X$ of $P$ onto $X$ contains a set of fewer than $3|X|$ distinct arcs whose corresponding walks cover all distinct arcs used by $P$. 
%
\end{lemma}

\begin{proof}
Recall that in the proof of Lemma~\ref{lem:Psimple}, we define a collection of walks $\cal P$ as follows: if $(G,k,r,s,t)$ is a positive instance, then ${\cal P}$ is the set of all $r$-simple $k'$-paths in $G$ where $k'\in\{k,k+1,\ldots,\PsimpleSize\}$; otherwise, ${\cal P}$ is the set of all $r$-simple $(s,t)$-paths in $G$ of maximum size. Moreover, ${\cal P}'$ is the set of walks $P\in {\cal P}$ with minimum number of arcs in $P_{\mathrm{simple}}$, and ${\cal P}''$ is the set of walks $P\in {\cal P}'$ that maximize $|V(P,r)|$. We have shown that there exists a path $P\in{\cal P}''$ which satisfies the three properties in Lemma~\ref{lem:Psimple}. Consider such a path $P$,  and let $s'$ and $t'$ denote the start and end vertices of $P$. (In case  $(G,k,r,s,t)$ is a negative instance, $s'=s$ and $t'=t$.)


Let $X=V(P,r)\cup\{s',t'\}$. Let $H_X$ be the projection of $P$ onto $X$, and let $F$ be the set of arcs of $H_X$ without multiplicity. 
Let $F' \subseteq F$ be a minimal set of arcs whose corresponding walks cover all arcs of $P$.
We will show that if $|F'| \geq 3|X|$, then 
there exists a different solution $P'$ which meets all the above conditions and is preferable to $P$ by our criteria, thereby deriving a contradiction. 
Thus, assume $|F'|\geq 3|X|$, and decompose $F'=F_1 \cup F_2$ where $F_1$ is a spanning tree for the underlying undirected graph of $H_X$ and $F_2 = F' \setminus F_1$. 

Let $G_0$ be the split of $H_X-F_1$ on $X$, where we remove all copies of arcs in $F_1$.  This is a directed multigraph with $2|X|$ vertices and $|F_2| > 2|X|$ arcs, each of which represents a walk in $G$.  Next, consider ``unrolling'' each of the arcs in $F_2$ in some arbitrary order, replacing each arc by all the arcs and vertices of the corresponding walk. 
This adds, for each expanded arc, some $\ell \geq 1$ additional arc copies (while removing the represented arc) and at most $\ell-1$ additional vertices (fewer if several arcs represent walks on a shared vertex set). Furthermore, let $\ell' \leq \ell$ be the number of arcs thus created for which there did not exist a copy already. Then a new vertex can be created only if $\ell'>1$ and the number of created vertices is at most $\ell'-1$. Note that by the minimality of $F'$ we have $\ell'>0$ for every arc we unroll.
Let $G'$ be the resulting directed multigraph.  We show that the underlying undirected graph of $G'$ contains a cycle.
Clearly this holds for $G_0$, since $|E(G_0)|>|V(G_0)|$; we claim that this invariant holds throughout the process of unrolling.
Indeed, every time an arc of $F_2$ is unrolled, the number of new distinct arcs created (minus that removed) is at least as large as the number of new vertices created. Thus $G'$ has at least as many distinct arcs as vertices and its underlying undirected graph contains a cycle.
Let $C$ be the arc set of such a cycle and let $H=P_{\mathsf{multi}}$.
  
We now derive a modification of $H$ from $C$. Define a \emph{sign} for every arc in $C$ by traversing $C$ in an arbitrary direction and labelling every arc traversed in the forward direction as \emph{positive} and every arc traversed in the backwards direction as \emph{negative}. Let $d=1$ if $C$ contains at least as many positive as negative arcs, and otherwise $d=-1$.
We claim that 
modifying the multiplicity in $H$ of every positive arc of $C$ by $+d$ and the multiplicity of every negative arc by $-d$, yields a directed multigraph with an Euler $(s',t')$-trail and where every vertex has out- and in-degree at most $r$. For this, we first note that for every $v \in X$ that occurs on $C$, the modifications of in-arcs and the modifications of out-arcs both sum to zero (since the traversal was derived over a cycle in $G'$, where $v$ was split). Every other vertex either has its out- and in-degrees unmodified, like $v$, or has in- and out-degrees both modified by the same amount (either $+1$ or $-1$). 
Thus, the modification keeps the balances between in- and out-degrees unchanged, and produces a graph where every vertex has in- and out-degree at most $r$. Second, we show that all arcs of the modified graph are in one connected component. 
Assume the contrary, i.e., that due to some arcs having their multiplicities reduced to 0, the resulting graph has at least two connected components containing at least one arc. However, since all arcs represented in $F_1$ are untouched, the resulting graph has a large connected component that visits all vertices of $X$, thus any further component containing at least one arc must be entirely contained in $P_{\mathrm{simple}}^{-r}$. However, all vertices except for possibly $s'$ and $t'$ have in-degree equal to out-degree, and since $s', t' \in X$ it would have to follow that such a ``lost component'' contains a directed cycle outside of $X$. But by Property \ref{item:Psimple1}, $P_{\mathrm{simple}}^{-r}$ is acyclic.  We conclude that all arcs of the modified graph are contained in one connected component.  Hence, this component has an $(s',t')$-Euler trail, which forms an $r$-simple $(s',t')$-path $P'$.  Note furthermore, by the choice of $d$, that $P'$ is at least as long as $P$.  Thus we finally conclude that the modified graph has no isolated vertices and no arc whose multiplicity is reduced to 0, since this would contradict the choice of $P$. 

Moreover, the size of $P'$ cannot exceed $2k$, since then by Corollary \ref{cor:shortenPath} we derive a solution $Q$ (that is, $Q\in{\cal P}$) such that $|A(Q_{\mathrm{simple}})|<|A(P_{\mathrm{simple}})|$, which contradicts the inclusion $P\in{\cal P}'$.
  
Now consider performing this modification several times in the same direction $d$.
There are only two bounding events for this: Either the multiplicity of some arc reduces to 0, or some vertex not in $X$ reaches out-degree $r$. However, both events would contradict our priorities in choosing $P$ (that is, the inclusion $P\in{\cal P}'$ in the first event, and the inclusion $P\in{\cal P}''$ in the second event). This is a contradiction, showing that the cycle $C$ cannot exist, and we conclude that $|F_2|<2|X|$, hence $|F'|<3|X|$ and $F'$ is the required set.
\end{proof}

Let us now conclude our improved bound.

\begin{lemma}\label{lem:distinctImproved}
Let $(G,k,r,s,t)$ be a nice instance of \diPathRst. If $(G,k,r,s,t)$ is positive, then $G$ has an $r$-simple $k$-path with fewer than $30(k/r)^2$ distinct arcs. Otherwise, $G$ has an $r$-simple $(s,t)$-path of maximum size with fewer than $30(k/r)^2$ distinct arcs.
\end{lemma}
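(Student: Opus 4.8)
The plan is to feed the structural conclusion of Lemma~\ref{lem:improvedBound} into the same counting scheme used in the proof of Lemma~\ref{lem:distinct}, but now applied to the small ``representative'' set $F'$ of arcs of $H_X$ rather than to all arcs of $H_X$.

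First I would invoke Lemma~\ref{lem:improvedBound} to fix a walk $P$ --- an $r$-simple $k'$-path with $k'\in\{k,\dots,2k\}$ if $(G,k,r,s,t)$ is positive, and an $r$-simple $(s,t)$-path of maximum size if it is negative --- that satisfies the three properties of Lemma~\ref{lem:Psimple}, together with a vertex set $V(P,r)\subseteq X\subseteq V(G)$ with $|X|\le |V(P,r)|+2$, and a set $F'$ of fewer than $3|X|$ (fewer than $3(|V(P,r)|+2)$ in the negative case) distinct arcs of $H_X$ whose corresponding subwalks of $P$ jointly cover every distinct arc of $P_{\mathrm{simple}}$. Consequently $|A(P_{\mathrm{simple}})|\le\sum_{a\in F'}(\text{number of distinct arcs on the subwalk of }P\text{ represented by }a)$, so it suffices to bound the size of $F'$ and the length of each represented subwalk.

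For the first factor: in the positive case $k'\le 2k$, so Property~\ref{item:Psimple3} gives $|V(P,r)|\le 2k/r+2$ and hence $|F'|<3(2k/r+4)$; in the negative case $P$ has fewer than $k$ vertex occurrences, so $|V(P,r)|<k/r+2$ and hence $|F'|<3(k/r+4)$. For the second factor: each represented subwalk has both endpoints in $X\supseteq V(P,r)$ and no internal vertex in $V(P,r)$, so by Property~\ref{item:Psimple1} ($P_{\mathrm{simple}}^{-r}$ is acyclic) it has no repeated vertex apart from possibly equal endpoints, i.e.\ it is a path or a cycle of $G$; since the instance is nice it therefore has at most $2k/r-1$ arcs, exactly as argued in the proof of Lemma~\ref{lem:distinct}.

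Multiplying the two bounds gives $|A(P_{\mathrm{simple}})|<3(2k/r+4)(2k/r-1)$ in the positive case and $|A(P_{\mathrm{simple}})|<3(k/r+4)(2k/r-1)$ in the negative case; expanding, in each case the difference $30(k/r)^2-3(\cdots)(2k/r-1)$ is a quadratic in $k/r$ with positive leading coefficient and negative discriminant, hence positive for every value of $k/r$, so $|A(P_{\mathrm{simple}})|<30(k/r)^2$. Finally, if $k'>k$ in the positive case I would replace $P$ by an arbitrary length-$k$ subwalk, which introduces no new arcs, obtaining the desired $r$-simple $k$-path; in the negative case $P$ itself is already the required maximum-size $r$-simple $(s,t)$-path. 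I expect no genuine obstacle here --- all the structural work sits in Lemma~\ref{lem:improvedBound} --- the only point requiring care is using the sharp estimate $2k/r-1$ (rather than $2k/r$) for subwalk lengths, which is precisely what makes the constant $30$ go through even for small values of $k/r$.
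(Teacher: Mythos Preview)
Your argument has a genuine gap in the step where you claim that the subwalks corresponding to the arcs in $F'$ ``jointly cover every distinct arc of $P_{\mathrm{simple}}$''. This overstates what Lemma~\ref{lem:improvedBound} actually gives you. By Definition~\ref{def:projection}, the projection $H_X$ only records subwalks of $P$ between \emph{distinct} vertices of $X$; consequently the set $F'$ of arcs of $H_X$, and the walks they represent, only account for arcs of $P_{\mathrm{simple}}$ lying on such subwalks. A subwalk of $P$ that leaves some $v\in X$ and returns to $v$ without visiting any other vertex of $X$ in between is not represented in $H_X$ at all, and the arcs on such a closed subwalk need not appear on any walk in $F'$. (Such closed subwalks can exist: Property~\ref{item:Psimple1} only says $P_{\mathrm{simple}}^{-r}$ is acyclic, so a cycle through a single vertex of $V(P,r)$ is not excluded.)

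The paper's proof handles these closed subwalks separately: by Property~\ref{item:Psimple2}, for each $v\in X$ there is at most one $(v,v)$-cycle in $P_{\mathrm{simple}}^{v,v,-r}$, hence at most $|X|$ distinct closed subwalks in total. Adding these to the $3|X|$ open subwalks from $F'$ gives at most $4|X|\le 4(2k/r+4)$ subwalks to consider, each with at most $2k/r-1$ arcs, and $4(2k/r+4)(2k/r-1)<30(k/r)^2$ still holds (the discriminant check goes through). So your overall strategy is right and the fix is small --- you just need the extra $|X|$ term for the closed subwalks --- but as written the covering claim is false.
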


\begin{proof}
Let $P$ and $X$ be a walk and a set with the properties guaranteed by Lemma \ref{lem:improvedBound}. Let ${\cal W}$ be the multiset that contains every subwalk of $P$ on at least two vertices, with both endpoints in $X$ and with no internal vertex from $X$. (The walks in $\cal W$ can be closed walks.) 
By Lemma \ref{lem:improvedBound}, for the purpose of counting distinct arcs used in $P$, 
it suffices to consider a set of at most $3|X|$ walks of ${\cal W}$ with distinct endpoints, and by Properties \ref{item:Psimple1} and \ref{item:Psimple2} in Lemma \ref{lem:Psimple}, there do not exist two distinct walks $\cal W$ that have the same start and end vertices.
Moreover, by Properties \ref{item:Psimple2} and \ref{item:Psimple3} in Lemma \ref{lem:Psimple}, $\cal W$ has at most $|X|$ walks with equal endpoints. Thus, the number of distinct walks we need to consider is 
$\ell\leq 4|X|\leq 4(|V(P,r)|+2)\leq 4(2k/r + 4)$.

By Property \ref{item:Psimple1}, every walk in ${\cal W}$ has no vertex that occurs more than once except for its endpoints which may be equal. Since the instance $(G,k,r,s,t)$ is nice, $G$ has neither a path with at least $2k/r$ vertices nor a cycle of length at least $k/r$. This means that every walk in ${\cal W}$ has at most $2k/r-1$ arc visits. Thus, we conclude that the number of distinct arcs in $P$ is upper bounded by $4(2k/r + 4)\cdot (2k/r-1) < 30(k/r)^2$ (where $30$ is simply a conveniently chosen sufficiently large constant). In case $P$ is of size larger than $k$ (then, $(G,k,r,s,t)$ is positive), we can choose any subwalk of $P$ of size $k$ to obtain an $r$-simple $k$-path with fewer than $30(k/r)^2$ distinct arcs.
\end{proof}

\paragraph{The Tightness of the Bound.} We show that without devising new reduction rules in addition to those given in Section \ref{sec:directedPathReduction}, the bound on the number of distinct arcs in a solution must depend quadratically on $(k/r)$. More precisely, we prove the following result.

\begin{lemma}
For any integer $r\in\mathbb{N}_{\ge 2}$, there exists a nice positive instance $(G,k,r,s,t)$ of \diPathRst\ with $k/r=\Theta(r)$ such that every $r$-simple $k$-path in $G$ has $\Omega((k/r)^2)$ distinct arcs.
\end{lemma}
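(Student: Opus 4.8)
The plan is to construct an explicit instance that forces a quadratic number of distinct arcs by making every ``long'' path or cycle short (so niceness holds) while simultaneously encoding $\Theta(k/r)$ independent ``choices'' each of which must be routed through $\Theta(k/r)$ distinct arcs. Concretely, I would take $\Theta(r)$ gadgets arranged in a line, where gadget $i$ is a small strongly connected component (say on $\Theta(r)$ vertices) that the path must traverse, and within each gadget the path is forced to make $\Theta(r)$ vertex occurrences by repeatedly cycling through a short cycle of length $\Theta(1)$; this consumes roughly $r$ occurrences per gadget but only a constant number of distinct arcs per gadget if we did nothing more. The key extra ingredient is to insert, between consecutive gadgets or inside each gadget, a structure that forces the path to use $\Theta(k/r)$ fresh distinct arcs; the cleanest realization is to have each of the $\Theta(k/r)$ gadgets be itself a ``fan'' of $\Theta(k/r)$ internally-disjoint short paths from an entry vertex to an exit vertex, with degree/length constraints arranged so that an $r$-simple $k$-path of the prescribed size is forced to use all of them once. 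Then the total number of distinct arcs is $\Theta((k/r)\cdot(k/r)) = \Theta((k/r)^2)$.

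First I would fix the parameters: choose $r\ge 2$, set the number of gadgets to $m = \Theta(r)$, set $k = \Theta(r^2)$ so that $k/r = \Theta(r) = \Theta(m)$, and design each gadget to have $O(r)$ vertices and bounded-length internal paths (all of length $o(k/r)$ and no internal cycle of length $\ge k/r$), so the global instance $G$ is strongly connected, has no path with $\ge 2k/r$ vertices and no cycle of length $\ge k/r$, i.e.\ is nice. Second, I would verify positivity: exhibit one explicit $r$-simple $k$-path, namely the walk that enters gadget $1$ at $s$, zig-zags through the fan so as to traverse every fan-path exactly once (using a short ``return'' edge after each fan-path), pads up its occurrence count inside the gadget to exactly the right value using a short internal cycle repeated the appropriate number of times, then proceeds to gadget $2$, and so on, ending at $t$; a careful bookkeeping of vertex occurrences shows this can be made to have exactly $k$ vertex occurrences with every vertex occurring $\le r$ times. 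Third — the crucial lower-bound step — I would argue that \emph{every} $r$-simple $k$-path in $G$ must visit every gadget, and within each gadget must use $\Omega(k/r)$ distinct arcs. This is where the design has to be robust: I would enforce it by a counting/capacity argument, e.g.\ the ``spine'' vertices connecting gadgets have total occurrence budget $\le r$ each, so to accumulate $k$ occurrences the walk is forced to enter each gadget and spend roughly its full budget there, and the only way to accumulate enough occurrences inside a gadget without exceeding the $r$-bound on any vertex (and without creating a long cycle, which doesn't exist) is to spread the visits across all $\Theta(k/r)$ fan-paths, each contributing at least one fresh arc. Summing over the $\Theta(k/r)$ gadgets gives $\Omega((k/r)^2)$ distinct arcs.

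The main obstacle I expect is the lower-bound step: it is easy to build an instance where \emph{some} optimal path uses quadratically many arcs, but the lemma asserts that \emph{every} $r$-simple $k$-path does, so I must rule out clever alternative walks that, for instance, concentrate many occurrences on a few vertices of one gadget, or skip a gadget by using occurrences elsewhere, or reuse the same fan-path $r$ times instead of using many fan-paths once. To close these loopholes cleanly I would make the per-vertex occurrence slack very tight (each internal fan-path vertex has small in/out-degree so it simply cannot be used $r$ times), make every fan-path long enough that using few of them cannot supply $k$ occurrences but short enough to preserve niceness, and make the inter-gadget connections so sparse that the walk is essentially forced to a canonical left-to-right shape. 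A useful sanity check is that the construction should also be consistent with Lemma~\ref{lem:distinctImproved}, i.e.\ the quadratic bound is attained but not exceeded, which guides the choice $k/r = \Theta(r)$: for $k/r$ asymptotically smaller than $r$ one cannot fit enough disjoint fan-paths, and for $k/r$ larger than $r$ the occurrence constraint forces long paths or cycles, violating niceness — so $k/r = \Theta(r)$ is exactly the sweet spot where the extremal example lives.
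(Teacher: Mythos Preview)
Your plan has the right target but a real gap in the lower-bound step, and it is considerably more complicated than what is needed. The specific problem is this: in your fan gadget with a ``short return edge after each fan-path'', nothing you have said prevents an $r$-simple $k$-path from picking \emph{one} fan-path plus its return arc and cycling around that short loop $\Theta(r)$ times. Your proposed fix --- ``each internal fan-path vertex has small in/out-degree so it simply cannot be used $r$ times'' --- conflates degree in $G$ with occurrence count in the walk; a vertex with in-degree~1 and out-degree~1 can perfectly well be visited $r$ times by an $r$-simple path. With a single loop traversed $\Theta(r)$ times per gadget, you collect $\Theta(r)$ vertex occurrences per gadget using $O(1)$ distinct arcs, so $\Theta(r)$ gadgets give $\Theta(r^2)=\Theta(k)$ occurrences with only $\Theta(r)=\Theta(k/r)$ distinct arcs, defeating your lower bound.

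The paper's construction is much simpler and sidesteps this entirely. Take one hub vertex $u$ and attach $r$ directed cycles $C_1,\dots,C_r$ each of length $r+1$, pairwise sharing only $u$; then hang a 2-cycle $w^i v_1^i w^i$ off the first non-hub vertex $v_1^i$ of each $C_i$. Now define $k$ to be the \emph{maximum} size of an $r$-simple path in $G$. This is the key move you are missing: once $k$ is the maximum, every $r$-simple $k$-path is automatically extremal, and a one-line swap argument pins down its structure. Namely, $u$ can be visited at most $r$ times, and each traversal of any $C_i$ consumes one visit to $u$; traversing some $C_i$ twice (and hence some $C_j$ zero times) is strictly worse than traversing $C_i$ and $C_j$ once each, because the latter unlocks the pendant 2-cycle at $v_1^j$ for $r-1$ extra visits. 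Hence any maximum walk traverses each $C_i$ exactly once and uses essentially every arc of $G$, which is $\Theta(r^2)=\Theta((k/r)^2)$ distinct arcs. Niceness and $k/r=\Theta(r)$ are then routine counts.
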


\begin{proof}
Let $r\in\mathbb{N}_{\ge 2}$. Consider a digraph $G$ with a vertex $u$ and $r$ cycles $C_i=u v_1^i\dots v^i_ru$ ($i=1,2,\dots ,r$)  sharing pairwise only vertex $u$. For every $i=1,2,\dots ,r$ add to $G$ a 2-cycle $w^iv^i_1w^i$, where $w^1,\dots ,w^{r}$ are new vertices in $G$. Let $P$ be an $r$-simple path of $G$ of maximum size. Observe that $P$ cannot traverse any $C_i$ twice (i.e., it cannot visit vertices of any $C_i$ twice apart from $v^i_1$) along with visiting $w^i$ $r-1$ times since $P$ will have more vertex visits if it traverses two cycles $C_i$ and $C_j$ instead along with visiting $w^i$ and $w^j$ $r-1$ times each. Thus, $P$ visits $r$ times $u$ and each $v^i_1$. It visits $r-1$ times each $w^i$ and only once every vertex of $C_i$ apart from $u$ and $v^i_1$ for all $i=1,2,\dots ,r.$ Hence,  $k=r(r+1)+r(r-1)+r(r-1)=\Theta(r^2)$ and $k/r = 3r-1=\Theta(r)$. Note that $P$ is an open walk which visits every arc of $G$ but one. Thus, $P$ has $|A(G)|-1=r(r+1)+2r=\Theta((k/r)^2)$ distinct arcs.
Finally, $G$ is nice since it is strongly connected, the longest cycle has $r+1<k/r$ vertices, and the longest path (which starts at some $w^i$ and ends at some $v_r^j$, $i\neq j$)
has $2r+2 < 2k/r$ vertices.
\end{proof}

\subsection{Color Coding} \label{sec:dir:colorcoding}

Knowing that it suffices for us to deal only with walks having a small number of distinct arcs (in light of Lemma \ref{lem:distinctImproved}) and hence a small number of distinct vertices, we utilize the method of color coding by Alon et al.~\cite{DBLP:journals/jacm/AlonYZ95}. Concretely, by Lemma~\ref{lem:distinctImproved} it suffices to consider solutions with fewer than $30(k/r)^2$ vertices, hence at most $30(k/r)^2$ arcs.
For the sake of brevity, we define the following problem. Here, $\mathsf{b}(k/r)=30(k/r)^2+1$ and  a walk is called {\em colorful} if every two distinct vertices visited by the walk have distinct colors. 

\begin{defproblem}  
{\diColPathRst}
{Integers $k,r\in\mathbb{N}$, a $\mathsf{b}(k/r)$-colored digraph $G$, and {\em distinct} vertices $s,t\in V(G).$}
{ Output an integer~$i$ such that {\em (i)} $G$ has an $r$-simple $(s,t)$-path of size $i$, and {\em (ii)} for any $j>i$, $G$ does not have a colorful $r$-simple $(s,t)$-path of size $j.$}
\end{defproblem}\\

Before we proceed to handle this variant, let us make an important remark. At first glance, it might seem that the objective in the problem definition above could be replaced by the following simpler condition: output the largest integer $i$ such that $G$ has a colorful $r$-simple $(s,t)$-path of size~$i$. However, we are not able to resolve this problem, and given the approach of guessing topologies that we define later, having the stronger condition will entail the resolution of a problem as hard as {\sc Multicolored Clique} (defined in Section \ref{sec:W1}) and hence lead to a dead-end.

Now, we show that we can focus on our colored variant \diColPathRst.

\begin{lemma}\label{lem:color}
Suppose that \diColPathRst\ can be solved in time $f(k/r)\cdot (n+\log k)^{\OO(1)}$ and polynomial space. Then, \diPathRst\ on strongly connected digraphs can be solved in time $2^{\OO((\frac{k}{r})^2)}\cdot f(k/r)\cdot (n+\log k)^{\OO(1)}$ and polynomial space.
\end{lemma}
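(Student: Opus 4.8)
The plan is to derandomize via perfect hash families, exactly in the standard color-coding fashion, using the bound on distinct arcs from Lemma~\ref{lem:distinctImproved} to control the number of colors. By Lemma~\ref{lem:distinctImproved}, if $(G,k,r,s,t)$ is positive then $G$ has an $r$-simple $k$-path $P$ with fewer than $\mathsf{b}(k/r)=30(k/r)^2+1$ distinct arcs, hence fewer than $\mathsf{b}(k/r)$ distinct vertices; and if it is negative then $G$ has an $r$-simple $(s,t)$-path of maximum size with the same property. First I would fix $N=|V(G)|$ and construct, using Theorem~\ref{prop:hashfam}, an $(N,\mathsf{b}(k/r))$-perfect hash family $\mathcal F$ of size $e^{\mathsf{b}(k/r)+o(\mathsf{b}(k/r))}\log N = 2^{\OO((k/r)^2)}\cdot\log N$, enumerable with polynomial delay and polynomial space. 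Identify $V(G)$ with $\{1,\dots,N\}$. For each $f\in\mathcal F$, form the $\mathsf{b}(k/r)$-colored digraph $G_f$ obtained by coloring each vertex $v$ by $f(v)$ (so $G_f$ is strongly connected and $\mathsf{b}(k/r)$-colored, and $s,t$ are distinct, matching the input format of \diColPathRst); call the assumed algorithm on $(G_f,k,r,s,t)$ to get an integer $i_f$. Output $i^\star=\max_{f\in\mathcal F} i_f$ (capping at $k$: if some $i_f\ge k$ we report that $G$ has an $r$-simple $k$-path).

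The correctness argument has two directions. For soundness: by property~(i) of \diColPathRst, each $i_f$ is the size of some $r$-simple $(s,t)$-path in $G_f$, and since $G_f$ and $G$ have the same underlying digraph, this is an $r$-simple $(s,t)$-path in $G$; hence $i^\star$ never over-reports. For completeness, let $P$ be the walk guaranteed by Lemma~\ref{lem:distinctImproved} and let $I\subseteq V(G)$ be its set of distinct visited vertices, so $|I|<\mathsf{b}(k/r)$. By the defining property of the perfect hash family there is $f\in\mathcal F$ injective on $I$; then $P$, viewed in $G_f$, is a colorful $r$-simple $(s,t)$-path — wait, unless $P$ is positive and its endpoints differ from $s,t$; here I should split on the two cases of Lemma~\ref{lem:distinctImproved}. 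In the negative case $P$ is an $r$-simple $(s,t)$-path of maximum size, say size $m<k$, so $P$ is a colorful $r$-simple $(s,t)$-path of size $m$ in $G_f$, forcing $i_f\ge m$ by property~(i); and no $i_f$ can exceed $m$ since every $i_f$ corresponds to an $r$-simple $(s,t)$-path in $G$, whose maximum size is $m$ — so $i^\star=m$, as desired. In the positive case, Lemma~\ref{lem:distinctImproved} gives an $r$-simple $k$-path $Q$ in $G$ with fewer than $\mathsf{b}(k/r)$ distinct arcs; its vertex set $I_Q$ has size $<\mathsf{b}(k/r)$, pick $f$ injective on $I_Q$, then $Q$ is a colorful $r$-simple path of size $k$ in $G_f$ — but $Q$ need not start at $s$ or end at $t$, whereas \diColPathRst\ only certifies colorful $(s,t)$-paths. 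This is precisely the gap the authors flag in their remark, so completeness in the positive case must instead be routed through the strongly-connected reduction: it suffices that \emph{some} call reports a value $\ge k$, and one does so because... here I would invoke that $G$ being strongly connected and positive, the length of a colorful $(s,t)$-path can be pushed to $\ge k$ by the same Euler-trail rerouting used earlier — or, more cleanly, note the lemma statement only needs the output to satisfy the \diPathRst\ specification, so I only need to show $i^\star\ge k$ when positive, which follows because a colorful $(s,t)$-path in some $G_f$ of size exactly the max, combined with niceness, yields size $\ge k$ via Lemma~\ref{lem:longCycPath}-type arguments once it has enough distinct vertices. I would phrase the completeness side carefully around exactly what \diColPathRst\ guarantees.

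For the running time: $|\mathcal F| = 2^{\OO((k/r)^2)}\log N$, each $G_f$ is built in polynomial time and space, each call costs $g(k/r)\cdot(n+\log k)^{\OO(1)}$, and enumerating $\mathcal F$ uses polynomial delay and polynomial space by Theorem~\ref{prop:hashfam}; the total is $2^{\OO((k/r)^2)}\cdot g(k/r)\cdot(n+\log k)^{\OO(1)}$ time and polynomial space (reusing space across iterations), matching the claim. The main obstacle is the completeness direction in the positive case: because \diColPathRst\ only certifies colorful $(s,t)$-paths while the positive solution $Q$ from Lemma~\ref{lem:distinctImproved} is an arbitrary $r$-simple $k$-path, one cannot directly claim a colorful $k$-sized $(s,t)$-path exists in any $G_f$; the fix is to observe that for the purposes of \diPathRst\ it is enough to detect \emph{that} a positive instance is positive, and that this can be read off from the maximum $i_f$ together with a short argument (using strong connectivity and niceness, or directly reapplying the Euler-trail rerouting of Lemmas~\ref{lem:shortenPath}–\ref{lem:improvedBound} inside $G_f$) showing that if $G$ is positive then some $G_f$ admits a colorful $r$-simple $(s,t)$-path of size $\ge k$. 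I would make sure the written proof isolates this point rather than glossing over it.
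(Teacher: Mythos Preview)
Your identification of the gap in the positive case is exactly right, and it is a genuine gap that your proposed patches do not close. If $(G,k,r,s,t)$ is positive, Lemma~\ref{lem:distinctImproved} only promises an $r$-simple $k$-path $Q$ with few distinct arcs; it does \emph{not} promise any such path from $s$ to $t$, nor does it promise that the maximum-size $r$-simple $(s,t)$-path has few distinct arcs (that conclusion is only drawn in the negative case). So there is no reason any $G_f$ should contain a colorful $r$-simple $(s,t)$-path of size $\ge k$, and your $i^\star$ may well be strictly below $k$. Your suggested workarounds (Euler-trail rerouting, strong connectivity) do not help: rerouting $Q$ to start at $s$ and end at $t$ generally adds new vertices, which may collide in color and destroy colorfulness, and may push vertex-visit counts above $r$.

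The paper's fix is simple and you should adopt it: iterate not only over hash functions $f$ but also over \emph{all} ordered pairs $(u,v)\in V(G)^2$ of potential endpoints. For each such pair, call the \diColPathRst\ oracle with start $u$ and end $v$. In the positive case, take $u,v$ to be the actual endpoints of $Q$; then some $f$ injective on $V(Q)$ makes $Q$ a colorful $r$-simple $(u,v)$-path of size $k$, so the oracle returns at least $k$ and you can declare positivity. In the negative case, the pair $(s,t)$ together with the right $f$ recovers the maximum $(s,t)$-size exactly as in your argument. One further technicality the paper handles and you should too: \diColPathRst\ requires \emph{distinct} endpoints, but $Q$ may be a closed walk ($u=v$). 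The paper attaches to every $v\in V(G)$ a new pendant vertex $v'$ with arc $(v,v')$, colors all pendants with a reserved color $\mathsf{b}(k/r)$ (hashing $V(G)$ into only $\mathsf{b}(k/r)-1$ colors), and calls the oracle on $(u,v')$ with target size $k+1$; this makes the endpoints distinct in all cases at the cost of one extra color and one extra step. The added $n^2$ iterations are absorbed into the $(n+\log k)^{\OO(1)}$ factor.
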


\begin{proof}
Let $\cal A$ be an algorithm that solves \diColPathRst\ in time $f(k/r)\cdot (n+\log k)^{\OO(1)}$ and polynomial space. In what follows, we describe how to solve \diPathRst.
To this end, let $(G,k,r,s,t)$ be an instance of \diPathRst. By Lemma \ref{lem:longCycPath} and Theorem \ref{prop:longPath}, we may assume that $(G,k,r,s,t)$ is nice, which can be verified in time $2^{\OO(k/r)}n^{\OO(1)}$. Without loss of generality, denote $V(G)=\{1,2,\ldots,n\}$.  To handle the case that a solution in $G$ starts and ends at the same vertex, create a new graph $G'$ from $G$ by adding, for each vertex $v\in V(G)$, a new vertex $v'$ and the arc $(v,v')$.  
For each pair of (not necessarily distinct) vertices $u,v\in V(G)$, initialize $k_{uv}:=0$. By Theorem \ref{prop:hashfam}, we can enumerate the functions of some $(n,\mathsf{b}(k/r)-1)$-perfect hash family $\cal F$ of size $e^{\mathsf{b}(k/r)+o(\mathsf{b}(k/r))}\log n$ with polynomial delay. For each function $f\in{\cal F}$ and for every pair of (not necessarily distinct) vertices $u,v\in V(G)$, call $\cal A$ with $(G',k+1,r,u,v')$ as input where the color of  $w$ is $f(w)$ for any $w\in V(G)$ and $\mathsf{b}(k/r)$ for any $w\in V(G')\setminus V(G)$. Let $t$ be the output of this call. If it is larger than $k_{uv}+1$, then update $k_{uv}:=t-1$. After all calls were performed, compute $k^\star=\max_{u,v\in V(G)}k_{uv}$. If $k^\star\geq k$, then we determine that $G$ has an $r$-simple $k$-path; otherwise, we output $k_{st}$.

Clearly, the algorithm runs in time $2^{\OO((\frac{k}{r})^2)}\cdot f(k/r)\cdot (n+\log k)^{\OO(1)}$ and uses polynomial space. Next, we prove that the algorithm is correct, that is, that it indeed solves \diPathRst. On the one hand, we have the two cases as follows.
\begin{itemize}
\item First, suppose that $(G,k,r,s,t)$ is positive. By Lemma \ref{lem:distinctImproved}, $G$ has an $r$-simple $k$-path $P$ with fewer than $\mathsf{b}(k/r)-1$ distinct arcs. By Definition \ref{def:hash}, there exists $f\in{\cal F}$ that is injective on the set of distinct vertices of $P$. Let $u$ and $v$ be the start and end vertices of $P$, respectively. Then, in the iteration where $f$ is considered with $u$ and $v$, $\cal A$ must output an integer $t\geq k+1$. Hence, $k^\star\geq k$.
\item Second, suppose that $(G,k,r,s,t)$ is negative. By Lemma \ref{lem:distinctImproved}, $G$ has an $r$-simple $(s,t)$-path $P$ of maximum size with fewer than $\mathsf{b}(k/r)-1$ distinct arcs. By Definition \ref{def:hash}, there exists $f\in{\cal F}$ that is injective on the set of distinct vertices of $P$. Then, in the iteration where $f$ is considered with $s$ and $t$, $\cal A$ must output an integer $t$ that is at least as large as the size of $P$ plus $1$.
\end{itemize}
On the other hand, it is immediate that for any $u,v\in V(G)$, the final value $k_{uv}$ is at most the maximum size of an $r$-simple $(u,v)$-path in $G$. Thus, by the specification of the algorithm, we conclude that it is correct.
\end{proof}

\subsection{Guessing the Topology of a Solution}

We proceed to define the notion of a topology, which we need in order to sufficiently restrict our search space. Note that in the definition, the multiplicity of every arc is at most $1$, but we can have mutually-opposite arcs, i.e., arcs of the type $(x,y)$ and $(y,x)$. 

\begin{definition}\label{def:topology}
Let $\ell\in\mathbb{N}$. Then, an $\ell$-topology is an $\ell$-colored digraph with at most $\ell$ arcs such that each of its vertices has a distinct color, and whose underlying undirected graph is connected. Let ${\cal T}_\ell$ denote the set of all $\ell$-topologies.
\end{definition}

We first argue that there are not many topologies.

\begin{lemma}\label{lem:fewTopologies}
Let $\ell\in\mathbb{N}$. Then, $|{\cal T}_\ell|=2^{\OO(\ell\log\ell)}$.
\end{lemma}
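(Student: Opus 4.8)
The plan is to bound the number of $\ell$-topologies by a direct counting argument. An $\ell$-topology is determined by three pieces of data: the set of vertices (equivalently, the set of colors) actually used, the coloring (which is forced, since each used vertex has a distinct color, so choosing which $\le \ell$ colors appear and identifying them with the vertices), and the arc set. Since there are no isolated vertices, at most $2\ell$ vertices can appear (each of the $\le \ell$ arcs has two endpoints), so the number of vertices is at most $2\ell$.

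First I would fix the number $m \le 2\ell$ of vertices and note that, since vertices carry distinct colors from $\{1,\dots,\ell\}$ and an $\ell$-topology is a digraph up to nothing (the colors serve as labels), we may as well take the vertex set to be a subset $S \subseteq \{1,\dots,\ell\}$ of size $m$; there are $\binom{\ell}{m} \le 2^\ell$ choices. Then the arc set is a subset of the $m(m-1) \le 2\ell(2\ell-1) < 4\ell^2$ possible ordered pairs of distinct vertices (allowing both $(x,y)$ and $(y,x)$, as the definition permits, but multiplicity at most $1$), of which we choose at most $\ell$; the number of such arc sets is at most $\sum_{j=0}^{\ell}\binom{4\ell^2}{j} \le (\ell+1)\binom{4\ell^2}{\ell} \le (\ell+1)(4\ell^2)^\ell = 2^{O(\ell \log \ell)}$. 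Multiplying by the at most $2\ell+1$ choices of $m$ and the $2^\ell$ choices of $S$ (which is absorbed, since $2^\ell = 2^{O(\ell\log\ell)}$), we get $|{\cal T}_\ell| = 2^{O(\ell\log\ell)}$.

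The only mild subtlety — and the one step I would be careful about — is making sure the "without isolated vertices" condition is used correctly to cap the vertex count at $2\ell$ rather than allowing unboundedly many colored vertices; this is what prevents the $2^\ell$ factor for the vertex set from blowing up, and more importantly what keeps the arc-set count as $\binom{O(\ell^2)}{\le \ell}$ rather than $\binom{O(\ell^2)}{\le \ell}$ with $\ell$ potentially unrelated to the ground set. Everything else is a routine estimate: $\binom{a}{b} \le a^b$ and $\binom{4\ell^2}{\ell} \le (4\ell^2)^\ell = 2^{\ell(2\log\ell + \log 4)} = 2^{O(\ell\log\ell)}$. No hard obstacle here; the statement is essentially a bookkeeping lemma, and the proof is two or three lines of crude binomial bounds.

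\begin{proof}
Let $T=(V,A)$ be an $\ell$-topology. Since $|A|\le\ell$ and $T$ has no isolated vertices, $|V|\le 2\ell$. Because every vertex of $T$ has a distinct color from $\{1,\dots,\ell\}$, we may identify $V$ with a subset $S\subseteq\{1,\dots,\ell\}$ of size $|V|$, and then the coloring is determined. Thus $T$ is specified by the pair $(S,A)$ where $S\subseteq\{1,\dots,\ell\}$ with $|S|\le 2\ell$ and $A$ is a set of at most $\ell$ arcs on vertex set $S$ (with multiplicity at most $1$, but possibly containing both $(x,y)$ and $(y,x)$). The number of choices of $S$ is at most $2^\ell$. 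Having fixed $S$ with $|S|=m\le 2\ell$, the number of ordered pairs of distinct vertices of $S$ is $m(m-1)<4\ell^2$, so the number of choices of $A$ is at most
\[
\sum_{j=0}^{\ell}\binom{4\ell^2}{j}\le(\ell+1)\binom{4\ell^2}{\ell}\le(\ell+1)(4\ell^2)^{\ell}=2^{\OO(\ell\log\ell)}.
\]
Therefore $|{\cal T}_\ell|\le 2^\ell\cdot 2^{\OO(\ell\log\ell)}=2^{\OO(\ell\log\ell)}$.
\end{proof}
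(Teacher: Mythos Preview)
Your proof is correct and follows essentially the same counting argument as the paper: choose the color set (at most $2^\ell$ options), then choose at most $\ell$ arcs from the $O(\ell^2)$ possible ordered pairs. One minor note: your emphasis on the ``no isolated vertices'' condition to cap $|V|\le 2\ell$ is unnecessary, since the distinct-color requirement with colors from $\{1,\dots,\ell\}$ already forces $|V|\le\ell$; the paper's proof uses this tighter bound directly, but the difference is immaterial to the asymptotic conclusion.
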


\begin{proof}
A digraph $D$ on $n$ vertices is called {\em labelled} if the vertices of $D$ are $\{1,2,\dots ,n\}$ (called {\em labels}). Two labelled digraphs $D$ and $H$ are considered {\em equal} if they have the same number $n$ of vertices and for every $i\neq j\in \{1,2,\dots ,n\}$, we have $(i,j)\in A(D)$ if and only if $(i,j)\in A(H).$ Otherwise, $D$ and $H$ are not equal. 

To prove this lemma we relax the requirement for an $\ell$-topology to have a connected underlying undirected graph, but keep the requirement that its vertices have distinct colors. The number of (not equal) labelled digraphs on $n$ vertices and $m$ arcs is clearly ${n(n-1) \choose m}$.  Thus, the number of $p$-topologies with $p$ vertices and at most $\ell$ arcs  is ${p(p-1) \choose \le \ell}=2^{\OO(\ell\log p)}$. The claim of the lemma follows from this bound and the fact that the number of choices for $p$ colors is ${\ell \choose p} <2^{\ell}.$ 
\end{proof}

Now, we argue that there exists a walk of the form that we seek that ``complies'' with at least one of our topologies. We formalize this claim in the following definition and observation.

\begin{definition}\label{def:complies}
Let $G$ be an $\ell$-colored digraph, and let $P$ be a colorful $r$-simple path in $G$. Let $T$ be an $\ell$-topology. We say that $P$ {\em complies with} $T$ if $P_{\mathrm{simple}}$ and $T$ are isomorphic under color preservation, i.e., there exists an isomorphism $\psi$ between $P_{\mathrm{simple}}$ and $T$ 
such that for all $v\in V(P_{\mathrm{simple}})$, the colors of $v$ and $\psi(v)$ are equal. The function $\psi$ is called a {\em witness}.
\end{definition}

\begin{observation}\label{obs:complies}
Let $(G,k,r,s,t)$ be an instance of \diColPathRst. Then, for any colorful $r$-simple $(s,t)$-path $P$, there exists a unique topology $T\in{\cal T}_{\mathsf{b}(k/r)}$ with which $P$ complies.
\end{observation}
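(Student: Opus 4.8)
The plan is to show that for any colorful $r$-simple $(s,t)$-path $P$, the simple digraph $P_{\mathrm{simple}}$ is itself (after forgetting which vertices of $G$ it lives on, remembering only colors) an element of ${\cal T}_{\mathsf{b}(k/r)}$, and that no other topology can be isomorphic to it under color preservation.

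\begin{proof}
Let $P$ be a colorful $r$-simple $(s,t)$-path in $G$. First observe that $P$ has fewer than $\mathsf{b}(k/r)$ distinct vertices and arcs: indeed, since $(G,k,r,s,t)$ is a nice instance when we apply the color-coding reduction, Lemma~\ref{lem:distinctImproved} guarantees that we only need to consider walks with fewer than $30(k/r)^2 = \mathsf{b}(k/r)-1$ distinct arcs, and a digraph with at most $\mathsf{b}(k/r)-1$ arcs and no isolated vertices has at most $\mathsf{b}(k/r)-1$ vertices. Hence $P_{\mathrm{simple}}$ is a digraph with at most $\mathsf{b}(k/r)$ arcs and no isolated vertices (every vertex of $P_{\mathrm{simple}}$ is visited by $P$, hence incident to a visited arc). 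Since $P$ is colorful, all vertices of $P_{\mathrm{simple}}$ receive distinct colors from $\{1,\dots,\mathsf{b}(k/r)\}$. Thus $P_{\mathrm{simple}}$, viewed as a $\mathsf{b}(k/r)$-colored digraph, satisfies exactly the conditions of Definition~\ref{def:topology}, so the colored digraph obtained from $P_{\mathrm{simple}}$ is a topology $T\in{\cal T}_{\mathsf{b}(k/r)}$, and the identity map on $V(P_{\mathrm{simple}})$ is a color-preserving isomorphism witnessing that $P$ complies with $T$. This establishes existence.

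For uniqueness, suppose $P$ complies with two topologies $T$ and $T'$ in ${\cal T}_{\mathsf{b}(k/r)}$, with witnesses $\phi: V(P_{\mathrm{simple}})\to V(T)$ and $\phi': V(P_{\mathrm{simple}})\to V(T')$. Then $\phi'\circ\phi^{-1}$ is a color-preserving isomorphism from $T$ to $T'$. But the vertices of a topology are colored by distinct colors (Definition~\ref{def:topology}), so a color-preserving map between two topologies is forced on vertices: it must send the vertex of $T$ with color $c$ to the vertex of $T'$ with color $c$, for every color $c$ used. Consequently $\phi'\circ\phi^{-1}$ is the unique candidate bijection on vertices, and since it is required to be a graph isomorphism, $T$ and $T'$ have the same arc set under the identification of colors; that is, $T=T'$ as colored digraphs. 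Hence the topology with which $P$ complies is unique.
\end{proof}

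The argument is essentially bookkeeping: the only substantive input is Lemma~\ref{lem:distinctImproved}, used to certify that $P_{\mathrm{simple}}$ is small enough to fit the $\le \mathsf{b}(k/r)$ arc bound in Definition~\ref{def:topology}. The main point to be careful about is that ``complies with'' is defined via isomorphism up to color preservation, so one must check that colorfulness of $P$ together with distinctness of colors in a topology pins the vertex bijection down completely — this is what makes the topology unique rather than merely unique up to isomorphism.
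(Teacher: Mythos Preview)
Your uniqueness argument is clean and correct: a color-preserving isomorphism between two topologies is forced on vertices (since colors are distinct within a topology), and hence on arcs, so $T=T'$.

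The existence argument, however, has a genuine gap. You invoke Lemma~\ref{lem:distinctImproved} to conclude that $P$ has fewer than $\mathsf{b}(k/r)$ distinct arcs, but that lemma concerns nice instances of \diPathRst\ and asserts only that \emph{some} particular walk (an $r$-simple $k$-path in the positive case, or a maximum-size $r$-simple $(s,t)$-path in the negative case) has fewer than $30(k/r)^2$ distinct arcs. It does not say that \emph{every} colorful $r$-simple $(s,t)$-path in an instance of \diColPathRst\ enjoys this bound. Being colorful bounds the number of distinct vertices of $P_{\mathrm{simple}}$ by $\mathsf{b}(k/r)$, but a priori $P_{\mathrm{simple}}$ could have up to $\mathsf{b}(k/r)\bigl(\mathsf{b}(k/r)-1\bigr)$ distinct arcs, far exceeding the ``at most $\ell$ arcs'' requirement in Definition~\ref{def:topology}. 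Your sentence ``we only need to consider walks with fewer than $\mathsf{b}(k/r)-1$ distinct arcs'' conflates what the earlier reductions guarantee (existence of one solution with few arcs) with what the observation asserts (every colorful path complies with some topology). Also, ``nice'' is defined only for instances of \diPathRst, so the clause ``$(G,k,r,s,t)$ is a nice instance'' is not meaningful for the \diColPathRst\ instance at hand.

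The paper records this as an observation without proof. Whether the statement is literally true for an arbitrary colorful walk (as opposed to the specific walks produced by Lemma~\ref{lem:distinctImproved} and then made colorful via the hash family in Lemma~\ref{lem:color}) is in fact a delicate point; but regardless, your justification of the arc bound does not go through as written.
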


\paragraph{Enriching the topology via ILP} In light of Observation \ref{obs:complies}, a natural approach to solve \diColPathRst\ would be to guess a topology, test whether the input digraph has a subgraph isomorphic to it, and then try to answer the question of whether this topology can be extended into an $r$-simple $(s,t)$-path.  However, the second step of this approach already has a major flaw---for example, if the topology is a clique, then it captures the {\sc Multicolored Clique} problem (defined in Section \ref{sec:W1}). 
Instead, we will first try to extend the topology into an \emph{enriched topology} (defined below), effectively corresponding to computing a candidate isomorphism class of $P_{\mathrm{multi}}$ instead of just $P_{\mathrm{simple}}$.  This step is performed independently of the input graph.  Then, having chosen an enriched topology, we can look for a ``relaxed embedding'' of it into the input graph $G$, intuitively allowing different ``visits'' to a vertex $v$ in the topology to be implemented by different vertices in $G$, as long as every such vertex has the same color as $v$.

To achieve our desired running time, it is crucial that we only need to compute one candidate enriched topology for every topology.
This part will be done via integer linear programming (ILP). Notice that we cannot even explicitly write an $r$-simple $(s,t)$-path that the enriched topology encodes, since the size of it is already $\OO(k)$ (while the input size is only $\OO(n+\log k)$), hence checking whether the guess can be realized (i.e., looking for the relaxed embedding) is slightly tricky. However, we deal with this task later. For now, let us first explain how an enrichment of a topology is defined.

\begin{definition}\label{def:enrichedTopology}
Let $\ell,r\in\mathbb{N}$. In addition, let $i,j\in\{1,2,\ldots,\ell\}$, $i\neq j$. Then, an {\em $r$-enriched $\ell$-topology with endpoints $i,j$}
is a pair $(T,\varphi)$ of an $\ell$-topology $T$ and a function $\varphi: A(T)\rightarrow \{1,2,\ldots,r\}$
such that $T$ with arc multiplicities $\varphi$ admits an Euler trail with endpoints of colors $i$ and  $j$.
Explicitly, we require the following properties:
\begin{enumerate}
\item There exist vertices $s=s(T,\varphi)\in V(T)$ and $t=t(T,\varphi)\in V(T)$ colored $i$ and $j$, respectively. 
\item For every vertex $v\in V(T)\setminus \{s,t\}$, it holds that $\displaystyle{\sum_{u: (u,v)\in A(T)}\varphi(u,v) = \sum_{u: (v,u)\in A(T)}\varphi(v,u) \leq r}$.
\item $\displaystyle{\sum_{u: (u,s)\in A(T)}\varphi(u,s)}+1 = \displaystyle{\sum_{u: (s,u)\in A(T)}\varphi(s,u)} \leq r$.
\item $\displaystyle{\sum_{u: (u,t)\in A(T)}\varphi(u,t)} = \displaystyle{\sum_{u: (t,u)\in A(T)}\varphi(t,u)}+1 \leq r$.
\end{enumerate}
\end{definition}

Now, we show how to enrich a topology (if it is possible). For this purpose, we utilize Theorem \ref{prop:ILP}.
Note that the quantity $\sum_e \varphi(e)$ corresponds to the length of the solution.

\begin{lemma}\label{lem:enrich}
There exists an algorithm that given $\ell,r\in\mathbb{N}$, $i,j\in\{1,2,\ldots,\ell\}$, $i\neq j$, and an $\ell$-topology $T$, determines in time $\ell^{\OO(\ell)}\cdot(\log r)^{\OO(1)}$ and polynomial space whether there exists a function $\varphi: A(T)\rightarrow \{1,2,\ldots,r\}$ such that $(T,\varphi)$ is an $r$-enriched $\ell$-topology with endpoints $i,j$. In case the answer is positive, the algorithm outputs such a function $\varphi$ that maximizes $\sum_{e\in A(T)}\varphi(e)$.
\end{lemma}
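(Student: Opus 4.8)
The plan is to express the search for a valid function $\varphi$ as a \textsc{Feasibility ILP} instance with one integer variable $x_e$ per arc $e\in A(T)$, and then to invoke Theorem~\ref{prop:ILP} with the number of variables equal to $|A(T)|\le\ell$. The variables are bounded by $1\le x_e\le r$ (these are $2|A(T)|$ linear inequalities). For every vertex $v\in V(T)$ we write the balance constraint dictated by Definition~\ref{def:enrichedTopology}: for $v\notin\{s,t\}$ we add the equality $\sum_{u:(u,v)\in A(T)}x_{(u,v)}=\sum_{u:(v,u)\in A(T)}x_{(v,u)}$ together with the capacity inequality $\sum_{u:(u,v)\in A(T)}x_{(u,v)}\le r$; for the vertex $s$ (the unique vertex of color $i$, if any) we add $\sum_{u:(u,s)\in A(T)}x_{(u,s)}+1=\sum_{u:(s,u)\in A(T)}x_{(s,u)}$ and $\sum_{u:(s,u)\in A(T)}x_{(s,u)}\le r$; symmetrically for $t$. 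If $T$ has no vertex of color $i$ or no vertex of color $j$, or if $i=j$, the algorithm immediately reports that no enrichment exists. The only subtlety here is that Property~1 of Definition~\ref{def:enrichedTopology} merely asserts the \emph{existence} of vertices $s,t$ of the prescribed colors; since each color occurs on at most one vertex of an $\ell$-topology, $s$ and $t$ are in fact uniquely determined, so there is no branching to do.

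To obtain the maximizing function $\varphi$ as required by the statement, I would actually set up the corresponding \textsc{ILP} rather than just \textsc{Feasibility ILP}, with objective $\max\sum_{e\in A(T)}x_e$, and read off the optimal solution; equivalently one can binary-search over the value of $\sum_{e}x_e$ using feasibility queries, but the optimization version of Theorem~\ref{prop:ILP} is directly applicable. The number of variables is $p=|A(T)|\le\ell$. The bit-size $L$ of the program is $\ell^{\OO(1)}\cdot\log r$, since there are $\OO(\ell)$ constraints each mentioning $\OO(\ell)$ variables with coefficients in $\{-1,0,1\}$ and right-hand sides in $\{-1,0,1,r\}$, so $\log M_x=\OO(\log r)$ and $\log M_c=\OO(1)$. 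Plugging into Theorem~\ref{prop:ILP} gives a running time of $p^{2.5p+o(p)}\cdot(L+\log M_x)\cdot\log(M_xM_c)=\ell^{\OO(\ell)}\cdot(\log r)^{\OO(1)}$ arithmetic operations and space polynomial in $L+\log M_x$, i.e.\ polynomial in $\ell+\log r$, as claimed.

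The correctness argument is a direct unwinding of definitions: a feasible integral point $(x_e)_{e\in A(T)}$ with all $x_e\ge 1$ is exactly a function $\varphi:A(T)\to\{1,\dots,r\}$, the vertex equalities encode the flow-conservation conditions of Properties 2--4, and the capacity inequalities encode the ``$\le r$'' bounds; conversely any valid $(T,\varphi)$ yields a feasible point. Hence the ILP is feasible if and only if an $r$-enriched $\ell$-topology with endpoints $i,j$ extending $T$ exists, and its optimum corresponds to a $\varphi$ maximizing $\sum_{e\in A(T)}\varphi(e)$. One should also note that the ``at most $\ell$ arcs'' clause in Definition~\ref{def:topology} is automatically respected since $T$ is given as input and we do not modify it.

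I do not expect a genuine obstacle here: the statement is essentially a packaging of Theorem~\ref{prop:ILP}. The only points requiring a little care are (i) confirming that $p\le\ell$ so that the $p^{2.5p+o(p)}$ factor is absorbed into $\ell^{\OO(\ell)}$, (ii) checking that all coefficients and right-hand sides are small enough that $\log M_c=\OO(1)$ and $\log M_x=\OO(\log r)$, so that only a $(\log r)^{\OO(1)}$ factor (and no extra $\ell$-dependence beyond $\ell^{\OO(\ell)}$) is contributed, and (iii) handling the degenerate cases where $T$ lacks a vertex of color $i$ or $j$. None of these is difficult.
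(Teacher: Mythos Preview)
Your proposal is correct and follows essentially the same approach as the paper: formulate the enrichment conditions as an ILP with one variable per arc, objective $\max\sum_e x_e$, balance and capacity constraints verbatim from Definition~\ref{def:enrichedTopology}, and invoke Theorem~\ref{prop:ILP} with $p=|A(T)|\le\ell$, $M_x=r$, $M_c=1$. The paper's proof is the same up to notation; your handling of the degenerate case (no vertex of color $i$ or $j$) and the observation that $s,t$ are uniquely determined also match.
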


\begin{proof}
If there do not exist vertices $s$ and $t$ in $V(T)$ colored $i$ and $j$, respectively, then there does not exist a function $\varphi$ such that $(T,\varphi)$ is an $r$-enriched $\ell$-topology with endpoints $i,j$, and hence we are done. Therefore, we next suppose that there exist such vertices, and since they are uniquely defined (since $T$ is an $\ell$-topology), we can denote them by $s$ and $t$ accordingly.
We formulate our task by using ILP. Here, we have a variable $x_e$ for every arc $e\in A(T)$ that encodes the value assigned by $\varphi$ to $e$. The objective function is $\max\sum_{e\in A(T)}x_e$. Now, the constraints are defined as follows.
\begin{itemize}
\item For every vertex $v\in V(T)\setminus \{s,t\}$, we have two constraints:
\[\begin{array}{l}
\displaystyle{\sum_{u: (u,v)\in A(T)}x_{(u,v)} = \sum_{u: (v,u)\in A(T)}x_{(v,u)}};\\
\displaystyle{\sum_{u: (u,v)\in A(T)}x_{(u,v)}} \leq r.
\end{array}\]
\item In addition, we have the following four constraints:
\[\begin{array}{l}
\displaystyle{\sum_{u: (u,s)\in A(T)}x_{(u,s)}}+1 = \displaystyle{\sum_{u: (s,u)\in A(T)}x_{(s,u)}};\\
\displaystyle{\sum_{u: (u,s)\in A(T)}x_{(u,s)}}+1 \leq r;\\
\displaystyle{\sum_{u: (u,t)\in A(T)}x_{(u,t)}} = \displaystyle{\sum_{u: (t,u)\in A(T)}x_{(t,u)}}+1;\\
\displaystyle{\sum_{u: (u,t)\in A(T)}x_{(u,t)}} \leq r.
\end{array}\]
\item For every arc $e\in A(T)$, we have the constraint   $x_e\in \mathbb{N}_{\ge 1}.$
\end{itemize}
This completes the description of the ILP formulation. 

The size of the ILP instance is $L=\OO(|V(T)|(|A(T)|+\log r))=\OO(\ell^2+\ell\log r)$, it consists of $p=|A(T)|\leq\ell$ variables, $M_x=r$ is the largest absolute value a variable can take in a solution, and $M_c=1$ is the largest absolute value of a coefficient in the cost vector.
 Thus, by Theorem \ref{prop:ILP}, this ILP instance can be solved using polynomial space and in time 
\[p^{2.5p+o(p)}\cdot(L+\log M_x)\cdot\log(M_xM_c) = \ell^{\OO(\ell)}\cdot(\log r)^{\OO(1)}.\]

The ILP formulation immediately implies that if the ILP instance does not have a solution, then there does not exist a function $\varphi: A(T)\rightarrow \{1,2,\ldots,r\}$ such that $(T,\varphi)$ is an $r$-enriched $\ell$-topology with endpoints $i,j$. If the ILP instance has a solution, then such a function $\varphi$ that maximizes $\sum_{e\in A(T)}\varphi(e)$ is defined as follows: for any $e\in A(T)$, define $\varphi(e)$ as the value assigned to $x_e$ by the solution.
\end{proof}

Next, we define what does it mean for a solution to ``comply'' with an enriched topology.

\begin{definition}\label{def:enrichComply}
Let $\ell,r\in\mathbb{N}$, $G$ be an $\ell$-colored digraph, and let $P$ be a colorful $r$-simple $(s,t)$-path in $G$. Let $i$ be the color of $s$, $j$ be the color of $t$, and $(T,\varphi)$ be an $r$-enriched $\ell$-topology with endpoints $i,j$.
We say that $P$ {\em complies with} $(T,\varphi)$ if $P$ complies with $T$, and for the function $\psi$ that witnesses this, for every arc $(u,v)\in P_{\mathrm{simple}}$, the number of copies $(u,v)$ has in $P_{\mathrm{multi}}$ is exactly $\varphi(\psi(u,v))$.
\end{definition}

Let us now argue that the choice of how to enrich a topology is immaterial as long as at least one enrichment exists (in which case, we also need to compute such an enrichment). 

\begin{lemma}\label{lem:enrichImmaterial}
Let $G$ be an $\ell$-colored graph, and let $P$ be a colorful $r$-simple $(s,t)$-path in $G$ with $s\neq t$. Let $i$ be the color of $s$, and $j$ be the color of $t$. Then, the following conditions hold.
\begin{enumerate}
\item\label{item:enrichImmaterial1} There exists an $r$-enriched $\ell$-topology with endpoints $i,j$ with which $P$ complies.
\item\label{item:enrichImmaterial2} Let $T$ be an $\ell$-topology with which $P$ complies. Then, for any $r$-enriched $\ell$-topology with endpoints $i,j$, say $(T,\varphi)$, there exists an $r$-simple $(s,t)$-path in $G$ that complies with $(T,\varphi)$.
\end{enumerate}
\end{lemma}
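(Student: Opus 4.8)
The plan is to handle both directions with a single observation: the four conditions defining an $r$-enriched $\ell$-topology in Definition~\ref{def:enrichedTopology} are, after transporting indices along the relevant color-preserving isomorphism, exactly the in-/out-degree balance conditions of Theorem~\ref{prop:euler} guaranteeing an Euler $(s,t)$-trail, and the ``$\le r$'' clauses in those conditions are precisely what forces such a trail to be $r$-simple. So for one direction I would simply read off an enrichment from the Euler trail $P$ itself, and for the other I would re-weight $P_{\mathrm{simple}}$ according to the given $\varphi$ and apply Theorem~\ref{prop:euler}.

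For Part~\ref{item:enrichImmaterial1}, the plan is to take $T:=P_{\mathrm{simple}}$, viewed as an $\ell$-colored digraph (its vertices have distinct colors since $P$ is colorful, it has no isolated vertices, and in the setting where this lemma is applied it has at most $\ell$ arcs, so it is an $\ell$-topology, with $s$ and $t$ the unique vertices of colors $i$ and $j$), and to set $\varphi(a)$ equal to the multiplicity of $a$ in $P_{\mathrm{multi}}$. This value lies in $\{1,\dots,r\}$ because $a$ is traversed at least once and every traversal of $a=(u,v)$ uses one of the at most $r$ occurrences of $v$. Then, since $P$ is itself an Euler $(s,t)$-trail of $P_{\mathrm{multi}}$, Theorem~\ref{prop:euler} gives the usual degree identities on $P_{\mathrm{multi}}$; expressing each out-/in-degree as the corresponding sum of $\varphi$-values turns these identities into conditions 2--4 of Definition~\ref{def:enrichedTopology}, while the ``$\le r$'' parts follow from the fact that the out- and in-degree of any vertex $v$ in $P_{\mathrm{multi}}$ is at most the number of occurrences of $v$ in $P$. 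So $(T,\varphi)$ is the required enriched topology and $P$ complies with it via the identity witness.

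For Part~\ref{item:enrichImmaterial2}, let $\phi\colon P_{\mathrm{simple}}\to T$ be the color-preserving isomorphism witnessing that $P$ complies with $T$; from here on $P$ matters only in that $P_{\mathrm{simple}}$ is a connected subgraph of $G$ (being the trace of a walk) that is color-preservingly isomorphic to $T$. I would build the directed multigraph $H$ on $V(P_{\mathrm{simple}})\subseteq V(G)$ containing, for each $(u,v)\in A(P_{\mathrm{simple}})$, exactly $\varphi(\phi(u),\phi(v))$ parallel copies of the $G$-arc $(u,v)$. Since $\varphi\ge 1$, $H$ and $P_{\mathrm{simple}}$ have the same underlying undirected graph up to edge multiplicities, so $H$ is connected. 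Because $\phi$ preserves colors and colors in $T$ are distinct, $\phi(s)=s(T,\varphi)$ and $\phi(t)=t(T,\varphi)$, so pulling conditions 2--4 of Definition~\ref{def:enrichedTopology} back through $\phi$ shows $d^+_H(w)=d^-_H(w)\le r$ for $w\notin\{s,t\}$, $d^+_H(s)=d^-_H(s)+1\le r$, and $d^-_H(t)=d^+_H(t)+1\le r$ (here $s\neq t$ is exactly what makes these the $s$- and $t$-endpoint conditions). Theorem~\ref{prop:euler} then yields an Euler $(s,t)$-trail $P'$ of $H$. Every vertex $w$ occurs in $P'$ at most $\max\{d^+_H(w),d^-_H(w)\}\le r$ times, so $P'$ is an $r$-simple $(s,t)$-path in $G$; moreover $P'_{\mathrm{simple}}=P_{\mathrm{simple}}$ (all arcs of $P_{\mathrm{simple}}$ survive in $H$, all its vertices have positive $H$-degree) and $P'$ traverses each $(u,v)$ exactly $\varphi(\phi(u),\phi(v))$ times, so $P'$ complies with $(T,\varphi)$ via the same witness $\phi$.

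The one spot needing care is the degree bookkeeping in Part~\ref{item:enrichImmaterial2}: one must confirm that swapping the multiplicities of $P$ for the possibly very different ones prescribed by $\varphi$ keeps the multigraph connected and in exact in-/out-degree balance, and that the trail produced is truly $r$-simple rather than a mere walk. Connectivity is immediate from $\varphi\ge1$, and the balance and the per-vertex bound of $r$ are precisely the invariants encoded in the definition of an $r$-enriched topology, so once everything is transported along $\phi$ the verification is short and routine.
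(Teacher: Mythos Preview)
Your proof is correct and follows essentially the same approach as the paper's. For Part~\ref{item:enrichImmaterial1} both you and the paper take $T=P_{\mathrm{simple}}$ with $\varphi$ recording arc multiplicities and invoke Theorem~\ref{prop:euler}; for Part~\ref{item:enrichImmaterial2} the paper builds the $\varphi$-weighted multigraph on the $T$-side and then transports the resulting Euler trail into $G$ via the color bijection, whereas you build the isomorphic multigraph $H$ directly inside $G$ and find the Euler trail there---these are the same construction up to which side of the isomorphism $\phi$ one works on, and your explicit verification that connectivity survives (via $\varphi\ge 1$) and that the trail is $r$-simple (via $\max\{d^+_H,d^-_H\}\le r$) matches what the paper leaves implicit.
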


\begin{proof}
For the first condition, define $T$ as $P_{\mathrm{simple}}$ with loops removed. Moreover, define $\varphi: A(T)\rightarrow\mathbb{N}$ as follows. For all $e\in A(T)$, let $\varphi(e)$ be the number of copies of $e$ in $P_{\mathrm{multi}}$. Since $P$ is an Euler $(s,t)$-trail in $P_{\mathrm{multi}}$, by Theorem \ref{prop:euler}, the out-degree and in-degree of every vertex in $P_{\mathrm{multi}}$ are equal, except for $s$ and $t$ which satisfy $d^+(s)=d^-(s)+1$ and $d^-(t)=d^+(t)+1$. Thus, it is immediate that $(T,\varphi)$ is an $r$-enriched $\ell$-topology with endpoints $i,j$ with which $P$ complies.

For the second condition, let $T$ be an $\ell$-topology with which $P$ complies, and consider some function $\varphi: A(T)\rightarrow\mathbb{N}$ such that $(T,\varphi)$ is an $r$-enriched $\ell$-topology with endpoints $i,j$. Let $T'$ be the directed multigraph obtained from $T$ by duplicating every arc $e\in A(T)$ to have exactly $\varphi(e)$ copies. Let $s'$ and $t'$ be the (unique) vertices colored $i$ and $j$ in $T'$, respectively. Since $(T,\varphi)$ is an $r$-enriched $\ell$-topology with endpoints $i,j$, it holds that the out-degree and in-degree of every vertex in $T'$ are equal, except for $s'$ and $t'$ which satisfy $d^+(s')=d^-(s')+1$ and $d^-(t')=d^+(t')+1$. By Theorem \ref{prop:euler}, this means that there exists an Euler $(s',t')$-trail $P'$ in $T'$. Let $\psi: V(P')\rightarrow V(G)$ be the function that maps each vertex in $V(P')$ to the (unique) vertex of the same color in $P$. Then, for any arc $(u,v)\in A(T')$, the pair $(\psi(u),\psi(v))$ is an arc that is visited at least once by $P$, and hence $(\psi(u),\psi(v))\in A(G)$. This implies that $\psi$ maps $P'$ to an $r$-simple $(s,t)$-path $\widehat{P}$ in $G$. By construction, it holds that $\widehat{P}$ complies with $(T,\varphi)$.
\end{proof}

This lemma motivates a problem definition where the input includes an $r$-enriched $\ell$-topology with endpoints $i,j$, and we seek an $r$-simple $(s,t)$-path in $G$ that complies with it. However, like before, such a problem encompasses {\sc Multicolored Clique}. Instead, we need a relaxed notion of compliance, which we define as follows.
This corresponds to the notion of a relaxed embedding mentioned previously.

\begin{definition}\label{def:weaklyComply}
Let $\ell,r\in\mathbb{N}$. Let $(T,\varphi)$ be an $r$-enriched $\ell$-topology $(T,\varphi)$ with endpoints $i,j$. Let $P$ be an $r$-simple $(s,t)$-path in an $\ell$-colored digraph $G$, where $i$ is the color of $s$ and $j$ is the color of~$t$. Then, $P$ {\em weakly complies} with $(T,\varphi)$ if the following conditions hold.
\begin{itemize}
\item Every color that occurs in $P$ also occurs in $T$ and vice versa. That is, there exists a unique, surjective (but not necessarily injective) function $\psi: V(P_{\mathrm{simple}})\rightarrow V(T)$ where for all $v\in V(P_{\mathrm{simple}})$, the colors of $v$ and $\psi(v)$ are equal.
\item For every two colors $a,b$ that occur in $T$, the number of times arcs directed from a vertex colored $a$ to a vertex colored $b$  occur in $P$ is precisely $\varphi(u,v)$ where $u$ and $v$ are the (unique) vertices in $T$ colored $a$ and $b$, respectively.
\end{itemize} 
\end{definition}
Note that if a walk $P$ complies with $(T,\varphi)$, then it also weakly complies with $(T,\varphi)$, but the opposite is not true. In particular, a walk where some distinct vertices have the same color can weakly comply with $(T,\varphi)$, but it necessarily does not comply with $(T,\varphi)$.

\begin{defproblem}
{\topologyProb} 
{A tuple $(G, \ell, r, s, t, (T, \varphi))$ where
$\ell, r \in \mathbb{N}$, $G$ is an $\ell$-colored digraph, $s, t \in V(G)$ are distinct vertices, and $(T,\varphi)$ is an enriched $\ell$-topology
with endpoints $i,j$ where $i$ is the color of $s$ and $j$ is the color of $t$.}
{ Return \yes\ or \no\ as follows: 
{\em (i)} If $G$ has an $r$-simple $(s,t)$-path that complies with $(T,\varphi)$, then return \yes;
{\em (ii)} If $G$ has no $r$-simple $(s,t)$-path that weakly complies with $(T,\varphi)$, then return \no;
{\em (iii)} If none of the two conditions above holds, we can return either \yes\ or \no.
}
\end{defproblem}\\

The \topologyProb\ problem allows us to
determine whether there exists an $r$-simple $(s,t)$-path in $G$ that weakly complies with $(T,\varphi)$.

\begin{lemma}\label{lem:reduceToTopologyProb}
Suppose that \topologyProb\ can be solved in time $f(\ell)\cdot (n+\log r)^{\OO(1)}$ and polynomial space. Then, \diColPathRst\ can be solved in time $2^{\OO(\mathsf{b}(k/r)\log(\mathsf{b}(k/r)))}\cdot f(\mathsf{b}(k/r))\cdot (n+\log k)^{\OO(1)}$ and polynomial space.
Here, $\mathsf{b}(k/r)$ is the function defined at the start of Section~\ref{sec:dir:colorcoding}.
\end{lemma}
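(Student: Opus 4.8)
The plan is to reduce \diColPathRst\ to a bounded number of calls to \topologyProb, one per topology in ${\cal T}_{\mathsf{b}(k/r)}$, after first using Lemma~\ref{lem:enrich} to replace each topology by a single ``maximal'' enrichment of it, and using Observation~\ref{obs:complies} together with Lemma~\ref{lem:enrichImmaterial} for correctness. Let $(G,k,r,s,t)$ be the given instance of \diColPathRst, and let $i_0,j_0$ be the colors of $s,t$. Since $G$ is strongly connected we may compute, by breadth-first search, a shortest directed $(s,t)$-path; it has some size $d+1$ and, being a simple path, is an $r$-simple $(s,t)$-path, so $d+1$ is always a safe value to output. We then enumerate the topologies $T\in{\cal T}_{\mathsf{b}(k/r)}$ one at a time: by Lemma~\ref{lem:fewTopologies} there are only $2^{\OO(\mathsf{b}(k/r)\log(\mathsf{b}(k/r)))}$ of them, and following the description in the proof of that lemma (a topology is a labelled digraph on at most $\mathsf{b}(k/r)$ vertices with at most $\mathsf{b}(k/r)$ arcs, together with an assignment of distinct colors) they can be enumerated one at a time using space polynomial in $\mathsf{b}(k/r)$. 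For each $T$ we run the algorithm of Lemma~\ref{lem:enrich} with $\ell=\mathsf{b}(k/r)$ and endpoints $i_0,j_0$: if it reports that no enrichment with these endpoints exists we discard $T$; otherwise it returns a function $\varphi$ maximizing $\sum_{e\in A(T)}\varphi(e)$ such that $(T,\varphi)$ is an $r$-enriched $\mathsf{b}(k/r)$-topology with endpoints $i_0,j_0$, and we call the assumed algorithm for \topologyProb\ on $(G,\mathsf{b}(k/r),r,s,t,(T,\varphi))$. If that call answers \yes, we record the value $1+\sum_{e\in A(T)}\varphi(e)$. Having processed all topologies, we output the maximum of $d+1$ and all recorded values.

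For correctness, observe first that if an $r$-simple $(s,t)$-path $P$ weakly complies with an $r$-enriched topology $(T,\varphi)$, then its multiset of arc visits is, colourwise, exactly the one prescribed by $\varphi$, so $P$ has exactly $1+\sum_{e\in A(T)}\varphi(e)$ vertex visits. Consequently, a \yes\ answer of \topologyProb\ on $(G,\dots,(T,\varphi))$ guarantees the existence of an $r$-simple $(s,t)$-path of $G$ weakly complying with $(T,\varphi)$ (otherwise \no\ would be forced), hence of size exactly the value we record; this, together with the fact that the fallback $d+1$ is realized, gives property~(i) for the output. For property~(ii), suppose $G$ has a colorful $r$-simple $(s,t)$-path $P$ of size $j$. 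By Observation~\ref{obs:complies}, $P$ complies with a unique $T\in{\cal T}_{\mathsf{b}(k/r)}$, and $T$ contains vertices colored $i_0$ and $j_0$; by Lemma~\ref{lem:enrichImmaterial}(\ref{item:enrichImmaterial1}) there is an enrichment $\varphi_P$ of $T$ with endpoints $i_0,j_0$ with which $P$ complies, so when the algorithm processes $T$ the function $\varphi$ returned by Lemma~\ref{lem:enrich} satisfies $\sum_{e}\varphi(e)\ge\sum_{e}\varphi_P(e)=j-1$. By Lemma~\ref{lem:enrichImmaterial}(\ref{item:enrichImmaterial2}) there is an $r$-simple $(s,t)$-path of $G$ that complies, hence weakly complies, with $(T,\varphi)$, so \topologyProb\ answers \yes\ on this instance and the value $1+\sum_{e}\varphi(e)\ge j$ is recorded; therefore the output is at least $j$. (When $i_0=j_0$ no topology admits an enrichment with those endpoints, nothing is recorded, and $d+1$ is returned, which is correct since $G$ then has no colorful $(s,t)$-path at all.)

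Finally, the running time is $|{\cal T}_{\mathsf{b}(k/r)}|$ times the per-iteration cost of one call to Lemma~\ref{lem:enrich}, in time $\mathsf{b}(k/r)^{\OO(\mathsf{b}(k/r))}(\log r)^{\OO(1)}$, plus one call to \topologyProb, in time $f(\mathsf{b}(k/r))(n+\log r)^{\OO(1)}$; since $\mathsf{b}(k/r)^{\OO(\mathsf{b}(k/r))}=2^{\OO(\mathsf{b}(k/r)\log(\mathsf{b}(k/r)))}$ and $\log r=\OO(\log k)$ (we may assume $r\le k$), the total is $2^{\OO(\mathsf{b}(k/r)\log(\mathsf{b}(k/r)))}\cdot f(\mathsf{b}(k/r))\cdot(n+\log k)^{\OO(1)}$; and since every subroutine runs in polynomial space and we keep in memory only the current topology together with its enrichment and a single running maximum, all of size polynomial in $(k/r)+\log k$, the whole procedure uses polynomial space. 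The delicate point is the correctness accounting around the asymmetric semantics of \topologyProb: a \yes\ answer certifies only a \emph{weakly} complying path, so one must argue (as above) that weak compliance still determines the path size exactly, while conversely a genuine colorful solution is precisely what forces a \yes\ answer through the \emph{complies} branch by way of Lemma~\ref{lem:enrichImmaterial}(\ref{item:enrichImmaterial2}); getting this bookkeeping right, rather than any single estimate, is the main obstacle.
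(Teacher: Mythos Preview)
Your proof is correct and follows essentially the same approach as the paper: enumerate all topologies in ${\cal T}_{\mathsf{b}(k/r)}$, for each one use Lemma~\ref{lem:enrich} to extract a maximum enrichment, call the \topologyProb\ oracle, and return the largest recorded value, with correctness derived from Observation~\ref{obs:complies} and both parts of Lemma~\ref{lem:enrichImmaterial}. Your addition of the fallback value $d+1$ (the size of a shortest $(s,t)$-path) handles the edge case where no topology yields a \yes\ answer---including when $i_0=j_0$---which the paper's proof glosses over by initializing $k^\star:=0$, and your explicit polynomial-space enumeration of topologies is also more careful than the paper's treatment.
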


\begin{proof}
Let $\cal A$ be an algorithm that solves \topologyProb\ in time $f(k/r)\cdot (n+\log k)^{\OO(1)}$ and polynomial space. In what follows, we describe how to solve \diColPathRst. To this end, let $(G,k,r,s,t)$ be an instance of \diColPathRst. Let $i$ be the color of $s$ and $j$ be the color of $t$. Initialize $k^\star:=0$. For every topology $T\in{\cal T}_{\mathsf{b}(k/r)}$, we execute the following computation. First, call the algorithm in Lemma \ref{lem:enrich} to check in time $2^{\OO(\mathsf{b}(k/r)\log(\mathsf{b}(k/r)))}\cdot(\log r)^{\OO(1)}$ and polynomial space whether  there exists a function $\varphi: A(T)\rightarrow \{1,2,\ldots, r\}$ such that $(T,\varphi)$ is an $r$-enriched $\ell$-topology with endpoints $i,j$.  If the answer is positive, the algorithm outputs such a function $\varphi$ that maximizes $\sum_{e\in A(T)}\varphi(e)$. In this case, we proceed as follows. We call the algorithm $\cal A$ with $(G,\mathsf{b}(k/r),r,s,t,(T,\varphi))$. If the answer of $\cal A$ is positive and $\sum_{e\in A(T)}\varphi(e)\geq k^\star$, then update $k^\star:=1+\sum_{e\in A(T)}\varphi(e)$. 
In the case that no such $\varphi$ exists for $T$, proceed with the next topology.
After all topologies in ${\cal T}_{\mathsf{b}(k/r)}$ were examined, we return $k^\star$.

By Observation \ref{lem:fewTopologies}, $|{\cal T}_{\mathsf{b}(k/r)}|=2^{\OO(\mathsf{b}(k/r)\log(\mathsf{b}(k/r)))}$. Thus, it is clear that the algorithm runs in time $2^{\OO(\mathsf{b}(k/r)\log(\mathsf{b}(k/r)))}\cdot f(\mathsf{b}(k/r))\cdot (n+\log k)^{\OO(1)}$ and uses polynomial space. Next, we show that the algorithm is correct, that is, that it solves \diColPathRst.

In one direction, let $P$ be a colorful $r$-simple $(s,t)$-path in $G$, and let $q$ denote its size. We need to show $k^\star\geq q$. By Observation \ref{obs:complies}, there exists a unique topology $T\in{\cal T}_{\mathsf{b}(k/r)}$ with which $P$ complies. Further, Property \ref{item:enrichImmaterial1} in Lemma \ref{lem:enrichImmaterial} states that there exists an $r$-enriched $\mathsf{b}(k/r)$-topology $(T,\varphi')$ with which $P$ complies. Thus, when $T$ is examined, the algorithm in Lemma \ref{lem:enrich} returns a function $\varphi$ such that $(T,\varphi)$ is an $r$-enriched $\ell$-topology with endpoints $i,j$, and $\sum_{e\in A(T)}\varphi(e)\geq \sum_{e\in A(T)}\varphi'(e)$. By Property \ref{item:enrichImmaterial2} in Lemma \ref{lem:enrichImmaterial}, there exists an $r$-simple $(s,t)$-path in $G$ that complies with $(T,\varphi)$. Thus, $\cal A$ must return a positive answer. Since $\sum_{e\in A(T)}\varphi'(e)=q-1$, we have that $k^\star\geq q$.

In the other direction, we need to show that $G$ has an $r$-simple $(s,t)$-path of size at least $k^\star$. Consider the topology $T\in{\cal T}_{\mathsf{b}(k/r)}$ in whose examination $k^\star$ was updated to its final value. Then, there exists a function $\varphi: A(T)\rightarrow \{1,2,\ldots,r\}$ such that $(T,\varphi)$ is an $r$-enriched $\ell$-topology with endpoints $i,j$, and $1+\sum_{e\in A(T)}\varphi(e)=k^\star$. Moreover, by the correctness of $\cal A$, there exists an $r$-simple $(s,t)$-path $P$ in $G$ that weakly complies with $(T,\varphi)$. By the definition of weak compliance, the length of $P$ is exactly $\sum_{e\in A(T)}\varphi(e)$, and hence its size is $k^\star$.
\end{proof}

\subsection{Verifying Whether a Guess is Realizable}

It remains to solve the \topologyProb\ problem. Let us first remark that if we allowed a linear dependency on $k$ in the running time, then this task would have been easier than our actual task, since we could have used the following approach: first, we would have computed some walk $P^\star=v_{i_1} \ldots v_{i_d}$ that uses every arc $e$ in the input enriched topology exactly $\varphi(e)$ times---note that the size~$d$ of such a $P^\star$ can be $\Omega(\ell r)$ (that is, $\Omega(k)$ if we trace the source of $\ell$); then, we could have used a simple dynamic programming (DP) computation to check whether the input digraph $G$ contains such a colored walk (where vertices having the same color in $P^\star$ are allowed to be mapped to distinct vertices in $G$ as long as these vertices have the same color). 
This could be done by a simple table $T[v,j]$ that stores, for every $j \in [d]$ and every vertex $v \in V$ of the same color as $v_{i_j}$, whether there is a colored walk in $G$ ``implementing'' the prefix $v_{i_1} \ldots v_{i_j}$ of $P^\star$, ending at the vertex $v$.
Note how the use of vertex colors guarantees that the result is an $r$-simple path, even if it only weakly complies with the enriched topology.

To solve \topologyProb\ while attaining a logarithmic dependency on $k$, instead of searching for a walk one step a time, we decompose the enriched topology into cycles.  
Let $(T, \varphi)$ be an enriched topology, and let $C$ be a cycle in $T$. 
Let $M$ be the smallest value of $\varphi(e)$ for arcs $e$ in $C$.  
Then we can search for $(T,\varphi)$ as follows: find a copy of $C$ in $G$; remove $M$ copies of every arc of $C$ from $(T, \varphi)$, deleting arcs whose multiplicity reaches 0; then recursively, via DP, find a copy in $G$ of every connected component of the resulting enriched topology. 
If we ensure that colors are preserved in all steps, and that the resulting subgraph $H$ of $G$ is connected, 
then $H$ will admit a walk that forms an $r$-simple path which weakly complies with $(T, \varphi)$.

We now present the recursive algorithm that (combined with DP) solves \topologyProb.  Due to the nature of the recursion, we need to consider an annotated version of \topologyProb, defined as follows. 

\begin{defproblem}
{\topologyProbRoot} 
{Integers $\ell,r\in\mathbb{N}$, an $\ell$-colored digraph $G$, distinct vertices $s,t\in V(G)$, an $r$-enriched $\ell$-topology $(T,\varphi)$ with endcolors $i,j$ where $i$ is the color of $s$ and $j$ is the color of $t$, and a vertex $v_r \in V(G)$ called the \emph{root vertex}.}
{Return \yes\ or \no\ as follows. {\em (i)} If $G$ has an $r$-simple $(s,t)$-path that complies with $(T,\varphi)$ and visits the root vertex at least once, then return \yes. In this case, the input is called a {\em \yes-instance}.
{\em (ii)} If $G$ has no $r$-simple $(s,t)$-path that weakly complies with $(T,\varphi)$ and visits the root vertex at least once, then return \no. In this case, the input is called a {\em \no-instance}.
{\em (iii)} If none of the two conditions above holds, we can return either \yes\ or \no. In this case, the input is called an {\em irrelevant} instance.
}
\end{defproblem}\\

To describe the recursion, let us first make a simple observation about the structure of strong components of an $r$-enriched $\ell$-topology $(T,\varphi)$.

\begin{lemma} \label{lem:nonstrong}
  Let $(T,\varphi)$ be an $r$-enriched $\ell$-topology with endcolors $i,j$, and let its endpoints be $s$, $t$. 
  Let $\mathcal{Q}$ be the set of strong components of $T$, say $|\mathcal{Q}|=d$.
  Then we can arrange the strong components as $\mathcal{Q}=\{Q_1, \ldots, Q_d\}$ such that
  for every $c \in [d-1]$ there is a single arc $e$ from $Q_c$ to $Q_{c+1}$ with $\varphi(e)=1$,
  and $T$ contains no other arcs between distinct strong components in $T$.
  Furthermore, let $s_1=s$, $t_d=t$, and for each $c \in [d-1]$ let the arc from $Q_c$ to $Q_{c+1}$ be $t_cs_{c+1}$.
  Then for every $c \in [d]$, the graph $Q_c$ admits an $(s_c,t_c)$-walk that visits each arc $e \in A(Q_c)$ precisely $\varphi(e)$ times.
\end{lemma}
\begin{proof}
  By Definition \ref{def:enrichedTopology} and Theorem \ref{prop:euler}, there is a directed walk from $s$ to $t$ in $T$ that uses every arc $e$ precisely $\varphi(e)$ times.  Clearly, this is only possible under the conditions described. 
\end{proof}

We define the basis of our recursion as the case where the topology is a DAG. Then, we make use of the following lemma.

\begin{lemma}\label{lem:topologyBasis}
There exists an algorithm that, given an instance $I=(G,\ell,r,s,t,(T,\varphi),v_r)$ of \topologyProbRoot\ where $T$ is a DAG, solves $I$ in polynomial time and space.
\end{lemma}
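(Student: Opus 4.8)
The plan is to solve the instance $I=(G,\ell,r,s,t,T,\varphi,A)$ by a dynamic programming (DP) computation that exploits the fact that $T$ is a DAG. First I would fix a topological ordering $c_1,c_2,\ldots,c_p$ of the colors appearing in $T$, so that every arc of $T$ goes from an earlier color to a later color. Since $T$ is acyclic, an $r$-simple $(s,t)$-path that weakly complies with $(T,\varphi)$ must traverse the colors in a way that is consistent with this ordering: once it leaves the set of vertices with colors $\{c_1,\ldots,c_m\}$ for a vertex of a later color, it can never return (any returning arc would be directed from a later color to an earlier color, contradicting the topological order of $T$, and weak compliance requires that the multiset of color-to-color arc multiplicities equals that prescribed by $\varphi$). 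Thus the walk decomposes into $p$ consecutive ``blocks'', where block $m$ consists entirely of vertices of color $c_m$, and the transitions between consecutive blocks are governed by the arcs of $T$ and their multiplicities under $\varphi$.

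The key step is to observe that, because we only need \emph{weak} compliance (distinct vertices of $G$ sharing a color may be used), the behaviour inside a single color class is extremely constrained: all $\varphi(e)$ copies of an arc $e=(c_a,c_b)$ in $T$ can be realized independently, picking for each copy any arc of $G$ from a vertex colored $c_a$ to one colored $c_b$. Consequently the only global quantity we must track as we sweep through the color classes in topological order is: which colors we have ``entered'' so far, which vertices of $A$ we have already covered (but $A$ may be large, so instead I track coverage \emph{per color}, i.e.\ for each color $c_m$ whether all vertices of $A$ with that color have been visited, which is decidable once we know which $G$-vertices of color $c_m$ the walk uses), and enough information at the ``current frontier'' color to stitch the next block on. Concretely, the DP state after processing colors $c_1,\ldots,c_m$ records the current color $c_m$ and, for it, the set of vertices of color $c_m$ in $G$ that the partial walk currently occupies together with which $A$-vertices of that color are covered; because the topology has at most $\ell$ arcs and $p\le\ell$ colors, and because within one color class the relevant combinatorial choices are small (each block is a short walk on $G$-vertices of a fixed color, of bounded "shape" dictated by the $T$-arcs incident to $c_m$), the number of states is bounded by a function of $\ell$ times a polynomial in $n$; transitions between consecutive colors are checked directly against $G$ and against $\varphi$, using that each $T$-arc multiplicity can be realized greedily.

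I would then argue correctness in the standard two directions: if $G$ has an $r$-simple $(s,t)$-path complying (not merely weakly) with $(T,\varphi)$ and visiting all of $A$, then its block decomposition along the topological order of $T$ yields a valid run of the DP, so the algorithm returns \yes; conversely, any accepting run of the DP can be assembled into an $r$-simple $(s,t)$-path that weakly complies with $(T,\varphi)$ and covers $A$ — the per-color budget constraint $\sum_{u:(u,v)\in A(T)}\varphi(u,v)=\sum_{u:(v,u)\in A(T)}\varphi(v,u)\le r$ from Definition~\ref{def:enrichedTopology} is exactly what guarantees that the glued walk can be routed through each color class as an Euler-type trail respecting the degree bound $r$ at every vertex (invoking Theorem~\ref{prop:euler} on the multigraph the DP implicitly builds) — so the algorithm returns \no\ only when no weakly complying covering walk exists, which is precisely what \topologyProbAn\ demands. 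The running time is polynomial in $n+\log r+\log k$ for fixed $\ell$, since $\varphi$-values are written in binary and only appear in simple arithmetic checks, and the state space is polynomial in $n$; the space usage is likewise polynomial.

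**Main obstacle.** The hard part will be pinning down exactly what must be carried in the DP state across the boundary between two color classes so that the resulting glued walk is genuinely a single connected Euler-type trail with the correct multiplicities and the correct endpoints $s,t$ — in particular handling the two "deficit" colors (the color of $s$ and the color of $t$, where in-degree and out-degree differ by one per Conditions~3 and~4 of Definition~\ref{def:enrichedTopology}) and ensuring that the chosen realizations of the $T$-arcs into and out of color $c_m$ are compatible with a traversal of the $G$-vertices of color $c_m$ that actually visits the required $A$-vertices. Once the invariant "the partial object built after processing $c_1,\ldots,c_m$ extends to a weakly complying covering walk iff the DP reaches an accepting state" is stated precisely, the transition rules and the final correctness proof are routine bookkeeping, relying throughout on Theorem~\ref{prop:euler} to certify that locally balanced, connected multigraphs admit the needed trails.
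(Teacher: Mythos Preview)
Your plan misses the key structural observation that makes this lemma nearly trivial. Because $(T,\varphi)$ is an $r$-enriched $\ell$-topology, the multigraph on $V(T)$ obtained by giving each arc $e$ multiplicity $\varphi(e)$ satisfies exactly the Euler $(s,t)$-trail degree conditions of Theorem~\ref{prop:euler}. But $T$ is a DAG, so every arc of this multigraph goes strictly forward in a fixed topological order; hence any walk in it has strictly increasing vertices, and in particular an Euler trail visits each vertex at most once. This forces the multigraph to be a simple directed path: $T$ is a path and $\varphi(e)=1$ for every $e\in A(T)$. (If these conditions fail, the instance is immediately \no.) From here the whole problem collapses: a walk that weakly complies with $(T,\varphi)$ is just an $(s,t)$-walk in $G$ whose colour sequence equals the colour sequence of the path $T$. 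After discarding vertices of $G$ whose colour does not appear in $T$, and discarding any vertex that shares a colour with some $a\in A$ but is not $a$ itself (so that hitting that colour forces hitting $a$), the question reduces to reachability of $t$ from $s$ using only arcs whose colour pair appears in $T$---a single BFS. That is exactly the paper's proof.

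Your DP, by contrast, never uses this collapse. You speak of ``blocks'' within a single colour class, of ``all $\varphi(e)$ copies'' of an arc, and of tracking ``the set of vertices of colour $c_m$ in $G$ that the partial walk currently occupies''; none of these are needed once you know $T$ is a path with unit multiplicities, and as written the state description (a subset of $G$-vertices of a given colour, plus coverage information for $A$) is not obviously polynomial in size. The correctness sketch also leans on gluing Euler-type trails per colour class, but there are no arcs of $T$ inside a colour class, so the ``block'' is a single vertex visit. In short, the approach could perhaps be made to work, but only because the structure you are implicitly analysing degenerates to a path; recognising that up front gives a one-paragraph proof and a clean polynomial bound, whereas your plan leaves the polynomial-time claim unjustified.
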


\begin{proof}
Let $(G,\ell,r,s,t,(T,\varphi),v_r)$ be an instance of \topologyProbRoot\ where $T$ is a DAG. By Lemma~\ref{lem:nonstrong}, this means that $T$ is a (simple directed) path and that for every arc $e\in A(T)$, it holds that $\varphi(e)=1$.  If no vertex in $T$ has the same color as $v_r$, then it is clear that there is no $r$-simple $(s,t)$-path in $G$ that complies with $(T,\varphi)$ and which visits $v_r$ at least once. Thus, we next suppose that this is not the case. Let $G'$ be the digraph obtained by removing from $G$ all vertices whose color does not occur in $T$ as well as every vertex $v \neq v_r$ that has the same color as $v_r$. Then, $(G,\ell,r,s,t,(T,\varphi),v_r)$ is a \yes-instance if and only if $G'$ has an $(s,t)$-walk $P$ that is isomorphic to $T$ under color preservation, i.e.,~the isomorphism must map each vertex in $P$ to a vertex of the same color in $T$ (then, the walk is necessarily a path that uses $v_r$). However, this task can be easily checked by removing from $G'$ all arcs from a vertex colored $i$ to a vertex colored $j$ for all colors $i,j$ such that $T$ has no arc from a vertex colored $i$ to a vertex colored $j$, and then checking (e.g.,~by using BFS) whether $t$ is reachable from $s$.
\end{proof}

In each step, we decompose the current topology further, in two ways.  The first type of decomposition applies when $T$ contains multiple strong components.  
For technical reasons, we need to introduce \emph{rooted} topologies.

\begin{definition} \label{def:strong-decomp}
  A \emph{rooted $r$-enriched $\ell$-topology} is a triple $(T,\varphi,u)$ where $(T,\varphi)$ is an $r$-enriched $\ell$-topology and $u \in V(T)$ a vertex referred to as the \emph{root vertex} of the topology.  For a rooted $r$-enriched $\ell$-topology $(T, \varphi, u)$, the \emph{strong component decomposition} of $(T,\ell,u)$
  is the sequence $(Q_c, s_c, t_c)_{c=1}^d$ where $\mathcal{Q}=\{Q_1, \ldots, Q_d\}$ and $s_c, t_c \in V(Q_c)$ are as in Lemma~\ref{lem:nonstrong}.
  If $d>1$, then for each $c \in [d]$ the \emph{$c^{\text{th}}$ (rooted enriched) subtopology} of the decomposition is
  a rooted $r$-enriched $\ell$-topology $(Q_c', \varphi_c, u_c)$ defined as follows.
  \begin{enumerate}
  \item If $c<d$, then $Q_c'=T[V(Q_c) \cup \{s_{c+1}\}]$, $\varphi_c$ is $\varphi$ restricted to $V(Q_c')$, and $u_c=t_c.$
  \item If $c=d$, then $Q_c'=T[V(Q_c) \cup \{t_{c-1}\}]$, $\varphi_c$ is $\varphi$ restricted to $V(Q_c')$, and $u_c=s_c.$
  \end{enumerate}
  We say that $(T,\varphi,u)$ is \emph{decomposable} if $d>1$ and $u \in \{s_c,t_c: c \in [d]\}$.
\end{definition}

For topologies with a non-trivial decomposition into strong components, we define a collection of subinstances $I_{c,u,v}$ where $c \in [d]$, $u, v \in V(G)$ as follows.

\begin{definition} \label{def:strong-subinstance}
  Let $I=(G,\ell,r,s,t,(T,\varphi),v_r)$ be an instance of \topologyProbRoot\ and let
   $u_r \in V(T)$ be the vertex with the same color as $v_r$.
  We say that $I$ is \emph{decomposable} if $(T,\varphi,u_r)$ is decomposable. 
  Assume that $I$ is decomposable, and   
  let $(Q_c, s_c, t_c)_{c=1}^d$ be the strong component decomposition of $(T,\varphi,u_r)$. 
  For $c \in [d]$ and $u,v \in V(G)$, the triple $(c,u,v)$ is \emph{valid} if the following conditions apply.
  \begin{enumerate}
  \item If $c=1$, then $u=s$; otherwise $u$ is a vertex with the same color as $s_c.$
  \item If $c=d$, then $v=t$; otherwise $v$ is a vertex with the same color as $t_c.$
  \item If $c<d$, then $v$ has an out-neighbor with the same color as $s_{c+1},$ 
    otherwise $u$ has an in-neighbour with the same color as $t_{c-1}.$
  \item If the color of $v_r$ matches that of $u$ ($v,$ respectively),
    then $u=v_r$ ($v=v_r$, respectively).
  \end{enumerate}
  For any valid triple $(c,u,v)$, the \emph{subinstance} $I_{c,u,v}$ is the instance of \topologyProbRoot\ defined as follows.  Let $(Q_c', \varphi_c, u_c)$ be the $c$:th subtopology of the decomposition. Then 
  \[
    I_{c,u,v}=(G,\ell,r,s',t',(Q_c',\varphi_c),v_c),
  \]
  where $s'=u$ if $c<d$, and otherwise $s'$ is some in-neighbor of $u$ of the same color as $t_{c-1}$; 
  $t'=t$ if $c=d$, and otherwise $t'$ is some out-neighbor of $v$ of the same color as $s_{c+1}$; 
  and $v_c=v$ if $c<d$, and otherwise $v_c=u$. 

  Finally, a sequence $(u_c, v_c)_{c=1}^d$ of pairs of vertices of $G$
  is \emph{good} (\emph{excellent}, respectively) with respect to the strong component decomposition 
  if the following conditions hold:
  \begin{enumerate}
  \item For each $c \in [d-1]$, the arc $(v_c,u_{c+1})$ exists in $G.$
  \item For each $c \in [d]$, the triple $(c, u_c, v_c)$ is valid and
    the subinstance $I_{c,u_c,v_c}$ is not a \no-instance (a \yes-instance, respectively).
  \end{enumerate}
\end{definition}

Let us show the correctness condition for this decomposition.

\begin{lemma} \label{lem:strong-recurse}
  Let $I=(G,\ell,r,s,t,(T,\varphi),v_r)$ be a decomposable instance of \topologyProbRoot\ and
  let $(Q_c, s_c, t_c)_{c=1}^d$ be the strong component decomposition of $(T,\varphi,v_r)$.
  Then the following hold.
\begin{itemize}
\item If $I$ is a \yes-instance, then there exists a color-preserving 
  map $\psi \colon \{s_c, t_c:  c \in [d]\} \to V(G)$
  such that the sequence $(\psi(s_c), \psi(t_c))_{c=1}^d$
  is excellent with respect to $(Q_c, s_c, t_c)_{c=1}^d.$
\item If $I$ is a \no-instance, then there does not exist a color-preserving 
  map $\psi \colon \{s_c, t_c:  c \in [d]\} \to V(G)$
  such that the sequence $(\psi(s_c), \psi(t_c))_{c=1}^d$
  is good with respect to $(Q_c, s_c, t_c)_{c=1}^d.$
\end{itemize}
\end{lemma}
\begin{proof}
  First assume that $I$ is a \yes-instance, i.e., $G$ has an $r$-simple $(s,t)$-path $P$ that complies with $(T,\varphi)$ and visits the root vertex at least once.  Since $P$ complies with $(T,\varphi)$ there exists a color-preserving isomorphism between $P_{\mathrm{simple}}$ and $T$; let $\psi \colon V(T) \to V(P)$ be the mapping implied by this.   It is easy to check that $(\psi(s_c), \psi(t_c))_{c=1}^d$ is excellent with respect to $(Q_c, s_c, t_c)_{c=1}^d$.  Indeed, since $P_{\mathrm{simple}}$ is isomorphic to $T$ the decomposition $(Q_c, s_c, t_c)_{c=1}^d$ is also structurally consistent with $P_{\mathrm{simple}}$, i.e., $\psi$ is also a color-preserving isomorphism between $(Q_c)^d_{c=1}$ and the strong components of $P_{\mathrm{simple}}$, and every $(s,t)$-walk in $P_{\mathrm{simple}}$ must consist of an alternating sequence of $(\psi(s_c),\psi(t_c))$-walks in the image $\psi(Q_c)$ of $Q_c$ for $c \in [d]$, and single uses of arcs $(\psi(t_c),\psi(s_{c+1}))$.  Furthermore $s=\psi(s_1)$ and $t=\psi(t_d)$, and if the color of $v_r$ matches that of $s_c$ or $t_c$ for some $c \in [d]$, then $\psi$ maps that vertex to $v_r$.  Thus for every $c \in [d]$ the triple $(c,\psi(s_c),\psi(t_c))$ is valid, and by extending the $(\psi(s_c),\psi(t_c))$-walk by a first visit to $s'$ or a last visit to $t'$ as needed, we get an $(s',t')$-walk in $G$ that complies with the $c$:th enriched subtopology of the decomposition.  Furthermore, since $P$ visits both $v_r$, $\psi(s_c)$ and $\psi(t_c)$, this part of the walk must visit $v_c$.
  Thus $I_{c,\psi(s_c),\psi(t_c)}$ is a \yes-instance and $(\psi(s_c), \psi(t_c))_{c=1}^d$ is excellent with respect to $(Q_c, s_c, t_c)_{c=1}^d$.
  
  On the other hand, let $\psi$ be a color-preserving mapping such that $(\psi(s_c), \psi(t_c))_{c=1}^d$ is good with respect to $(Q_c, s_c, t_c)_{c=1}^d$.
  That is, $G$ contains an arc $(\psi(t_c),\psi(s_{c+1})))$ for every $c \in [d-1]$, and for every $c \in [d]$ the triple $(c,\psi(s_c),\psi(t_c)$ is valid and $I_{c, \psi(s_c), \psi(t_c)}$ is not a \no-instance.  By construction, due to the choice of endpoints $s'$, $t'$ and root vertex $v_c$ in $I_{c,\psi(s_c),\psi(t_c)}$, this implies that for every $c \in [d]$ there is an $r$-simple $(s_c,t_c)$-path that weakly complies with $(Q_c, \varphi_c')$, where $\varphi_c'$ is $\varphi$ restricted to $A(Q_c)$. Furthermore $\psi(s_1)=s$ and $\psi(t_d)=t$.  Thus the solutions to the subinstances can be concatenated into a single $(s,t)$-walk $P$.  Furthermore, since the components $Q_c$ have pairwise disjoint sets of vertex colors, and by the definition of weak compliance, these solutions are pairwise vertex-disjoint and $P$ is an $r$-simple $(s,t)$-path. We show that $P$ weakly complies with $(T,\varphi)$ and visits $v_r$ at least once.  By Definition~\ref{def:weaklyComply}, the former requires that $P$ and $T$ use the same sets of colors and that for every arc $e \in A(T)$ from some color $i$ to some color $j$, $P$ uses precisely $\varphi(e)$ arcs from a vertex of color $i$ to a vertex of color $j$. 
  For the first requirement, the colors used in $T$ are partitioned by the strong components $Q_c$, and for each $c \in [d]$, the solution to $I_{c,\psi(s_c),\psi(t_c)}$ uses the same set of colors as $Q_c$.  Hence this part follows.  For the second requirement, let $e \in A(T)$.  If $e$ goes between distinct components, then $\varphi(e)=1$ and $P$ uses precisely one arc with colors matching the endpoints of $e$.  Otherwise, $e \in A(Q_c)$ for some $c \in [d]$, and $P$ contains arcs matching the colors of $e$ only within the solution to $I_{c,\psi(s_c),\psi(t_c)}$, where it contains precisely $\varphi(e)$ such arcs by the definition of weak compliance.
  Finally, since $I$ is decomposable and each triple $(c,\psi(s_c),\psi(t_c))$ is valid, there is some $c \in [d]$ such that $v_r \in \{\psi(s_c),\psi(t_c)\}$, thus $P$ visits $v_r$ at least once. 
  We conclude that $G$ is not a \no-instance.
\end{proof}

We need a further decomposition step to decompose strong components. 
In each such step, we process a (directed simple) cycle from the current topology so that at least one of its arcs is eliminated.  Here, in order to eventually derive a logarithmic dependency on $k$, it is crucial that we completely eliminate an arc and not only decrease the value that $\varphi$ assigns to it. For this purpose, we utilize the following definition 
and lemma.  
 
\begin{definition}\label{def:goodCyc}
  For a rooted $r$-enriched $\ell$-topology $(T,\varphi,u)$,
  a tuple $B=(C,M,E,\mathcal{Q},f_T)$ is \emph{relevant} if
  $C$ is a simple, directed cycle $C$ in $T$ with $u \in V(C)$,
  $M=\min_{e\in A(C)}\varphi(e)$, $E=\{e\in A(C): \varphi(e)=M\}$,
  ${\cal Q}$ is the set of weakly connected components of $T-E$ and
  $f_T$ is the function that assigns to each $Q \in \mathcal{Q}$ a vertex of $V(C) \cap V(Q)$ as follows. If $|\mathcal{Q}|=1$, then $f_T(Q)=u$; otherwise $f_T(Q)$ is the last vertex in $V(C)\cap V(Q)$ of $C$, counting from $u$, such that the subsequent vertex along $C$ does not lie in $Q$.

  For $Q \in \mathcal{Q}$, the \emph{subtopology at $Q$ (of $(T,\varphi,u)$, with respect to $B$)} 
  is the rooted $r$-enriched $\ell$-topology $(T_Q, \varphi_Q, f_T(Q))$ defined as follows. 
  \begin{enumerate}
  \item Let the endpoints of $(T,\varphi)$ be $s',t'$. If $Q$ contains $s'$ and $t'$, then
    $T_Q=Q$, and $\varphi_Q$ is $\varphi$ restricted to $A(Q)$ where the value $\varphi_Q(e)$ has been decreased by $M$ for every arc $e \in A(C)$. 
  \item Otherwise, let $v$ be the successor of $f_T(Q)$ in $C$, and define $T_Q$ from $Q$ by adding a new vertex $t_Q$ colored by the same color as $v$, and add the arc $(f_T(Q),t_Q)$). Let $\varphi_Q$ be defined as in the previous case, extended with $\varphi_Q((f_T(Q),t_Q))=1$. 
  \end{enumerate}

  Let $(G,\ell,r,s,t,(T,\varphi),v_r)$ be an instance of \topologyProbRoot\ where $T$ is not a DAG and let $u \in V(T)$ have the same color as $v_r$. Let $(C,M,E,{\cal Q},f_T)$ be a relevant tuple for $(T,\varphi,u)$. 
  A cycle $C'$ in $G$ is {\em good} ({\em excellent}, respectively) with respect to $(C,M,E,{\cal Q},f_T)$ if {\em (i)} there exists a color-preserving isomorphism $\psi$ between $C'$ and $C$, and {\em (ii)} for every $Q\in{\cal Q}$, the instance $J_Q$ defined as follows is not a \no-instance (a \yes-instance, respectively). Let $(T_Q,\varphi_Q,u_Q)$ be the subtopology at $Q$.
\begin{enumerate}
\item\label{item:goodCyc1} If the endpoints of $(T,\varphi)$ are contained in $Q$, 
  then $J_Q=(G,\ell,r,s,t,(T_Q,\varphi_Q), \psi(f_T(Q)))$. 
\item\label{item:goodCyc2} Otherwise, let $v$ be the successor of $f_T(Q)$ in $C$,
  and let $J_Q=(G,\ell,r,s',t',(T_Q,\varphi_Q),\psi(f_T(Q)))$
    where $s'=\psi(f_T(Q))$ and $t'=\psi(v)$. 
\end{enumerate}
\end{definition}

\begin{lemma} \label{lem:relevanttuple}
  Let $(T,\varphi,u)$ be a rooted $r$-enriched $\ell$-topology
  and let $C$ be a simple cycle of $T$ that contains $u$.
  Then there is precisely one relevant tuple $(C,M,E,{\cal Q},f_T)$.
\end{lemma}
\begin{proof}
  All of $M$, $E$, ${\cal Q}$ and $f_T$ are uniquely defined by $(T,\varphi,u)$
  and $C$, and all are well-defined.
\end{proof}

\begin{lemma}\label{lem:topologyStep}
Let $I=(G,\ell,r,s,t,(T,\varphi),v_r)$ be an instance of \topologyProbRoot\ where $T$ is not a DAG. Let $(C,M,E,{\cal Q},f_T)$ be any relevant tuple. Then, the following conditions hold.
\begin{itemize}
\item If $I$ is a \yes-instance, then $G$ has a cycle that is excellent w.r.t.~$(C,M,E,{\cal Q},f_T)$.
\item If $I$ is a \no-instance, then $G$ has no cycle that is good w.r.t.~$(C,M,E,{\cal Q},f_T)$.
\end{itemize}
\end{lemma}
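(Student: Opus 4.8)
The plan is to prove both bullets by passing between a solution walk and its \emph{colour multigraph} --- for a walk $W$ in $G$, the multiset of ordered colour pairs of its arc occurrences, regarded as a multigraph on colours --- and applying Theorem~\ref{prop:euler} repeatedly. Two observations will carry most of the weight. First, since $(T,\varphi)$ is an $r$-enriched $\ell$-topology, every vertex of the multigraph $\widehat{T}$ (having $\varphi(a)$ copies of each $a\in A(T)$) has in- and out-degree at most $r$ by Definition~\ref{def:enrichedTopology}; consequently \emph{any} $(s,t)$-walk of $G$ (note $s\neq t$) whose colour multigraph equals $\widehat{T}$ is automatically $r$-simple, and --- because $T$, being an $\ell$-topology, has no isolated vertices --- it weakly complies with $(T,\varphi)$. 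So once I have a walk with the right colour multigraph, $r$-simplicity and weak compliance come for free. Second, in the Case~2 instance $J_Q$ the digraph $G'$ contains exactly one arc into the fresh vertex $x$, namely $(\phi(f_T(Q)),x)$; hence any walk of $G'$ that ends at $x$ has penultimate vertex $\phi(f_T(Q))$, and visits $x$ only once. A third, arithmetic, point: deleting $M$ copies of every arc of the simple cycle $C$ preserves every in/out-degree imbalance, so $\widehat{T}$ with $M$ copies of $C$ removed is balanced except at the endpoint-coloured vertices $\sigma$ (colour $i$) and $\tau$ (colour $j$); since every undirected component of a digraph has total imbalance $0$, $\sigma$ and $\tau$ lie in one component of $T-E$, i.e.\ there is a \emph{unique} $Q^{*}\in{\cal Q}$ containing both colours $i,j$ (the unique ``Case~1'' member of ${\cal Q}$), $(Q^{*},\varphi_{Q^*})$ is balanced except at $\sigma,\tau$, and every other $(Q,\varphi_Q)$ is balanced.

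For the first bullet I would start from an $r$-simple $(s,t)$-path $P$ complying with $(T,\varphi)$ and visiting $A$, with compliance witness $\psi\colon V(P_{\mathrm{simple}})\to V(T)$, and set $C'=\psi^{-1}(C)$, a cycle of $G$ for which $\phi:=\psi^{-1}|_{V(C)}$ is a colour-preserving isomorphism to $C'$ (giving condition~(i)). Deleting $M$ copies of each arc of $C'$ from $P_{\mathrm{multi}}$ removes exactly $\psi^{-1}(E)$ outright (as $P$ complies, the multiplicity of $\psi^{-1}(e)$ is $\varphi(e)$) and leaves a multigraph whose simple components are the $\psi^{-1}(Q)$ carrying multiplicities $\varphi_Q$, balanced except at $s$ ($+1$) and $t$ ($-1$). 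Hence the component of $s$ also contains $t$, equals $\psi^{-1}(Q^{*})$, and by Theorem~\ref{prop:euler} has an Euler $(s,t)$-trail complying with $(Q^{*},\varphi_{Q^*})$; since $P$ is colourful and visits $A$, this trail visits $\{\phi(f_T(Q^*))\}\cup f_A(Q^*)$, so $J_{Q^*}$ is a \yes-instance. Every other component $\Pi_Q$ is balanced and has an Euler closed trail based at $\psi^{-1}(f_T(Q))=\phi(f_T(Q))$ (the trivial walk if $Q$ is a single vertex); appending the arc $(\phi(f_T(Q)),x)$ makes it comply with $(Q',\varphi')$ and visit $\{\phi(f_T(Q))\}\cup f_A(Q)$, so $J_Q$ is a \yes-instance. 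Thus $C'$ is excellent.

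For the second bullet I would argue the contrapositive: given a good cycle $C'$ with isomorphism $\phi\colon V(C)\to V(C')$, I construct an $r$-simple $(s,t)$-path weakly complying with $(T,\varphi)$ and visiting $A$, so $I$ is not a \no-instance. Since $J_{Q^*}$ is not a \no-instance there is an $r$-simple $(s,t)$-path $W^{*}$ weakly complying with $(Q^{*},\varphi_{Q^*})$ and visiting $\{\phi(f_T(Q^*))\}\cup f_A(Q^*)$; for $Q\neq Q^{*}$ take a witness $W_Q$ for $J_Q$ and, by the ``unique in-arc'' observation, delete its last vertex $x$ and last arc to obtain a closed walk $\widetilde{W}_Q$ based at $\phi(f_T(Q))$ with colour multigraph $(Q,\varphi_Q)$ still visiting $\{\phi(f_T(Q))\}\cup f_A(Q)$. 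Then I splice $M$ consecutive traversals of $C'$ --- written as a closed walk from $\phi(f_T(Q^*))$ --- into $W^{*}$ at an occurrence of $\phi(f_T(Q^*))$, and splice each $\widetilde{W}_Q$ into the current walk at an occurrence of $\phi(f_T(Q))$, available because the inserted copies of $C'$ visit all of $V(C')\ni\phi(f_T(Q))$. Splicing closed walks keeps the result an $(s,t)$-walk and makes colour multigraphs add, and $M\cdot C+(Q^{*},\varphi_{Q^*})+\sum_{Q\neq Q^{*}}(Q,\varphi_Q)=\widehat{T}$ (checked arc by arc, distinguishing arcs in $E$, in $A(C)\setminus E$, and outside $A(C)$), so the resulting walk has colour multigraph $\widehat{T}$ and hence, by the first observation, is an $r$-simple $(s,t)$-path weakly complying with $(T,\varphi)$; it visits $A$ since the pieces together visit $\bigcup_{Q\in{\cal Q}}(\{\phi(f_T(Q))\}\cup f_A(Q))\supseteq A$.

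The hard part will be the gluing in the second bullet, and in particular isolating the two observations that make it work: that the auxiliary vertex $x$ in Case~2 has a \emph{unique} in-arc (which is exactly what forces a Case~2 witness to be, after stripping $x$, a closed walk based at $\phi(f_T(Q))$ --- the only form that can be spliced into the cycle), and that matching the colour multigraph $\widehat{T}$ already yields $r$-simplicity and weak compliance. Beyond that, some care is needed for degenerate configurations --- a singleton $Q\in{\cal Q}$, $\phi(f_T(Q))$ coinciding with $s$ or $t$, or a vertex of $A$ whose colour is absent from $T$ --- which I would dispose of respectively by permitting the trivial closed walk, by noting that splicing a closed walk at any occurrence (including a walk endpoint) preserves the endpoints, and by the invariant --- maintained by the recursion, which only ever enlarges $A$ by vertices of the form $\phi(f_T(Q))$ or members of $f_A(Q)$ --- that every vertex of $A$ has a colour occurring in $T$.
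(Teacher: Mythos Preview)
Your proposal is correct and follows essentially the same strategy as the paper's proof: in the forward direction you decompose a complying solution by stripping $M$ copies of the cycle and reading off Euler trails on the resulting components, and in the converse you reassemble the component witnesses together with $M$ traversals of $C'$ into a single $(s,t)$-walk whose arc-colour multiset equals that of $\widehat{T}$. The paper phrases the reassembly by building the multigraph $H$ with arc-multiset $A(C'')\cup\bigcup_Q A(P^Q)$ and invoking Theorem~\ref{prop:euler}, whereas you do the equivalent explicit splicing; your ``colour multigraph'' abstraction and the explicit isolation of the unique $Q^*$ containing both endpoint colours are clean additions that the paper leaves implicit, but the underlying argument is the same.
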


\begin{proof}
First, suppose that $I$ is a \yes-instance. That is, $G$ has an $r$-simple $(s,t)$-path $P$ that complies with $(T,\varphi)$ and visits $v_r$ at least once. Then, $P_{\mathrm{simple}}$ has a unique cycle $C'$ with an isomorphism $\psi$ between $C'$ and $C$ that preserves colors. Define $H=P_{\mathrm{simple}}-\{e: \varphi(e) \in E\}$, and let $H_{\mathrm{multi}}$ be the directed multigraph obtained by removing $M$ copies of every arc in $C'$ from $P_{\mathrm{multi}}$. Now, consider some component $Q\in{\cal Q}$. Then, there exists a unique component $R$ in $H$ that is isomorphic to $Q$ under color preservation. Note that either both $s,t\in V(R)$ or both $s,t\notin V(R)$. (In the later case, no vertex in $Q$ has the same color as $s$ or $t$.) Let $R_{\mathrm{multi}}$ be the digraph obtained by duplicating each arc in $R$ to have the number of copies it has in $H_{\mathrm{multi}}$. Note that every vertex in $V(R)\setminus\{s,t\}$ has in-degree equal to its out-degree (in $R_{\mathrm{multi}}$); in addition, if $s,t\in V(Q)$, then $d^+(s)=d^-(s)+1$ and $d^-(t)=d^+(t)+1$ (in $R_{\mathrm{multi}}$). By Theorem \ref{prop:euler}, the following conditions are satisfied.
\begin{itemize}
\item If $s,t\in V(R)$, then there exists an Euler $(s,t)$-trail in $R_{\mathrm{multi}}$. Necessarily, this trail is an $r$-simple $(s,t)$-path that complies with $(Q,\varphi_Q)$ and visits $\psi(f_T(Q))$ at least once. Thus, $J_Q$ is a \yes-instance.
\item If $s,t\notin V(R)$, then there exists an Euler $(\psi(f_T(Q)),\psi(f_T(Q)))$-trail in $R_{\mathrm{multi}}$. Adding the arc $(\psi(f_T(Q)),t')$ creates an $r$-simple $(s',t')$-path that complies with $(Q,\varphi_Q)$ and visits $\psi(f_T(Q))$ at least once. Thus, $J_Q$ is a \yes-instance.
\end{itemize}
Thus, $C'$ is excellent w.r.t.~$(C,M,E,{\cal Q},f_T)$.

Second, suppose that $G$ has a cycle $C'$ that is good w.r.t.~$(C,M,E,{\cal Q},f_T)$. Let $\psi$ be a color-preserving isomorphism between $C'$ and $C$.
Then, for every component $Q\in{\cal Q}$, $J_Q$ is not a \no-instance, and hence the following conditions are satisfied.
\begin{itemize}
\item If $Q$ has vertices with the same colors as $s$ and $t$, then $G$ has an $r$-simple $(s,t)$-path $P^Q$ that weakly complies with $(Q,\varphi_Q)$ and visits $\psi(f_T(Q))$ at least once.
\item If $Q$ does not have vertices with the same colors as $s$ and $t$, then $G$ has an $r$-simple \newline $(\psi(f_T(Q)),\psi(f_T(Q)))$-path $P^Q$ that weakly complies with $(Q,\varphi_Q)$ (ignoring the final arc into $t'$) and visits $\psi(f_T(Q))$ at least once.
\end{itemize}
Let $C''$ be the directed multigraph obtained from $C'$ by duplicating each arc $M$ times. Consider the directed multigraph $H$ on vertex-set $V(H)=V(C'')\cup(\bigcup_{Q\in{\cal Q}}V(P^Q))$ and arc-multiset $A(H)=A(C'')\cup(\bigcup_{Q\in{\cal Q}}A(P^Q))$. (That is, every arc occurs in $H$ the number of times it occurs in $C''$ plus the sum over all $Q\in{\cal Q}$ of the number of times it occurs in $P^Q_{\mathrm{multi}}$.) Then, in $H$, we have that $d^+(s)=d^-(s)+1$ and $d^-(t)=d^+(t)+1$, and the out-degree and in-degree of any other vertex are equal. Moreover, the underlying undirected graph of $H$ is connected since the underlying undirected graph of each $P^Q_{\mathrm{multi}}$ is connected, and for any two distinct $Q,Q'\in{\cal Q}$, $C''$ has subpath from $\psi(f_T(Q))\in V(P^Q_{\mathrm{multi}})$ to $\psi(f_T(Q'))\in V(P^{Q'}_{\mathrm{multi}})$. By Theorem \ref{prop:euler}, this means that there exists an Euler $(s,t)$-trail in $H$. Necessarily, this trail is an $r$-simple $(s,t)$-path $P$ that weakly complies with $(T,\varphi)$ and visits $v_r$ at least once.
\end{proof}

We proceed to utilize the above lemmas in order to describe our recursive algorithm and prove its correctness.

\begin{lemma}\label{lem:solveTopologyProb}
\topologyProb\ can be solved in polynomial time, i.e.,~$(\ell+n+\log r)^{\OO(1)}$.
\end{lemma}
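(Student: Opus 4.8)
The plan is to solve \topologyProb\ by a recursive algorithm that repeatedly eliminates arcs from the enriched topology until the topology becomes a DAG, at which point Lemma~\ref{lem:topologyBasis} applies. Concretely, given an instance $(G,\ell,r,s,t,(T,\varphi),A)$, I would first check whether $T$ is a DAG; if so, invoke Lemma~\ref{lem:topologyBasis} and return its answer in polynomial time and space. Otherwise $T$ contains a directed cycle, and I would pick some cycle $C$ in $T$ and form a relevant tuple $(C,M,E,{\cal Q},f_T,f_A)$ as in Definition~\ref{def:goodCyc}: set $M=\min_{e\in A(C)}\varphi(e)$, $E=\{e\in A(C):\varphi(e)=M\}$, let ${\cal Q}$ be the weakly connected components of $T-E$, and fix the bookkeeping functions $f_T,f_A$. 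Note that $|E|\geq 1$, so each recursive topology $Q\in{\cal Q}$ (after decreasing $\varphi$ on the arcs of $C$ by $M$ and deleting the zero-multiplicity arcs in $E$) has strictly fewer arcs than $T$; this is what drives the recursion and, crucially, keeps the number of arcs — hence the recursion depth — bounded by $|A(T)|\leq\ell$, independent of $r$ and $k$.

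The core of the algorithm is then: enumerate all candidate cycles $C'$ in $G$ that admit a color-preserving isomorphism to $C$, and for each such $C'$ (with witnessing isomorphism $\phi$), recursively solve, for every $Q\in{\cal Q}$, the annotated subinstance $J_Q$ from Definition~\ref{def:goodCyc} (the two cases depending on whether $Q$ carries the colors of $s$ and $t$). By Lemma~\ref{lem:topologyStep}, if the current instance is a \yes-instance then some $C'$ is \emph{excellent} (all $J_Q$ are \yes), and if it is a \no-instance then no $C'$ is even \emph{good} (for every $C'$ some $J_Q$ is \no). So I would return \yes\ iff there exists a cycle $C'\subseteq G$ color-isomorphic to $C$ such that all the recursive calls on $J_Q$ return \yes; the semantics of \topologyProb\ (which permits an arbitrary answer on irrelevant instances) makes this sound even though the recursive answers are themselves only guaranteed on \yes/\no-instances. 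Enumerating cycles $C'$ of $G$ isomorphic to $C$ is done by a straightforward dynamic program / brute-force over the at most $\ell$ color classes appearing on $C$: since $C$ has at most $\ell$ vertices, one guesses which color occurs at each position and then checks reachability along arcs of the prescribed colors in $G$, as in the proof of Lemma~\ref{lem:topologyBasis}; there are at most $\ell^{\OO(\ell)}$ color patterns but in fact one can organize this as a polynomial-in-$(n,\ell)$ DP because the topology already pins down the cyclic color sequence of $C$ up to rotation. Either way, the work at each recursion node, excluding recursive calls, is $(\ell+n+\log r)^{\OO(1)}$.

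For the complexity bound I would argue as follows. Define the recursion on the pair $(\text{number of arcs of }T,\ |A|)$; each recursive call strictly decreases the number of arcs of the topology (because at least one arc of $E$ is removed), so the recursion tree has depth at most $|A(T)|\leq \ell$. At each node we branch over the components ${\cal Q}$ of $T-E$ (at most $\ell$ of them) and over the cycles $C'$ in $G$ isomorphic to $C$ (polynomially many after the DP organization, or $n^{\OO(1)}$ per fixed color pattern). A careful accounting shows the total number of recursion nodes is polynomial in $n$ and $\ell$: the key point is that the subinstances $J_Q$ for distinct $Q\in{\cal Q}$ have disjoint arc sets of the topology, so the sum of topology sizes does not blow up multiplicatively across siblings, and the cycle-guessing at each node contributes only an $n^{\OO(1)}$ factor. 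Hence the overall running time and space are $(\ell+n+\log r)^{\OO(1)}$, as claimed; in particular the dependence on $r$ is only through $\log r$ (the bit-length needed to store the values of $\varphi$), since $\varphi$-values are never enumerated but only arithmetically manipulated and compared.

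The main obstacle I anticipate is the bookkeeping around the recursion, rather than any single hard idea: one must verify that the recursively produced instances $J_Q$ are genuine \topologyProbAn-instances (e.g.\ that $(Q,\varphi_Q)$ is still an $r$-enriched topology after subtracting $M$ on $C$'s arcs, which follows from Theorem~\ref{prop:euler} applied to the degree balance, and that in Case~\ref{item:goodCyc2} the added pendant arc correctly turns a ``closed'' component into an $(s',t')$-instance), and that the annotation set $\{\phi(f_T(Q))\}\cup f_A(Q)$ exactly captures ``visit every vertex of $A$ plus glue to the cycle at the right place.'' One must also be careful that the recursion terminates only when the \emph{topology} is a DAG (not merely when $A=\emptyset$), and that the polynomial bound on the number of nodes really holds — this requires the observation that the arc-disjointness of the ${\cal Q}$'s prevents an exponential blow-up, together with the fact that guessing $C'$ in $G$ does not multiply by more than $n^{\OO(1)}$. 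Once these structural checks are in place, correctness is immediate from Lemmas~\ref{lem:topologyBasis} and~\ref{lem:topologyStep} by induction on the number of arcs of $T$, and plugging this into Lemma~\ref{lem:reduceToTopologyProb}, Lemma~\ref{lem:color}, Lemma~\ref{lem:stronglyConnected} and Lemma~\ref{lem:distinctImproved} yields Theorem~\ref{thm:directedPathFPT}.
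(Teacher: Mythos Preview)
Your high-level plan is the same as the paper's: recurse on the topology using Lemma~\ref{lem:topologyStep}, bottom out via Lemma~\ref{lem:topologyBasis}, and observe that each recursive step deletes at least one arc of $T$. The gap is in how you search for the cycle $C'$ in $G$. You write that you will ``enumerate all candidate cycles $C'$ in $G$ that admit a color-preserving isomorphism to $C$, and for each such $C'$ \ldots\ recursively solve $J_Q$,'' and you then claim that there are only ``polynomially many'' such cycles, or ``$n^{\OO(1)}$ per fixed color pattern.'' This is false: the color sequence along $C$ is fixed by $T$, but for a cycle $C$ on $q\leq\ell$ colors there can be up to $\prod_i n_i=n^{\Theta(\ell)}$ color-respecting copies of $C$ in $G$ (take $n_i$ vertices of each color with all consistent arcs present). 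Enumerating them and recursing per copy gives a factor $n^{\Theta(\ell)}$ at every level, which is not $(\ell+n+\log r)^{\OO(1)}$.

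The paper sidesteps this by \emph{not} enumerating cycles at all. It runs a DP along the positions of $C$: the table $\mathsf{N}[i,u,w]$ records whether there is a walk in $G$ from $w$ to $u$ realizing the color sequence $v_1,\ldots,v_i$ of $C$ \emph{and} such that at each anchor position $j$ with $v_j=f_T(Q)$, the recursive call on $L_Q$ (the instance $J_Q$ with $\phi(f_T(Q))$ replaced by the current vertex at position $j$) returns \yes. The crucial observation you are missing is that $J_Q$ depends on the cycle $C'$ only through the single vertex $\phi(f_T(Q))$, so the recursive calls can be folded into the DP state $(i,u)$ rather than made once per global cycle. With this interleaving, the number of recursive calls issued at a given node is $O(n)$ per component $Q$ (one per possible anchor vertex $u$), not one per cycle, and the DP itself has $O(\ell n^2)$ entries; combined with the fact that the sub-topologies across components are arc-disjoint, this is what yields the polynomial bound. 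Your two-phase description (first find cycles, then recurse) cannot be made to work without this interleaving.
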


\begin{proof}
  Let $I=(G, \ell, r, s, t, (T, \varphi))$ be an instance of \topologyProb.  If $T$ is a DAG, then we solve $I$ using Lemma~\ref{lem:topologyBasis}.
  Otherwise, we decompose $(T,\varphi)$ into a hierarchy $H$ of rooted $r$-enriched $\ell$-topologies, where the base cases of the hierarchy correspond to DAGs. 
  We then proceed with bottom-up dynamic programming over $H$ to solve $I$. 
  To this end, define a tree $H$ as follows.  Let $s',t'$ be the endpoints of $(T,\varphi)$ and initialize $H$ as a tree with a single node $x$ whose label is $(T,\varphi,s')$. 
  Then recursively, for every leaf $x'$ of $H$ with a label $(T', \varphi', u)$ where $T'$ is not a DAG we decompose $(T',\varphi',u)$ further, as follows.
  \begin{enumerate}
  \item If possible, let $C$ be a simple directed cycle in $T'$ passing through $u$.\footnote{We can decide whether a digraph $D$ has a directed cycle through a vertex $u$ by adding a copy $u'$ of $u$ to $D$ and checking whether there is a directed path from from $u$ to $u'.$}
   Let $B=(C,M,E,\mathcal{Q}, f_T)$ be a relevant tuple. Then for every $Q \in \mathcal{Q}$ we create a child $x_Q$ of $x'$, and label $x_Q$ by the subtopology at $Q$ with respect to $B$. We refer to $x'$ as a \emph{cycle node (processing $B$)}. 
  \item If the previous case does not apply but $u$ is an endpoint of $(T',\varphi')$,
    then create a single child $x''$ of $x'$ by selecting an arbitrary new root $u' \in V(T')$ that is not an endpoint and giving $x''$ the label $(T',\varphi',u')$.
    We refer to $x'$ as a \emph{re-rooting node (away from $u$)}. 
  \item If no previous case applies, note that $(T',\varphi',u)$ is decomposable and
    let $B=(Q_c, s_c, t_c)_{c=1}^d$ be the strong component decomposition of $(T',\varphi',u)$. Create one child $x_c$ of $x'$ for each $c \in [d]$ and label $x_c$ by the $c$:th subtopology of $B$.  
    We refer to $x'$ as a \emph{path node (processing $B$)}.
  \end{enumerate}
  Let us first prove that $H$ is a tree of size polynomial in $\ell$. Say that a node $x'$ of $H$ is a DAG node if the topology $T'$ that $x'$ is labelled by is a DAG.
  For any node $x'$ of $H$, labelled by $(T',\varphi',u')$, let $A_c(x') \subseteq A(T')$ be those arcs that occur in a cycle in $T'$.  We argue the following property of sets $A_c(x')$ in $H$. Let $x'$ be a node of $H$, and let $S$ be the set of children of $x'$. Then (i) for any $x'' \in S$, we have $A_c(x'') \subseteq A_c(x')$, and (ii) if $a \in A_c(x'')$ for some $x'' \in S$, then the arc $a$ does not occur in any other child of $x'$. 

  We verify the property inductively by node type. For any DAG node, the property holds vacuously, and for a re-rooting node the property is trivial.  Assume next that $x'$ is a cycle node with some label $(T',\varphi',u')$, processing some tuple $B=(C,M,E,\mathcal{Q}, f_T)$.  Then for every arc $a$ of $A(T')$, either $a \in E$ or $a$ occurs in precisely one child of $x'$.  Hence (i) and (ii) are both clear.

  Finally, assume that $x'$ is a path node.  Then, since every cycle in $T'$ occurs in a strong component, the arcs of $A_c(x')$ are precisely partitioned by the non-DAG nodes of $S$.  Thus the property holds.

  We can now bound the size of $H$.   
  Let $X$ be the set of nodes $x'$ of $H$ such that $x'$ is not a DAG node, but every child of $x'$ in $H$ is a DAG node.  Then by definition $A_c(x') \neq \emptyset$ for every $x' \in X$.  Furthermore, the above properties imply that (i) $A_c(x') \subseteq A_c(x)$, and (ii) the sets $A_c(x')$ for $x' \in X$ are disjoint (since no node in $X$ is a descendant of another).  It follows that $|X| \leq |A_c(x)|$.
  Every node in $X$ has at most $|V(T)|$ leaves.  Furthermore, the height of $H$ is bounded by $\OO(|V(T)|+|A_c|)$, since at every step either $|A_C|$ decreases (in the case of a cycle node) or $|V(T')|$ decreases (in the case of a path node), and neither can increase. In particular, we only process path nodes if the root $u$ lies in a trivial strongly connected component, hence $|V(T')|$ decreases in this case.
  Hence $H$ has polynomial size in $\ell$.

  We now solve the problem via bottom-up dynamic programming over $H$, for each node tabulating possible choices in $G$ for the endpoints and root vertex of the topology. Concretely, let $x$ be a node of $H$ and let $(T',\varphi',u)$ be the label of $x$.  Let $s',t'$ be the endpoints of $(T',\varphi')$.  Then for $s_x, t_x, v_x \in V(G)$, define
  \[
    I_x(s_x,t_x,v_x) = (G, \ell, r, s_x, t_x, (T', \varphi'), v_x)
  \]
  as the instance corresponding to node $x$ where we have fixed a partial map $\psi(s')=s_x$, $\psi(t')=t_x$ and $\psi(u)=v_x$.  We show that using $H$, we can tabulate for every node $x$ whether $I_x(s_x,t_x,v_x)$ is a \no-instance or not.
  For simplicity, let us proceed bottom-up and accumulate a relation $R_x \subseteq V(G) \times V(G) \times V(G)$ for every node $x$, where $R_x(s_x,t_x,v_x)$ holds if and only if $I_x(s_x,t_x,v_x)$ was not detected to be a \no-instance. 
  Let us consider the node types of $H$ in turn.  Let $x$ be a node of $H$ with label $(T',\varphi',u)$.

  \emph{Case: $x$ is a leaf node.} In this case, $T'$ is a DAG and we can solve every instance $I_x$ using Lemma~\ref{lem:topologyBasis}.
  Thus we can assume that $R_x$ has been tabulated for every leaf of $H$.

  \emph{Case: $x$ is a re-rooting node.} In this case, by assumption $u$ is an endpoint of $(T',\varphi')$.  Let $s',t'$ be the endpoints of $(T',\varphi')$ and assume first that $u=s'$.
  Let $x'$ be the child of $x$ and let $(T',\varphi',u')$ be its label.  Then $(s_x,t_x,v_x) \in R_x$ if and only if $v_x=s_x$ and $(s_x,t_x,v_x') \in R_{x'}$ for some $v_x' \in V(G)$, which can clearly be checked in polynomial time.
  The case that $u=t'$ is symmetric.
  
  \emph{Case: $x$ is a cycle node.}  Let $x$ be a cycle node processing a tuple $B=(C,M,E,\mathcal{Q},f_T)$.
  Assume that we are deciding an instance $I_x(s_x,t_x,v_x)$, and immediately reject the instance unless $s_x$, $t_x$, $v_x$ share the colors of the endpoints $(T',\varphi')$ and the root $u$, respectively. Otherwise, by Lemma~\ref{lem:topologyStep} we need to decide whether $G$ contains a cycle that is good with respect to~$B$.  By Definition~\ref{def:goodCyc} we need to check for the existence of a cycle $C'$ in $G$ such that (i) there is a color-preserving isomorphism $\psi$ between $C$ and $C'$, and (ii) for every $Q \in \mathcal{Q}$ the instance $J_Q$ defined from $Q$ and $\psi$ is not a \no-instance.  For the latter, we note that the instance $J_Q$ is identical to one of the instances $I_{x_Q}(s_x',t_X',v_x')$ already tabulated in $R_{x_Q}$. 

  Indeed, first assume that $Q$ contains the endpoints of $(T',\varphi')$. In this case,
  we simply have    
  \[
    J_Q = I_{x_Q}(s_x,t_x,\psi(f_T(Q))),
  \]
  where $s_x$, $t_x$ are the vertices we are currently processing. Thus every such instance $J_Q$ has been tabulated.
  
  Next, assume that $Q$ does not contain the endpoints of $(T',\varphi')$ and let $v$ be the successor of $f_T(Q)$ in $C$.
  Then
  \[
    J_Q = I_{x_Q}(\psi(f_T(Q)), \psi(v), \psi(f_T(Q))),
  \]
  hence again every such instance $J_Q$ has been tabulated.
  
  Now, let $u=u_1, \ldots, u_d$ be the vertices of $C$ in $T$, reading in the forward direction starting from the root. 
  We decide whether $I_x(s_x,t_x,v_x)$ is a \no-instance via an auxiliary $d$-partite graph $H$ as follows.
  Let $V_1=\{v_x\}$ and for $i=2, \ldots, d$ let $V_i \subseteq V(G)$ be the set of vertices of $G$ having the same color as $u_i$. 
  Then the vertex set of $H$ is partitioned as $V(H)=V_1 \cup \ldots \cup V_d$, and the arcs of $H$ are the   
  candidate targets for arcs of $C$, i.e., for $v, v' \in V(H)$ where $v \in V_i$, we have $(v,v') \in A(H)$
  if and only if (i) $v' \in V_j$, where $j=i+1$ for $i<d$ and $j=1$ otherwise; and (ii) if $u_i=f_T(Q)$ for
  some $Q \in \mathcal{Q}$, then the instance $J_Q$ defined by $\psi(u_i)=v$ and $\psi(u_j)=v'$ is not a \no-instance.
  Note that by the above, the latter can be checked using only the identities of $v$ and $v'$
  and the DP table $R_{x_Q}$. 
  It now follows that $I_x(s_x,t_x,v_x)$ is not a \no-instance if and only if 
  $H$ contains a simple cycle, i.e., $H$ is not a DAG, which is easily checked \cite{DBLP:books/JBJGG}.
  Thus $R_x$ can be tabulated.
  
  \emph{Case: $x$ is a path node.} Finally, let $x$ be a path node processing the decomposition $B=(Q_c,s_c,t_c)_{c=1}^d$ of $(T',\varphi',u)$.  Assume that we are deciding an instance $I_x(s_x,t_x,v_x)$, and that the vertex colors of $s_x, t_x, v_x$ are consistent with the endpoints of $(T',\varphi')$ respectively $u$, otherwise the instance is negative.  Also note that a path node is only created if $u \in \{s_c,t_c: c \in [d]\}$.
  By Lemma~\ref{lem:strong-recurse}, we need to decide whether there exists a color-preserving map
  $\psi \colon \{s_c,t_c: c \in [d]\} \to V(G)$ such that the sequence $(\psi(s_c),\psi(t_c))_{c=1}^d$ is good with respect to $B$, i.e., by Definition~\ref{def:strong-subinstance}, whether (i) $(\psi(t_c),\psi(s_{c+1}))\in A(G)$ for every $c \in [d-1]$, (ii) the triple $(c,\psi(s_c),\psi(t_c))$ is valid, for every $c \in [d]$, and (iii) for every $c \in [d]$, the subinstance $I_{c,\psi(s_c),\psi(t_c)}$ is not a \no-instance. Furthermore, assuming condition (i) has been verified, the triple $(c,\psi(s_c),\psi(t_c))$ is valid for every $c \in [d]$ on the conditions that $\psi(s_1)=s_x$, $\psi(t_d)=t_x$ and $\psi(u)=v_x$.
  
  As in the previous case, we create an auxiliary graph $H$ to aid the search for $\psi$.  Define
  sets $U_c$, $V_c$ for each $c \in [d]$ where $U_c$ is the set of vertices of $V(G)$ sharing a color with $s_c$,
  and $V_c$ is a set of copies of the set of vertices of $V(G)$ sharing a color with $t_c$. (That is,
  if $s_c=t_c$ then the same vertices would be represented in sets $U_c$ and $V_c$ but treated distinctly in $H$.)
  Furthermore, delete vertices in $H$ so that $U_1=\{s_x\}$, $V_d=\{t_x\}$, and such that any set corresponding
  to the root $u$ only contains the single vertex $v_x$, and for every $c \in [d-1]$ create
  an arc $(v,v') \in V_c \times U_{c+1}$ if and only if the corresponding arc exists in $G$.
  
  We are now ready to identify the subinstances $I_{c,\psi(s_c),\psi(t_c)}$ among the previously tabulated instances $I_{x'}$.
  Let $c \in [d]$ and $(v,v') \in U_c \times V_c$, such that furthermore if $s_c=t_c$ then $v=v'$. 
  Assume that $(c,v,v')$ is a valid triple (otherwise, no arc $(v,v')$ will be added to $H$).
  First let $c<d$. Then $v'$ has an out-neighbor $v''$ in $H$ since $(c,v,v')$ is valid.
  Let $x_c$ be the child of $x$ corresponding to the $c$:th subtopology.
  Then (up to the choice of $v''$) we have
  \[
    I_{c,v,v'} = I_{x_c}(v,v'',v').
  \]
  For $c=d$, since $(c,v,v')$ is a valid triple there is an in-neighbor $v''$ of $v$ in $H$. 
  We now necessarily have $v'=t_x$, and again up to the choice of $v''$ we have
  \[
    I_{c,v,v'} = I_{x_c}(v'',v',v).
  \]
  Finally, regarding the choice of $v''$ it is easy to see from the construction that all choices
  create equivalent instances, thus we may select $v''$ arbitrarily. 
  We add an arc $(v,v')$ to $H$ if and only if the corresponding subinstance created this way
  is not a \no-instance according to $R_{x_c}$.
  
  It now follows that $I_x(s_x,t_x,v_x)$ is not a \no-instance if and only if
  the graph $H$ has a directed path from $s_x$ to $t_x$, thus $R_x$ can be tabulated.

  \emph{Wrapping up.}  Finally, let $x$ be the root node of $H$ and assume that $R_x$ has been tabulated as above.  To finish the computation, we simply check whether $R_x(s,t,s)$ holds, which by the above is equivalent to $I$ not being a \no-instance.
  The total running time of the procedure consists of at most $\OO(n^3)$ simple checks for every node of $H$, thus it takes polynomial time in total. 
\end{proof}

\subsection{Putting It All Together}

Finally, we are ready to conclude the correctness of Theorem~\ref{thm:directedPathFPT}.

\begin{proof}[Proof of Theorem \ref{thm:directedPathFPT}] By Lemma \ref{lem:solveTopologyProb}, \topologyProb\ can be solved in time and space $(\ell+n+\log r)^{\OO(1)}$. Thus, by Lemma \ref{lem:reduceToTopologyProb}, \diColPathRst\ can be solved in time $2^{\OO(\mathsf{b}(k/r)\log(\mathsf{b}(k/r)))}\cdot (n+\log k)^{\OO(1)}$ and polynomial space. Substituting $\mathsf{b}(k/r)$, this running time is upper bounded by $2^{\OO((k/r)^2\log(k/r))}\cdot (n+\log k)^{\OO(1)}$. In turn, by Lemma \ref{lem:color}, we have that \diPathRst\ on strongly connected digraphs can be solved in time $2^{\OO((k/r)^2\log(k/r))}\cdot (n+\log k)^{\OO(1)}$ and polynomial space. Finally, by Lemma \ref{lem:stronglyConnected}, we conclude that \diPathR\ can be solved in time $2^{\OO((k/r)^2\log(k/r))}\cdot (n+\log k)^{\OO(1)}$ and polynomial space.
\end{proof}


\section{Undirected $r$-Simple $k$-Path: Single-Exponential Time}\label{sec:undirectedPath}

In this section, we focus on the proof of the following theorem. As discussed in the introduction, for varied relations between $k$ and $r$, the running time in this theorem is {\em optimal} under the \ETH.

\begin{theorem}\label{thm:undirectedPathFPT}
\undiPathR\ is solvable in time $2^{\OO(\frac{k}{r})}(n+\log k)^{\OO(1)}$.
\end{theorem}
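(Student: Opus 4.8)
The plan is to reuse the opening moves of the \diPathR\ algorithm and then depart from it. First I would reduce \undiPathR\ to connected graphs by trying all endpoint pairs inside each connected component and gluing with an easy dynamic program, as in Lemma~\ref{lem:stronglyConnected}, and then to \emph{nice} instances --- connected, with no path on $\ge 2k/r$ vertices and no cycle of length $\ge k/r$ --- using the undirected analogue of Lemma~\ref{lem:longCycPath} (a long path or long cycle folds into an $r$-simple $k$-path) together with the $2^{\OO(k/r)}n^{\OO(1)}$-time long-path/long-cycle detection of Theorem~\ref{prop:longPath}. Next I would bound the number of distinct edges of some optimal solution: rerunning the exchange arguments behind Lemmas~\ref{lem:Psimple}--\ref{lem:improvedBound} but exploiting undirectedness, the saturated set $V(P,r)\cup\{s,t\}$ has size $\OO(k/r)$, $P_{\mathrm{simple}}^{-r}$ is a forest, and distinct anchor vertices are joined by at most one path of $P_{\mathrm{simple}}$; over a forest on $\OO(k/r)$ anchors these paths span only $\OO(k/r)$ edges in total (versus $\OO((k/r)^2)$ in the directed case), so some solution uses only $\OO(k/r)$ distinct edges, hence $\OO(k/r)$ distinct vertices. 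A perfect-hash-family step (Theorem~\ref{prop:hashfam}), as in Lemma~\ref{lem:color}, then reduces the problem at the cost of a $2^{\OO(k/r)}$ factor to \colPathRst\ on a $\mathsf{b}'(k/r)$-colored graph with $\mathsf{b}'(k/r)=\Theta(k/r)$.

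The directed route now breaks down because there are $2^{\OO((k/r)\log(k/r))}$ topologies to enumerate, which is already too expensive. Instead I would establish a structural decomposition of the multigraph $P_{\mathrm{multi}}$ of a colorful solution: after re-routing via Euler trails, $P_{\mathrm{multi}}$ splits as $H\uplus W_{\mathrm{multi}}$, where $H$ is even, colorful, and of treewidth at most $2$, and $W$ is an $r$-simple path that visits every connected component of $H$ --- this last condition is what lets the pieces be glued back into a single connected Euler $(s,t)$-trail --- and which visits vertices of each color a prescribed total number of times. Inspired by the idea of Berger et al.~\cite{DBLP:journals/corr/abs-1804-06361}, such a decomposition is obtained by a cycle-exchange argument on $P_{\mathrm{multi}}$ that is more delicate than Lemmas~\ref{lem:shortenPath}--\ref{lem:improvedBound}, since it must preserve the size, the endpoints, $r$-simplicity, the treewidth-$2$ bound on $H$, and connectivity simultaneously.

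Given the decomposition I would check for it in $G$ by a two-level dynamic program. The outer level is a DP over a width-$\le 2$ tree decomposition that builds $H$ together with its embedding into the colored graph; its state stores the $\OO(1)$ bag vertices and the set of colors already committed, so there are $2^{\OO(k/r)}n^{\OO(1)}$ states and transitions. For each candidate $H$ (its colors and the partition of $V(H)$ into components), the inner level searches for the walk $W$. Here a plain DP over the per-color visit counts is forbidden, because $W$ has length $\Omega(k)$ while only $(n+\log k)^{\OO(1)}$ dependence on $k$ is allowed; instead I would phrase the search for $W$ as an integral flow problem in an auxiliary network, where flow conservation encodes the Eulerian degree conditions needed to merge $W$ with $H$, lower and upper capacities encode the prescribed per-color counts and the cap $r$, and the ``visits each component of $H$'' requirement is enforced structurally. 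To keep this network of bounded size I would first argue that when $r$ is large relative to the parameter the relevant graph has vertex cover number $\OO(k/r)$, and modify the decomposition accordingly; since all flow values are $\OO(\log r)$-bit integers, the flow computation costs only $(n+\log k)^{\OO(1)}$.

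The step I expect to be the main obstacle is the structural decomposition lemma itself: proving that \emph{every} colorful $r$-simple $(s,t)$-path can be normalised into the form $H\uplus W$ with $H$ of treewidth at most $2$, even and colorful, while retaining size, endpoints, $r$-simplicity, and the connectivity-ensuring ``$W$ meets each component of $H$'' property. This is precisely what replaces ``enumerate all $2^{\OO((k/r)\log(k/r))}$ topologies'' by ``DP over a constant-width tree decomposition,'' and thereby deletes the $\log(k/r)$ factor from the exponent. A secondary technical point is making the two DP levels communicate faithfully --- the outer level must hand the inner flow problem exactly the committed colors, component structure, and residual degree budgets needed to reassemble a global Euler $(s,t)$-trail of size $k$ --- and verifying that the flow network captures all of the walk's constraints at once.
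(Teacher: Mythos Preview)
Your high-level architecture---the $H\uplus W$ decomposition with $H$ even, colorful, and of treewidth at most~$2$; a two-level DP; the vertex-cover bound for large~$r$; and a flow computation---matches the paper's. The gap is that you have misidentified which of the two pieces is short and which carries the bulk of the $k-1$ edge visits, and this misplacement propagates into a flow formulation that would not work.

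The paper proves (Lemma~\ref{lem:eulerSpanning} and Lemma~\ref{lem:finalPartition}) that the spanning Eulerian part $W$ can be taken to have fewer than $60(k/r)$ edges \emph{including multiplicities}; almost all of the $k$ visits live in the treewidth-$2$ multigraph $H$. Hence an \emph{occurrence sequence} $\overline{\mathbf d}=(d_1,\ldots,d_{\mathsf b(k/r)})$ recording how many times $W$ visits each color has $\sum_i d_i=O(k/r)$, so by stars-and-bars there are only $2^{O(k/r)}$ such sequences, and a plain DP builds $W$ in $O(k/r)$ steps. Your statement that ``$W$ has length $\Omega(k)$'' is exactly the missing insight. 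With $W$ short, the outer DP of the paper walks along $W$ (state: current vertex, partial occurrence sequence, colors still available for $H$), and at each step may attach one connected component of $H$ via the treewidth-$2$ DP (Lemmas~\ref{lem:tw2CompDP}--\ref{lem:eulerPartProblem}). The $r^{O(1)}$ factor you overlooked sits in the treewidth-$2$ DP for $H$, whose state must record per-bag degrees and edge multiplicities in $\{0,\ldots,2r\}$; your claim of $2^{O(k/r)}n^{O(1)}$ states for the $H$-DP omits this.

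Consequently the case split and the flow are aimed at the wrong piece. The paper splits on $r\le\sqrt{k}$ (then $r\le 2^{O(k/r)}$ and the $r^{O(1)}$ in the $H$-DP is harmless) versus $r>\sqrt{k}$ (then a maximal matching of size $\lceil k/r\rceil$ already yields a solution, so one may assume a vertex cover $U$ of size $O(k/r)$). In the latter regime the paper drops the treewidth-$2$ DP for $H$ entirely and replaces $H$ by an arbitrary even degree function $\varphi:E(G)\to\mathbb N_0$ found by a min-cost flow (Lemma~\ref{lem:funcFit}); this works precisely because $W$, still found by the short-walk DP, is forced to visit every vertex of the vertex cover $U$, so $W\cup\varphi$ is automatically connected and one does not need flow to encode connectivity. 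Your proposal to encode $W$ itself as a flow runs into exactly that obstacle: an integral circulation with degree bounds does not certify a single walk that visits prescribed vertices, and the vertex-cover trick does not rescue it because it is $W$, not the flow, that supplies connectivity in the paper's argument.
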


We will first show (in Sections \ref{sec:undirectedDistinct}--\ref{sec:mainLemmaProof}) how to prove the following result (which is the main part of our proof). 

\begin{lemma}\label{lem:rUnboundUndirectedPathFPT}
\undiPathR\ is solvable in time $2^{\OO(\frac{k}{r})}(r+n+\log k)^{\OO(1)}$.
\end{lemma}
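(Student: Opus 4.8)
The plan is to follow the same high-level strategy as for \diPathR\ in Section~\ref{sec:directedPath}, but with every bound sharpened using the fact that the graph is undirected. First I would reduce \undiPathR\ to the strongly-connected / biconnected analogue of \diPathRst, exactly as in Lemma~\ref{lem:stronglyConnected}: decompose $G$ into $2$-edge-connected (or block) components, solve the $(s,t)$-version on each, and glue the partial answers by dynamic programming over the block-cut structure. As in Lemma~\ref{lem:longCycPath}, a ``long'' path (on $\ge 2k/r$ vertices) or a ``long'' cycle (of length $\ge k/r$) in $G$ immediately yields an $r$-simple $k$-path, so by Theorem~\ref{prop:longPath} (applied to the symmetric orientation of $G$) I may assume the instance is \emph{nice}. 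The key structural step is then to re-run the argument of Lemmas~\ref{lem:shortenPath}--\ref{lem:distinctImproved}, but observing that in the undirected setting the ``projection'' multigraph $H_X$ can be taken to have only $\OO(|X|)$ distinct edges directly (a spanning tree plus $\OO(|X|)$ extra edges, with cycle-cancelling done via Euler-trail swaps on an \emph{even} multigraph rather than on a balanced digraph), and each connecting sub-path/cycle has $< 2k/r$ edges; hence one obtains a solution with $\OO((k/r)\cdot(k/r)) = \OO((k/r))$ \emph{distinct vertices} — wait, I must be careful: the naive count still gives a product, so what undirectedness really buys is that $|X| = \OO(k/r)$ and the number of relevant sub-walks is $\OO(k/r)$, each of length $\OO(k/r)$, giving $\OO((k/r)^2)$ distinct edges as stated in the introduction (the improvement to the $30(k/r)$ bound and then to a genuinely single-exponential algorithm is the part that needs the extra ideas below).

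Because the $\OO((k/r)^2)$ distinct-edge bound alone only yields a $2^{\OO((k/r)\log(k/r))}$ running time via the topology-guessing route, the second and harder half of the proof takes the different route sketched in the introduction. I would introduce a special coloring of $V(G)$ with $\OO(k/r)$ colors (via a perfect hash family from Theorem~\ref{prop:hashfam}, at cost $e^{\OO(k/r)}$), and seek a solution decomposed into two multigraphs: a \emph{colorful even} multigraph $H$ of treewidth at most $2$ (so that its existence, together with the prescribed per-color visit counts, can be checked by dynamic programming over a nice tree decomposition of width $2$, using Theorem~\ref{prop:niceTreeDecomp}), plus an $r$-simple path $W$ that threads through every component of $H$ (guaranteeing connectivity of the reassembled Eulerian structure) and visits each color class a prescribed total number of times. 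The existence of such a decomposition for any genuine solution is the analogue, in the undirected even-degree world, of the Euler-trail surgery used throughout Section~\ref{sec:directedPath}, and is where the idea from Berger et al.~\cite{DBLP:journals/corr/abs-1804-06361} enters: one repeatedly peels off treewidth-$\le 2$ colorful ``gadgets'' from the multigraph induced by the solution. The verification is a two-level DP — an outer DP choosing how colors and component-skeletons interact, an inner DP over the width-$2$ decomposition of $H$ — all running in $2^{\OO(k/r)}(r+n+\log k)^{\OO(1)}$ time.

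The remaining piece is the dependence on $r$, which at this stage still appears polynomially (hence the $(r+\dots)^{\OO(1)}$ in the statement rather than $(\log k + \dots)^{\OO(1)}$). I would bound $r$ against $k$: when $r$ is large compared to $k$ — specifically once $r$ exceeds some polynomial in $k/r$ times $n$ is impossible, so instead once $r$ is large the vertex cover number of $G$ must be $\OO(k/r)$ (any solution uses $< 2k/r$ distinct vertices, and in a nice instance a large $r$ forces almost all visits to concentrate on a tiny set), in which case the second multigraph $W$ can be handled by a \emph{flow network} (using the flow-network machinery of the preliminaries, with lower/upper capacities encoding the $\le r$ per-vertex and per-color constraints) whose size is polynomial in $k/r$ and $n$ only. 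Combining the small-$r$ regime (brute DP, $2^{\OO(k)}n^{\OO(1)}$, which is $2^{\OO(k/r)}$ when $r$ is small — actually here one uses that $r$ small means $k/r$ is not tiny, so $2^{\OO(k)}$ is not good; instead the counter-based argument from the introduction gives $2^{\OO((k/r)\log r)}$, and $\log r = \OO(1)$ when $r = \OO(1)$) with the large-$r$ regime above yields Lemma~\ref{lem:rUnboundUndirectedPathFPT}.

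I expect the main obstacle to be proving that \emph{every} solution admits the claimed decomposition into a treewidth-$\le 2$ colorful even multigraph plus a threading $r$-simple path — i.e.\ showing the structural lemma that makes the two-level DP correct and complete — since this requires simultaneously controlling connectivity, degree parities, per-color counts, and the treewidth bound under the Euler-trail rerouting operations; the flow-network handling of the large-$r$ case and the DP bookkeeping are comparatively routine once that lemma is in place.
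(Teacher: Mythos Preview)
Your overall architecture --- color coding with $\OO(k/r)$ colors, decomposing a solution into a colorful even treewidth-$2$ multigraph $H$ plus a short threading walk $W$, guessing only per-color visit counts (an \emph{occurrence sequence}) rather than a full topology, and a two-level DP --- does match the paper. However, several parts of your sketch are either unnecessary for this lemma or miss the actual argument the paper uses.

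First, your entire last paragraph is not part of the proof of Lemma~\ref{lem:rUnboundUndirectedPathFPT}. The statement explicitly allows a polynomial factor in $r$; bounding $r$ via a small vertex cover and replacing the walk by a flow computation is the content of Lemma~\ref{lem:undirectedPathFPTspecial} (Section~\ref{sec:boundR}), not this one. Likewise, your first paragraph's block-cut / strongly-connected reduction is not used here: the undirected problem has no fixed endpoints, and the paper simply tests niceness (Observation~\ref{obs:longCycPathUndirected}) and proceeds directly.

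Second, you state the distinct-edge bound as $\OO((k/r)^2)$; in the undirected case the paper obtains the linear bound $30(k/r)$ (Lemma~\ref{lem:distinctUndirected}) in one shot, essentially because $P_{\mathrm{simple}}^{-r}$ becomes \emph{edgeless} (any edge is already a $2$-cycle), so the whole of $P_{\mathrm{simple}}$ has $V(P,r)\cup\{s',t'\}$ as a vertex cover of size $\OO(k/r)$, and a sign-flipping argument along cycles bounds the remaining edges by $\OO(|X|)$.

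Third --- and this is the gap you yourself flag --- the paper's proof that the decomposition exists is concrete and different from ``repeatedly peeling off treewidth-$\le 2$ gadgets''. The argument (Lemmas~\ref{lem:treeAndNoEven} and~\ref{lem:finalPartition}) is: among all $r$-simple $k$-paths with $<30(k/r)$ distinct edges, pick one whose edge-multiplicity vector is lexicographically smallest; take $M_1$ to be any spanning tree of $P_{\mathrm{multi}}$ and $M_2$ the rest. If $M_2$ contained an even cycle of length $\ge 4$, one could alternately add and remove one copy along that cycle, preserving all degrees and connectivity (via $M_1$) but producing a lexicographically smaller solution --- contradiction. Graphs with no even cycle have treewidth at most $2$ (Proposition~\ref{prop:twNoEven}), so $M_2$ does. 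Finally, $M_1$ is upgraded from a spanning tree to a spanning multigraph with an Euler $(s,t)$-trail and $<60(k/r)$ edges by reducing each edge's multiplicity in $P_{\mathrm{multi}}$ to $1$ or $2$ while preserving parity (Lemma~\ref{lem:eulerSpanning}). This is the structural lemma that makes the two-level DP (Lemmas~\ref{lem:tw2CompDP} and~\ref{lem:eulerPartProblem}) complete.
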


Afterwards we will explain how to bound $r$. More precisely, let us refer to the special case of \undiPathR\ where $r>\sqrt{k}$ as the \undiPathRSpecial\ problem. 
\begin{defproblem}
{\undiPathRSpecial}
{An $n$-vertex undirected graph $G$ and positive integers $k,r$ such that $r>\sqrt{k}$.}
{Does $G$ have an $r$-simple $k$-path?}
\end{defproblem}\\
Then, we focus (in Section \ref{sec:boundR}) on the following result.

\begin{lemma}\label{lem:undirectedPathFPTspecial}
\undiPathRSpecial\ is solvable in time $2^{\OO(\frac{k}{r})}(n+\log k)^{\OO(1)}$.
\end{lemma}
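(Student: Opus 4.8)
The plan is to reduce \undiPathRSpecial\ to the $r$-bounded algorithm of Lemma~\ref{lem:rUnboundUndirectedPathFPT} by exploiting the fact that, when $r>\sqrt{k}$, the parameter $k/r$ is smaller than $\sqrt{k}$, and more importantly, that the graph cannot be ``too large'' in the part that matters. First I would observe that a solution $W$ is an $r$-simple $k$-path, and since each vertex is visited at most $r$ times, $W$ visits at least $\lceil k/r\rceil$ distinct vertices; but also, $W$ lives inside a connected subgraph of $G$, and by the ``nice instance'' reductions (long paths/long cycles already give a \yes), we may assume $G$ has no path on $2k/r<2\sqrt k$ vertices and no cycle of length $k/r<\sqrt k$, so in particular $G$ has bounded treewidth-like structure along any solution. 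The cleanest route, hinted at in the introduction, is to bound the \emph{vertex cover number} of (the relevant part of) $G$: if $G$ has a large matching then it has a long path (as $G$ is connected, a matching of size $m$ yields a path on $\Omega(\sqrt m)$ vertices, or one can directly build an $r$-simple $k$-path from many disjoint edges plus connectivity), contradicting niceness once $m$ exceeds roughly $(k/r)^2$. Hence we may assume $G$ has a vertex cover $U$ of size $\OO((k/r)^2)$.

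Next I would argue that, given such a small vertex cover $U$, we can effectively reduce $r$. The point is that vertices outside $U$ form an independent set, so any visit to such a vertex $v\notin U$ on the walk enters and leaves through neighbors in $U$; the ``role'' of $v$ is entirely determined by which pair (or single vertex, at the endpoints) of $U$-vertices it is sandwiched between and how many times. Since $|U|=\OO((k/r)^2)$, there are only $\OO((k/r)^4)$ types of such ``$U$-to-$U$ detours'' through an independent-set vertex, and each vertex $v\notin U$ of a given type is interchangeable. If for some type there are more than $k$ available independent-set vertices, then an optimal solution never needs to use any single such vertex more than once, so effectively $r$ can be treated as $1$ for those vertices; if there are at most $k$ such vertices of each type, then the total number of ``useful'' independent-set vertices is $\OO((k/r)^4\cdot k)$, which is polynomial in $k$ — but we want polynomial in $n+\log k$, so instead I would bound the number of \emph{distinct} vertices that any size-$\le 2k$ solution needs: by the distinct-arcs-style bounds (a solution needs only $\OO((k/r)^2)$ distinct arcs, hence $\OO((k/r)^2)$ distinct vertices, analogous to Lemma~\ref{lem:distinctImproved} but for the undirected case it is $\OO(k/r)$), we can guess, via a perfect hash family of size $e^{\OO(k/r)}\log n$, a set $S$ of $\OO(k/r)$ vertices of $G$ containing all distinct vertices of a solution. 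Now the only quantity still possibly exponential in the input is $r$ itself, appearing as the per-vertex multiplicity cap.

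To kill the dependence on $r$, I would replace the global cap $r$ by $\min\{r,k\}$: since a solution has total size $k$, no vertex can be visited more than $k$ times in it anyway, so the instance $(G,k,r)$ with $r>k$ is equivalent to $(G,k,k)$, and if $\sqrt k<r\le k$ then $r$ is already polynomially bounded in $k$, hence $\log r=\OO(\log k)$ and $r=\OO(k)=\OO((n+\log k)\cdot\mathrm{poly})$... but that is not quite polynomial in $n+\log k$ if $k$ is huge. The correct fix is: after guessing $S$ with $|S|=\OO(k/r)$, the induced instance $G[S]$ has at most $\OO(k/r)$ vertices and at most $\OO((k/r)^2)$ edges, so $n':=|S|=\OO(k/r)$; then we may invoke Lemma~\ref{lem:rUnboundUndirectedPathFPT} on $G[S]$ with running time $2^{\OO(k/r)}(r+n'+\log k)^{\OO(1)}$. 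Now the only bad term is $r$, and since in \undiPathRSpecial\ we have $r\le k$ (after capping $r$ at $k$ as above), and $k/r<\sqrt k$ so $r>\sqrt k$, we get $\log r=\OO(\log k)$; but we still have a factor $\mathrm{poly}(r)=\mathrm{poly}(k)$. To remove it, I would instead, before calling Lemma~\ref{lem:rUnboundUndirectedPathFPT}, further reduce multiplicities: on $G[S]$, for the purpose of an Euler-trail / degree-constraint formulation, cap each vertex's multiplicity contribution by noting that in any solution of size $\le 2k$ supported on $\OO(k/r)$ vertices, a single vertex is visited at most $2k/|S_{\min}|$... more cleanly: rescale via ILP as in Lemma~\ref{lem:enrich} — formulate the existence of the solution on the fixed small graph $G[S]$ as an ILP with $\OO((k/r)^2)$ variables (one per edge, the multiplicity) and constraints encoding evenness/endpoint-degree (Theorem~\ref{prop:euler}) and the cap $\le r$; by Theorem~\ref{prop:ILP} this is solvable in $(k/r)^{\OO((k/r)^2)}(\log r)^{\OO(1)}=2^{\OO((k/r)^2\log(k/r))}(\log k)^{\OO(1)}$ — which is too slow. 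So the \emph{real} obstacle, and what the paper evidently does, is to avoid ILP and instead set up a \textbf{flow network} on the small graph $G[S]$: edges become arcs with lower/upper capacities encoding the visit constraints, connectivity is handled by the guessed colorful tree-width-$2$ subgraph $H$ from Lemma~\ref{lem:rUnboundUndirectedPathFPT}'s machinery, and integral min/max flow (solvable in $\mathrm{poly}(|S|)=\mathrm{poly}(k/r)$ time, \emph{independent of the capacity magnitudes} since we use strongly polynomial flow algorithms, or at worst $\mathrm{poly}(|S|,\log r)$) determines feasibility. The hard part is exactly this last step: showing that the decomposition of Lemma~\ref{lem:rUnboundUndirectedPathFPT} can be carried out on the bounded-vertex-cover graph with the second multigraph handled by a flow network whose running time is $\mathrm{poly}(k/r)\cdot(\log r)^{\OO(1)}=2^{\OO(k/r)}(\log k)^{\OO(1)}$, so that assembling everything — guess $S$ via the hash family ($e^{\OO(k/r)}\log n$ choices), guess the colorful even treewidth-$2$ subgraph $H$ on $S$ ($2^{\OO(k/r)}$ choices), run the flow — yields total time $2^{\OO(k/r)}(n+\log k)^{\OO(1)}$, with all dependence on $r$ reduced to $\log r=\OO(\log k)$.

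Summarizing the steps in order: (1) reduce to nice instances and cap $r$ at $k$, so $\sqrt k<r\le k$; (2) show $G$ (or the relevant part) has vertex cover number $\OO((k/r)^2)$, else a long path exists; (3) guess, via an $(n,\OO(k/r))$-perfect hash family, a vertex set $S$ of size $\OO(k/r)$ containing all distinct vertices of some solution of size in $\{k,\dots,2k\}$ (using the undirected analogue of Lemma~\ref{lem:distinctImproved} with bound $\OO(k/r)$); (4) on the fixed small graph $G[S]$, run the structural decomposition of Lemma~\ref{lem:rUnboundUndirectedPathFPT} (colorful even treewidth-$2$ piece $H$ plus an $r$-simple path component), but (5) replace the $r$-dependent dynamic program for the path component by an integral flow computation on a network of size $\mathrm{poly}(k/r)$ whose capacities are bounded by $r$ but which is solved in strongly polynomial time; (6) conclude feasibility, giving total running time $2^{\OO(k/r)}(n+\log k)^{\OO(1)}$. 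I expect step (5) — re-deriving the decomposition so that the second multigraph is exactly a flow feasibility question on a graph of size depending only on $k/r$, and verifying the flow is integral with the correct connectivity and parity semantics — to be the main obstacle; steps (2) and (3) are standard applications of niceness and color coding, and step (1) is immediate.
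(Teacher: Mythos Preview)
Your proposal identifies the right high-level ingredients (bound the vertex cover, use flow to eliminate the polynomial dependence on $r$), but it misidentifies the key step that makes the flow reduction work, and your vertex-cover bound is both weaker and argued incorrectly.

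First, the vertex-cover bound. You claim $\OO((k/r)^2)$ via ``a matching of size $m$ yields a path on $\Omega(\sqrt m)$ vertices.'' The paper instead proves directly (Lemma~\ref{lem:specialPathMM}) that a matching of size $\lceil k/r\rceil$ already gives an $r$-simple $k$-path: connect the matching edges by arbitrary paths and traverse each matching edge back and forth until one endpoint is visited $r$ times. This yields a vertex cover of size at most $3k/r$, which is linear in $k/r$ and needed later.

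Second, and more importantly, you propose to keep the treewidth-$2$ decomposition of Lemma~\ref{lem:rUnboundUndirectedPathFPT} and ``replace the $r$-dependent dynamic program for the path component by an integral flow.'' This is backwards on two counts. The $\mathrm{poly}(r)$ factor in Lemma~\ref{lem:rUnboundUndirectedPathFPT} comes from the \emph{treewidth-$2$ DP} (Lemma~\ref{lem:tw2CompDP}), not the walk component, because the degree and edge-multiplicity functions there range over $\{0,\dots,2r\}$. And a flow network cannot enforce a treewidth-$2$ constraint. The paper's actual move is to \emph{drop} the treewidth-$2$ requirement entirely in the special case: once you guess that the short walk $W$ visits every vertex of the vertex cover $U$ (which you can afford since $|U|\le 3k/r$), the second piece need only be an arbitrary even multigraph $\varphi$ with per-vertex degree caps $2(r-d_i)$ and enough total edges. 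Connectivity of the combined multigraph is automatic because $V(G)\setminus U$ is independent, so every vertex touched by $\varphi$ is adjacent to $U$, which $W$ already spans. This relaxed problem---find an even $\varphi:E(G)\to\mathbb{N}_0$ meeting degree bounds and a size threshold---is exactly a min-cost integral flow instance (Lemma~\ref{lem:funcFit}), solvable in time polynomial in $n+\log r$. That independence of $W$ and $\varphi$, enabled by the vertex cover, is the idea you are missing; your version still has ``connectivity handled by the guessed colorful tree-width-$2$ subgraph $H$,'' which is precisely the piece that must be discarded.
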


Note that if $r\leq \sqrt{k}$, then $k/r = \Omega(\sqrt{k})$, in which case $r \leq \sqrt{k} \leq 2^{O(k/r)}$. Thus, Lemmas \ref{lem:rUnboundUndirectedPathFPT} and \ref{lem:undirectedPathFPTspecial} together imply Theorem \ref{thm:undirectedPathFPT}.
In this section, we require the following theorem instead of Theorem \ref{prop:euler}.

\begin{theorem}[\cite{DBLP:books/daglib/0030488}]\label{prop:eulerUndirected}
Let $G$ be a connected multigraph and let $s,t\in V(G)$.
\begin{itemize}
\item If $s\neq t$, then  $G$ has an Euler $(s,t)$-trail if and only if $d(s)$ and $d(t)$ are odd, and the degree of any other vertex in $G$ is even.
\item If $s=t$, then $G$ has an Euler $(s,t)$-trail if and only if the degree of every vertex in $G$ is even.
\end{itemize}
\end{theorem}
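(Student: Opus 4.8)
The plan is to prove Theorem~\ref{prop:eulerUndirected} from first principles, first settling the closed case $s=t$ and then reducing the open case $s\ne t$ to it via a one-edge gadget. For $s=t$, necessity of the degree condition is a counting argument: if $W$ is a closed walk using every edge exactly once, then for each vertex $v$ we pair every occurrence of $v$ on $W$ with its two incident edge-occurrences (one entering, one leaving); because $W$ is closed, even the identified first/last vertex is handled, and because $W$ uses each edge exactly once these pairs partition the edges at $v$, so $d(v)$ is even. For sufficiency I would induct on $|E(G)|$: if $E(G)=\emptyset$ the statement is trivial; otherwise, greedily extend a trail from $s$ along unused edges, and since every vertex has even degree the trail can only get stuck back at $s$, producing a nonempty closed walk $W$. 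Deleting $E(W)$ leaves a multigraph with all degrees still even; each of its components $K$ that contains an edge meets $V(W)$ (see the last paragraph), so induction gives an Euler circuit of $K$ through a vertex of $V(W)$, and splicing all these circuits into $W$ at those shared vertices yields an Euler circuit of $G$.

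For $s\ne t$ I would put $G'=G+e_0$ where $e_0=\{s,t\}$ (a parallel copy if the edge is already present); $G'$ stays connected. If $d_G(s),d_G(t)$ are odd and every other degree is even, then every degree of $G'$ is even, so by the closed case $G'$ has an Euler circuit; rotating it so $e_0$ is traversed last and deleting $e_0$ gives an Euler $(s,t)$-trail of $G$. Conversely, an Euler $(s,t)$-trail of $G$ closes up through $e_0$ into an Euler circuit of $G'$, forcing all degrees of $G'$ even, i.e.\ $d_G(s),d_G(t)$ odd and all others even. One could instead orient the edges and invoke the directed statement Theorem~\ref{prop:euler}, or package the induction above as the correctness proof of Hierholzer's procedure, but the self-contained argument is the most transparent.

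The one step needing real care is the connectivity bookkeeping inside the induction: after deleting $E(W)$, why does every edge-containing component $K$ of the remainder share a vertex with $W$? Take an edge $\{a,b\}$ of $K$ and, using that $G$ is connected, a path $P=a=p_0,p_1,\dots,p_m=s$ with $s\in V(W)$. If $P$ uses no edge of $W$, then $a$ and $s$ lie in the same component of $G-E(W)$, namely $K$, so $s\in V(K)\cap V(W)$. Otherwise let $\{p_j,p_{j+1}\}$ be the first edge of $P$ lying in $E(W)$; then $p_j$ is still in $K$ (the prefix $p_0,\dots,p_j$ avoids $E(W)$) and $p_j\in V(W)$, so $V(K)\cap V(W)\ne\emptyset$. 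Everything else reduces to routine degree-parity arithmetic and the verification that the spliced walk indeed uses each edge exactly once.
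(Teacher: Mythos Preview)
Your proof is correct and follows the classical Hierholzer approach. Note that the paper itself does not supply a proof of this statement: Theorem~\ref{prop:eulerUndirected} is quoted from Diestel's textbook~\cite{DBLP:books/daglib/0030488} and used as a black box, so there is no in-paper argument to compare against. Your treatment---necessity by the edge-pairing count, sufficiency for the closed case by the greedy-walk-then-splice induction with the careful component--connectivity check, and the open case via the auxiliary edge $e_0=\{s,t\}$---is exactly the standard textbook proof and would serve perfectly well as a self-contained justification.
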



\subsection{Bounding the Number of Distinct Edges}\label{sec:undirectedDistinct}

This subsection is essentially a significantly simpler version of Sections \ref{sec:directedPathReduction},  \ref{sec:numDistinctArc} and \ref{sec:numDistinctArcTight}. For the sake of completeness, we give the sequence of adapted statements required to derive the bound on the number of distinct (i.e., non-parallel) edges stated at the end of this subsection.

Here, we say that an instance $(G,k,r)$ of \undiPathR\ is {\em nice} if $G$ has no path of length at least $k/r$. Observe that if an instance $(G,k,r)$ of \undiPathR\ is not nice, then it is necessarily a \yes-instance, since by traversing a path of length at least $k/r$ back and forth $r$ times, we obtain an $r$-simple $k$-path. Recall that by Theorem \ref{prop:longPath},  we can test the existence of a path of length at least $\ell$ from a vertex $s$ to a vertex $t$ in a digraph in time $2^{\OO(\ell)}n^{\OO(1)}$.
Clearly, we can utilize this algorithm to test the existence of a path of length at least $k/r$ in an undirected graph $G$: given an undirected graph $G$, let $\vec{G}$ be the graph obtain from $G$ by creating two opposing directed arcs from each edge, and run the algorithm with every choice of $s,t\in V(G)$. Thus, we have the following observation.

\begin{observation}\label{obs:longCycPathUndirected}
Given an instance $(G,k,r)$ of \undiPathR, it can be determined in time $2^{\OO(k/r)}\cdot n^{\OO(1)}$ and polynomial space whether $(G,k,r)$ is not nice, in which case it is a \yes-instance.
\end{observation}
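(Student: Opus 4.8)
The plan is to reduce the question ``is $(G,k,r)$ not nice?'' to a small family of instances of the directed long path problem, invoke Theorem~\ref{prop:longPath}, and then recall the already-noted fact that a not-nice instance is a \yes-instance.

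First I would unwind the definition: $(G,k,r)$ is not nice precisely when $G$ contains a (simple) path of length at least $k/r$, equivalently a path on at least $\lceil k/r\rceil+1$ vertices (path lengths being integers). Next I would form the digraph $\vec G$ obtained from $G$ by replacing each edge $\{u,v\}$ with the two arcs $(u,v)$ and $(v,u)$. The key, and only mildly delicate, observation is that directed simple paths in $\vec G$ correspond exactly to oriented undirected simple paths in $G$: a directed simple path visits each vertex at most once, hence uses at most one of the two arcs between any pair of vertices and projects to an undirected simple path of the same length, and conversely every undirected simple path lifts to a directed one. Thus $G$ has a path on at least $\lceil k/r\rceil+1$ vertices if and only if $\vec G$ does, and the latter can be decided by running the algorithm of Theorem~\ref{prop:longPath} on $\vec G$ with the value $\lceil k/r\rceil+1$, once for each ordered pair $(s,t)\in V(\vec G)\times V(\vec G)$.

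For the time and space bounds I would note that there are at most $n^2$ ordered pairs, and each invocation of Theorem~\ref{prop:longPath} with parameter $\lceil k/r\rceil+1$ runs in $2^{\OO(k/r)}\cdot n^{\OO(1)}$ time and space polynomial in $n+(k/r)$; summing over all pairs keeps the total time $2^{\OO(k/r)}\cdot n^{\OO(1)}$ and the space polynomial. If some invocation reports a long enough path we declare $(G,k,r)$ not nice, and then output \yes: as already observed in the text, traversing such a path of length at least $k/r$ back and forth $r$ times and truncating to size exactly $k$ yields an $r$-simple $k$-path, since any prefix of an $r$-simple walk is again $r$-simple. Otherwise we declare $(G,k,r)$ nice.

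I do not expect a genuine obstacle: the substantive content is entirely supplied by Theorem~\ref{prop:longPath}. The only points requiring care are the equivalence between directed simple paths in $\vec G$ and undirected simple paths in $G$ (immediate from the fact that a directed path cannot revisit a vertex), and the harmless replacement of $k/r$ by $\lceil k/r\rceil$ so that the parameter handed to Theorem~\ref{prop:longPath} is a positive integer.
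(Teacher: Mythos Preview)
Your proposal is correct and follows essentially the same approach as the paper: orient each edge into two opposite arcs and invoke Theorem~\ref{prop:longPath} over all choices of $s,t$ to detect a path on at least $\lceil k/r\rceil+1$ vertices. The paper states this more tersely (in the sentence preceding the observation), but your added care about the directed/undirected path correspondence and the ceiling is sound and matches the intended argument.
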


Let us adapt Definition \ref{def:simpleMulti} to undirected graphs.

\begin{definition}\label{def:simpleMultiUndirected}
Let $P$ be an $r$-simple path in an undirected graph $G$.
\begin{itemize}
\item $P_{\mathrm{simple}}$ is the subgraph of $G$ on the vertices and edges visited at least once by $P$, and $P_{\mathsf{multi}}$ is the multigraph obtained from $P_{\mathrm{simple}}$ by duplicating each edge to occur the same number of times in $P_{\mathsf{multi}}$ and in $P$.
\item $V(P,r)=\{v\in V(G): v$ occurs $r$ times in $P\}$, and $P_{\mathrm{simple}}^{-r} = P_{\mathrm{simple}} - V(P,r)$.
\end{itemize}
\end{definition}

Recall that Corollary~\ref{cor:shortenPath} states that, for a digraph $\vec{G}$ with an $r$-simple $k'$-path $\vec{P}$ for some integer $k'\geq 2k$, it holds that $\vec{G}$ has an $r$-simple $k''$-path $\vec{Q}$, for some integer $k''\in\{k,k+1,\ldots,2k\}$, such that $\vec{Q}_{\mathrm{simple}}$ is a subgraph of $\vec{P}_{\mathrm{simple}}$ that is not equal to $\vec{P}_{\mathrm{simple}}$. This directly extends to undirected graphs. Thus, we have the following result.

\begin{lemma}\label{lem:shortenPathUndirected}
Let $(G,k,r)$ be a nice instance of \undiPathR. Let $P$ be an $r$-simple $k'$-path in $G$ for some integer $k'\geq 2k$. Then, $G$ has an $r$-simple $k''$-path $Q$, for some integer $k''\in\{k,k+1,\ldots,2k\}$, such that $Q_{\mathrm{simple}}$ is a subgraph of $P_{\mathrm{simple}}$ that is not equal to $P_{\mathrm{simple}}$.
\end{lemma}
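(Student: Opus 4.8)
The plan is to mirror the proofs of Lemma~\ref{lem:shortenPath} and Corollary~\ref{cor:shortenPath} almost verbatim, replacing every appeal to Theorem~\ref{prop:euler} by the undirected Euler characterization of Theorem~\ref{prop:eulerUndirected} and every occurrence of ``out-degree equals in-degree'' by ``even degree''. Concretely, I would first prove the one-step statement: if $P$ is an $r$-simple $k'$-path with $k'\ge 2k$, then $G$ has an $r$-simple $k''$-path $Q$ with $k''\ge k$ such that $Q_{\mathrm{simple}}$ is a proper subgraph of $P_{\mathrm{simple}}$. The lemma as stated then follows by iterating this step: each application strictly decreases the number of distinct edges of the current walk, so the process terminates, and we stop the first time the size drops below $2k$ (it never drops below $k$), exactly as in Corollary~\ref{cor:shortenPath}.

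For the one-step statement the first task is to exhibit a ``redundant closed structure'' inside $P$. In the directed case this was immediate (a walk in a DAG is a simple path, hence too short), but over undirected graphs a walk may bounce along a tree, so a separate argument is needed. If $P_{\mathrm{simple}}$ contains an actual cycle, pick one and call it $C$; by niceness every cycle of $G$ has at most $k/r$ edges. Otherwise $P_{\mathrm{simple}}$ is a tree $T$; since $T$ carries a walk of size $k'\ge 2k$, and since a simple path whose endpoints are the endpoints of $P$ has at most $k/r$ vertices and would force size at most $k$, the tree $T$ has a leaf $\ell$ that is not an endpoint of $P$, and then the edge at $\ell$ is traversed at least twice by $P$, giving a ``doubled-edge'' structure $C$ of two parallel copies. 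In either case $C$ is a closed sub-multiwalk of $P_{\mathrm{multi}}$ of length at most $\max\{k/r,2\}$.

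Given $C$, I would set $\Delta$ to the minimum multiplicity in $P_{\mathrm{multi}}$ of an edge of $C$, delete $\Delta$ copies of every edge of $C$ (in the doubled-edge case, delete the largest even number of copies of that edge) together with any vertices that become isolated, call the result $H$, and let $\mathcal{Q}$ be its family of connected components. Exactly as in Lemma~\ref{lem:shortenPath}, the deletion changes each vertex degree by an even amount, so parities are preserved and Theorem~\ref{prop:eulerUndirected} still applies: if $|\mathcal{Q}|=1$ we take an Euler trail $Q$ of $H$ with the same endpoints as $P$; if $|\mathcal{Q}|\ge 2$ we delete the smallest component of $\mathcal{Q}$ in its entirety—arguing by the handshake lemma on that component, just as in Case~\ref{item:lemshortenPathItem2} of the proof of Lemma~\ref{lem:shortenPath}, that the two endpoints of $P$ lie together inside or together outside it—and take an Euler trail of what remains. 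In both branches $Q_{\mathrm{simple}}\subsetneq P_{\mathrm{simple}}$ (an edge of $C$, respectively a whole component, disappears), and either $|A(Q)|\ge |A(P)|-\Delta|A(C)|$ or $|A(Q)|>|A(P)|/2\ge k$.

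The step I expect to be the real obstacle is keeping $|A(Q)|\ge k$ in the $|\mathcal{Q}|=1$ branch, since a crude bound only yields $\Delta|A(C)|<2k$ (an undirected edge may be traversed up to $2r$ times, against $\Delta\le r$ in the directed case). The way I would resolve this is a case split together with the iteration: when $C$ is a genuine cycle of $P_{\mathrm{simple}}$, some edge of $C$ has multiplicity at most $r$—otherwise a vertex of $C$ would have degree exceeding $2r$ and be visited more than $r$ times—so $\Delta\le r$ and, with $|A(C)|<k/r$ by niceness, the directed computation $|A(Q)|>|A(P)|-\Delta|A(C)|\ge k$ goes through unchanged; when $C$ is a doubled edge, removing one ``detour'' at a time and appealing to the iteration (so that we only ever need the step to remove $O(1)$ many edge occurrences before $P_{\mathrm{simple}}$ strictly shrinks) keeps the size above $k$, after which truncating to the first size in $\{k,\dots,2k\}$ finishes the proof, mirroring Corollary~\ref{cor:shortenPath}. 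The remaining low-parameter corner cases (where $k/r$ is a small constant) are absorbed, as elsewhere in this section, because they are covered by \undiPathRSpecial\ or by the explicit small bound itself.
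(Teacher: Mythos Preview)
Your proposal is correct and follows the paper's approach, which simply asserts that the directed proof ``directly extends''. You have in fact done more than the paper: you correctly identified two genuine subtleties it glosses over, namely that (a) $P_{\mathrm{simple}}$ need not contain a cycle in the undirected setting, and (b) the bound $\Delta\le r$ is not automatic for undirected edge multiplicities. Your argument that some edge of a genuine cycle has multiplicity at most $r$ (else a non-endpoint cycle vertex would have degree exceeding $2r$) is exactly right.

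The one place to tighten is your tree case. The ``one detour at a time'' iteration does not decrease $|E(P_{\mathrm{simple}})|$ until \emph{all} copies of the leaf edge are gone, so it buys nothing over removing them all at once. The clean argument is: if $P_{\mathrm{simple}}$ is a tree, pick a leaf $\ell$ that is not an endpoint of $P$ (such a leaf exists, since a tree with only two leaves is a path of length $<k/r$ by niceness, forcing $k'\le r\cdot |V(P_{\mathrm{simple}})|<k+r$, which together with $k'\ge 2k$ gives $r>k$ and then niceness makes $G$ edgeless, a contradiction). The unique edge at $\ell$ has even multiplicity $2m$ with $m\le r$, so deleting all its copies drops the size by $2m\le 2r$, and the hypotheses force $k\ge 2r$ (for $k<2r$ the same niceness-plus-length argument makes the lemma vacuous), so the new size is at least $2k-2r\ge k$. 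Your appeal to \undiPathRSpecial\ for corner cases is unnecessary; those cases are simply vacuous.
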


Similarly, Lemma \ref{lem:Psimple} is adaptable to undirected graphs. Having Lemma \ref{lem:shortenPathUndirected} at hand, the arguments used to prove Lemma \ref{lem:Psimple} directly extend to prove the adaptation below as well, where one only has to view edges $\{u,v\}$ in $P_{\mathrm{simple}}$ as cycles $u-v-u$.

\begin{lemma}\label{lem:PsimpleUndirected}
Let $(G,k,r)$ be a nice \yes-instance of \undiPathR. Then, $G$ has an $r$-simple $k'$-path $P$, for some $k'\in\{k,k+1,\ldots,2k\}$, that satisfies the following two properties.
\begin{enumerate}
\item\label{item:Psimple1Undirected} $P_{\mathrm{simple}}^{-r}$ is edgeless.
\item\label{item:Psimple2Undirected} Every two distinct vertices in $V(P,r)$ have at most one common neighbor in $P_{\mathrm{simple}}$ that does not belong to $V(P,r)$. 
\end{enumerate}
\end{lemma}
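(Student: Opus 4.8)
The plan is to mirror the proof of Lemma~\ref{lem:Psimple}, replacing Theorem~\ref{prop:euler} by Theorem~\ref{prop:eulerUndirected}, Corollary~\ref{cor:shortenPath} by Lemma~\ref{lem:shortenPathUndirected}, and, as suggested, viewing each edge $\{u,v\}$ of $P_{\mathrm{simple}}$ as the closed walk $u-v-u$; under this translation ``$P_{\mathrm{simple}}^{-r}$ contains a cycle'' becomes ``$P_{\mathrm{simple}}^{-r}$ contains an edge'', and ``two internally vertex-disjoint $(x,y)$-paths avoiding $V(P,r)$'' becomes ``a common neighbour of $x,y$ outside $V(P,r)$'' (once Property~\ref{item:Psimple1Undirected} holds, no two vertices outside $V(P,r)$ are adjacent in $P_{\mathrm{simple}}$, so any such path has length~$2$). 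Since $(G,k,r)$ is a \yes-instance it has an $r$-simple $k$-path, so the family ${\cal P}$ of all $r$-simple $k'$-paths of $G$ with $k'\in\{k,\ldots,2k\}$ is nonempty; let ${\cal P}'\subseteq{\cal P}$ be those members minimising the number of distinct edges of $P_{\mathrm{simple}}$, and ${\cal P}''\subseteq{\cal P}'$ those maximising $|V(P,r)|$. I claim every $P\in{\cal P}''$ satisfies both properties. Fix such a $P$, let $s',t'$ be its endpoints, and recall that, as $P$ is an Euler $(s',t')$-trail of $P_{\mathrm{multi}}$, Theorem~\ref{prop:eulerUndirected} says that the odd-degree vertices of $P_{\mathrm{multi}}$ are exactly $s'$ and $t'$ (or none, if $s'=t'$).

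For Property~\ref{item:Psimple1Undirected}, suppose $P_{\mathrm{simple}}^{-r}$ has an edge $\{u,v\}$; then each of $u,v$ occurs fewer than $r$ times in $P$. Let $w\in\{u,v\}$ be whichever of $u,v$ occurs more often in $P$, say $\Delta<r$ times, and form $H$ from $P_{\mathrm{multi}}$ by adding $2(r-\Delta)$ copies of $\{u,v\}$. Every vertex degree changes by an even amount, so parities are preserved and $H$ still has $s',t'$ as its only odd-degree vertices; $H$ is connected; every vertex still occurs at most $r$ times; and $w$ now occurs exactly $r$ times. By Theorem~\ref{prop:eulerUndirected}, $H$ has an Euler $(s',t')$-trail $P'$, an $r$-simple path with $P'_{\mathrm{simple}}=P_{\mathrm{simple}}$ and of size $|V(P)|+2(r-\Delta)\ge k$. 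If this size is at most $2k$, then $P'\in{\cal P}'$ while $w\in V(P',r)\setminus V(P,r)$ and $V(P,r)\subseteq V(P',r)$, contradicting $P\in{\cal P}''$; if it exceeds $2k$, then Lemma~\ref{lem:shortenPathUndirected} gives $Q\in{\cal P}$ with $Q_{\mathrm{simple}}$ a proper subgraph of $P'_{\mathrm{simple}}=P_{\mathrm{simple}}$, contradicting $P\in{\cal P}'$.

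For Property~\ref{item:Psimple2Undirected}, assume distinct $x,y\in V(P,r)$ have two distinct common neighbours $a,b$ in $P_{\mathrm{simple}}$ with $a,b\notin V(P,r)$, and consider the internally vertex-disjoint length-$2$ walks $Q_1=x-a-y$ and $Q_2=x-b-y$. Let $c_a$ be the number of occurrences of $a$ in $P$, let $\mu\ge 1$ be the smallest multiplicity in $P_{\mathrm{multi}}$ of an edge of $Q_2$, put $\Delta=\min\{r-c_a,\mu\}\ge 1$, and let $H$ arise from $P_{\mathrm{multi}}$ by adding $\Delta$ copies of each edge of $Q_1$, removing $\Delta$ copies of each edge of $Q_2$, and deleting isolated vertices. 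Then the degrees of $x$ and $y$ are unchanged, that of $a$ (resp.\ $b$) changes by $+2\Delta$ (resp.\ $-2\Delta$), all other degrees are unchanged, so parities are preserved, every vertex still occurs at most $r$ times, and $|E(H)|=|E(P_{\mathrm{multi}})|$. As in the proof of Lemma~\ref{lem:Psimple} we split on $|{\cal Q}|$, the number of connected components of $H$. If $|{\cal Q}|=1$, Theorem~\ref{prop:eulerUndirected} yields an Euler $(s',t')$-trail $P'\in{\cal P}$ of size $|V(P)|$; no new distinct edge appears (both edges of $Q_1$ already lay in $P_{\mathrm{simple}}$), and either $\Delta=\mu$, so some edge of $Q_2$ disappears and $P'$ has strictly fewer distinct edges than $P$ (contradicting $P\in{\cal P}'$), or $\Delta=r-c_a<\mu$, so $P'_{\mathrm{simple}}=P_{\mathrm{simple}}$ while $a\in V(P',r)\setminus V(P,r)$ and $V(P,r)\subseteq V(P',r)$ (contradicting $P\in{\cal P}''$). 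If $|{\cal Q}|\ge 2$, then exactly as in Case~\ref{item:lemshortenPathItem2} of the proof of Lemma~\ref{lem:shortenPath}, transcribed to the undirected setting, deleting the edges of a smallest component of $H$ from $P_{\mathrm{multi}}$ and applying Lemma~\ref{lem:shortenPathUndirected} yields a member of ${\cal P}$ with fewer distinct edges than $P$, again contradicting $P\in{\cal P}'$. In all cases we reach a contradiction, so $P$ satisfies both properties.

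The part I expect to need the most care is the $|{\cal Q}|\ge 2$ subcase of Property~\ref{item:Psimple2Undirected}, i.e.\ checking that the swap on $Q_1,Q_2$ cannot split off a nontrivial second component; here one leans on Property~\ref{item:Psimple1Undirected}, which makes $V(P,r)$ a vertex cover of $P_{\mathrm{simple}}$ and thus forces any component not containing $x,y$ to be a single vertex (hence already deleted), in exact analogy with the bookkeeping in the proof of Lemma~\ref{lem:shortenPath}. Everything else—preservation of degree parities under the edge swap, the $r$-simplicity bound coming from the choice of $\Delta$, and the size staying within $\{k,\ldots,2k\}$ because $Q_1$ and $Q_2$ have equal length—is routine once the edges-as-$2$-cycles viewpoint is in place, which is precisely why the directed argument transfers so directly.
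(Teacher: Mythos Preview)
Your proposal is correct and follows exactly the approach the paper intends: the paper's own proof is simply the sentence ``the arguments used to prove Lemma~\ref{lem:Psimple} directly extend \ldots\ where one only has to view edges $\{u,v\}$ in $P_{\mathrm{simple}}$ as cycles $u-v-u$,'' and you have faithfully carried out that translation, including the extremal choice $\mathcal{P}\supseteq\mathcal{P}'\supseteq\mathcal{P}''$ and the two case splits. Your final observation that, once Property~\ref{item:Psimple1Undirected} holds, $V(P,r)$ is a vertex cover of $P_{\mathrm{simple}}$ and hence the $|\mathcal{Q}|\ge 2$ subcase is vacuous (only $b$ can be separated, and if so it is isolated and deleted) is a nice sharpening that the paper does not make explicit.
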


We now finish the proof of the bound on the number of distinct
edges. Since the structure of undirected ($r$-simple) paths is 
significantly easier than directed ($r$-simple) paths, we are able to
do this in a single step, rather than the more complex proof used for
the directed case.

\begin{lemma}\label{lem:distinctUndirected}
Let $(G,k,r)$ be a nice \yes-instance of \undiPathR. Then, $G$ has an $r$-simple $k$-path with fewer than $30(k/r)$ distinct edges.
\end{lemma}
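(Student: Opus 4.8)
The plan is to follow the pattern of Sections~\ref{sec:numDistinctArc}--\ref{sec:numDistinctArcTight} but collapsed into a single count. Re-examining the proof of Lemma~\ref{lem:PsimpleUndirected}, let $\cal P$ be the set of $r$-simple $k''$-paths with $k''\in\{k,\dots,2k\}$, let ${\cal P}'$ be those minimising $|E(P_{\mathrm{simple}})|$, and let ${\cal P}''$ be those in ${\cal P}'$ maximising $|V(P,r)|$; the argument behind Lemma~\ref{lem:PsimpleUndirected} shows that every $P\in{\cal P}''$ satisfies Properties~\ref{item:Psimple1Undirected} and \ref{item:Psimple2Undirected}. Fix such a $P$ and set $\ell=|V(P,r)|$. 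Since each vertex of $V(P,r)$ contributes $r$ occurrences to a walk of size at most $2k$, we get $\ell\le 2k/r$. The decisive use of Property~\ref{item:Psimple1Undirected} is structural: because $P_{\mathrm{simple}}^{-r}$ is edgeless, no edge of $P_{\mathrm{simple}}$ has both endpoints outside $V(P,r)$, so every maximal subwalk of $P$ whose internal vertices avoid $V(P,r)$ has length at most $2$ --- either a single edge $\{a,b\}$ with $a,b\in V(P,r)$, or a ``cherry'' $a-w-b$ with $a,b\in V(P,r)$ (possibly $a=b$) and $w\notin V(P,r)$ --- and the same bound holds for the initial and final such segments. Hence every edge of $P_{\mathrm{simple}}$ lies in one of these short subwalks, and it suffices to bound the number of \emph{distinct} subwalks needed to cover all of them.

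To turn this into a linear bound I would mimic the spanning-tree/chord argument of Lemma~\ref{lem:improvedBound} in its undirected, Euler-$(s,t)$-trail form (Theorem~\ref{prop:eulerUndirected}). Form the multigraph $H$ on vertex set $V(P,r)$ by contracting away the internal outside vertex of each cherry, so that $P$ traces an Euler trail of a multiplicity version of $H$. Pick a spanning tree $F_1$ of the underlying simple graph of $H$, accounting for at most $\ell-1$ subwalks and hence fewer than $2\ell$ edges. Each remaining ``chord'' subwalk closes a cycle together with $F_1$, and re-routing $P$ around that cycle --- adding and removing appropriate multiples of the chord and of the tree path, in the style of the proof of Lemma~\ref{lem:shortenPathUndirected} --- again yields a connected multigraph of the correct parity with the same endpoints, i.e.\ another $r$-simple path. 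The only obstructions to pushing such a re-routing until the chord vanishes are that some edge multiplicity drops to $0$ or some outside vertex reaches $r$ visits, contradicting respectively $P\in{\cal P}'$ and $P\in{\cal P}''$; and if a re-routing would inflate the size past $2k$, Lemma~\ref{lem:shortenPathUndirected} supplies an element of ${\cal P}$ with strictly fewer distinct edges, again contradicting $P\in{\cal P}'$. This leaves $O(\ell)$ surviving subwalks, each of length at most $2$ and, by Property~\ref{item:Psimple2Undirected}, at most one cherry per distinct pair of endpoints; so the total number of distinct edges of $P_{\mathrm{simple}}$ is at most a small constant times $\ell\le 2k/r$, which one checks is less than $30(k/r)$. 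Finally, if $P$ has size $k'>k$ we pass to a contiguous subwalk of size exactly $k$, which only deletes edges.

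The main obstacle is exactly the second paragraph: converting ``every chord can in principle be re-routed away'' into the quantitative statement that only $O(\ell)$ subwalks survive, while keeping the size inside the window $[k,2k]$ so that each intermediate walk remains a legitimate competitor in $\cal P$ --- this is the bookkeeping that, in the directed case, forced the separate Lemmas~\ref{lem:improvedBound} and \ref{lem:distinctImproved} and the split-graph device. The pay-off over the directed case is that here Property~\ref{item:Psimple1Undirected} is the strong statement ``edgeless'' rather than merely ``acyclic'': each surviving subwalk then costs only $O(1)$ edges instead of up to $2k/r-1$, which is precisely what collapses the $30(k/r)^2$ bound of Lemma~\ref{lem:distinctImproved} to a linear one.
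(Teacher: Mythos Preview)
Your first paragraph is on target: choosing $P$ extremal in ${\cal P}''$, invoking Properties~\ref{item:Psimple1Undirected} and~\ref{item:Psimple2Undirected}, and observing that ``edgeless'' forces every subwalk between consecutive $V(P,r)$-visits to have length at most $2$ is exactly the simplification that makes the undirected bound linear rather than quadratic. You are also right that what remains is to bound the number of distinct such subwalks by $O(\ell)$, and that this is where the work lies.

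The chord-by-chord re-routing of your second paragraph, however, does not go through as written. The cycle a chord forms with the tree path lies (in your contracted graph $H$) entirely on $V(P,r)$, and every $V(P,r)$-vertex already has degree $2r$ in $P_{\mathrm{multi}}$. Uniformly \emph{increasing} multiplicities around such a cycle is therefore forbidden, so you can only decrease --- but decreasing shrinks the size of the walk and may drop it below $k$ before any multiplicity reaches $0$, at which point no contradiction with ${\cal P}'$ or ${\cal P}''$ is available. What rescues this kind of argument is a \emph{sign assignment}: alternate $\pm 1$ around the cycle so that at every $X$-vertex the two incident cycle-edges carry opposite signs, leaving its degree unchanged and making both directions of $t$ admissible. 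But such an assignment exists only when the cycle is even or passes through a vertex outside $X$, and your chord-plus-tree cycles need not satisfy either condition.

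The paper exploits exactly this dichotomy, but applied to cycles lying entirely inside $F_2 := E(P_{\mathrm{simple}}) \setminus F_1$ (where $F_1$ is a tree spanning $X = V(P,r)\cup\{s',t'\}$), not to chord-plus-tree cycles. Any cycle in $F_2$ that is even or touches a vertex outside $X$ admits the sign assignment, and pushing the resulting modification contradicts the extremality of $P$ (via ${\cal P}'$ or ${\cal P}''$, with Lemma~\ref{lem:shortenPathUndirected} handling overflow past $2k$). Hence every cycle of $F_2$ is odd and contained in $X$: so $F_2[X]$ is a cactus (fewer than $\tfrac{3}{2}|X|$ edges), and the remaining $X$-to-outside edges of $F_2$ become a forest after contracting $F_2[X]$ (fewer than $2|X|$ edges). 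Together with $|F_1| < 2|X|$ this yields $|E(P_{\mathrm{simple}})| < 5.5|X| \leq 5.5(2k/r+2) < 30(k/r)$.
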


\begin{proof}
  We provide a proof sketch, since the details are similar to
Lemma~\ref{lem:improvedBound} but somewhat simpler. 
Let $P$ be an $r$-simple path chosen by the same conditions as in
Lemma~\ref{lem:improvedBound}, i.e., chosen to minimize the number of distinct edges used, to maximize $|V(P,r)|$, and which satisfies the properties in Lemma \ref{lem:PsimpleUndirected}. Let $s'$ and $t'$ be its
endpoints, and let $X=V(P,r) \cup \{s',t'\}$. 
Let $F=E(P_{\mathrm{simple}})$, partitioned as $F=F_1 \cup F_2$ 
where $F_1$ is the edge set of a tree that spans $X$. Then $|F_1| < 2|X|$
by Property~\ref{item:Psimple1Undirected} in Lemma \ref{lem:PsimpleUndirected}. Let $H$ be the graph with
edge set $F_2$. 

Let $C$ be a cycle in $H$ that is either of even length or contains at
least one vertex $v \notin X$ (or both), if such a cycle exists. We assign either the sign $-1$ or the sign $+1$ to each edge of $C$, so that for every vertex 
$v \in X \cap V(C)$, the edges incident with $v$ in $C$ have opposite
signs. Note that this is possible due to the conditions on $C$.  As in
Lemma~\ref{lem:improvedBound}, modifying the multiplicity in
$P_{\mathrm{multi}}$ of every edge in $C$ by $t \in \{\pm 1\}$ times the
sign of the edge creates a new graph with an Euler $(s',t')$-trail
that forms an $r$-simple path (in particular, the maximum degree in
$P_{\mathrm{multi}}$ is up to $2r$, and the degree of any vertex
$v \in V(C) \setminus X$ is at most $2r-2$ before the modification). 
As in Lemma~\ref{lem:improvedBound}, the existence of such a
modification contradicts our choice of $P$.
Thus we assume that every cycle in $H$ is of odd length and lies
entirely within $X$. 

We can now bound $|F_2|$. Consider first the multigraph $H'$ formed by
deleting all edges in $H[X]$. 
Then
$H'$ is a simple forest (since $H$ has no cycle that contains at least one vertex $v \notin X$), for which $X$ is a vertex cover due to Property~\ref{item:Psimple1Undirected} in Lemma~\ref{lem:PsimpleUndirected}; hence it
contains fewer than $2|X|$ edges. Furthermore, $H[X]$ itself is a
cactus graph, hence contains fewer than $(3/2)|X|$ edges (this is
folklore). Thus  
\[
|F| = |F_1| + |F_2| < 2|X| + 2|X| + (3/2)|X| = 5.5|X|.
\]
Hence the total number of distinct edges is less than
\[
5.5|X| \leq 5.5(|V(P,r)|+2) \leq 5.5(2k/r+2) < 30k/r
\]
as required. (We chose $30$ simply because is it a sufficiently large constant that is easier to work with than $5.5(2k/r+2)$.)
\end{proof}


\subsection{Partition into a Sparse Eulerian Multigraph and a Treewidth~$2$ Graph}

Having Lemma \ref{lem:distinctUndirected} at hand, we could have continued our analysis with simplified arguments of those presented for the directed case and thus obtain an algorithm that solves \undiPathR\ in time $2^{\OO(\frac{k}{r}\log(\frac{k}{r}))}(n+\log k)^{\OO(1)}$ and polynomial space. However, in order to obtain a single-exponential running time bound of $2^{\OO(\frac{k}{r})}(n+\log k)^{\OO(1)}$, we now take a very different route.

In this subsection, we gain a deeper understanding of the structure of a solution. The starting point for this understanding is the following lemma.

\begin{lemma}\label{lem:treeAndNoEven}
Let $(G,k,r)$ be a nice \yes-instance of \undiPathR. Then, $G$ has an $r$-simple $k$-path $P$ with fewer than $30(k/r)$ distinct edges, such that the edge multiset of $P_{\mathrm{multi}}$ can be partitioned into two multisets, $M_1$ and $M_2$, with the following properties:
\begin{itemize}
\item $P_{\mathrm{multi}}$ restricted to $M_1$ is a (simple) spanning tree of $P_{\mathrm{multi}}$, and
\item $P_{\mathrm{multi}}$ restricted to $M_2$ has no even cycle of length at least 4.
\end{itemize}
\end{lemma}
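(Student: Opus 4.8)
The plan is to fix $P$ together with a spanning tree of $P_{\mathrm{multi}}$ by an extremal argument and then show directly that the leftover edges cannot contain an even cycle of length at least $4$. Concretely, among all pairs $(P,M_1)$ in which $P$ is an $r$-simple $k$-path of $G$ with fewer than $30(k/r)$ distinct edges (such a $P$ exists by Lemma~\ref{lem:distinctUndirected}, and $P_{\mathrm{multi}}$ is connected so it has a spanning tree) and $M_1$ is the edge set of a spanning tree of $P_{\mathrm{multi}}$, I would choose one that lexicographically minimizes, first, the number of distinct edges of $P$ and, second, the number of edges $e\in M_1$ whose multiplicity $\mu_P(e)$ in $P_{\mathrm{multi}}$ is at least $2$. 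Let $M_2$ be the edge multiset of $P_{\mathrm{multi}}$ with $M_1$ removed, so $M_2$ has $\mu_P(e)-1$ copies of each $e\in M_1$ and $\mu_P(e)$ copies of every other edge. Then $P_{\mathrm{multi}}$ restricted to $M_1$ is a simple spanning tree by construction, and it remains only to prove that $P_{\mathrm{multi}}$ restricted to $M_2$ has no even cycle of length at least $4$.

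The heart of the argument is a cycle rerouting. Suppose $C$ is an even cycle of length $2m\ge 4$ in the multigraph $M_2$. As $2m$ is even, I would $2$-color the edges of $C$ alternately into classes $C^+$ and $C^-$, each of size $m$, so that the two $C$-edges at any vertex of $C$ get opposite colors; put $t=\min\{(\text{number of copies of }e\text{ in }M_2):e\in C^-\}\ge 1$. Form $H$ from $P_{\mathrm{multi}}$ by adding $t$ copies of each edge of $C^+$ and deleting $t$ copies of each edge of $C^-$ (possible since every edge of $C$, in particular of $C^-$, occurs at least $t$ times in $M_2\subseteq P_{\mathrm{multi}}$). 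Only copies lying in $M_2$ are deleted, so $M_1$ survives and $H$ is connected; and every vertex keeps its degree, since a vertex of $C$ gains $t$ at its $C^+$-edge and loses $t$ at its $C^-$-edge. Letting $s',t'$ be the endpoints of $P$, the multigraph $H$ therefore satisfies the degree-parity conditions of Theorem~\ref{prop:eulerUndirected} and has an Euler $(s',t')$-trail $P'$. Since the number of visits to a vertex in an Euler trail depends only on the degree sequence and the endpoints, $P'$ visits every vertex exactly as often as $P$, so $P'$ is $r$-simple; moreover $|E(H)|=|E(P_{\mathrm{multi}})|+tm-tm=|E(P_{\mathrm{multi}})|$, so $P'$ has size $k$, and $P'$ only uses edges of $P_{\mathrm{simple}}$, hence at most as many distinct edges as $P$. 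In particular $(P',M_1)$ is again an eligible pair, with $M_1$ still a spanning tree of $P'_{\mathrm{multi}}$.

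It remains to derive a contradiction, and for this I would split on whether the rerouting destroys an edge. If some edge of $C^-$ that attains the minimum $t$ is not in $M_1$, then it has total multiplicity $t$ in $P_{\mathrm{multi}}$ and $0$ in $H$, so $P'$ has strictly fewer distinct edges than $P$, contradicting the first minimality criterion. Otherwise every minimizing edge of $C^-$ lies in $M_1$; fix one such $e^*$, which then has multiplicity $t+1\ge 2$ in $P_{\mathrm{multi}}$ and multiplicity $1$ in $H$. Any other $M_1$-edge is either untouched, or lies in $C^+$ (so it already had multiplicity $\ge 2$, since it also has a copy in $M_2$, and its multiplicity only grows), or is a non-minimizing edge of $C^-$ (so it has more than $t$ copies in $M_2$ plus one in $M_1$, hence multiplicity $>t+1$ in $P_{\mathrm{multi}}$ and still $>1$ in $H$). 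Hence, passing from $P$ to $P'$, the set of $M_1$-edges of multiplicity $\ge 2$ gains nothing and loses $e^*$, while the number of distinct edges is unchanged — contradicting the second minimality criterion. Either way we have a contradiction, which proves the claim and the lemma. (Note that Properties~\ref{item:Psimple1Undirected}--\ref{item:Psimple2Undirected} are not needed here; only the existence statement of Lemma~\ref{lem:distinctUndirected} is used.)

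I expect the main obstacle to be precisely the connectivity point that motivates the $M_1/M_2$ split: rerouting around an even cycle of $P_{\mathrm{multi}}$ directly can disconnect the multigraph (as in the directed analysis behind Lemma~\ref{lem:improvedBound}), and the role of carving out a spanning tree $M_1$ and rerouting only inside $M_2$ is to make connectivity free. Two aspects of the statement are what make the argument possible: only even cycles of length at least $4$ are excluded, because an odd cycle cannot be alternately $2$-colored and a length-$2$ cycle (a double edge) produces no net change under rerouting; and the secondary minimality criterion on $M_1$ is exactly what is needed when every bottleneck edge of $C^-$ happens to be protected by a tree copy.
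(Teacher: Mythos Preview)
Your proof is correct. Both yours and the paper's use the same skeleton: choose $P$ extremally, take a spanning tree $M_1$, and if $M_2$ contains an even cycle $C$ of length at least $4$, reroute by shifting multiplicities alternately along $C$---connectivity is preserved because only $M_2$-copies are touched and $M_1$ survives. The difference is in the extremality criterion. The paper fixes an arbitrary ordering $e_1,\ldots,e_m$ of $E(G)$, chooses $P$ (among those with fewer than $30(k/r)$ distinct edges) with lexicographically smallest multiplicity vector $(x_1^P,\ldots,x_m^P)$, and then lets $M_1$ be \emph{any} spanning tree; a single shift by $1$ along $C$, with the alternation oriented so that the leftmost edge of $C$ lies in the ``minus'' class, immediately yields a lexicographically smaller vector and a contradiction---no case split is needed. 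You instead optimize over pairs $(P,M_1)$, minimizing first the number of distinct edges of $P$ and then the number of $M_1$-edges of multiplicity at least $2$, and shift by the bottleneck amount $t$; this forces your two-case split on whether a bottleneck $C^-$-edge lies in $M_1$. The paper's device is slicker (one global criterion, no cases), while yours has the appeal that the extremal quantities are directly the structural ones that matter for the conclusion.
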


\begin{proof}
Let $e_1,e_2,\ldots,e_m$ be some ordering of the edges in $E(G)$. For any walk $W$, define $(x^W_1,x^W_2,$ $\ldots,x^W_m)$ be the vector where $x_i$ is equal to the number of times $e_i$ occurs in $W$ for all $i\in\{1,2,\ldots,m\}$.
By Lemma \ref{lem:distinctUndirected}, $G$ has an $r$-simple $k$-path with fewer than $30(k/r)$ distinct edges. Among all such $r$-simple $k$-paths, let $P$ be one where $(x^P_1,x^P_2,\ldots,x^P_m)$ is lexicographically smallest.
Let $T$ be an arbitrary spanning tree of $P_{\mathrm{multi}}$, and denote $M_1=E(T)$. In addition, denote $M_2=E(P_{\mathrm{multi}})\setminus M_1$. (Note that $M_2$ is a multiset: if an edge $e$ has $x$ copies in $P_{\mathrm{multi}}$, then it has either $x$ or $x-1$ copies in $M_2$.)
Let $H$ denote the restriction of $P_{\mathrm{multi}}$ to $M_2$.

We claim that $H$ has no even cycle of length at least 4. To prove this, suppose by way of contradiction that $H$ does have some even cycle $C$ of length at least $4$. Let $C=v_1-v_2-v_3-\cdots-v_q-v_1$ such that $e=\{v_1,v_2\}$ is the leftmost edge among the edges in $E(C)$ according to our predefined ordering of $E(G)$. Note that $q\geq 4$ is even. In addition, denote $U=\{\{v_i,v_{i+1}\}: i\in\{1,2,\ldots,q-1\}, i$ is odd$\}$. Now, define $H'$ as the graph obtained from $H$ by removing one copy of each edge in $U$ and adding one copy of each edge in $E(C)\setminus U$. Then, every vertex has the same degree in $H'$ and in $H$. Let $\widehat{H}$ denote the multigraph obtained by adding one copy of each edge in $M_1$ into $H'$. Then, every vertex has the same degree in $\widehat{H}$ and in $P_{\mathrm{multi}}$. Moreover, $M_1\subseteq E(\widehat{H})$ means that $\widehat{H}$ has a spanning tree and hence it is connected. Since $P_{\mathrm{multi}}$ has an Euler $(s,t)$-trail for some vertices $s,t\in V(G)$ (this trail is simply $P$), by Theorem \ref{prop:eulerUndirected}, $\widehat{H}$ also has an Euler $(s,t)$-trail, say $Q$. Then, $Q$ is an $r$-simple $k$-path with the same (or fewer) number of distinct edges as $P$. From our choice of $\{v_1,v_2\}$, it follows that $(x^Q_1,x^Q_2,\ldots,x^Q_m)$ is lexicographically smaller than $(x^P_1,x^P_2,\ldots,x^P_m)$. However, this contradict our choice of $P$.
\end{proof}

%

The usefulness in the second property in Lemma \ref{lem:treeAndNoEven} is primarily due to the following result.

\begin{proposition}[folklore, see \cite{DBLP:conf/wg/MisraRRS12,DBLP:journals/jgt/Thomassen88}]\label{prop:twNoEven}
The treewidth of a graph with no even cycle is at~most~$2$.
\end{proposition}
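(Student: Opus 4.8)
The plan is to prove the stronger structural statement that every block (maximal connected subgraph with no cut vertex) of a graph $G$ with no even cycle is either a single edge $K_2$ or an odd cycle, and then read off the treewidth bound. For the second part I would use two classical facts: any cycle $v_1-v_2-\cdots-v_q-v_1$ has a tree decomposition of width $2$ whose bags are $\{v_1,v_i,v_{i+1}\}$ for $i=2,\ldots,q-1$ arranged along a path, and $K_2$ has treewidth $1$; moreover, the treewidth of a graph with at least one edge equals the maximum treewidth over its blocks, since tree decompositions of the individual blocks can be stitched together by choosing, for each cut vertex $v$, a bag containing $v$ in each relevant block decomposition and joining the decomposition trees at those nodes. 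A graph with no cycle at all is a forest, of treewidth at most $1$. So once the structural claim is established, $\tw(G)\le 2$ follows immediately.

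To prove the structural claim, I would take an arbitrary block $B$ of $G$. If $B\cong K_2$ we are done, so assume $B$ is $2$-connected on at least three vertices; then $B$ contains a cycle $C$. If $B=C$, then $C$ has odd length, because $G$ — and hence its subgraph $B$ — has no even cycle. Otherwise $B\neq C$, and since $B$ is $2$-connected it has an open ear decomposition starting from $C$; in particular $C$ has an \emph{ear}: a path $P$ whose endpoints $u,v$ lie on $C$ and all of whose internal vertices and edges lie outside $C$ (a chord of $C$ counts as an ear of length $1$). Then $P$ together with the two $u$--$v$ arcs of $C$ are three internally vertex-disjoint $u$--$v$ paths, of edge-lengths $a,b,c$ respectively.

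These three paths pair up into three cycles of $G$, of lengths $a+b$, $a+c$, and $b+c$. Their lengths sum to $2(a+b+c)$, an even number, so they cannot all be odd; hence at least one of the three is an even cycle in $G$, contradicting the hypothesis. Therefore the case $B\neq C$ is impossible, every block of $G$ is $K_2$ or an odd cycle, and the treewidth bound follows as described above.

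The only steps that are not purely mechanical are the appeal to ear decompositions to extract the ``theta'' configuration inside a non-cyclic $2$-connected block, and the block-wise computation of treewidth; both are standard, so I expect no genuine obstacle here, only routine bookkeeping.
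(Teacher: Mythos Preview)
Your argument is correct: the theta-graph parity trick shows that any $2$-connected graph on at least three vertices with no even cycle must itself be an odd cycle, and combining block decompositions yields the treewidth bound. The paper does not supply its own proof of this proposition; it is stated as folklore with references, so there is no alternative approach to compare against.
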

%

Having Proposition \ref{prop:twNoEven} at hand, we derive the following corollary to Lemma \ref{lem:treeAndNoEven}.

\begin{corollary}\label{cor:boundDistinctEdgeAndPartition}
Let $(G,k,r)$ be a nice \yes-instance of \undiPathR. Then, $G$ has an $r$-simple $k$-path $P$ with fewer than $30(k/r)$ distinct edges, such that the edge multiset of $P_{\mathrm{multi}}$ can be partitioned into two multisets, $M_1$ and $M_2$, with the following properties:
\begin{itemize}
\item $P_{\mathrm{multi}}$ restricted to $M_1$ is a (simple) spanning tree of $P_{\mathrm{multi}}$, and
\item $P_{\mathrm{multi}}$ restricted to $M_2$ is a multigraph of treewidth 2.
\end{itemize}
\end{corollary}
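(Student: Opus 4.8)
The plan is to combine Lemma~\ref{lem:treeAndNoEven} with Proposition~\ref{prop:twNoEven} essentially verbatim; the only point that requires a word of care is the passage from ``no even cycle of length at least $4$'' in a \emph{multigraph} to ``treewidth at most $2$''. So first I would simply invoke Lemma~\ref{lem:treeAndNoEven} to obtain an $r$-simple $k$-path $P$ with fewer than $30(k/r)$ distinct edges, together with a partition of the edge multiset of $P_{\mathrm{multi}}$ into $M_1$ and $M_2$ such that the restriction of $P_{\mathrm{multi}}$ to $M_1$ is a spanning tree (which is exactly the first bullet of the corollary) and the restriction $H$ of $P_{\mathrm{multi}}$ to $M_2$ has no even cycle of length at least $4$. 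It then remains only to upgrade the latter property to ``treewidth $2$''.

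For that, I would use the standard observation that the treewidth of a multigraph equals the treewidth of its underlying simple graph, since parallel edges and loops play no role in the conditions of Definition~\ref{def:tw}. Let $H'$ be the underlying simple graph of $H$. Every cycle of $H'$ uses pairwise distinct edges, each present with multiplicity at least one in $H$, so a cycle of $H'$ of length at least $3$ is also a cycle of $H$ of the same length; hence $H'$ has no even cycle of length at least $4$. Being simple, $H'$ has no cycle of length $1$ or $2$ either, so in fact $H'$ has no even cycle at all. By Proposition~\ref{prop:twNoEven} the treewidth of $H'$ is at most $2$, and therefore so is the treewidth of $H$, giving the second bullet.

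I do not expect a genuine obstacle here: the corollary is an immediate consequence of the preceding lemma and a folklore fact, and the only subtlety is the (routine) reconciliation of ``even cycle'' in Proposition~\ref{prop:twNoEven} with ``even cycle of length at least $4$'' in Lemma~\ref{lem:treeAndNoEven}, handled by passing to the underlying simple graph as above. If anything, the one thing worth stating cleanly is that this invariance of treewidth under suppressing multiplicities is precisely what lets us apply the simple-graph proposition to the multigraph $P_{\mathrm{multi}}$ restricted to $M_2$.
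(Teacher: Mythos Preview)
Your proposal is correct and takes essentially the same approach as the paper: the corollary is stated as an immediate consequence of Lemma~\ref{lem:treeAndNoEven} and Proposition~\ref{prop:twNoEven}, and you have filled in precisely the routine detail (passing to the underlying simple graph so that Proposition~\ref{prop:twNoEven} applies) that the paper leaves implicit.
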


Corollary \ref{cor:boundDistinctEdgeAndPartition} partitions some solution into two parts: a spanning tree and a multigraph of low treewidth. However, for the DP approach considered later, we need the first part to have some Euler $(s,t)$-trail rather than just be a spanning tree. The reason for this is that the two parts will be computed somewhat independently. In particular, if some vertices of the first part will have odd degrees, our algorithm cannot ensure that each of these vertices will be reused an odd number of times (or even used at all) in the second part. We can guarantee that a ``color'' (for some vertex-coloring defined later) will be used in total an even number of times, but each part is ``oblivious'' to the identity of the vertices that ``realize'' this color in the other part. (The endpoints of the solution walk will be an exception to the above---since these are only two vertices, they can be guessed and thus handled easily.)

Before we proceed with our plan of having a new partition (based on the old one) of the edge multiset of a solution, we would like to make another remark. At this point, the reader may wonder if such a new partition is required, or whether we can bound the treewidth of the entire solution (for at least one solution) by a constant. However, it can be proven that for some instances, all solutions correspond to graphs with very high treewidth (in particular, of treewidth that cannot be bounded by a fixed constant). This is of course not a contradiction to Corollary \ref{cor:boundDistinctEdgeAndPartition} since even the composition of two graphs of treewidth 1 (say, trees) can be a graph of huge treewidth (e.g., a huge grid). For the sake of completeness, let us present a proof for this claim.

\begin{lemma}\label{lem:twLowerBound}
Let $r \geq 5$.  For any constant $c\in\mathbb{N}$, there exists a nice \yes-instance $(G,k,r)$ of \undiPathR\ such that every $r$-simple $k$-path $P$ in $G$ satisfies the following property: the treewidth of $P_{\mathrm{simple}}$ is larger than $c$.
\end{lemma}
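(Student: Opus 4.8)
The plan is to take a bounded‑degree, high‑treewidth base graph, subdivide every edge once, and choose $k$ to be the largest possible size of an $r$-simple path in the resulting graph; the bounded degree together with the extremal choice of $k$ will force (almost) the whole base graph, as a minor, into $P_{\mathrm{simple}}$ for every solution $P$.

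Concretely, I would fix a $3$-regular bipartite graph $G_0$ with $\tw(G_0)\ge c+3$ (such graphs exist, e.g.\ sufficiently large $3$-regular bipartite expanders) and let $G$ arise from $G_0$ by replacing each edge $uv$ with a path $u - m^1_{uv} - m^2_{uv} - v$ on two new ``middle'' vertices. Set $N=|V(G)|$ and $k:=rN$. Since $G$ has only $N$ vertices, it has no path of length $\ge N=k/r$, so $(G,k,r)$ is nice. For the \yes-direction I would take a proper $3$-edge-colouring of $G_0$ (it exists as $G_0$ is bipartite), fix colour values $a_1=a_2=1$ and $a_3=2r-2$, and build a multigraph $H$ on $G$ by giving each outer edge of a colour-$t$ subdivided edge multiplicity $a_t$ and the middle edge multiplicity $2r-a_t$; then every middle vertex has degree $2r$ and every vertex of $G_0$ has degree $a_1+a_2+a_3=2r$ in $H$. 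Reducing the multiplicity of one colour-$3$ outer edge by $1$ (which keeps it $\ge 1$ since $r\ge 5$) leaves $H$ connected with exactly two odd-degree vertices and $|E(H)|=rN-1$, so by Theorem~\ref{prop:eulerUndirected} $H$ has an Euler trail, which is an $r$-simple path of size $rN=k$ (every vertex is visited exactly $r$ times, the two endpoints included).

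For the main claim I would argue as follows. Let $P$ be any $r$-simple $k$-path and $H=P_{\mathrm{multi}}$. As each vertex is visited $\le r$ times, $\deg_H(v)\le 2r$; and for $P$ to be an Euler trail with at most two endpoints each visited $\le r$ times, either two vertices have degree $\le 2r-1$ and the rest $\le 2r$, or one vertex has degree $\le 2r-2$ and the rest $\le 2r$. Hence $|E(H)|=\tfrac12\sum_v\deg_H(v)\le rN-1$, and since $|E(H)|=k-1=rN-1$, equality holds throughout, so all but at most two vertices of $G$ have $\deg_H=2r$. Call a subdivided edge \emph{good} if both its middle vertices have degree exactly $2r$; then at most two subdivided edges are non-good. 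For a good subdivided edge $u-m^1-m^2-v$, each $m^i$ has exactly two incident edges in $G$, which forces $\mathrm{mult}(um^1)=\mathrm{mult}(m^2v)=:a$ and $\mathrm{mult}(m^1m^2)=2r-a$. If $a=0$, then $\{m^1,m^2\}$ together with $m^1m^2$ is a connected component of $P_{\mathrm{simple}}$, forcing $P$ to have size $\le 4r<k$, a contradiction; if $a=2r$, then $\deg_H(u)\ge 2r$ forces $\deg_H(u)=2r$ with all other edges at $u$ of multiplicity $0$, and likewise for $v$, so each of the four (pairwise distinct) other subdivided edges at $u$ and at $v$ must be non-good (else the $a=0$ argument applies to it), contradicting that only two subdivided edges are non-good. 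Hence $1\le a\le 2r-1$, so $P_{\mathrm{simple}}$ contains the whole path $u-m^1-m^2-v$. Since every vertex of $G$ has $\deg_H\ge 2r-2\ge 1$, all vertices of $G_0$ lie in $P_{\mathrm{simple}}$; contracting the subdivision paths of the good edges exhibits $G_0-F$ as a minor of $P_{\mathrm{simple}}$, where $F$ is the set of at most two non-good edges. As deleting one edge lowers treewidth by at most $1$, $\tw(P_{\mathrm{simple}})\ge \tw(G_0)-2\ge c+1>c$.

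The step I expect to be the crux is the one that makes edge-subdivision the right gadget: the degree-$2$ middle vertices ``rigidly transmit'' a single multiplicity value along each subdivided edge, so that using a subdivided edge at all forces it to be fully present in $P_{\mathrm{simple}}$ — and pinning this down requires carefully excluding the two degenerate patterns $a=0$ and $a=2r$ via the connected-component argument and the count of low-degree vertices. Everything else (niceness, realizability, and the minor/treewidth bookkeeping) is routine.
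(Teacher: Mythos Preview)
Your proof is correct and shares the central gadget with the paper's argument (double subdivision of a bounded-degree high-treewidth graph), but the mechanism that forces any solution to use essentially all of the base graph is genuinely different. The paper attaches a pendant to every vertex of the subdivided graph; the pendants guarantee that every non-pendant vertex is visited exactly $r$ times in any maximum $r$-simple path, and then a local accounting formula $\sum_v (2r - d_H(v))$ for the total number of visits rules out multiplicity $\ge 3$ on any edge and forces every $P_4$ to be fully traversed. You instead avoid pendants entirely: by taking $k = r|V(G)|$ you pin the global degree sum to $2rN-2$, which forces all but at most two vertices to have degree exactly $2r$; the degree-$2$ middle vertices then rigidly transmit a single multiplicity $a$ along each good subdivided edge, and your case split $a=0$ (connectivity of $P_{\mathrm{multi}}$) versus $a=2r$ (four neighbouring edges become non-good, exceeding the budget of two) is clean. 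Your choice of a $3$-regular bipartite base is what makes the \yes-direction work via a proper $3$-edge-colouring, whereas the paper's $4$-regular torus does not need any colouring argument because the pendants absorb slack. One cosmetic remark: in the $a=0$ case the real contradiction is simply that $\{m^1,m^2\}$ would be a proper connected component of the connected multigraph $P_{\mathrm{multi}}$ on $N>2$ vertices; the bound ``size $\le 4r$'' is a loose surrogate for this.
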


\begin{proof}
Let $G$ be a $c \times c$ grid graph, with edges added to make a
4-regular graph (e.g., a grid embedded on a torus). 
We create a graph $G'$ by first replacing each edge $uv$ of $G$ by a path $uxyv$ on four vertices, where $x,y$ are new vertices, and
then adding a pendant vertex to every vertex (including
those vertices created by subdivision); see Fig.~\ref{fig:tw}. Let $W$ be the set of pendant
vertices, and $V'=V(G') \setminus W$. Let $P$ be an $r$-simple path on
$G'$ of maximum length. We will assume that $P$ is a closed walk as the 
other case can be treated similarly. Let $H$ be the Euler multigraph induced by
$P_{\rm multi}$ on $V'$. We show that $H$ is a (simple) graph and, moreover, $H=G'[V']$. 

\begin{figure}
\includegraphics[scale=0.75]{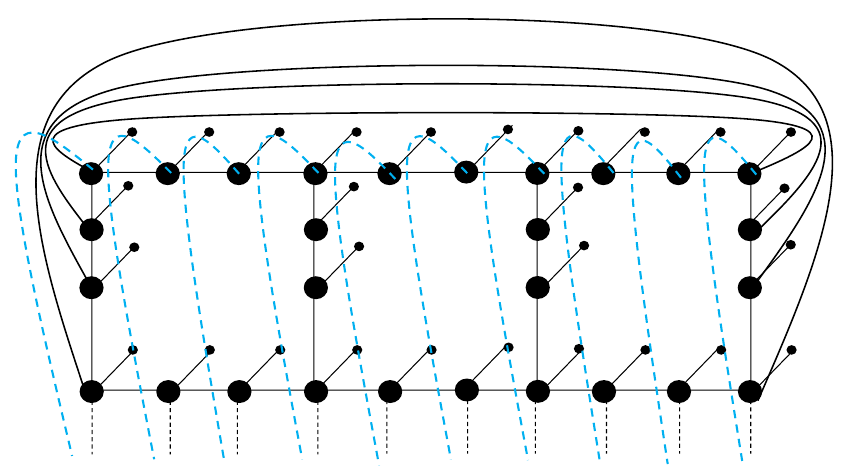}
\caption{The construction in the proof of Lemma \ref{lem:twLowerBound} with $c=4$.}
\label{fig:tw}
\end{figure}

First, observe that due to the pendant vertices, every vertex of $V'$
has precisely $r$ visits in $P$. Furthermore, since fewer visits to
$v$ in $H$ means more visits to the pendant vertex $v'$ of $v$, the
total number of visits to $v$ and $v'$ is $2r-d_H(v)$. Hence the total
number of visits of $P$ is
\begin{equation}\label{eq1}
\sum_{v \in V(H)} (2r-d_H(v)) = 2r|V(H)|-2|E(H)|.
\end{equation}
Observe that no edge $uv$ can be of multiplicity at least 3 in $H$ as otherwise by (\ref{eq1})
we could remove two copies of $uv$ from $P$ (and $H$) and add two copies of the edge between $u$ and its pendant as well as two copies of the edge between $v$ and its pendant, thereby increasing the size of $P$, a contradiction. 
It follows that $d_H(v) \leq 8$ for every $v \in V'$, since otherwise
some edge of $G'[V']$ has multiplicity at least 3 in $H$. 

Next, we argue that $H$ spans $V'$. 
Indeed, assume that there is an edge $uv \in E(G'[V'])$ where $\{u,v\}
\cap V(H)=\{u\}$. Since $r \geq 5$ and $d_H(u) \leq 8$, we may add two
copies of the edge $uv$ to $H$ and by (\ref{eq1}) raise the size of $P$ by
$2r-4 > 0.$

Now, finally, let $uv \in E(G)$, and let $uxyv$ be the corresponding
$P_4$ in $G'$. Since $P$ visits $x$ and $y$, and since $H$ is Euler, 
$P$ contains either the three edges $ux$, $xy$, $yv$ or at least
four edges, for example two copies each of $ux$ and $xy$.
Thus $|E(H)| \geq 3|E(G)|$, with equality only if the entire $P_4$ in
$G'$ is traversed for every edge $uv \in E(G)$. Hence, the longest
possible $r$-simple walk on $G'$ spans the entire grid, and therefore
$P_{\mathrm{simple}}$ has treewidth $c$. 
\end{proof}

Towards the proof of the new partition, we first give the following simple lemma.

\begin{lemma}\label{lem:eulerSpanning}
Let $G$ be a multigraph which has an Eulerian $(s,t)$-trail for some vertices $s,t\in V(G)$. Then, $G$ has a subgraph $H$ with the following properties:
\begin{itemize}
\item Every distinct edge in $G$ occurs at least once in $H$.
\item $H$ has an Eulerian $(s,t)$-trail.
\item $H$ has only at most $2d$ edges (including multiplicities), where $d$ is the number of distinct edges in $G$.
\end{itemize}
\end{lemma}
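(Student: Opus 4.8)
The plan is to prove Lemma~\ref{lem:eulerSpanning} by greedily removing redundant edge copies from $G$ while preserving both connectivity and the degree-parity conditions that, by Theorem~\ref{prop:eulerUndirected}, characterize the existence of an Euler $(s,t)$-trail. First I would fix an Euler $(s,t)$-trail witnessing the hypothesis; its existence tells us (via Theorem~\ref{prop:eulerUndirected}) that $G$ is connected and that $d_G(v)$ is even for every $v \notin \{s,t\}$ (with $d_G(s),d_G(t)$ odd if $s\neq t$, and all degrees even if $s=t$). Let $d$ be the number of distinct edges of $G$ and let $T$ be a spanning tree of the underlying simple graph of $G$, so $T$ has at most $d$ edges. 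The idea is to build $H$ as $T$ plus, for each distinct edge $e$ of $G$, possibly one extra copy of $e$, chosen so that the parities work out; this immediately gives the edge bound $|E(H)| \le 2d$ and the property that every distinct edge occurs at least once.

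The key step is choosing which extra copies to add. Consider the multigraph $G-E(T)$ (removing one copy of each tree edge from $G$); it is a multigraph in which every vertex has degree congruent to $d_G(v) - d_T(v) \pmod 2$, and we want to find a submultigraph $M$ of $G-E(T)$, using each distinct edge at most once, such that $H := T \cup M$ has the correct degree parities. Equivalently, writing $b(v) \in \{0,1\}$ for the parity that $d_H(v)$ must have minus $d_T(v)$, we need a subset $M$ of the \emph{distinct} edges of $G-E(T)$ whose incidence vector has the prescribed parity $b(v)$ at every vertex. Such a $b$-parity subgraph exists in a connected graph precisely when $\sum_v b(v)$ is even on each component — and here the relevant count is even because both $G$ (via the original trail) and $T$ have an even number of odd-degree vertices, so their symmetric difference of odd-degree sets has even size, and every distinct edge of $G-E(T)$ is available since $T$ is spanning so $G-E(T)$ has the same vertex set and is "connected enough" through $T$. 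Concretely: take the symmetric difference of paths in $T$ joining the required odd-degree pairs is one classical construction, but since we are allowed to use edges outside $T$ too, it is cleanest to argue directly that for any even-size vertex set $S$ in a connected graph there is a subgraph of the distinct edges with odd degree exactly on $S$, obtained by pairing up $S$ and taking symmetric difference of connecting paths.

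I would then set $H$ to be $T$ together with this parity-correcting subgraph $M$ of distinct edges of $G - E(T)$, noting $M$ and $T$ share no edge copy by construction, so $H$ is a genuine submultigraph of $G$. By construction $H$ contains the spanning tree $T$, hence is connected and contains every distinct edge at least once; its degree parities agree with those of $G$ restricted to the Euler conditions, so by Theorem~\ref{prop:eulerUndirected} $H$ has an Euler $(s,t)$-trail; and $|E(H)| \le |E(T)| + |M| \le d + d = 2d$. The main obstacle — really the only subtle point — is making sure the parity demand is globally consistent (even sum on the unique component) and that we never need a second copy of an edge already used by $T$; both follow from $T$ being spanning and from the parity bookkeeping that the symmetric difference of the odd-degree sets of $G$ and of $T$ has even cardinality. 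One small care point worth spelling out in the final write-up is the degenerate case $s=t$, where there are no "designated odd" vertices and the argument is strictly easier.
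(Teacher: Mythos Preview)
Your construction does not achieve the first bullet of the lemma. You set $H = T \cup M$ where $T$ is a spanning tree of the underlying simple graph and $M$ is a parity-correcting set of distinct edges. You then assert that ``$H$ contains the spanning tree $T$, hence is connected and contains every distinct edge at least once.'' Containing $T$ guarantees connectivity, but a spanning tree has only $|V(G)|-1$ edges and certainly need not contain every distinct edge of $G$. Any non-tree distinct edge that happens not to be selected into $M$ is absent from $H$, so the required property ``every distinct edge of $G$ occurs at least once in $H$'' fails.

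The fix is to replace the spanning tree by the full underlying simple graph: start with one copy of \emph{every} distinct edge (this is $d$ edges and already connected), and then add a parity-correcting set $M$ of at most $d$ further distinct-edge copies. In fact, once you make that change you will find that no clever existence argument for $M$ is needed at all: for each distinct edge $e$ with multiplicity $\mu_e$ in $G$, simply keep one copy if $\mu_e$ is odd and two copies if $\mu_e$ is even. Each multiplicity is reduced by an even number, so all degree parities are preserved; every distinct edge is present; connectivity is preserved; and the total is at most $2d$. This is exactly the paper's (one-line) proof, and it avoids the $T$-join/pairing machinery entirely.
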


\begin{proof}
By Theorem \ref{prop:eulerUndirected}, $G$ is connected, each vertex in $V(G)\setminus\{s,t\}$ is of even degree, and either $s=t$ is of even degree, or $s\neq t$ are of odd degrees. For every edge $e$ reduce its multiplicity $\mu$ to 1 if $\mu$ is odd and 2 if $\mu$ is even. Let us denote the resulting multigraph by $H$. Clearly, $H$ is connected and every distinct edge in $G$ occurs at least once in $H$. Also, the number of edges of $H$ is at most $2d$, where $d$ is the number of distinct edges in $G$.
Since to obtain $H$ for every edge of $G$ we decreased its multiplicity by an even number (possibly, 0), each vertex of $H$ is of the same degree parity in $H$ and in $G$. Thus, by  Proposition \ref{prop:eulerUndirected}, $H$ has an Eulerian $(s,t)$-trail. 
\end{proof}

Having Lemma \ref{lem:eulerSpanning}, we can derive the following claim from Corollary \ref{cor:boundDistinctEdgeAndPartition}.

\begin{lemma}\label{lem:finalPartition}
Let $(G,k,r)$ be a nice \yes-instance of \undiPathR. Then, $G$ has an $r$-simple $k$-path $P$ with fewer than $30(k/r)$ distinct edges, such that the edge multiset of $P_{\mathrm{multi}}$ can be partitioned into two multisets, $M_1$ and $M_2$, with the following properties:
\begin{itemize}
\item $P_{\mathrm{multi}}$ restricted to $M_1$ is a spanning multigraph of $P_{\mathrm{multi}}$ with fewer than $60(k/r)$ edges (including multiplicities) that has an Eulerian $(s,t)$-trail where $s$ and $t$ are the end-vertices of $P$.
\item $P_{\mathrm{multi}}$ restricted to $M_2$ is a multigraph of treewidth 2.
\end{itemize}
\end{lemma}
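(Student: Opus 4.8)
The plan is to keep the very walk $P$ together with the partition $\{M_1^{\mathrm{old}},M_2^{\mathrm{old}}\}$ of the edge multiset of $P_{\mathrm{multi}}$ produced by Corollary~\ref{cor:boundDistinctEdgeAndPartition} (so $P$ already has fewer than $30(k/r)$ distinct edges, $M_1^{\mathrm{old}}$ is a simple spanning tree of $P_{\mathrm{multi}}$, and $P_{\mathrm{multi}}$ restricted to $M_2^{\mathrm{old}}$ has treewidth $2$), and merely \emph{re-split} the edge multiset of $P_{\mathrm{multi}}$ so that the new first part is not just spanning and connected but genuinely Eulerian. The crux is that $P$ itself is an Euler $(s,t)$-trail of $P_{\mathrm{multi}}$, where $s,t$ are the end-vertices of $P$. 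Hence I would apply Lemma~\ref{lem:eulerSpanning} to the multigraph $P_{\mathrm{multi}}$ to obtain a sub-multigraph $H\subseteq P_{\mathrm{multi}}$ that contains every distinct edge of $P_{\mathrm{multi}}$ at least once, has an Euler $(s,t)$-trail, and has at most $2d$ edges counted with multiplicity, where $d<30(k/r)$ is the number of distinct edges of $P$. I then define $M_1:=E(H)$ and $M_2:=E(P_{\mathrm{multi}})\setminus M_1$.

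Verifying the two bullets is then short. For the first bullet: $H$ contains every distinct edge of $P_{\mathrm{multi}}$, and $P_{\mathrm{multi}}$ has no isolated vertex, so $H$ is a spanning sub-multigraph of $P_{\mathrm{multi}}$; by Lemma~\ref{lem:eulerSpanning} it has an Euler $(s,t)$-trail; and $|E(H)|\le 2d<60(k/r)$. For the second bullet I would compare underlying simple graphs. An edge $e$ has positive multiplicity in $M_2$ only if its multiplicity in $P_{\mathrm{multi}}$ strictly exceeds that in $H$; since $H$ keeps at least one copy of $e$, this forces $e$ to have multiplicity at least $2$ in $P_{\mathrm{multi}}$. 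But $M_1^{\mathrm{old}}$ is a \emph{simple} spanning tree, so any such $e$ still has positive multiplicity in $M_2^{\mathrm{old}}=E(P_{\mathrm{multi}})\setminus M_1^{\mathrm{old}}$, i.e.\ $e$ lies in the underlying simple graph of $P_{\mathrm{multi}}$ restricted to $M_2^{\mathrm{old}}$. Thus the underlying simple graph of $P_{\mathrm{multi}}$ restricted to $M_2$ is a subgraph of that of $P_{\mathrm{multi}}$ restricted to $M_2^{\mathrm{old}}$; since treewidth is monotone under taking subgraphs and is unaffected by loops and parallel edges, Corollary~\ref{cor:boundDistinctEdgeAndPartition} yields that $P_{\mathrm{multi}}$ restricted to $M_2$ has treewidth $2$.

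I do not expect a genuine obstacle: the whole argument is a short combination of Corollary~\ref{cor:boundDistinctEdgeAndPartition} and Lemma~\ref{lem:eulerSpanning}. The one place that needs care is the multiset bookkeeping — checking that $E(H)$ is a sub-multiset of $E(P_{\mathrm{multi}})$ so that $M_2$ is well defined, reading ``spanning'' relative to $V(P_{\mathrm{multi}})$ (which uses only that $P_{\mathrm{multi}}$ has no isolated vertex), and, above all, the implication ``$e$ survives into $M_2$ $\Rightarrow$ $e$ has multiplicity at least $2$ in $P_{\mathrm{multi}}$ $\Rightarrow$ $e$ was already in $M_2^{\mathrm{old}}$'', which is precisely what lets the treewidth bound be inherited from the old partition instead of being re-proved from scratch.
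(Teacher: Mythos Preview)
Your proposal is correct and follows essentially the same approach as the paper's own proof: both start from the partition of Corollary~\ref{cor:boundDistinctEdgeAndPartition}, apply Lemma~\ref{lem:eulerSpanning} to $P_{\mathrm{multi}}$ to obtain the Eulerian sub-multigraph $H$, set $M_1=E(H)$, and then argue that the complement $M_2$ inherits the treewidth-$2$ bound from $M_2^{\mathrm{old}}$. Your justification of this last step (via underlying simple graphs and the observation that any edge surviving into $M_2$ has multiplicity at least $2$ in $P_{\mathrm{multi}}$, hence already appears in $M_2^{\mathrm{old}}$) is in fact a touch more explicit than the paper's ``without loss of generality $M_1\subseteq M_1'$'' phrasing, but the content is the same.
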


\begin{proof}
Consider the decomposition of the edges of $P_{\mathrm{multi}}$ into $M_1$ and $M_2$ obtained in Corollary \ref{cor:boundDistinctEdgeAndPartition}. Since $P$ is an Eulerian $(s,t)$-trail of $P_{\mathrm{multi}}$, by Lemma \ref{lem:eulerSpanning}, $P_{\mathrm{multi}}$ has a subgraph $H$ such that every distinct edge in $P_{\mathrm{simple}}$ occurs at least once in $H$, $H$ has an Eulerian $(s,t)$-trail and $H$ has fewer than $60(k/r)$ edges (including multiplicities). Let $M_1'=E(H)$. Because each edge in $M_1$ has an occurrence in $M_1'$ and $M_1$ is a set, without loss of generality, we may assume that $M_1\subseteq M_1'$. Let $M_2'=E(P_{\mathrm{multi}})\setminus M_1'$. Then, $M_2'\subseteq M_2$. Therefore, since  $P_{\mathrm{multi}}$ restricted to $M_2$ is a multigraph of treewidth 2, so is $P_{\mathrm{multi}}$ restricted to $M_2'$.
\end{proof}


\subsection{Color Coding}

Knowing that it suffices for us to deal only with solutions having a small number of distinct vertices (in light of Lemma \ref{lem:finalPartition}), we utilize the method of color coding to focus on the following problem. Here, $\mathsf{b}(k/r)=30k/r+1$.

\begin{defproblem}
{\colPathRst}
{An $n$-vertex $\mathsf{b}(k/r)$-colored undirected graph $G$ and positive integers $k,r$.}
{Output \no\ if $G$ has no $r$-simple $k$-path, and \yes\ if it has a colorful $r$-simple $k$-path with fewer than $30(k/r)$ distinct edges.}
\end{defproblem}\\
With respect to this problem, when $G$ has no $r$-simple $k$-path, the input is called a \no-instance, and when $G$ has a colorful $r$-simple $k$-path with fewer than $30(k/r)$ distinct edges, the input is called a \yes-instance. The explicit requirement of having fewer than $30(k/r)$ distinct edges is meant only to simplify Section \ref{sec:boundR}. Notice that if the input is neither a \yes-instance nor a \no-instance, then the output can be arbitrary. 

The proof of the following lemma follows the lines of the proof of Lemma \ref{lem:color} where instead of Lemma \ref{lem:distinctImproved}, we use Lemma \ref{lem:finalPartition}, and hence it is not repeated here.

\begin{lemma}\label{lem:colorUndirected}
Suppose that \colPathRst\ can be solved in time $f(k/r)\cdot (n+\log k)^{\OO(1)}$. Then, \undiPathR\ can be solved in time $2^{\OO(k/r)}\cdot f(k/r)\cdot (n+\log k)^{\OO(1)}$.
\end{lemma}


\subsection{Guessing the Occurrence Sequence of the Spanning Multigraph Part}

We cannot guess the topology of the spanning multigraph part of a solution in a manner similar to guessing a topology as in the case of digraphs, since trying every possibility already takes times $2^{\OO(\frac{k}{r}\log\frac{k}{r})}$. Instead, inspired by the work of Berger et al.~\cite{DBLP:journals/corr/abs-1804-06361} (which guess a degree-sequence of a certain tree), we only guess a so called ``occurrence sequence'' of the spanning multigraph part of a solution. Let us first define a notion that we call an {\em occurrence sequence}.

\begin{definition}\label{def:degreeSeq}
Let $r,k\in\mathbb{N}$. An {\em $(r,k)$-occurrence sequence} is a tuple $\overline{\bf d}=(d_1,\ldots,d_{\mathsf{b}(k/r)})$ that satisfies the following conditions.
\begin{enumerate}
\item For all $i\in\{1,2,\ldots,\mathsf{b}(k/r)\}$, $d_i$ is an integer between $0$ and $r$. 
\item $\sum_{i=1}^{\mathsf{b}(k/r)}d_i \leq 2\mathsf{b}(k/r)$.
\end{enumerate}
Let ${\cal D}_{r,k}$ be the set of all $(r,k)$-occurrence sequences.
\end{definition}

We now show that the number of occurrence sequences is single-exponential.

\begin{lemma}\label{lem:degreeSeqNumber}
Let $r,k\in\mathbb{N}$. Then, $|{\cal D}_{r,k}|=2^{\OO(k/r)}$.
\end{lemma}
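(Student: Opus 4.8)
The plan is to observe that the bound on the coordinate size, $d_i \le r$, is irrelevant for an \emph{upper} bound on $|{\cal D}_{r,k}|$; only the two constraints ``$d_i \ge 0$ integer'' and ``$\sum_i d_i \le 2\mathsf{b}(k/r)$'' matter. So I would bound $|{\cal D}_{r,k}|$ by the number of $\mathsf{b}(k/r)$-tuples of nonnegative integers whose sum is at most $2\mathsf{b}(k/r)$, and then invoke a standard stars-and-bars count.

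Concretely, write $m = \mathsf{b}(k/r) = 30(k/r)+1$. First I would introduce a slack variable $s \ge 0$ and note that the tuples counted are in bijection with the nonnegative integer solutions of $d_1 + \cdots + d_m + s = 2m$, of which there are exactly $\binom{2m + (m+1) - 1}{(m+1)-1} = \binom{3m}{m}$. (Equivalently, one can sum $\sum_{j=0}^{2m}\binom{j+m-1}{m-1} = \binom{3m}{m}$ by the hockey-stick identity, but the slack-variable phrasing is cleaner.) Then I would apply the crude estimate $\binom{3m}{m} \le 2^{3m}$ (any subset of a $3m$-element set), which gives
\[
|{\cal D}_{r,k}| \;\le\; \binom{3m}{m} \;\le\; 2^{3m} \;=\; 2^{3(30(k/r)+1)} \;=\; 2^{\OO(k/r)},
\]
as required.

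There is essentially no obstacle here: the only thing to be careful about is not to let the $d_i \le r$ constraint distract the argument (it can only shrink the set), and to make sure the stars-and-bars count uses the correct number of variables (the $m$ coordinates plus one slack). The claim follows immediately. If a matching lower bound were wanted (it is not stated), one could note that all $2^m$ tuples in $\{0,1\}^m$ are valid $(r,k)$-occurrence sequences whenever $r \ge 1$, giving $|{\cal D}_{r,k}| = 2^{\Theta(k/r)}$, but this is not needed for the statement as written.
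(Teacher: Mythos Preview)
Your proof is correct and takes essentially the same approach as the paper: both drop the $d_i\le r$ constraint and apply a stars-and-bars count to bound the number of nonnegative integer tuples with bounded sum. The only cosmetic difference is that you use a single slack variable to get the closed form $\binom{3m}{m}$, whereas the paper sums $\binom{s+m-1}{s}$ over $s\le 2m$; both yield $2^{\OO(k/r)}$ immediately.
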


\begin{proof}
Let ${\cal D}_{s,r,k}$ be the set of tuples $\overline{\bf d}=(d_1,d_2,\ldots,d_{\mathsf{b}(k/r)})$ of non-negative integers that satisfy $\sum_{i=1}^{\mathsf{b}(k/r)}d_i = s$. Then, ${\cal D}_{r,k}=\bigcup_{s=0}^{2\mathsf{b}(k/r)}{\cal D}_{s,r,k}$. Thus, to prove that $|{\cal D}_{r,k}|=2^{\OO(k/r)}$, it suffices to show that for any $s\in\{0,1,\ldots,2\mathsf{b}(k/r)\}$, it holds that $|{\cal D}_{s,r,k}|=2^{\OO(k/r)}$. The total number of non-negative integral solutions to $\sum_{i=1}^{\mathsf{b}(k/r)}d_i=s$ can be found using the following well-known combinatorial reduction: consider $s$ identical balls placed in a row and set between them $\mathsf{b}(k/r)-1$ identical sticks (sticks may be placed before the first ball and after the last ball). Now the value of $d_i$ is the number of balls after the stick $i-1$ and before the stick $i$ (for $i=1$ and $i=\mathsf{b}(k/r)-1$, this term refers to the number of balls before the first stick and the number of balls after the last stick, respectively). Clearly, the number of placements of sticks is ${s+\mathsf{b}(k/r)-1 \choose s}<2^{s+\mathsf{b}(k/r)}=2^{\OO(k/r)}$.
\end{proof}

We now define what structures are good and comply with an occurrence sequence. Here, recall that a multigraph $H$ is called even if each of its connected components $C$ has an Euler $(s,t)$-trail with $s=t$ for some $s\in V(C)$. Equivalently (by Theorem \ref{prop:eulerUndirected}), every vertex in $H$ has even degree.
 
\begin{definition}\label{def:degreeSeqGood}
Let $r,k\in\mathbb{N}$. Let $G$ be a $\mathsf{b}(k/r)$-colored undirected graph. A pair $(W,H)$ of an $r$-simple path $W$ in $G$ and an even multigraph $H$ whose underlying simple graph is a subgraph of $G$ is {\em $q$-good} if the following conditions are satisfied.
\begin{enumerate}
\item\label{item:degreeSeqGood1} The treewidth of $H$ is at most $2$.
\item\label{item:degreeSeqGood2} Every connected component of $H$ has at least one vertex that is visited by $W$.
\item\label{item:degreeSeqGood3} The multigraph $H$ is colorful.
\item\label{item:degreeSeqGood4} The sum of the number of edges visited by $W$ and the number of edges (including multiplicities) of $H$ is $q-1$.
\end{enumerate}
If $q$ is not specified, then $q=k$.
\end{definition}

\begin{definition}\label{def:degreeSeqComply}
Let $r,k\in\mathbb{N}$. Let $G$ be a $\mathsf{b}(k/r)$-colored undirected graph, and let $\overline{\bf d},\overline{\bf d}'$ be $(r,k)$-occurrence sequences. A good pair $(W,H)$ {\em complies} with $\overline{\bf d}$ (resp.~$(\overline{\bf d},\overline{\bf d}')$) if for every color $i\in\{1,2,\ldots,\mathsf{b}(k/r)\}$, the following two conditions are satisfied.
\begin{enumerate}
\item\label{item:degreeSeqComply1} The number of times $W$ visits vertices colored $i$ is exactly $d_i$ (resp.~$d'_i$).
\item\label{item:degreeSeqComply2} The degree of any vertex colored $i$ in $H$ is at most $2(r-d_i)$.
\end{enumerate}
\end{definition}

Let us now argue that we can focus on seeking a pair $(W,H)$ as in Definition \ref{def:degreeSeqComply}.

\begin{lemma}\label{lem:equivalenceWithDegSeq}
Let $(G,k,r)$ be an instance of \colPathRst.
\begin{enumerate}
\item If $(G,k,r)$ is a \yes-instance, then there exist $\overline{\bf d}\in {\cal D}_{r,k}$ and a good pair that complies with $\overline{\bf d}$.
\item If there exist $\overline{\bf d}\in {\cal D}_{r,k}$ and a good pair that complies with $\overline{\bf d}$, then $(G,k,r)$ is not a \no-instance.
\end{enumerate}
\end{lemma}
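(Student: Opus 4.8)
The plan is to move between an $r$-simple $k$-path and a pair $(W,H)$ using the edge-multiset decomposition of Lemma~\ref{lem:finalPartition}, with the bulk of the work being routine bookkeeping of vertex occurrences and degrees.

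\emph{First item.} Suppose $(G,k,r)$ is a \yes-instance, so $G$ has a colorful $r$-simple $k$-path with fewer than $30(k/r)$ distinct edges. Running the argument of Lemma~\ref{lem:finalPartition} on such a path is legitimate here because every operation used there (cycle swaps, edge-multiplicity reductions) leaves $P_{\mathrm{simple}}$ unchanged and hence preserves colorfulness; this yields a colorful $r$-simple $k$-path $P$ with end-vertices $s,t$ whose edge multiset splits as $M_1 \cup M_2$, where $P_{\mathrm{multi}}$ restricted to $M_1$ is a spanning multigraph with an Euler $(s,t)$-trail and fewer than $60(k/r)$ edges and $P_{\mathrm{multi}}$ restricted to $M_2$ has treewidth $2$. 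Take $W$ to be this Euler $(s,t)$-trail and $H = P_{\mathrm{multi}}|_{M_2}$. Then $W$ is $r$-simple (each vertex occurs in $W$ at most as often as in $P$); $H$ is even, since in both $P_{\mathrm{multi}}$ and its $M_1$-part every vertex has even degree except $s,t$, so the difference has all even degrees; $H$ has treewidth $\le 2$; the underlying simple graph of $H$ is a subgraph of the colorful $P_{\mathrm{simple}}$, so $H$ is colorful; $W$ spans $V(P_{\mathrm{simple}}) \supseteq V(H)$, so each component of $H$ has a vertex visited by $W$; and the edges visited by $W$ together with the edges of $H$ number $|M_1|+|M_2| = k-1$. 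Hence $(W,H)$ is a good pair. Let $d_i$ be the number of occurrences of color $i$ in $W$; since $P$ is colorful, $d_i$ is the number of occurrences in $W$ of the unique color-$i$ vertex of $P_{\mathrm{simple}}$ (if any), whence $0 \le d_i \le r$, and $\sum_i d_i$ is the size of $W$, i.e.\ $|M_1|+1 < 2\mathsf{b}(k/r)$, so $\overline{\bf d} \in {\cal D}_{r,k}$. Condition~\ref{item:degreeSeqComply1} holds by definition of $d_i$. For Condition~\ref{item:degreeSeqComply2}: if $H$ has no color-$i$ vertex the condition is vacuous, and otherwise, writing $v$ for that vertex and $c$ for the number of occurrences of $v$ in $P$, and using that $v$ is an endpoint of $W$ exactly as often as of $P$, one computes $\deg_H(v) = \deg_{P_{\mathrm{multi}}}(v) - \deg_{M_1}(v) = 2c - 2d_i \le 2(r - d_i)$ (the case $s=t$ being identical parity bookkeeping). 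So $(W,H)$ complies with $\overline{\bf d}$.

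\emph{Second item.} Suppose $\overline{\bf d} \in {\cal D}_{r,k}$ and a good pair $(W,H)$ complies with it; I build an $r$-simple $k$-path in $G$. Each connected component $C$ of $H$ with an edge is a connected even multigraph, hence has a closed Euler trail by Theorem~\ref{prop:eulerUndirected}; choosing a vertex $u_C\in V(C)$ that is visited by $W$ (which exists by the good-pair condition), splice the closed Euler trail of $C$ based at $u_C$ into $W$ at an occurrence of $u_C$, for every such $C$. Let $P$ be the resulting walk. Every edge of $P$ is an edge of $W$ or of the underlying simple graph of $H$, hence an edge of $G$, so $P$ is a walk in $G$; its length is the number of edges of $W$ plus the total number of edges of $H$, which equals $k-1$, so $P$ has size $k$. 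Finally, $P$ is $r$-simple: for a vertex $v$ of color $i$, its number of occurrences in $P$ is its number of occurrences in $W$, which is at most $d_i$, plus (if $v$ lies in some component $C$ of $H$, necessarily unique since $H$ is colorful) $\deg_C(v)/2 = \deg_H(v)/2$; and if $v \in V(H)$ then $v$ is \emph{the} color-$i$ vertex of $H$, so $\deg_H(v) \le 2(r-d_i)$ by compliance, giving at most $d_i + (r - d_i) = r$ occurrences, while if $v \notin V(H)$ it occurs at most $d_i \le r$ times. Thus $G$ has an $r$-simple $k$-path and $(G,k,r)$ is not a \no-instance.

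The steps needing care are the occurrence/degree bookkeeping: in the first item, the parity cancellation that makes $H$ even and the identity $\deg_H(v) = 2c - 2d_i$ including the endpoint cases $v\in\{s,t\}$ and $s=t$; in the second item, the point that although $W$ may visit several vertices sharing a color, $H$ has at most one vertex of each color, which is precisely what lets the bound $d_i + \deg_H(v)/2 \le r$ close.
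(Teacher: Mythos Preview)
Your proof is correct and follows essentially the same approach as the paper. Two minor remarks: (i) the justification ``every operation used there \dots\ leaves $P_{\mathrm{simple}}$ unchanged'' is not literally true (the cycle swaps in Lemma~\ref{lem:treeAndNoEven} can drop edge multiplicities to zero), though your conclusion still holds because the vertex set can only shrink --- the paper handles this more cleanly by first passing to the induced subgraph $G'$ on the vertices of the colorful path and applying Lemma~\ref{lem:finalPartition} there; (ii) for the second item you splice closed Euler trails of the components of $H$ into $W$, whereas the paper forms the combined multigraph $R$ with edge multiset $E(W)\cup E(H)$ and invokes Theorem~\ref{prop:eulerUndirected} directly --- these are equivalent constructions.
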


\begin{proof}
{\bf First statement.} To prove the first statement, suppose that $(G,k,r)$ is a \yes-instance. That is, $G$ has a colorful $r$-simple $k$-path, say $P'$. Let $G'$ denote the subgraph of $G$ induced by the set of vertices visited by $P'$. Then, no two vertices in $G'$ have the same color, and $(G',k,r)$ is a \yes-instance (since $P'$ is a colorful $r$-simple $k$-path in $G'$). By Lemma \ref{lem:finalPartition}, $G'$ has an $r$-simple $k$-path $P$ such that the edge multiset of $P_{\mathrm{multi}}$ can be partitioned into two multisets, $M_1$ and $M_2$, with the following properties:
\begin{itemize}
\item $P_{\mathrm{multi}}$ restricted to $M_1$ is a spanning multigraph of $P_{\mathrm{multi}}$ with fewer than $60(k/r)$ edges (including multiplicities) that has an Eulerian $(s,t)$-trail $W$ where $s$ and $t$ are the end-vertices of $P$.
\item $P_{\mathrm{multi}}$ restricted to $M_2$ is a multigraph of treewidth 2.
\end{itemize}
Necessarily, $P$ is colorful. Let $H$ be the restriction of $P_{\mathrm{multi}}$ to $M_2$.
For all $i\in\{1,2,\ldots,\mathsf{b}(k,r)\}$, let $d_i$ denote the number of times $W$ visits the vertex colored $i$, and define $\overline{\bf d}=(d_1,d_2,\ldots,d_{\mathsf{b}(k,r)})$. We claim that $\overline{\bf d}\in {\cal D}_{r,k}$ and that $(W,H)$ is a good pair that complies with $\overline{\bf d}$.

Towards the proof of our claim, first note that since the size of $W$ is at most $60(k/r)$, it holds that $\sum_{i=1}^{\mathsf{b}(k/r)}d_i \leq 2\mathsf{b}(k/r)$. Moreover, since $W$ is an $r$-simple path (because it is a submultigraph of $P_{\mathrm{multi}}$), no vertex is visited by $W$ more than $r$ times, and since $W$ is colorful (because $P$ is colorful), this means that $d_i\leq r$ for all $i\in\{1,2,\ldots,\mathsf{b}(k,r)\}$. Thus, $\overline{\bf d}\in {\cal D}_{r,k}$. Moreover, the definition of $\overline{\bf d}$ directly ensures that Condition \ref{item:degreeSeqComply1} in Definition \ref{def:degreeSeqComply} is satisfied. In addition, since $P$ is an $r$-simple path, and since the number of times $P$ visits any vertex equals the number of times $W$ visits it plus half its degree in $H$, Condition \ref{item:degreeSeqComply2} in Definition \ref{def:degreeSeqComply} is satisfied as well.

It remains to show that the pair $(W,H)$ is good. Condition \ref{item:degreeSeqGood1} in Definition \ref{def:degreeSeqGood} follows directly from the assertion that $P_{\mathrm{multi}}$ restricted to $M_2$, which is precisely $H$, is a multigraph of treewidth 2. Since $P_{\mathrm{multi}}$ is a connected multigraph (since it has an Euler $(s,t)$-trail) and $W$ visits every vertex of $P_{\mathrm{multi}}$ at least once, it follows that every connected component of $H$ has at least one vertex that is visited by $W$. Thus, Condition \ref{item:degreeSeqGood2} in Definition \ref{def:degreeSeqGood} is satisfied as well. By Theorem \ref{prop:eulerUndirected}, because both $P_{\mathrm{multi}}$ and $P_{\mathrm{multi}}$ restricted to $M_1$ have Euler $(s,t)$-trails (where $s$ and $t$ are the end-vertices of $P$), every vertex has even degree in both $P_{\mathrm{multi}}$ and $P_{\mathrm{multi}}$ restricted to $M_1$, except for $s$ and $t$ if $s\neq t$---in this case, both $s$ and $t$ have odd degree in both $P_{\mathrm{multi}}$ and $P_{\mathrm{multi}}$ restricted to $M_1$. Thus, every vertex has even degree in $H$. Next, Condition \ref{item:degreeSeqGood3} in Definition \ref{def:degreeSeqGood} is satisfied because $H$ is colorful (since it is a submultigraph of $P_{\mathrm{multi}}$ which is colorful). Lastly, Condition \ref{item:degreeSeqGood4} in Definition \ref{def:degreeSeqGood} is satisfied because the sum of the number of edges visited by $W$ and the number of edges (including multiplicities) of $H$ is precisely the number of edge visits by $P$, which is $k-1$.

\medskip
\noindent{\bf Second statement.} To prove the second statement, suppose that there exist some $\overline{\bf d}\in {\cal D}_{r,k}$ and a good pair $(W,H)$ that complies with $\overline{\bf d}$. Let $R$ be the multigraph on $V(W_{\mathrm{simple}})\cup V(H)$ and edge multiset $E(W)\cup E(H)$ (that is, the number of times an edge occurs in $R$ is the sum of the number of times is occurs in $W$ and in $H$). By Condition \ref{item:degreeSeqGood2} in Definition \ref{def:degreeSeqGood}, $R$ is connected. Thus, since $W$ is a walk and $H$ is Eulerian, by Theorem \ref{prop:eulerUndirected}, $R$ admits an Euler $(s,t)$-trail $P$ where $s$ and $t$ are the end-vertices of $W$. By Definition \ref{def:degreeSeqComply}, the degree of every vertex in $R$ is at most $2r$, and hence $P$ must be an $r$-simple path. In addition, from Condition \ref{item:degreeSeqGood4} in Definition \ref{def:degreeSeqGood} it follows that the size of $P$ is $k$. Thus, $(G,k,r)$ is not a \no-instance.
\end{proof}

Accordingly, we define the following problem. 

\begin{defproblem}
{\eulerPart}
{An $n$-vertex $\mathsf{b}(k/r)$-colored undirected graph $G$, positive integers $k,r$, and $\overline{\bf d}\in {\cal D}_{r,k}$.}
{Does there exist a good pair that complies with $\overline{\bf d}$?}
\end{defproblem}\\
Let us now state that we can focus on solving the \eulerPart\ problem.

\begin{lemma}\label{lem:pairProblem}
Suppose that \eulerPart\ can be solved in time $f(k/r)\cdot (r+n+\log k)^{\OO(1)}$. Then, \colPathRst\ can be solved in time $2^{\OO(k/r)}\cdot f(k/r)\cdot (r+n+\log k)^{\OO(1)}$.
\end{lemma}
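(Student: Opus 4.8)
The plan is to give a standard enumeration-over-occurrence-sequences reduction. Given an algorithm $\cal A$ that solves \eulerPart\ in time $f(k/r)\cdot(r+n+\log k)^{\OO(1)}$, I would solve \colPathRst\ on an instance $(G,k,r)$ as follows: enumerate every $(r,k)$-occurrence sequence $\overline{\bf d}\in{\cal D}_{r,k}$ and call $\cal A$ on $(G,k,r,\overline{\bf d})$; return \yes\ if at least one such call returns \yes\ (i.e.~reports a good pair complying with $\overline{\bf d}$), and return \no\ otherwise. The running time is $|{\cal D}_{r,k}|\cdot f(k/r)\cdot(r+n+\log k)^{\OO(1)}$, which by Lemma~\ref{lem:degreeSeqNumber} is $2^{\OO(k/r)}\cdot f(k/r)\cdot(r+n+\log k)^{\OO(1)}$, as required. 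Enumerating ${\cal D}_{r,k}$ within this time bound is immediate since one can list all tuples $(d_1,\dots,d_{\mathsf b(k/r)})$ with each $d_i\in\{0,\dots,r\}$ and $\sum_i d_i\le 2\mathsf b(k/r)$ by a straightforward recursion (the bound $\sum_i d_i\le 2\mathsf b(k/r)$ keeps the count at $2^{\OO(k/r)}$).

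For correctness I would invoke Lemma~\ref{lem:equivalenceWithDegSeq} directly. If $(G,k,r)$ is a \yes-instance of \colPathRst, then by the first part of Lemma~\ref{lem:equivalenceWithDegSeq} there exist $\overline{\bf d}\in{\cal D}_{r,k}$ and a good pair complying with $\overline{\bf d}$; hence the call to $\cal A$ on $(G,k,r,\overline{\bf d})$ returns \yes, and so does our algorithm. Conversely, if our algorithm returns \yes, then some call to $\cal A$ on some $(G,k,r,\overline{\bf d})$ returned \yes, so there exist $\overline{\bf d}\in{\cal D}_{r,k}$ and a good pair complying with $\overline{\bf d}$; by the second part of Lemma~\ref{lem:equivalenceWithDegSeq}, $(G,k,r)$ is not a \no-instance. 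Therefore whenever the input is a genuine \yes- or \no-instance of \colPathRst, our output is correct, and on the remaining (``neither'') inputs an arbitrary output is permitted by the problem definition. This matches exactly the semantics required of a \colPathRst\ solver.

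There is essentially no obstacle here: the lemma is a packaging statement, and all the real content has already been established in Lemma~\ref{lem:equivalenceWithDegSeq} (the structural decomposition via Lemma~\ref{lem:finalPartition}) and in Lemma~\ref{lem:degreeSeqNumber} (the counting bound). The only mild point of care is to make sure the enumeration of ${\cal D}_{r,k}$ is carried out in time $2^{\OO(k/r)}$ rather than $(r+1)^{\mathsf b(k/r)}$; this is handled by recursing on the coordinates while maintaining the running partial sum and pruning as soon as it would exceed $2\mathsf b(k/r)$, so the search tree has only $2^{\OO(k/r)}$ leaves. I would also remark that the space usage is polynomial (plus whatever $\cal A$ uses), since the occurrence sequences can be generated with polynomial delay and need not all be stored. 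With these observations the proof is complete; it mirrors the proof of Lemma~\ref{lem:color} but with occurrence sequences in place of topologies.
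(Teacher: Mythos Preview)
Your proposal is correct and follows essentially the same approach as the paper: enumerate all occurrence sequences in ${\cal D}_{r,k}$, call the \eulerPart\ solver on each, and appeal to Lemma~\ref{lem:equivalenceWithDegSeq} for correctness and Lemma~\ref{lem:degreeSeqNumber} for the running-time bound. Your added remarks about efficient enumeration and space are sound but not needed for the stated lemma.
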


\begin{proof}
Let $\cal A$ be an algorithm that solves \eulerPart\ in time $f(k/r)\cdot (r+n+\log k)^{\OO(1)}$. In what follows, we describe how to solve \colPathRst. To this end, let $(G,k,r)$ be an instance of \colPathRst. For each $\overline{\bf d}\in {\cal D}_{r,k}$, we call ${\cal A}$ with $(G,k,r,\overline{\bf d})$ as input, and if $\cal A$ returns \yes, so do we. At the end, if no call to $\cal A$ returned \yes, we return \no.

The correctness of our algorithm directly follows from Lemma \ref{lem:equivalenceWithDegSeq}. Now, note that by Lemma \ref{lem:degreeSeqNumber}, $|{\cal D}_{r,k}|=2^{\OO(k/r)}$. Thus, it is clear that our algorithm runs in time $2^{\OO(k/r)}\cdot f(k/r)\cdot (r+n+\log k)^{\OO(1)}$.
\end{proof}


\subsection{Two-Level Dynamic Programming (DP)}

We first give a lemma that handles a single connected component of the treewidth-2 multigraph $H$ that is a member of the pair we aim to find.

\begin{lemma}\label{lem:tw2CompDP}
There exists an $2^{\OO(k/r)}\cdot (r+n+\log k)^{\OO(1)}$-time algorithm that, given an undirected graph $G$ where every vertex is assigned a color from $\{1,2,\ldots,\mathsf{b}(k/r)\}$, a set of colors $C\subseteq\{1,2,\ldots,\mathsf{b}(k/r)\}$, a vertex $v^\star\in V(G)$ whose color belongs to $C$, and $\overline{\bf d}=(d_1,\ldots,d_{\mathsf{b}(k/r)})\in {\cal D}_{r,k}$, outputs the largest integer $M$ for which there exists a colorful multigraph $H$ that satisfies the following conditions.
\begin{enumerate}
\item\label{item:tw2CompDP1} For each $v\in V(H)$, the degree of $v$ in $H$ is even and does not exceed $2(r-d_i)$ where $i$ is the color of $v$.
\item\label{item:tw2CompDP2} The underlying simple graph of $H$ is a {\em connected} subgraph of $G$.
\item\label{item:tw2CompDP3} The treewidth of $H$ is at most $2$.
\item\label{item:tw2CompDP4} $v^\star\in V(H)$.
\item\label{item:tw2CompDP5} The number of edges (including multiplicities) in $H$ is exactly $M.$
\item\label{item:tw2CompDP6} Every vertex in $H$ is colored by a color from $C.$
\end{enumerate}
\end{lemma}

\begin{proof}
First, we remove all vertices in $G$ whose color does not belong to $C$. In addition, we remove all vertices in $G$ whose color is the same as the color of  $v^\star$ but which are not $v^\star$. For the sake of simplicity, abuse notation and call the resulting graph $G$ as well. Thus, we can now ignore Condition \ref{item:tw2CompDP6} since it will be automatically satisfied. The proof is based on a standard DP over a tree decomposition (see, e.g., \cite{DBLP:books/sp/CyganFKLMPPS15}) with a slight technicality: we do not know the structure of $H$ and hence we do not have the tree decomposition over which the DP should be performed. Nevertheless, we can repeatedly ``guess'' the current top bag and hence imitate a standard DP over an (unknown) tree decomposition. We remark that DPs over so-called hidden tree decompositions are a well-known tool to design subexponential-time algorithms for parameterized problems in Computational Geometry (see, e.g., \cite{DBLP:conf/soda/KleinM14,DBLP:conf/compgeom/AshokFKSZ17}).

We use a DP table $\mathsf{N}$ with an entry $\mathsf{N}[U,C',f_{\mathrm{deg}},g_{\mathrm{edg}},{\cal S}]$ for all $U\subseteq V(G)$ of size at most $3$, $C'\subseteq C$ that contains the colors of the vertices in $U$, $f_{\mathrm{deg}}: U\rightarrow \{0,1,\ldots,2r\}$ such that $f(v)$ does not exceed $2(r-d_i)$ where $i$ is the color of $v$ for all $v\in U$, $g_{\mathrm{edg}}:\{\{u,v\}\in E(G): u,v\in U\}\rightarrow \{0,1,\ldots,2r\}$, and partition ${\cal S}$ of $U$. The purpose of an entry $\mathsf{N}[U,C',f_{\mathrm{deg}},g_{\mathrm{edg}},{\cal S}]$ is to store the largest integer $M$ for which there exists a colorful multigraph $H$ with a nice tree decomposition $(T,\beta)$ that satisfy the following conditions.
\begin{enumerate}
\item\label{item:purpose1} For each $v\in V(H)\setminus U$, the degree of $v$ in $H$ is even and does not exceed $2(r-d_i)$ where $i$ is the color of $v$. For each $v\in U$, the degree of $v$ in $H$ is $f_{\mathrm{deg}}(v)$.
\item\label{item:purpose2} For each $\{u,v\}\in E(G)$ with $u,v\in U$, the multiplicity of $\{u,v\}$ in $H$ is $g_{\mathrm{edg}}(\{u,v\})$.
\item\label{item:purpose3} The underlying simple graph of $H$ is a subgraph of $G$. In addition, for all $u,v\in U$, it holds that $u,v$ belong to the same connected component of $H$ if and only if $u,v$ belong to the same part in $\cal S$. Furthermore, every connected component of $H$ contains a vertex from $U$.
\item\label{item:purpose4} The width of $(T,\beta)$ is at most $2$, and $\beta(r)=U$ for the root $r$ of $T$.
\item\label{item:purpose5} The number of edges (including multiplicities) in $H$ is exactly $M$.
\item\label{item:purpose6} The set of colors of the vertices in $V(H)$ is precisely $C'$.
\end{enumerate}
Having computed $\mathsf{N}$ correctly, the final output is the maximum value stored in $\mathsf{N}[U,C',f_{\mathrm{deg}},g_{\mathrm{edg}},$ ${\cal S}]$ over all $U\subseteq V(G)$ of size at most $3$, $C'\subseteq C$ such that $C'$ contains the color of $v^\star$, $f_{\mathrm{deg}}: U\rightarrow \{0,1,\ldots,2r\}$ such that $f_{\mathrm{deg}}(v)$ is even for all $v\in U$, $g_{\mathrm{edg}}:\{\{u,v\}\in E(G): u,v\in U\}\rightarrow \{0,1,\ldots,2r\}$, and partition ${\cal S}=\{U\}$.
In what follows, we show how to correctly compute $\mathsf{N}$ in time $2^{\OO(k/r)}\cdot (r+n+\log k)^{\OO(1)}$. Here, every entry $\mathsf{N}[U,C',f_{\mathrm{deg}},g_{\mathrm{edg}},{\cal S}]$ should be computed before all entries $\mathsf{N}[\widehat{U},\widehat{C},\widehat{f}_{\mathrm{deg}},\widehat{g}_{\mathrm{edg}},\widehat{\cal S}]$ such that either $|C'|<|\widehat{C}|$ or both $|C'|=|\widehat{C}|$ and $|\widehat{U}|<|U|$.

\medskip
\noindent{\bf Basis.} If $U=\emptyset$ or $C'=\emptyset$, then $\mathsf{N}[U,C',f_{\mathrm{deg}},g_{\mathrm{edg}},{\cal S}]=0$ if both $U=\emptyset$ and $C'=\emptyset$, and $\mathsf{N}[U,C',f_{\mathrm{deg}},g_{\mathrm{edg}},{\cal S}]=-\infty$ otherwise. Moreover, if the colors of the vertices in $U$ are not contained in $C'$, then also $\mathsf{N}[U,C',f_{\mathrm{deg}},g_{\mathrm{edg}},{\cal S}]=-\infty$.

\medskip
\noindent{\bf Step.} Let $\mathsf{N}[U,C',f_{\mathrm{deg}},g_{\mathrm{edg}},{\cal S}]$ be an entry such that both $U\neq\emptyset$ and $C'$ contains the colors of the vertices in $U$.\\ Then, $\mathsf{N}[U,C',f_{\mathrm{deg}},g_{\mathrm{edg}},{\cal S}]=\max\{F,I,J\}$, where $F,I$ and $J$ are computed as follows. 
	\begin{itemize}
	\item {\bf Forget:} If $|U|=3$, then $F=-\infty$.
	
	Else, $F$ is the maximum of the integers in $\mathsf{N}[U\cup\{v\},C',\widehat{f},\widehat{g},\widehat{\cal S}]$ over each vertex $v\in V(G)\setminus U$, each function $\widehat{f}:U\cup\{v\}\rightarrow \{0,1,\ldots,2r\}$ that extends $f_{\mathrm{deg}}$ so that $\widehat{f}(v)$ is even and does not exceed $2(r-d_i)$ where $i$ is the color of $v$, each function $\widehat{g}:\{\{u,v\}\in E(G): u,v\in U\}\rightarrow \{0,1,\ldots,2r\}$ that extends $g_{\mathrm{edg}}$, and each partition $\widehat{\cal S}$ of $U\cup\{v\}$ such that the part that contains $v$ is of size at least $2$ and if $|U|=2$, then the two vertices in $U$ are in the same part in $\cal S$ if and only if they are in the same part in $\widehat{\cal S}$.
		
	\item {\bf Introduce:} $I$ is the maximum of the sums $\mathsf{N}[U\setminus\{\widetilde{v}\},C'\setminus\{i\},\widehat{f},\widehat{g},\widehat{\cal S}]+f_{\mathrm{deg}}(\widetilde{v})$ over each vertex $\widetilde{v}\in U$, where $i$ is the color of $\widetilde{v}$, each function $\widehat{f}:U\setminus\{\widetilde{v}\}\rightarrow \{0,1,\ldots,2r\}$, where $\widehat{g}$ is the restriction of $g_{\mathrm{edg}}$ to $\{\{u,v\}\in E(G): u,v\in U\setminus\{\widetilde{v}\}\}$, and each partition $\widehat{\cal S}$ of $U\setminus\{\widetilde{v}\}$ that altogether satisfy the following conditions.
	\begin{enumerate}
	\item $f_{\mathrm{deg}}(\widetilde{v})=\sum_{u\in U: \{u,\widetilde{v}\}\in E(G)}g_{\mathrm{edg}}(\{u,\widetilde{v}\})$.
	\item\label{item:fhatIntroduce} For each $u\in U$, $f_{\mathrm{deg}}(u)=\widehat{f}(u)+b$ where $b=\widehat{g}(\{u,\widetilde{v}\})$ if $\{u,\widetilde{v}\}\in E(G)$ and $b=0$ otherwise.
	\item If $|U|=3$ and the two vertices in $U \setminus \{\widetilde{v}\}$ are in the same part in $\widehat{\cal S}$, then they are also in the same part in ${\cal S}$.
	\item For each $u\in U$ with $\{u,\widetilde{v}\}\in E(G)$ and $g_{\mathrm{edg}}(\{u,\widetilde{v}\})\geq 1$, $u,\widetilde{v}$ are in the same part in ${\cal S}$.
	\item For each $u\in U$ with either $\{u,\widetilde{v}\}\notin E(G)$ or $g_{\mathrm{edg}}(\{u,\widetilde{v}\})=0$, if $u,\widetilde{v}$ are in the same part in ${\cal S}$, then there exists $w\in U\setminus\{u,\widetilde{v}\}$ with $\{w,\widetilde{v}\}\in E(G)$, $\widehat{g}(\{w,\widetilde{v}\})\geq 1$ and such that $u,w$ are in the same part in $\widehat{\cal S}$.
	\end{enumerate}
	(If there exists no entry $\mathsf{N}[U\setminus\{v\},C'\setminus\{i\},\widehat{f},\widehat{g},\widehat{\cal S}]$ that satisfies the conditions above, then $I=-\infty$.)
	
	\item {\bf Join:} $J$ is the maximum of the sums \[\mathsf{N}[U,C_1,f_1,g_{\mathrm{edg}},{\cal S}_1]+\mathsf{N}[U,C_2,f_2,g_{\mathrm{edg}},{\cal S}_2]-\sum_{\{u,v\}\in E(G): u,v\in U}g_{\mathrm{edg}}(\{u,v\})\]
	over each $C_1\subseteq C'$ that contains $X$ being the set of colors of the vertices in $U$ as well as at least one other color, and which is not equal to $C'$, where $C_2 = C'\setminus (C_1\setminus X)$, each pair of functions $f_1:U\rightarrow \{0,1,\ldots,2r\}$ and $f_2:U\rightarrow \{0,1,\ldots,2r\}$ such that $f_{\mathrm{deg}}(v)=f_1(v)+f_2(v)-\sum_{\{u,v\}\in E(G): u\in U}g_{\mathrm{edg}}(\{u,v\})$ for every $v\in U$, and each pair of partitions ${\cal S}_1$ and ${\cal S}_2$ of $U$ such that for all $u,v\in U$, $u,v$ are in the same part in ${\cal S}$ if and only if $u,v$ are in the same part in the finest common coarsening of ${\cal S}_1$ and ${\cal S}_2$ (i.e., since $|U|\leq 3$, either $u$ and $v$ are in the same part in $\mathcal{S}_1$ or $\mathcal{S}_2$, or there exists $w\in U\setminus\{u,v\}$ such that $u$ and $w$ are in the same part in $\mathcal{S}_i$ and $w$ and $v$ in $\mathcal{S}_{3-i}$ for some $i \in \{1,2\}$).
	\end{itemize}
We now analyze the running time and prove the correctness of the algorithm.

\medskip
\noindent{\bf Time Complexity.} Notice that there are only $2^{\OO(k/r)}\cdot (r+n+\log k)^{\OO(1)}$ entries $\mathsf{N}[U,C',f_{\mathrm{deg}},g_{\mathrm{edg}},{\cal S}]$ in $\mathsf{N}$---indeed, there are $1+n+{n\choose 2}+{n \choose 3}=n^{\OO(1)}$ choices for $U$, $2^{|C|}\leq 2^{\mathsf{b}(k/r)}=2^{\OO(k/r)}$ choices for $C'$, at most $(2r+1)^3$ choices for $f_{\mathrm{deg}}$ given $U$, at most $(2r+1)^{{3 \choose 2}}=r^{\OO(1)}$ choices for $g_{\mathrm{edg}}$ given $U$, and $\OO(1)$ choices for $\cal S$ given $U$. We now claim that each entry in $\mathsf{N}$ is computed in time $2^{\OO(k/r)}\cdot (r+n+\log k)^{\OO(1)}$.  Indeed, each entry in the basis is computed in $\OO(1)$ time. For the step, the Forget case is computed in time $\OO(n\cdot r\cdot r^2)$ as there are $\OO(n)$ choices for $v$, $\OO(r)$ choices for $\widehat{f}(v)$, $\OO(r^2)$ choices how $g_{\mathrm{edg}}$ is extended by $\widehat{g}$, and $\OO(1)$ choices for $\cal S$. The Introduce case is computed in time $\OO(1)$ as there are $\OO(1)$ choices for $\widetilde{v}$, only one choice for $\widehat{f}$ due to Condition \ref{item:fhatIntroduce},  and $\OO(1)$ choices for $\widehat{\cal S}$. The Join case is computed in time $2^{\OO(k/r)}\cdot r^3$ as there are at most $2^{|C|}=\leq 2^{\mathsf{b}(k/r)}=2^{\OO(k/r)}$ choices for $C_1$ and $C_2$, at most $\OO(r^3)$ choices for $f_1$ and only one choice for $f_2$ given $f_1$, and at most $\OO(1)$ choices for ${\cal S}_1$ and ${\cal S}_2$. Therefore, the total running time is $2^{\OO(k/r)}\cdot (r+n+\log k)^{\OO(1)}$. 

\medskip
\noindent{\bf Correctness.}
By using Theorem \ref{prop:niceTreeDecomp}, the correctness of the computation can be proved by standard induction on the structure of the recursion. For the sake of completeness, we give the proof that each entry $\mathsf{N}[U,C',f_{\mathrm{deg}},g_{\mathrm{edg}},{\cal S}]$ stores the integer $M$ in the definition of the purpose of this entry, which we denote by $M_{U,C',f_{\mathrm{deg}},g_{\mathrm{edg}},{\cal S}}$. Let us denote the collection of pairs $(H,(T,\beta))$ that satisfy the six conditions except Condition \ref{item:purpose5} given in the definition of the purpose of this entry by ${\cal H}_{U,C',f_{\mathrm{deg}},g_{\mathrm{edg}},{\cal S}}$.  The induction is on $|C'|$ where the basis also includes the other cases in the Basis of our algorithm.

In the Basis, when $U=\emptyset$, due to Condition \ref{item:purpose3} (specifically, that every connected component of $H$ contains a vertex from $U$), pairs in ${\cal H}_{U,C',f_{\mathrm{deg}},g_{\mathrm{edg}},{\cal S}}$ must correspond to empty graphs. Thus, if $C'\neq\emptyset$, there is no triple in this collection (due to Condition \ref{item:purpose6}), and otherwise there is exactly one where the number of edges is $0$. For the same reason (Conditions \ref{item:purpose3} and \ref{item:purpose6}), when $C'=\emptyset$, it must be that $U=\emptyset$ so that ${\cal H}_{U,C',f_{\mathrm{deg}},g_{\mathrm{edg}},{\cal S}}$ will not be empty, and when the colors of the vertices in $U$ are not present in $C'$, ${\cal H}_{U,C',f_{\mathrm{deg}},g_{\mathrm{edg}},{\cal S}}$ is empty. This completed the correctness of the Basis.

Now, let us prove correctness for $\mathsf{N}[U,C',f_{\mathrm{deg}},g_{\mathrm{edg}},{\cal S}]$ that does not belong to the Basis, under the assumption of correctness for all entries with a second argument (color set) of size smaller than $|C'|$. Let ${\cal H}^{\mathsf{F}}_{U,C',f_{\mathrm{deg}},g_{\mathrm{edg}},{\cal S}}$, ${\cal H}^{\mathsf{I}}_{U,C',f_{\mathrm{deg}},g_{\mathrm{edg}},{\cal S}}$ and ${\cal H}^{\mathsf{J}}_{U,C',f_{\mathrm{deg}},g_{\mathrm{edg}},{\cal S}}$ denote the subcollections of pairs in ${\cal H}_{U,C',f_{\mathrm{deg}},g_{\mathrm{edg}},{\cal S}}$ where the root of the tree decomposition is a forget node, an introduce node and a join node, respectively. Because $U\neq \emptyset$, no pair in ${\cal H}_{U,C',f_{\mathrm{deg}},g_{\mathrm{edg}},{\cal S}}$ has a root node that is a leaf node. From this, we get that ${\cal H}_{U,C',f_{\mathrm{deg}},g_{\mathrm{edg}},{\cal S}}={\cal H}^{\mathsf{F}}_{U,C',f_{\mathrm{deg}},g_{\mathrm{edg}},{\cal S}}\cup {\cal H}^{\mathsf{I}}_{U,C',f_{\mathrm{deg}},g_{\mathrm{edg}},{\cal S}}\cup {\cal H}^{\mathsf{J}}_{U,C',f_{\mathrm{deg}},g_{\mathrm{edg}},{\cal S}}$. Let $M^{\mathsf{F}}_{U,C',f_{\mathrm{deg}},g_{\mathrm{edg}},{\cal S}}$, $M^{\mathsf{F}}_{U,C',f_{\mathrm{deg}},g_{\mathrm{edg}},{\cal S}}$ and $M^{\mathsf{J}}_{U,C',f_{\mathrm{deg}},g_{\mathrm{edg}},{\cal S}}$ denote the maximum number of edges in a graph $H$ of a pair in the subcollections ${\cal H}^{\mathsf{F}}_{U,C',f_{\mathrm{deg}},g_{\mathrm{edg}},{\cal S}}$, ${\cal H}^{\mathsf{I}}_{U,C',f_{\mathrm{deg}},g_{\mathrm{edg}},{\cal S}}$ and ${\cal H}^{\mathsf{J}}_{U,C',f_{\mathrm{deg}},g_{\mathrm{edg}},{\cal S}}$. Then, due to Condition \ref{item:purpose5} and since ${\cal H}_{U,C',f_{\mathrm{deg}},g_{\mathrm{edg}},{\cal S}}={\cal H}^{\mathsf{F}}_{U,C',f_{\mathrm{deg}},g_{\mathrm{edg}},{\cal S}}\cup {\cal H}^{\mathsf{I}}_{U,C',f_{\mathrm{deg}},g_{\mathrm{edg}},{\cal S}}\cup {\cal H}^{\mathsf{J}}_{U,C',f_{\mathrm{deg}},g_{\mathrm{edg}},{\cal S}}$, we derive that $M_{U,C',f_{\mathrm{deg}},g_{\mathrm{edg}},{\cal S}}=\max\{M^{\mathsf{F}}_{U,C',f_{\mathrm{deg}},g_{\mathrm{edg}},{\cal S}}, M^{\mathsf{F}}_{U,C',f_{\mathrm{deg}},g_{\mathrm{edg}},{\cal S}},M^{\mathsf{J}}_{U,C',f_{\mathrm{deg}},g_{\mathrm{edg}},{\cal S}}\}$. Therefore, because the algorithm computes $\mathsf{N}[U,C',f_{\mathrm{deg}},g_{\mathrm{edg}},{\cal S}]=\max\{F,I,J\}$, it suffices to prove that $F=M^{\mathsf{F}}_{U,C',f_{\mathrm{deg}},g_{\mathrm{edg}},{\cal S}}$, $I=M^{\mathsf{I}}_{U,C',f_{\mathrm{deg}},g_{\mathrm{edg}},{\cal S}}$ and $J=M^{\mathsf{J}}_{U,C',f_{\mathrm{deg}},g_{\mathrm{edg}},{\cal S}}$. We consider each of these cases separately below.

\medskip
\noindent{\bf Proof for Forget Case:} In case $|U|=3$, then ${\cal H}^{\mathsf{F}}_{U,C',f_{\mathrm{deg}},g_{\mathrm{edg}},{\cal S}}=\emptyset$ because any pair $(H,(T,\beta))$ where the root is a forget node whose bag is of size $3$ must have as a child a node whose bag is of size $4$, which implies width at least $3$, and hence cannot belong to ${\cal H}^{\mathsf{F}}_{U,C',f_{\mathrm{deg}},g_{\mathrm{edg}},{\cal S}}$ (due to Condition \ref{item:purpose4}). Thus, in this case, $M^{\mathsf{F}}_{U,C',f_{\mathrm{deg}},g_{\mathrm{edg}},{\cal S}}=-\infty$, which is the value assigned to $F$.

Now, suppose that $|U|\leq 2$. By the computation performed by the algorithm and the inductive hypothesis, it suffices to prove that $M^{\mathsf{F}}_{U,C',f_{\mathrm{deg}},g_{\mathrm{edg}},{\cal S}}$ is equal to the maximum among the integers $M_{U\cup\{v\},C',\widehat{f},\widehat{g},\widehat{\cal S}}$ over each vertex $v\in V(G)\setminus U$, each function $\widehat{f}:U\cup\{v\}\rightarrow \{0,1,\ldots,2r\}$ that extends $f_{\mathrm{deg}}$ so that $\widehat{f}(v)$ is even and does not exceed $2(r-d_i)$ where $i$ is the color of $v$, each function $\widehat{g}:\{\{u,v\}\in E(G): u,v\in U\}\rightarrow \{0,1,\ldots,2r\}$ that extends $g_{\mathrm{edg}}$, and each partition $\widehat{\cal S}$ of $U\cup\{v\}$ such that the part that contains $v$ is of size at least $2$ and if $|U|=2$, then the two vertices in $U$ are in the same part in $\cal S$ if and only if they are in the same part in $\widehat{\cal S}$. 

In one direction, to show that $M^{\mathsf{F}}_{U,C',f_{\mathrm{deg}},g_{\mathrm{edg}},{\cal S}}$ is equal or smaller than this maximum, consider a pair $(H,(T,\beta))\in {\cal H}^{\mathsf{F}}_{U,C',f_{\mathrm{deg}},g_{\mathrm{edg}},{\cal S}}$ where the number of edges of $H$ equals $M^{\mathsf{F}}_{U,C',f_{\mathrm{deg}},g_{\mathrm{edg}},{\cal S}}$.  Let $(T',\beta')$ be the tree decomposition of $H$ obtained from $(T,\beta)$ when we remove the root of $T$. Then, to complete the proof in this case, it suffices to show that $(H,(T',\beta'))\in {\cal H}_{U\cup\{v\},C',\widehat{f},\widehat{g},\widehat{\cal S}}$ for some choice of $v,\widehat{f},\widehat{g}$ and $\widehat{\cal S}$ considered in the definition of the maximum, as this will imply that the aforementioned maximum is at least as large as the number of edges of $H$. Because the root of $T$ is a forget node and $U$ is its bag (by Condition \ref{item:purpose4}), there exists exactly one vertex $v\in V(H)\subseteq V(G)$ that belongs to the top bag of $T'$ but not of $T$ (and hence not to $U$). Define $\widehat{f}$ as the extension of $f$ that assigns to $v$ its degree in $H$. Also, define $\widehat{g}$ as the extension of $g$ that assigns to the edges incident to $v$ and a vertex in $U$ their multiplicity in $H$. Lastly, let $\widehat{\cal S}$ be the partition of the top bag of $T'$ where two vertices are in the same part if and only if they are in the same connected component of $H$. It is clear that if $|U|=2$, then the two vertices in $U$ are in the same part in $\cal S$ if and only if they are in the same part in $\widehat{\cal S}$. The satisfaction of each condition among Conditions \ref{item:purpose1}--\ref{item:purpose4} and \ref{item:purpose6} by $(H,(T,\beta))$ with respect to ${\cal H}^{\mathsf{F}}_{U,C',f_{\mathrm{deg}},g_{\mathrm{edg}},{\cal S}}$ directly yields the satisfaction of the same condition by $(H,(T',\beta'))$ with respect to ${\cal H}_{U\cup\{v\},C',\widehat{f},\widehat{g},\widehat{\cal S}}$. Further, by Condition \ref{item:purpose1} satisfied by $(H,(T,\beta))$ with respect to ${\cal H}^{\mathsf{F}}_{U,C',f_{\mathrm{deg}},g_{\mathrm{edg}},{\cal S}}$, we derive that $\widehat{f}(v)$ is even and does not exceed $2(r-d_i)$ where $i$ is the color of $v$,  and by Condition \ref{item:purpose3} satisfied by $(H,(T,\beta))$ with respect to ${\cal H}^{\mathsf{F}}_{U,C',f_{\mathrm{deg}},g_{\mathrm{edg}},{\cal S}}$, we also derive that the part that contains $v$ in $\widehat{\cal S}$ is of size at least $2$. As argued above, this completes the proof of this direction.

In the other direction, to show that $M^{\mathsf{F}}_{U,C',f_{\mathrm{deg}},g_{\mathrm{edg}},{\cal S}}$ is equal or larger than the aforementioned maximum, consider an integer $M_{U\cup\{v\},C',\widehat{f},\widehat{g},\widehat{\cal S}}$ that attains this maximum, and consider a pair $(H,(T',\beta'))\in {\cal H}_{U\cup\{v\},C',\widehat{f},\widehat{g},\widehat{\cal S}}$ where the number of edges of $H$ equals $M_{U\cup\{v\},C',\widehat{f},\widehat{g},\widehat{\cal S}}$.  Define $(T,\beta)$ as the tree decomposition of $H$ obtained from $(T',\beta')$ by adding a new root node with the previous root node as its only child, whose bag is $U$. Then, the root is a forget node. Now, to complete the proof in this case, it suffices to show that $(H,(T,\beta))\in {\cal H}^{\mathsf{F}}_{U,C',f_{\mathrm{deg}},g_{\mathrm{edg}},{\cal S}}$. However, this follows immediately, since he satisfaction of each condition among Conditions \ref{item:purpose1}--\ref{item:purpose4} and \ref{item:purpose6} by $(H,(T',\beta'))$ with respect to ${\cal H}_{U\cup\{v\},C',\widehat{f},\widehat{g},\widehat{\cal S}}$ directly yields the satisfaction of the same condition by $(H,(T,\beta))$ with respect to ${\cal H}^{\mathsf{F}}_{U,C',f_{\mathrm{deg}},g_{\mathrm{edg}},{\cal S}}$.

\medskip
\noindent{\bf Proof for Introduce Case:} By the computation performed by the algorithm and the inductive hypothesis, it suffices to prove that $M^{\mathsf{I}}_{U,C',f_{\mathrm{deg}},g_{\mathrm{edg}},{\cal S}}$ is equal to the maximum among the sums $M_{U\setminus\{\widetilde{v}\},C'\setminus\{i\},\widehat{f},\widehat{g},\widehat{\cal S}}+f_{\mathrm{deg}}(\widetilde{v})$ over each vertex $\widetilde{v}\in U$, where $i$ is the color of $\widetilde{v}$, each function $\widehat{f}:U\setminus\{\widetilde{v}\}\rightarrow \{0,1,\ldots,2r\}$, where $\widehat{g}$ is the restriction of $g_{\mathrm{edg}}$ to $\{\{u,v\}\in E(G): u,v\in U\setminus\{\widetilde{v}\}\}$, and each partition $\widehat{\cal S}$ of $U\setminus\{\widetilde{v}\}$ that altogether satisfy the following requirements (we will refer to the conditions below as requirements as to distinguish between them and Conditions \ref{item:purpose1}--\ref{item:purpose6} in the definition of the meaning of a table entry).
	\begin{enumerate}
	\item\label{item:fhatIntroduce1} $f_{\mathrm{deg}}(\widetilde{v})=\sum_{u\in U: \{u,\widetilde{v}\}\in E(G)}g_{\mathrm{edg}}(\{u,\widetilde{v}\})$.
	\item\label{item:fhatIntroduce2} For each $u\in U$, $f_{\mathrm{deg}}(u)=\widehat{f}(u)+b$ where $b=\widehat{g}(\{u,\widetilde{v}\})$ if $\{u,\widetilde{v}\}\in E(G)$ and $b=0$ otherwise.
	\item\label{item:fhatIntroduce3} If $|U|=3$ and the two vertices in $U \setminus \{\widetilde{v}\}$ are in the same part in $\widehat{\cal S}$, then they are also in the same part in ${\cal S}$.
	\item\label{item:fhatIntroduce4} For each $u\in U$ with $\{u,\widetilde{v}\}\in E(G)$ and $g_{\mathrm{edg}}(\{u,\widetilde{v}\})\geq 1$, $u,\widetilde{v}$ are in the same part in ${\cal S}$.
	\item\label{item:fhatIntroduce5} 	For each $u\in U$ with either $\{u,\widetilde{v}\}\notin E(G)$ or $g_{\mathrm{edg}}(\{u,\widetilde{v}\})=0$, if $u,\widetilde{v}$ are in the same part in ${\cal S}$, then there exists $w\in U\setminus\{u,\widetilde{v}\}$ with $\{w,\widetilde{v}\}\in E(G)$, $\widehat{g}(\{w,\widetilde{v}\})\geq 1$ and such that $u,w$ are in the same part in $\widehat{\cal S}$.
	\end{enumerate}
In one direction, to show that $M^{\mathsf{I}}_{U,C',f_{\mathrm{deg}},g_{\mathrm{edg}},{\cal S}}$ is equal or smaller than this maximum, consider a pair $(H,(T,\beta))\in {\cal H}^{\mathsf{I}}_{U,C',f_{\mathrm{deg}},g_{\mathrm{edg}},{\cal S}}$ where the number of edges of $H$ equals $M^{\mathsf{I}}_{U,C',f_{\mathrm{deg}},g_{\mathrm{edg}},{\cal S}}$.  Because the root of $(T,\beta)$ is an introduce node, there exists (exactly one) vertex $\widetilde{v}$ that belongs to the bag of the root of $T$ (which equals $U$), but not to the bag of its child. Let $H'$ be the graph obtained from $H$ by removing $\widetilde{v}$ (and all edges incident to it), and let $(T',\beta')$ be the tree decomposition of $H'$ obtained from $(T,\beta)$ when we remove the root of $T$. Then, to complete the proof in this case, it suffices to show that $(H',(T',\beta'))\in {\cal H}_{U\setminus\{\widetilde{v}\},C'\setminus\{i\},\widehat{f},\widehat{g},\widehat{\cal S}}$ for some choice of $v$, $\widehat{f},\widehat{g}$ and $\widehat{\cal S}$ considered in the definition of the maximum, as this will imply that the aforementioned maximum is at least as large as the number of edges of $H'$ plus $f_{\mathrm{deg}}(\widetilde{v})$ (which is equal to the degree of $\widetilde{v}$ in $H$ because $(H,(T,\beta))\in {\cal H}^{\mathsf{I}}_{U,C',f_{\mathrm{deg}},g_{\mathrm{edg}},{\cal S}}$), which equals the number of edges of $H$. Notice that we have already chosen $\widetilde{v}$ (and that $\widetilde{v}\in U$ and hence considered by the algorithm), and that the choice of $\widetilde{g}$ is unique. Moreover, the choice of $\widehat{f}$ is unique as well due to Requirement \ref{item:fhatIntroduce2} above. Thus, we define $\widehat{f}$ and $\widehat{g}$ accordingly. We choose $\widehat{\cal S}$ as the partition of $U\setminus\{\widetilde{v}\}$ where two vertices are in the same part if and only if they are in the same connected component of $H'$. Because $(H,(T,\beta))\in {\cal H}^{\mathsf{I}}_{U,C',f_{\mathrm{deg}},g_{\mathrm{edg}},{\cal S}}$, we know that ($g_{\mathrm{edg}}$ and hence also) $\widetilde{g}$ assigns to each edge in its domain its multiplicity in $H'$, and that two vertices are in the same part in $\cal S$ if and only if they are in the same connected component in $H$. Thus, as $H'$ is obtained from $H$ by the removal of $\widetilde{v}$, we immediately get that Requirements \ref{item:fhatIntroduce3}--\ref{item:fhatIntroduce5} are satisfied by the partition $\widehat{\cal S}$ that we have defined. It remains to prove that $(H',(T',\beta'))\in {\cal H}_{U\setminus\{\widetilde{v}\},C'\setminus\{i\},\widehat{f},\widehat{g},\widehat{\cal S}}$. The satisfaction of each condition among Conditions \ref{item:purpose1}--\ref{item:purpose4} and \ref{item:purpose6} by $(H,(T,\beta))$ with respect to ${\cal H}^{\mathsf{I}}_{U,C',f_{\mathrm{deg}},g_{\mathrm{edg}},{\cal S}}$ directly yields the satisfaction of the same condition by $(H',(T',\beta'))$ with respect to ${\cal H}_{U\setminus\{\widetilde{v}\},C'\setminus\{i\},\widehat{f},\widehat{g},\widehat{\cal S}}$. This completes the proof of this direction.

In the other direction, to show that $M^{\mathsf{I}}_{U,C',f_{\mathrm{deg}},g_{\mathrm{edg}},{\cal S}}$ is equal or larger than the aforementioned maximum, consider an integer $M_{U\setminus\{\widetilde{v}\},C'\setminus\{i\},\widehat{f},\widehat{g},\widehat{\cal S}}+f_{\mathrm{deg}}(\widetilde{v})$ that attains this maximum, and consider a pair $(H',(T',\beta'))\in {\cal H}_{U\setminus\{\widetilde{v}\},C'\setminus\{i\},\widehat{f},\widehat{g},\widehat{\cal S}}$ where the number of edges of $H'$ equals $M_{U\setminus\{\widetilde{v}\},C'\setminus\{i\},\widehat{f},\widehat{g},\widehat{\cal S}}$. Define $H$ as the graph obtained from $H'$ by adding $\widetilde{v}$ to $H'$, as well as an edge from $\widetilde{v}$ to each $u\in U\setminus\widetilde{v}$ that is a neighbor of $\widetilde{v}$ in $G$ with multiplicity $g_{\mathrm{edg}}(\{\widetilde{v},u\})$. Define $(T,\beta)$ as the tree decomposition of $H$ obtained from $(T',\beta')$ by adding a new root node with the previous root node as its only child, whose bag is $U$. Then, the root is an introduce node. Now, to complete the proof in this case, it suffices to show that $(H,(T,\beta))\in {\cal H}^{\mathsf{I}}_{U,C',f_{\mathrm{deg}},g_{\mathrm{edg}},{\cal S}}$---indeed, this follows as the number of edges of $H$ equals the number of edges of $H'$ plus $\sum_{u\in U: \{u,\widetilde{v}\}\in E(G)} g_{\mathrm{edg}}(\{\widetilde{v},u\})$ where the latter sum equals $f_{\mathrm{deg}}(\widetilde{v})$ by Requirement \ref{item:fhatIntroduce1}).  The satisfaction of Condition \ref{item:purpose2} by $(H,(T,\beta))$ with respect to ${\cal H}^{\mathsf{I}}_{U,C',f_{\mathrm{deg}},g_{\mathrm{edg}},{\cal S}}$ directly follows from the satisfaction of this condition by $(H',(T',\beta'))$ with respect to ${\cal H}_{U\setminus\{\widetilde{v}\},C'\setminus\{i\},\widehat{f},\widehat{g},\widehat{\cal S}}$ and the definition of $H$ (for the edges not in $H'$). Then, because this condition is satisfied, we also get that the satisfaction of Condition \ref{item:purpose1} by $(H,(T,\beta))$ with respect to ${\cal H}^{\mathsf{I}}_{U,C',f_{\mathrm{deg}},g_{\mathrm{edg}},{\cal S}}$ follows from the satisfaction of this condition by $(H',(T',\beta'))$ with respect to ${\cal H}_{U\setminus\{\widetilde{v}\},C'\setminus\{i\},\widehat{f},\widehat{g},\widehat{\cal S}}$ and Requirements \ref{item:fhatIntroduce1} and \ref{item:fhatIntroduce2}. The satisfaction of Condition \ref{item:purpose3} by $(H,(T,\beta))$ with respect to ${\cal H}^{\mathsf{I}}_{U,C',f_{\mathrm{deg}},g_{\mathrm{edg}},{\cal S}}$ follows from the satisfaction of the second condition and this condition by $(H',(T',\beta'))$ with respect to ${\cal H}_{U\setminus\{\widetilde{v}\},C'\setminus\{i\},\widehat{f},\widehat{g},\widehat{\cal S}}$, the definition of $H$ and Requirements \ref{item:fhatIntroduce3}--\ref{item:fhatIntroduce5}. The satisfaction of Condition \ref{item:purpose4} by $(H,(T,\beta))$ with respect to ${\cal H}^{\mathsf{I}}_{U,C',f_{\mathrm{deg}},g_{\mathrm{edg}},{\cal S}}$  follows from the satisfaction of this condition by $(H',(T',\beta'))$ with respect to ${\cal H}_{U\setminus\{\widetilde{v}\},C'\setminus\{i\},\widehat{f},\widehat{g},\widehat{\cal S}}$. Lastly, the satisfaction of Condition \ref{item:purpose6} by $(H,(T,\beta))$ with respect to ${\cal H}^{\mathsf{I}}_{U,C',f_{\mathrm{deg}},g_{\mathrm{edg}},{\cal S}}$  follows from the satisfaction of the second conditions and this condition by $(H',(T',\beta'))$ with respect to ${\cal H}_{U\setminus\{\widetilde{v}\},C'\setminus\{i\},\widehat{f},\widehat{g},\widehat{\cal S}}$ and because $i$ is the color of $\widetilde{v}$.

\medskip
\noindent{\bf Proof for Join Case:} By the computation performed by the algorithm and the inductive hypothesis, it suffices to prove that $M^{\mathsf{J}}_{U,C',f_{\mathrm{deg}},g_{\mathrm{edg}},{\cal S}}$ is equal to the maximum among the sums \[M_{U,C_1,f_1,g_{\mathrm{edg}},{\cal S}_1}+M_{U,C_2,f_2,g_{\mathrm{edg}},{\cal S}_2}-\sum_{\{u,v\}\in E(G): u,v\in U}g_{\mathrm{edg}}(\{u,v\})\]
	over each $C_1\subseteq C'$ that contains $X$ being the set of colors of the vertices in $U$ as well as at least one other color, and which is not equal to $C'$, where $C_2 = C'\setminus (C_1\setminus X)$, each pair of functions $f_1:U\rightarrow \{0,1,\ldots,2r\}$ and $f_2:U\rightarrow \{0,1,\ldots,2r\}$ such that $f_{\mathrm{deg}}(v)=f_1(v)+f_2(v)-\sum_{\{u,v\}\in E(G): u\in U}g_{\mathrm{edg}}(\{u,v\})$ for every $v\in U$, and each pair of partitions ${\cal S}_1$ and ${\cal S}_2$ of $U$ such that for all $u,v\in U$, $u,v$ are in the same part in ${\cal S}$ if and only if $u,v$ are in the same part in the finest common coarsening of ${\cal S}_1$ and ${\cal S}_2$ (i.e., since $|U|\leq 3$, either $u$ and $v$ are in the same part in $\mathcal{S}_1$ or $\mathcal{S}_2$, or there exists $w\in U\setminus\{u,v\}$ such that $u$ and $w$ are in the same part in $\mathcal{S}_i$ and $w$ and $v$ in $\mathcal{S}_{3-i}$ for some $i \in \{1,2\}$).

In one direction, to show that $M^{\mathsf{J}}_{U,C',f_{\mathrm{deg}},g_{\mathrm{edg}},{\cal S}}$ is equal or smaller than this maximum, consider a pair $(H,(T,\beta))\in {\cal H}^{\mathsf{J}}_{U,C',f_{\mathrm{deg}},g_{\mathrm{edg}},{\cal S}}$ where the number of edges of $H$ equals $M^{\mathsf{J}}_{U,C',f_{\mathrm{deg}},g_{\mathrm{edg}},{\cal S}}$.  Because the root $r$ of $(T,\beta)$ is a join node, it has exactly two children, $r_1$ and $r_2$, having the same bag as $r$ (which is $U$). For $i\in\{1,2\}$, let $H_i$ be the subgraph of $H$ induced by the union of bags of the descendants of $r_i$ (along with $r_i$ itself), and let $(T_i,\beta_i)$ be the tree decomposition of $H_i$ that is the restriction of $(T,\beta)$ induced by all nodes (and their bags) that are descendants of $r_i$ (along with $r_i$ itself). Then, we claim that to complete the proof in this case, it suffices to show that $(H_1,(T_1,\beta_1))\in {\cal H}_{U,C_1,f_1,g_{\mathrm{edg}},{\cal S}_1}$ and $(H_2,(T_2,\beta_2))\in {\cal H}_{U,C_2,f_2,g_{\mathrm{edg}},{\cal S}_2}$ for some choice of $C_1,C_2,f_1,f_2,{\cal S}_1$ and ${\cal S}_2$ considered in the definition of the maximum. Indeed, this will imply that the aforementioned maximum is at least as large as the sum of the number of edges in $H_1$ plus the number of edges in $H_2$ minus $\sum_{\{u,v\}\in E(G): u,v\in U}g_{\mathrm{edg}}(\{u,v\})$, which is precisely the number of edges in $H$ due to Condition \ref{item:purpose2} and as each edge in $H$ appears in at least one among $H_1$ and $H_2$ where only the edges between the vertices in $U$ appear in both. For each $i\in\{1,2\}$, we choose $C_i$ as the set of colors of the vertices in $H_i$, $f_i$ the function that assigns to each vertex in $U$ its degree in $H_i$, and ${\cal S}_i$ the partition of $U$ where two vertices are in the same part if and only if they are in the same connected component of $H_i$. It is clear that for each $i\in\{1,2\}$, due to the definition of $C_i,f_i$ and ${\cal S}_i$,  the satisfaction of Conditions \ref{item:purpose1}--\ref{item:purpose4} and \ref{item:purpose6} by $(H,(T,\beta))$ with respect to ${\cal H}^{\mathsf{J}}_{U,C',f_{\mathrm{deg}},g_{\mathrm{edg}},{\cal S}}$ implies the satisfaction of these conditions by $(H_i,(T_i,\beta_i))$ with respect to ${\cal H}_{U,C_i,f_i,g_{\mathrm{edg}},{\cal S}_i}$. It remains to show that our choice of $C_1,C_2,f_1,f_2,{\cal S}_1$ and ${\cal S}_2$ is considered in the definition of the maximum. First, notice that $C_1\subseteq C'$ and $C_2=C'\setminus (C_1\setminus X)$ since $C_1$ is the set of colors used in $H_1$, and $C_2$ is the set of colors used in $H_2$, where the union of $H_1$ and $H_2$ yields $H$ (whose set of used colors is $C'$) where the set of common vertices is precisely $X$ and hence the set of their colors is precisely the set of common colors (because $H$ is colorful). Next, we need to argue that $f_{\mathrm{deg}}(v)=f_1(v)+f_2(v)-\sum_{\{u,v\}\in E(G): u\in U}g_{\mathrm{edg}}(\{u,v\})$ for every $v\in U$ (the fact that both assign values upper bounded by $2r$ follows from the fact that $H$, and hence also $H_1$ and $H_2$, have maximum degree upper bounded by $2r$). To this end, notice that $f_{\mathrm{deg}}(v)$ is the degree of $v$ in $H$, and that each edge incident to $v$ in $H$ appears in at least one of $H_1$ and $H_2$ where the only edges appearing in both are those between $v$ and vertices in $U$. Thus, the equality follows from the definition of $f_1$ and $f_2$. Lastly, we need to argue that for all $u,v\in U$, $u,v$ are in the same part in ${\cal S}$ if and only if $u,v$ are in the same part in the finest common coarsening of ${\cal S}_1$ and ${\cal S}_2$. However, this directly follows from the definition of ${\cal S}_1$ and ${\cal S}_2$ and since any two vertices in $U$ are in the same part in ${\cal S}$ if and only if they are in the same connected component of $H$.

In the other direction, to show that $M^{\mathsf{J}}_{U,C',f_{\mathrm{deg}},g_{\mathrm{edg}},{\cal S}}$ is equal or larger than the aforementioned maximum, consider a sum \[M_{U,C_1,f_1,g_{\mathrm{edg}},{\cal S}_1}+M_{U,C_2,f_2,g_{\mathrm{edg}},{\cal S}_2}-\sum_{\{u,v\}\in E(G): u,v\in U}g_{\mathrm{edg}}(\{u,v\})\] that attains this maximum, and for each $i\in\{1,2\}$, consider a pair $(H_i,(T_i,\beta_i))\in {\cal H}_{U,C_i,f_i,g_{\mathrm{edg}},{\cal S}_i}$ where the number of edges of $H_i$ equals $M_{U,C_i,f_i,g_{\mathrm{edg}},{\cal S}_i}$. Define $H$ as the graph whose vertex set and edge set (with multiplicities) are the union of the vertex sets and edge sets of $H_1$ and $H_2$, respectively, where the edge multiplicites between vertices in $U$ are only taken from, say, $H_1$. Define $(T,\beta)$ as the tree decomposition of $H$ obtained by introducing a new node $r$ as a root whose bag is $U$, and attaching it as the parent of the root of $T_1$ and the root of $T_2$ and assigning bags accordingly as done by $\beta_1$ and $\beta_2$. Then, for each $i\in\{1,2\}$, the root of $T_i$ is assigned $U$ as its bag (because $(H_i,(T_i,\beta_i))\in {\cal H}_{U,C_i,f_i,g_{\mathrm{edg}},{\cal S}_i}$), and hence the root of $(T,\beta)$ is a join node. Now, we argue that to complete the proof in this case, it suffices to show that $(H,(T,\beta))\in {\cal H}^{\mathsf{J}}_{U,C',f_{\mathrm{deg}},g_{\mathrm{edg}},{\cal S}}$. To see this, observe that the only common vertices of $H_1$ and $H_2$ are  those in $U$ because $H_1$ and $H_2$ are colorful, $C_1$  is the set of colors of $H_1$ and $C_2$ is the set of colors of $C_2$ (due to membership in ${\cal H}_{U,C_i,f_i,g_{\mathrm{edg}},{\cal S}_i}$ for the corresponding $i\in\{1,2\}$), while $C_2=C'\setminus (C_1\setminus X)$. Moreover, the edges and their multiplicities between vertices in $U$ are the same in $H_1$ and $H_2$ and equal to the values specified by $g_{\mathrm{edg}}$ (again, due to this membership). Thus, the number of edges of $H$ is the number of edges of $H_1$ plus the number of edges of $H_2$ minus $\sum_{\{u,v\}\in E(G): u,v\in U}g_{\mathrm{edg}}(\{u,v\})$. Finally, notice that the satisfaction of Conditions \ref{item:purpose1} by $(H,(T,\beta))$ with respect to ${\cal H}^{\mathsf{J}}_{U,C',f_{\mathrm{deg}},g_{\mathrm{edg}},{\cal S}}$ directly follows from the satisfaction of these conditions by $(H_i,(T_i,\beta_i))$ with respect to ${\cal H}_{U,C_i,f_i,g_{\mathrm{edg}},{\cal S}_i}$ for each $i\in\{1,2\}$ and due to the restrictions on the choice of $C_1,C_2,f_1,f_2,{\cal S}_1$ and ${\cal S}_2$.
\end{proof}

We are now ready to solve the \eulerPart\ problem.

\begin{lemma}\label{lem:eulerPartProblem}
\eulerPart\ can be solved in time $2^{\OO(k/r)}\cdot (r+n+\log k)^{\OO(1)}$.
\end{lemma}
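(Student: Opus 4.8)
The plan is a \emph{two-level} dynamic program (matching the section heading): the inner level is the routine of Lemma~\ref{lem:tw2CompDP}, handling a single connected component of the treewidth-$2$ part, and the outer level is a DP along the walk that attaches such components onto it. I would first reformulate \eulerPart. Since a walk $W$ realizing $\overline{\bf d}$ makes exactly $\sum_i d_i$ vertex visits, it has exactly $\sum_i d_i-1$ edges, so Condition~\ref{item:degreeSeqGood4} of a good pair is equivalent to $|E(H)|=m^\star:=k-\sum_i d_i$. As $H$ is colorful, its connected components $H_1,\dots,H_p$ have pairwise disjoint colour sets, hence pairwise disjoint vertex sets, and each touches $W$; conversely such a family reassembles into a valid $H$ by disjoint union. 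Therefore \eulerPart\ is a \yes-instance iff $G$ admits a walk $W$ with occurrence sequence $\overline{\bf d}$ together with vertex-disjoint colorful connected treewidth-$2$ even multigraphs $H_1,\dots,H_p$, each $H_j$ containing a designated root $v_j$ that $W$ visits, with the degree in $H_j$ of the vertex coloured $i$ at most $2(r-d_i)$, and with $\sum_j|E(H_j)|=m^\star$.

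For the inner level I would run Lemma~\ref{lem:tw2CompDP} from every $v^\star\in V(G)$ with colour budget $C$ and degree caps $2(r-d_i)$; since that algorithm is a standard DP over a hidden tree decomposition, it extends verbatim to decide, for each $(v^\star,C,M)$, whether some colorful connected treewidth-$2$ even multigraph contains $v^\star$, uses only colours in $C$, respects the caps, and has exactly $M$ edges. Store this as a Boolean table $\mathsf{T}[v^\star,C,M]$. Indexing additionally by $M\le m^\star$ costs only a $\mathrm{poly}(k)$ factor, which is harmless: we may assume $k>r$ (if $k\le r$ then every $k$-visit walk is automatically $r$-simple, so \undiPathR\ is trivial), and then $k=(k/r)\cdot r\le 2^{\OO(k/r)}\cdot r$.

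The outer DP is a reachability computation on states $(v,\overline{\bf d}',C_H,m_H)$: $v\in V(G)$ is the current end of the walk prefix; $\overline{\bf d}'=(d_1',\dots,d_{\mathsf{b}(k/r)}')$ with $0\le d_i'\le d_i$ is the occurrence sequence realized so far (there are at most $\prod_i(d_i+1)\le 2^{\sum_i d_i}=2^{\OO(k/r)}$ of these); $C_H\subseteq\{1,\dots,\mathsf{b}(k/r)\}$ is the set of colours used by the components attached so far; and $m_H\le m^\star$ counts their edges. From a state one may either step to a neighbour $u$ of $v$ in $G$, incrementing $d_c'$ where $c$ is the colour of $u$, provided $d_c'<d_c$; or attach a component at $v$, choosing $C$ with $C\cap C_H=\emptyset$, with the colour of $v$ in $C$, and with $M\ge 1$ and $\mathsf{T}[v,C,M]$ true, moving to $(v,\overline{\bf d}',C_H\cup C,m_H+M)$. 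Starting, for every choice of start vertex $s$, from the state recording the single visit to the colour of $s$, the answer is \yes\ iff some state $(t,\overline{\bf d},C_H,m^\star)$ is reachable. The state space has size $2^{\OO(k/r)}(r+n+\log k)^{\OO(1)}$, each state has at most that many outgoing transitions, and the state graph is a DAG ordered by $\sum_i d_i'+|C_H|$, so reachability is computed within the stated bound; combining with the inner level gives the claim, and correctness in both directions follows from the reformulation and Theorem~\ref{prop:eulerUndirected}, exactly as in Lemma~\ref{lem:equivalenceWithDegSeq}.

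The step I expect to be the crux is handling the connectivity requirement without tracking a partition: a component of $H$ may meet $W$ at several distinct vertices, so one cannot afford to remember which walk vertices have been linked up so far. The way around it is the reformulation's observation that it suffices to root each component at a single portal and attach it there during the walk DP; one then has to check that this loses nothing, which is exactly where the precise value $2(r-d_i)$ of the degree cap is used — it is calibrated so that in $R=W_{\mathrm{simple}}\cup H$ every vertex has degree at most $2r$ no matter how the identities of the colour-$i$ vertices used by $W$ and by $H$ overlap. Consequently the outer DP only ever needs the colours of walk vertices, the colours consumed by attachments, and the running edge count, all of which fit into $2^{\OO(k/r)}$ states.
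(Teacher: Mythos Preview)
Your proposal is correct and takes essentially the same approach as the paper: a two-level DP where the inner level is Lemma~\ref{lem:tw2CompDP} (attaching a single colorful treewidth-$2$ even component at a designated root) and the outer level is a walk DP whose state records the current endpoint, the partial occurrence vector $\overline{\bf d}'\le \overline{\bf d}$, and the set of colours already consumed by attached components. Your identification of the crux---that connectivity between $W$ and $H$ need only be enforced via a single root per component, so no partition of walk vertices need be tracked---is exactly the point.

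The only real difference is bookkeeping. The paper stores, as the DP \emph{value}, the maximum achievable $|E(W)|+|E(H)|$ and accepts when this reaches $k-1$; you instead carry $m_H$ in the \emph{state} and require exact equality $m_H=m^\star$, which forces you to extend Lemma~\ref{lem:tw2CompDP} to a Boolean table indexed additionally by $M$. Both are fine within the stated budget: your observation that $\mathrm{poly}(k)\le 2^{\OO(k/r)}\cdot r^{\OO(1)}$ (since $k=(k/r)\cdot r$ and $k/r\le 2^{k/r}$) correctly absorbs the extra factor. Your exact-count version is arguably cleaner with respect to the problem as formally defined (Condition~\ref{item:degreeSeqGood4} asks for equality), whereas the paper's max-value version is slightly simpler to implement; for the downstream application via Lemma~\ref{lem:equivalenceWithDegSeq} the two are equivalent. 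Your bound $\prod_i(d_i+1)\le 2^{\sum_i d_i}=2^{\OO(k/r)}$ on the number of partial occurrence vectors is the same one implicitly used in the paper.
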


\begin{proof}
Let $\cal A$ denote the algorithm in Lemma \ref{lem:tw2CompDP}. We now describe a DP procedure to solve \eulerPart. To this end, we let $(G,k,r,\overline{\bf d}=(d_1,\ldots,d_{\mathsf{b}(k/r)}))$ be an instance of \eulerPart. We have a DP table $\mathsf{N}$ with an entry $\mathsf{N}[v,\overline{\bf d}',C]$ for every vertex $v\in V(G)$, occurrence sequence $\overline{\bf d}'=(d'_1,\ldots,d'_{\mathsf{b}(k/r)})$ such that $d_i'\in\{0,\ldots,d_i\}$ for every $i\in\{1,\ldots,\mathsf{b}(k/r)\}$, and set of colors $C\subseteq\{1,\ldots,\mathsf{b}(k/r)\}$.

The purpose of each entry $\mathsf{N}[v,\overline{\bf d}',C]$ is to store the largest integer $M$ such that there exists an $M$-good pair  $(W,H)$ (i.e.,~$M=|E(W)|+|E(H)|$) which complies with $(\overline{\bf d},\overline{\bf d}')$, where $v$ is an end-vertex of $W$, and where the set of colors of vertices in $H$ is a subset of $C$. The order of computation is non-decreasing with respect to $\sum_{i=1}^{\mathsf{b}(k/r)}d'_i$.

In the DP basis, we consider every entry $\mathsf{N}[v,\overline{\bf d}',C]$ that satisfies $\sum_{i=1}^{\mathsf{b}(k/r)}d'_i\leq 1$, and let $c$ be the color of $v$. If $d_c'\neq 1$, then $\mathsf{N}[v,\overline{\bf d}',C]=-\infty$. Else $\mathsf{N}[v,\overline{\bf d}',C]$ is the maximum of $0$ and the output of algorithm $\cal A$ when called with input $(G,C,v,\overline{\bf d})$.

For the DP step, we consider every entry $\mathsf{N}[v,\overline{\bf d}',C]$ that satisfies $\sum_{i=1}^{\mathsf{b}(k/r)}d'_i\geq 2$. Let $c$ be the color of $v$. If $d'_c=0$, then $\mathsf{N}[v,\overline{\bf d}',C]=-\infty$. Now, suppose that $d'_c\geq 1$. Denote $\overline{\bf d}''=(d''_1,\ldots,d''_{\mathsf{b}(k/r)})$ where $d''_c=d'_c-1$ and $d''_i=d'_i$ for all $i\in\{1,\ldots,\mathsf{b}(k/r)\}\setminus\{c\}$. In addition, for every subset $C'\subseteq C$, let $A_{C'}$ be the output of algorithm $\cal A$ when called with input $(G,C',v,\overline{\bf d})$.
 Then, 
\[\mathsf{N}[v,\overline{\bf d}',C]=\max_{u:\{u,v\}\in E(G)}\left(\max\left\{1+\mathsf{N}[u,\overline{\bf d}'',C],\max_{C'\subseteq C}(A_{C'}+1+\mathsf{N}[u,\overline{\bf d}'',C\setminus C'])\right\}\right).\]

After the DP computation is complete, we return \yes\ if and only if there exists an entry $\mathsf{N}[v,\overline{\bf d},C]$ for some $v\in V(G)$ and $C\subseteq\{1,\ldots,\mathsf{b}(k/r)\}$ that stores an integer that is at least $k-1.$

\medskip
\noindent{\bf Time Complexity.} The table $\mathsf{N}$ has $2^{\OO(k/r)}n$ entries since there are $n$ choices of $v$, $2^{\OO(k/r)}$ choices for $\overline{\bf d}'$ (by Lemma \ref{lem:degreeSeqNumber}), and $2^{|C|}=2^{\mathrm{b}(k/r)}=2^{\OO(k/r)}$ choices for $C'$. The computation of each entry entails at most $n2^{\OO(k/r)}$ calls to the algorithm in Lemma \ref{lem:tw2CompDP}, and each call takes time $2^{\OO(k/r)}\cdot (r+n+\log k)^{\OO(1)}$. Thus, the total running time of our algorithm is $2^{\OO(k/r)}\cdot (r+n+\log k)^{\OO(1)}$. 

\medskip
\noindent{\bf Correctness.} The correctness of our algorithm can be verified by a simple induction on $\sum_{i=1}^{\mathsf{b}(k/r)}d'_i$. For the sake of completeness, we give the details. For any entry $\mathsf{N}[v,\overline{\bf d}',C]$, let ${\cal P}_{v,\overline{\bf d}',C}$ be the collection of all good pairs  $(W,H)$ that comply with $\overline{\bf d}'$, where $v$ is an end-vertex of $W$, and where the set of colors of vertices in $H$ is $C$, and let $M_{v,\overline{\bf d}',C}$ denote the maximum number of edges (with multiplicities) in a pair in ${\cal P}_{v,\overline{\bf d}',C}$. Then, we need to prove that for each entry, $\mathsf{N}[v,\overline{\bf d}',C]=M_{v,\overline{\bf d}',C}$.

In the basis, consider an entry $\mathsf{N}[v,\overline{\bf d}',C]$ such that $\sum_{i=1}^{\mathsf{b}(k/r)}d'_i$. First, notice that ${\cal P}_{v,\overline{\bf d}',C}$ is empty when $d'_c\neq 1$ where $c$ is the colors of $v$, and hence the assignment of $-\infty$ is correct. Now, suppose that $d'_c= 1$. On the one hand, any $q$-good pair  $(W,H)$ (for any $q$) that complies with $(\overline{\bf d},\overline{\bf d}')$ must be such that $W$ consists of a single vertex. Adding the demand that the last (and only) vertex of $W$ is $v$ and $H$ uses exactly the colors in $C$, we get that $(W,H)\in {\cal P}_{v,\overline{\bf d}',C}$ must be such that $W$ consists only of $v$ and that the set of colors of $H$ must be$C$ (and hence $H$ satisfies Constraint \ref{item:tw2CompDP6} in Lemma \ref{lem:tw2CompDP}). Further, $H$ must be either empty or contain a vertex from $W$ (by the second requirement in the definition of a $q$-good pair); in the latter case, $H$ must contain $v$ (and hence satisfy Constraint \ref{item:tw2CompDP4} in Lemma \ref{lem:tw2CompDP}) as well as satisfy Constraints \ref{item:tw2CompDP1}, \ref{item:tw2CompDP3} and \ref{item:tw2CompDP6} in Lemma \ref{lem:tw2CompDP} (because $(W,H)$ is a $q$-good pair that complies with $\overline{\bf d}'$). Thus, when $H$ is not empty, the output of the algorithm in Lemma \ref{lem:tw2CompDP} (due to Constraint \ref{item:tw2CompDP5}) is at least as large as $|E(H)|$. From this, we get that $\mathsf{N}[v,\overline{\bf d}',C]\geq M_{v,\overline{\bf d}',C}$. On the other hand, notice that any pair $(W,H)$ where $W$ consists only of $v$ and $H$ is either empty or any multigraph that can attain the maximum returned by Lemma \ref{lem:tw2CompDP} belongs to ${\cal P}_{v,\overline{\bf d}',C}$ due to the constraints in this lemma, and therefore we also have that $\mathsf{N}[v,\overline{\bf d}',C]\leq M_{v,\overline{\bf d}',C}$. Thus, the basis is correct.

Now, we prove correctness for an entry $\mathsf{N}[v,\overline{\bf d}',C]$ such that $\sum_{i=1}^{\mathsf{b}(k/r)}d'_i\geq 2$ under the assumption of correctness for all  entries $\mathsf{N}[v'',\overline{\bf d}'',C'']$ where $\overline{\bf d}''$ is such that $\sum_{i=1}^{\mathsf{b}(k/r)}d''_i<\sum_{i=1}^{\mathsf{b}(k/r)}d'_i$. Let $c$ be the color of $v$, and let $\overline{\bf d}''$ be as defined by the algorithm when it computes $\mathsf{N}[v,\overline{\bf d}',C]$, that is, $\overline{\bf d}''=(d''_1,\ldots,d''_{\mathsf{b}(k/r)})$ where $d''_c=d'_c-1$ and $d''_i=d'_i$ for all $i\in\{1,\ldots,\mathsf{b}(k/r)\}\setminus\{c\}$. By the inductive hypothesis and the formula used by the algorithm, we need to prove that 
\[M_{v,\overline{\bf d}',C}=\max_{u:\{u,v\}\in E(G)}\left(\max\left\{1+M_{u,\overline{\bf d}'',C},\max_{C'\subseteq C}(A_{C'}+1+M_{u,\overline{\bf d}'',C\setminus C'})\right\}\right).\]
Let us denote the left hand side above by $Q$.

In one direction, to prove that $M_{v,\overline{\bf d}',C}\leq Q$, consider a pair $(W,H)\in {\cal P}_{v,\overline{\bf d}',C}$ whose number of edges (including multiplicities) is $M_{v,\overline{\bf d}',C}$. Let $W''$ denote $W$ without its last vertex (and edge) occurrence, and let $u$ denote the last vertex of $W'$. Then, $\{u,v\}\in E(G)$. 
In case $H$ does not contain $v$, we immediately get that $(W'',H)\in {\cal P}_{u,\overline{\bf d}'',C}$, which means that $M_{u,\overline{\bf d}'',C}\geq |E(W'')|+|E(H)|$, and therefore $Q\geq 1+M_{u,\overline{\bf d}'',C}\geq 1+|E(W'')|+|E(H)| = |E(W)|+|E(H)|= M_{v,\overline{\bf d}',C}$. Next, suppose that $H$ has some connected component $H'$ that contains $v$, and let $H''$ denote the graph $H$ without $H'$. Let $C'$ be the set of colors used by $H_v$, and denote $C''=C\setminus C'$. Then, because $H$ is colorful, the set of colors used by vertices in $H''$ is precisely $C''$. First, notice that $(W'',H'')\in {\cal P}_{u,\overline{\bf d}'',C''}$, which means that $M_{u,\overline{\bf d}'',C''}\geq |E(W'')|+|E(H'')|=|E(W)|-1+|E(H)|-|E(H_v)|=M_{v,\overline{\bf d}',C}-1-|E(H_v)|$, and therefore $Q\geq A_{C'} + 1 + M_{u,\overline{\bf d}'',C''}\geq M_{v,\overline{\bf d}',C}+A_{C'}-|E(H_v)|$. Hence, it remains to prove that $A_{C'}\geq |E(H_v)|$. To this end, it suffices to prove that $H_v$ satisfies the conditions in Lemma  \ref{lem:tw2CompDP} with respect to $v$, $C_v$ and $\overline{\bf d}'$. The satisfaction of all of these conditions directly follows because $(W,H)$ is a $q$-good pair that complies with $(\overline{\bf d},\overline{\bf d}')$.

In the other direction, to prove that $M_{v,\overline{\bf d}',C}\geq Q$, we consider two cases. In the first case, suppose that the maximum with respect to $Q$ is attained by $1+M_{u,\overline{\bf d}'',C}$ for some neighbor $u$ of $v$, and consider a pair $(W'',H)\in {\cal P}_{u,\overline{\bf d}'',C}$ whose number of edges (including multiplicities) is $M_{u,\overline{\bf d}'',C}$. Define $W$ as the walk obtained from $W''$ by visiting $v$ after $u$ at the end. Because $(W'',H)\in {\cal P}_{u,\overline{\bf d}'',C}$, we  derive that $(W,H)\in {\cal P}_{u,\overline{\bf d}',C}$. Because $|E(W)|=|E(W'')|+1$, this means that $M_{v,\overline{\bf d}',C}\geq |E(W)|+|E(H)| = 1+(|E(W'')|+|E(H)|)= 1+M_{u,\overline{\bf d}'',C}=Q$.
In the second case, suppose that the maximum with respect to $Q$ is attained by $A_{C'}+1+M_{u,\overline{\bf d}'',C\setminus C'}$ for some neighbor $u$ of $v$ and subset $C'\subseteq C$. Consider a pair $(W'',H'')\in {\cal P}_{u,\overline{\bf d}'',C\setminus C'}$ whose number of edges (including multiplicities) is $M_{u,\overline{\bf d}'',C\setminus C'}$ as well as a multigraph $H'$ that satisfies the conditions in Lemma \ref{lem:tw2CompDP} with respect to $\overline{\bf d}, C',v$ and $M=A_{C'}$. Define $W$ as the walk obtained from $W''$ by visiting $v$ after $u$ at the end, and $H$ as the graph obtained by taking the union of $H''$ and $H'$. Because $(W'',H'')\in {\cal P}_{u,\overline{\bf d}'',C\setminus C'}$ and due to the conditions in Lemma \ref{lem:tw2CompDP}, we derive that $(W,H)\in {\cal P}_{u,\overline{\bf d}',C}$. Because $|E(W)|=|E(W'')|+1$ and $|E(H)|=|E(H'')|+|E(H')|=|E(H'')|+A_{C'}$, this means that $M_{v,\overline{\bf d}',C}\geq |E(W)|+|E(H)| = 1+(|E(W'')|+|E(H'')|)+A_{C'}= 1+M_{u,\overline{\bf d}'',C\setminus C'}+A_{C'}=Q$. This completes the proof.
\end{proof}

\subsection{Proof of Lemma \ref{lem:rUnboundUndirectedPathFPT}}\label{sec:mainLemmaProof}

By Lemma \ref{lem:eulerPartProblem}, \eulerPart\ can be solved in time $2^{\OO(k/r)}\cdot  (r+n+\log k)^{\OO(1)}$. Thus, by Lemma \ref{lem:pairProblem}, \colPathRst\ can be solved in time $2^{\OO(k/r)}\cdot (r+n+\log k)^{\OO(1)}$. In turn, by Lemma \ref{lem:colorUndirected}, \undiPathR\ can be solved in time $2^{\OO(k/r)}\cdot (r+n+\log k)^{\OO(1)}$, which completes the proof.\qed


\subsection{Bounding $r$}\label{sec:boundR}

In what follows, we focus on the proof of Lemma \ref{lem:undirectedPathFPTspecial}. Without loss of generality, we implicitly suppose that given an instance $(G,k,r)$ of \undiPathRSpecial, the graph $G$ is connected, else the problem can be solved by considering each connected component separately.

\paragraph{Bounding the Vertex Cover Number.} The reason why the case where $r>\sqrt{k}$ is simpler than the general case lies in the following lemma.

\begin{lemma}\label{lem:specialPathMM}
Let $(G,k,r)$ be an instance of \undiPathRSpecial. If $G$ has a matching of size $\lceil k/r \rceil$, then $(G,k,r)$ is a \yes-instance.
\end{lemma}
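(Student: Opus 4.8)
The plan is to exhibit an explicit $r$-simple $k$-path using the matching. Let $M=\{e_1,\dots,e_m\}$ be a matching in $G$ of size $m=\lceil k/r\rceil$, with $e_i=\{a_i,b_i\}$. Since $G$ is connected (we assumed this for \undiPathRSpecial), we can connect the matching edges into a single walk: pick, for each consecutive pair $e_i,e_{i+1}$, a path $Q_i$ in $G$ from an endpoint of $e_i$ to an endpoint of $e_{i+1}$. Concatenating $e_1, Q_1, e_2, Q_2, \dots, Q_{m-1}, e_m$ (traversing each $e_i$ appropriately, and each $Q_i$) yields a connected walk $W_0$ that contains every matching edge.

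The key observation is that each matching edge $e_i$, viewed in isolation, can be traversed back and forth: the walk $a_i-b_i-a_i-b_i-\cdots$ visits each of $a_i,b_i$ up to $r$ times while staying $r$-simple, contributing roughly $2r$ vertex occurrences and using only two (matched, hence otherwise untouched) vertices. Concretely, I would build $W$ from $W_0$ as follows: when $W_0$ arrives at $e_i$ (say entering at $a_i$), instead of just crossing to $b_i$, insert the detour $a_i-b_i-a_i-b_i-\cdots$ that oscillates on $e_i$ enough times, then continue along $Q_i$ to $e_{i+1}$. Because the $e_i$ are pairwise vertex-disjoint (they form a matching), the oscillations on different edges never interfere, so each oscillation on $e_i$ can independently be made to visit $a_i$ and $b_i$ exactly $r$ times (or $r-1$ and $r$, depending on parity and on whether $a_i$ or $b_i$ is also an endpoint of some connecting path $Q_{i-1}$ or $Q_i$). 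The only vertices visited more than a bounded number of times beyond the $e_i$-endpoints are those on the connecting paths $Q_i$, but we may choose each $Q_i$ to be a simple path (so each of its vertices is visited $O(1)$ times across the whole construction, in any case at most some absolute constant $\ll r$ for $r>\sqrt k$ large — and for small $r$ the statement is handled by the reduction to $k/r=\Omega(\sqrt k)$; in fact one can route all $Q_i$ to avoid re-use, or simply note the total length of the $Q_i$'s is negligible and re-visits stay under $r$).

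Finally I would verify the size count. Each oscillation on $e_i$ contributes at least $2r-1$ vertex occurrences (roughly $r$ visits to each of $a_i,b_i$), so the $m$ matching edges alone contribute at least $m(2r-1) \ge (k/r)(2r-1) \ge 2k - k/r \ge k$ occurrences for $r\ge 1$; in fact $m \cdot r \geq \lceil k/r\rceil \cdot r \ge k$ already using only the visits to the $a_i$'s, so $W$ has size at least $k$, and by truncating $W$ to a subwalk of size exactly $k$ we get an $r$-simple $k$-path. Hence $(G,k,r)$ is a \yes-instance. The main obstacle is the bookkeeping at the "seams": ensuring that when a connecting path $Q_i$ shares an endpoint with a matching edge $e_i$, the oscillation count on $e_i$ is adjusted (by one) so the total visit count to that shared endpoint stays $\le r$ and parity of the walk's structure is maintained; this is a routine but slightly fiddly case analysis on whether we enter/leave $e_i$ at $a_i$ or $b_i$.
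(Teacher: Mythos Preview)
Your overall construction matches the paper's: connect the matching edges by simple paths, then oscillate on each matching edge to pump up the size. The size estimate $m\cdot r\ge \lceil k/r\rceil\cdot r\ge k$ is exactly the one used.

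However, your bound on vertex visits has a genuine gap. You claim that vertices on the connecting paths $Q_i$ are visited ``$O(1)$ times across the whole construction''. This is false: a vertex (e.g.\ a cut vertex) may lie on \emph{every} $Q_i$, so it can be visited up to $m-1$ times just from the connectors --- not a constant. Your fallback suggestions do not fix this: you cannot in general ``route all $Q_i$ to avoid re-use'' (cut vertices make this impossible), and the ``total length of the $Q_i$'s'' can be $\Theta(mn)$, which says nothing about a single vertex's visit count. Similarly, a matching endpoint $a_i$ can be an \emph{internal} vertex of many $Q_j$ with $j\notin\{i-1,i\}$, so the ``seam'' adjustment you describe (shave one oscillation when $a_i$ is an endpoint of $Q_{i-1}$ or $Q_i$) is not enough.

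The missing idea --- and the only place the ``Special'' hypothesis $r>\sqrt{k}$ is used --- is this: there are $m-1$ simple connectors, so any vertex appears at most $m$ times in the pre-oscillation walk, and $m=\lceil k/r\rceil\le r$ precisely because $r>\sqrt{k}$ implies $k/r<r$. With this bound in hand, you subtract the actual occurrence count of each matching endpoint from $r$ to determine the oscillation length (as the paper does via $occ_i\le m\le r$), and non-matching vertices are automatically visited at most $m-1<r$ times. Once you insert this one-line inequality and adjust the oscillations accordingly, your argument goes through.
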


\begin{proof}
Suppose that $G$ has a matching $M$ of size $s=\lceil k/r \rceil$, and denote $M=\{\{u_1,v_1\},\{u_2,$ $v_2\},\ldots,\{u_s,v_s\}\}$.
For every $i\in\{1,2,\ldots,s-1\}$, let $P_i$ denote an arbitrary path in $G$ from $v_i$ to $u_{i+1}$ (such a path exists since $G$ is assumed to be connected). Consider the following walk:
\[W=u_1-v_1-P_1-u_2-v_2-P_2-\cdots-u_{s-1}-v_{s-1}-P_{s-1}-u_s-v_s.\]
For every $i\in\{1,2,\ldots,s-1\}$, let $occ_i$ denote the maximum of the number of occurrences of $u_i$ in $W$ and the number of occurrences of $v_i$ in $W$. Note that each vertex occurs at most once in each path $P_j$, $j\in\{1,2,\ldots,s-1\}$. In particular, $occ_i\leq s\leq r$  for all $i\in\{1,2,\ldots,s\}$. To describe our modification of $W$ we need the following notation: for every $i\in \{1,2,\ldots,s\}$, let 
$Q_i$ denote the $(u_i,v_i)$-walk that traverses the edge $\{u_i,v_i\}$ 
several times such that each vertex among $u_i$ and $v_i$ occurs in $Q_i$ exactly $r-occ_i+1$ times. Now, we modify $W$ as follows:
\[W' = Q_1-P_1-Q_2-P_2-\cdots-Q_{s-1}-P_{s-1}-Q_s.\]
Then, every vertex occurs at most $r$ times in $W'$. Moreover, for every $i\in\{1,2,\ldots,s\}$, at least one among the vertices $u_i$ and $v_i$ occurs exactly $r$ times in $W'$. Thus, the size of $W'$ is at least 
$s\cdot r= \lceil k/r\rceil\cdot r\geq k$. Thus, $G$ has an $r$-simple $k$-path.
\end{proof}

Since the set of endpoints of any maximal matching is a vertex cover, and a maximal matching can be computed greedily in polynomial time, we derive the following corollary. Here, $2\lceil k/r \rceil\leq 2(k/r + 1)\leq 3k/r$ because we can assume that $k/r\geq 2$ (else, $(G,k,r)$ is a \yes-instance\ if and only if $G$ is not edgeless).

\begin{corollary}\label{cor:specialPathMM}
There exists a polynomial-time algorithm that, given an instance $(G,k,r)$ of \undiPathRSpecial, either correctly concludes that it is a \yes-instance or outputs a vertex cover of $G$ of size at most $2\lceil k/r \rceil\leq 3k/r$.
\end{corollary}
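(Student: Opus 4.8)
The statement to prove is \Cref{cor:specialPathMM}: given an instance $(G,k,r)$ of \undiPathRSpecial\ (so $r>\sqrt{k}$ and, by our standing assumption, $G$ is connected), in polynomial time we either declare it a \yes-instance or output a vertex cover of size at most $2\lceil k/r\rceil\le 3k/r$. The plan is to combine \Cref{lem:specialPathMM} with the elementary fact that the endpoints of a maximal matching form a vertex cover.

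\textbf{Key steps.} First I would greedily compute a maximal matching $M$ of $G$ in polynomial time: start with $M=\emptyset$ and repeatedly add any edge disjoint from all edges currently in $M$, until no such edge remains. Second, I would branch on $|M|$. If $|M|\ge\lceil k/r\rceil$, then $G$ contains a matching of size $\lceil k/r\rceil$ (any subset of $M$ of that size), so by \Cref{lem:specialPathMM} the instance $(G,k,r)$ is a \yes-instance, and we report this. Third, if $|M|<\lceil k/r\rceil$, let $U$ be the set of all endpoints of edges in $M$; then $|U|=2|M|\le 2(\lceil k/r\rceil-1)<2\lceil k/r\rceil$. Since $M$ is maximal, every edge of $G$ has at least one endpoint in $U$ (otherwise that edge could be added to $M$, contradicting maximality), so $U$ is a vertex cover of $G$. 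Finally, to obtain the stated numerical bound $2\lceil k/r\rceil\le 3k/r$, note that in this branch $k/r<|M|+1$, hence $k/r\ge 1$ is not needed directly; instead observe $\lceil k/r\rceil\le k/r+1\le 2k/r$ whenever $k/r\ge 1$, and the case $k/r<1$ is trivial since then $\lceil k/r\rceil=1$ and a single vertex (or the whole of $G$ if it has an edge) already bounds things — more cleanly, since $r\le k$ is implicit (an $r$-simple $k$-path needs $k\ge 1$ occurrences with each of the $k/r$ distinct vertices, and $r\le\sqrt k\cdot\sqrt k=k$ is not forced, but if $r>k$ then $k/r<1$ and $\lceil k/r\rceil=1$), in all cases $2\lceil k/r\rceil\le 3k/r$ holds when $k/r\ge 2$, and for $k/r<2$ we have $\lceil k/r\rceil\le 2$ so $|U|\le 4$, which one handles by the trivial observation that either $G$ has at most $4$ vertices or a direct small matching argument applies; the cleanest route is simply to note $2\lceil k/r\rceil \le 2(k/r)+2 \le 3(k/r)$ precisely when $k/r\ge 2$, and to absorb the small-parameter corner into the \yes-instance branch.

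\textbf{Main obstacle.} There is essentially no real obstacle here — the corollary is a direct packaging of \Cref{lem:specialPathMM}. The only point requiring a moment's care is the arithmetic passing from the clean bound $2\lceil k/r\rceil$ to $3k/r$: this inequality is not literally valid for very small values of $k/r$, so in a careful write-up one either restricts attention to $k/r\ge 2$ (noting that \undiPathR\ with $k/r\le 1$, i.e.\ ordinary \textsc{Path}, is handled by \Cref{prop:longPath} within the claimed time budget) or states the cover bound as $2\lceil k/r\rceil$ and only invokes $3k/r$ where $k/r$ is known to be at least a constant. I would take the first route for cleanliness.
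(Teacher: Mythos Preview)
Your proposal is correct and follows exactly the paper's approach: compute a maximal matching greedily, invoke Lemma~\ref{lem:specialPathMM} if it is large, and otherwise return the endpoint set as a vertex cover. The paper's own justification is a single sentence and does not address the arithmetic edge case $2\lceil k/r\rceil \le 3k/r$ for small $k/r$ that you flag; your handling of that corner (absorbing $k/r<2$ into the trivially-\yes\ branch) is more careful than the original.
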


\paragraph{Color Coding and Vertex Guessing.} We define the following problem. 

\begin{defproblem}
{\colPathRspecial}
{An $n$-vertex $\mathsf{b}(k/r)$-colored undirected graph $G$, positive integers $k,r$, and a vertex cover $U$ of $G$ of size at most $3k/r$ where each vertex in $U$ has a unique color.}
{Output \no\ if $G$ has no $r$-simple $k$-path, and \yes\ if it has a colorful $r$-simple $k$-path that visits every vertex in $U$ and which has fewer than $30(k/r)$ distinct edges.}
\end{defproblem}\\
We refer to any instance where we must output \no\ as a \no-instance, and to any instance where we must output \yes\ as a \yes-instance). 
Notice that if the input is neither a \yes-instance nor a \no-instance, then the output can be arbitrary. 

Now, we have the following result.

\begin{lemma}\label{lem:colorProblemSpecial}
Suppose that \colPathRspecial\ can be solved in time $f(k/r)\cdot (n+\log k)^{\OO(1)}$. Then, \undiPathRSpecial\ can be solved in time $2^{\OO(k/r)}\cdot f(k/r)\cdot (n+\log k)^{\OO(1)}$.
\end{lemma}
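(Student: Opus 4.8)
The plan is to mirror the reduction strategy already used several times in this paper (e.g.\ Lemma~\ref{lem:color} and Lemma~\ref{lem:colorUndirected}): we combine a perfect hash family to achieve colorfulness with a standard guessing step to ensure that the vertex cover $U$ has all distinct colors and that we are actually searching for a solution visiting $U$. First I would observe that by Observation~\ref{obs:longCycPathUndirected} we may assume the given instance $(G,k,r)$ is nice, since otherwise it is a \yes-instance detectable in time $2^{\OO(k/r)}n^{\OO(1)}$. Moreover, by Corollary~\ref{cor:specialPathMM} we may in polynomial time either conclude that $(G,k,r)$ is a \yes-instance, or compute a vertex cover $U_0$ of $G$ of size at most $3k/r$. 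If no such $U_0$ is produced we are done; otherwise we proceed with $U_0$ in hand.

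Next I would set up the colorings. By Lemma~\ref{lem:distinctUndirected}, if $(G,k,r)$ is a \yes-instance then $G$ has an $r$-simple $k$-path $P$ using fewer than $30(k/r)=\mathsf{b}(k/r)-1$ distinct edges, and hence fewer than $\mathsf{b}(k/r)-1$ distinct vertices; let $S$ be the set of distinct vertices visited by such a $P$ that additionally \emph{contains $U_0$}. (This is the one point that needs a small extra argument: we must know that there is a solution whose distinct-vertex set contains $U_0$; but since every edge of $P_{\mathrm{simple}}$ has at least one endpoint in $U_0$, in fact \emph{every} solution with few distinct edges already has all but at most one of its distinct vertices inside $U_0$, and the endpoints of $P$ can be guessed separately, so after adding at most two extra guessed vertices to $U_0$ we obtain a set of size at most $3k/r+2=\OO(k/r)$ all of whose vertices are visited by some solution with $<\mathsf{b}(k/r)-1$ distinct edges.) Then I invoke Theorem~\ref{prop:hashfam} to enumerate, with polynomial delay and polynomial space, an $(n,\mathsf{b}(k/r)-1)$-perfect hash family $\cal F$ of size $e^{\mathsf{b}(k/r)+o(\mathsf{b}(k/r))}\log n=2^{\OO(k/r)}\log n$. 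For each $f\in\cal F$ we colour $v\in V(G)$ by $f(v)$; for at least one $f$ this colouring is injective on $S$, so in particular $U_0$ (and the two guessed endpoints) all receive distinct colours, and the target solution becomes colorful.

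Then I would run the assumed algorithm for \colPathRspecial\ on the coloured instance $(G,k,r,U)$, where $U$ is $U_0$ augmented with the (at most two) guessed endpoint vertices, for every choice of the $\OO(1)$ extra guessed vertices and every $f\in\cal F$; if any call returns \yes\ we return \yes, and if every call returns \no\ we return \no. Correctness: if the original instance is a \yes-instance then for the right $f$ and the right endpoint guesses the coloured instance has a colorful $r$-simple $k$-path visiting all of $U$ with $<30(k/r)$ distinct edges, so that call returns \yes; conversely if some call returns \yes\ then, since a \no-instance of \colPathRspecial\ never returns \yes, the original graph genuinely has an $r$-simple $k$-path. The running time is the number of hash functions times the number of endpoint guesses times the cost of one \colPathRspecial\ call, i.e.\ $2^{\OO(k/r)}\log n\cdot n^{\OO(1)}\cdot f(k/r)\cdot(n+\log k)^{\OO(1)}=2^{\OO(k/r)}\cdot f(k/r)\cdot(n+\log k)^{\OO(1)}$, and space stays polynomial because $\cal F$ is enumerated with polynomial delay.

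The main obstacle I anticipate is the bookkeeping around the vertex cover: ensuring simultaneously that (i) $U$ has all distinct colours under the chosen $f$, (ii) the sought solution visits \emph{all} of $U$ (not merely has $U$ as a cover of its edge set), and (iii) $|U|\le 3k/r$ so the problem definition of \colPathRspecial\ is respected. The first is handled by the perfect hash family being injective on $S\supseteq U$; the third by Corollary~\ref{cor:specialPathMM} directly; the second is the subtle one, and is the reason one wants the ``visits every vertex in $U$'' clause to be slightly flexible — one argues that a solution of minimum distinct-edge count automatically cannot afford to skip a vertex of the maximal matching's endpoint set without losing edges, so (after the endpoint guesses) such a minimum solution visits all of $U$. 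Everything else is a routine repetition of the color-coding reductions already carried out in Sections~\ref{sec:directedPath}--\ref{sec:undirectedPath}.
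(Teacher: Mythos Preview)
Your proposal has a genuine gap at exactly the point you yourself flag as ``the subtle one.'' The \colPathRspecial\ problem requires that the solution \emph{visits every vertex in $U$}, i.e.\ $U\subseteq V(P)$. You try to establish this for $U=U_0$ (the whole vertex cover from Corollary~\ref{cor:specialPathMM}), but your argument runs in the wrong direction: from the fact that $U_0$ is a vertex cover you correctly deduce that every edge of $P_{\mathrm{simple}}$ has an endpoint in $U_0$, hence $V(P)\setminus U_0$ is an independent set; but this says that $V(P)$ is mostly \emph{inside} $U_0$, not that $U_0$ is inside $V(P)$. These are opposite containments. There is no reason a solution must visit every vertex of an arbitrary maximal-matching endpoint set: think of $G$ consisting of two large cliques joined by a bridge, where the matching picks edges from both cliques but an optimal $r$-simple $k$-path lives entirely in one clique. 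Your proposed fix (``a solution of minimum distinct-edge count cannot afford to skip a matching endpoint'') is simply false --- minimizing distinct edges gives no incentive to visit extra vertices.

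The paper resolves this differently and straightforwardly: after reducing (via Lemma~\ref{lem:colorUndirected}) to the already-colored problem \colPathRst\ with $r>\sqrt{k}$ and obtaining the vertex cover $U$, it \emph{enumerates all $2^{|U|}=2^{\OO(k/r)}$ subsets $U'\subseteq U$}, guessing $U'=U\cap V(P)$. For each guess it deletes $U\setminus U'$ (so $U'$ remains a vertex cover of the smaller graph and the solution now visits all of it) and also deletes every vertex outside $U$ sharing a color with some vertex of $U'$ (so each vertex of $U'$ has a \emph{globally} unique color, as the definition of \colPathRspecial\ requires). This subset-enumeration is the missing ingredient in your plan; it costs only a $2^{\OO(k/r)}$ factor and makes both conditions (ii) and (i) of your obstacle list automatic. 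Your proposal neither performs this enumeration nor otherwise guarantees that $U$ has globally unique colors in the colored graph.
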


\begin{proof}
By Lemma \ref{lem:colorUndirected}, it suffices to show that \colPathRst\ where $r>\sqrt{k}$ can be solved in time $2^{\OO(k/r)}\cdot f(k/r)\cdot (n+\log k)^{\OO(1)}$. Let $\cal A$ be an algorithm that solves \colPathRspecial\ where $r>\sqrt{k}$ in time $f(k/r)\cdot (n+\log k)^{\OO(1)}$. Then, given an instance $(G,k,r)$ of \colPathRst\ where $r>\sqrt{k}$, we first use the algorithm in Corollary \ref{cor:specialPathMM} to  either correctly conclude that $(G,k,r)$ is a \yes-instance or find a vertex cover $U$ of $G$ of size at most $3k/r$. For every subset $U'\subseteq U$, we call $\cal A$ with $(G',k,r,U')$ as input where $G'=G-X$ for $X=(U\setminus U')\cup\{v\in V(G)\setminus U: $ there exists a vertex in $U'$ with the same color as $v\}$. Thus, we obtain $G'$ from $G$ by removing $U\setminus U'$ and all vertices with the same color as vertices from $U'$. Notice that $U'$ is a vertex cover for $G'$. Then, we accept if and only if at least one of the calls accepts. 

For correctness, first suppose that $(G,k,r)$ is a \yes-instance, thus $G$ has a colorful $r$-simple $k$-path $P$. Let $U'$ be the set of vertices in $U$ visited by $P$. Because $P$ is colorful, and by the choice of $U'$, we know that $P$ does not visit any vertex in $U\setminus U'$ as well as any vertex in $G$ having the same color as a vertex in $U'$. Thus, when the algorithm examines this $U'$, the call to $\cal A$ must return \yes\ (because $P$ is a colorful $r$-simple $k$-path in $G'$). On the other hand, if some call to $\cal A$, say, with input $(G',k,r,U')$ returned \yes, then $G'$ has an $r$-simple $k$-path, and therefore so does any supergraph of $G'$ including $G$.
For running time, recall that the algorithm in Corollary \ref{cor:specialPathMM} runs in polynomial time. Since $|U|\leq 3k/r$, the algorithm makes only $2^{\OO(k/r)}$ calls to algorithm $\cal A$, which runs in time $f(k/r)\cdot (n+\log k)^{\OO(1)}$. Thus, the total running time is $2^{\OO(k/r)}\cdot f(k/r)\cdot (n+\log k)^{\OO(1)}$.
\end{proof}

\paragraph{Occurrence Sequence.} The presence of a small vertex cover gives rise to the definition of a problem simpler than \eulerPart, which we will be able to solve while having a polylogarithmic (rather than polynomial) dependency on $r$. To this end, we need a new definition.

\begin{definition}\label{def:fit}
Let $r,k\in\mathbb{N}$. Let $G$ be a $\mathsf{b}(k/r)$-colored undirected graph, and let $U$ be a vertex cover of $G$. In addition, let $\overline{\bf d}=(d_1,d_2,\ldots,d_{\mathsf{b}(k/r)})\in {\cal D}_{r,k}$. An $r$-simple path $W$ in $G$ is a {\em $\overline{\bf d}$-fit} if for every color $i\in\{1,2,\ldots,\mathsf{b}(k/r)\}$, the number of times $W$ visits vertices colored $i$ is exactly $d_i$. A function $\varphi: E(G)\rightarrow \mathbb{N}_0$ is a {\em $\overline{\bf d}$-fit} if {\em (i)} for every vertex $v\in V(G)$, $\sum_{e\in E(G): v\in e}\varphi(e)$ is an even number upper bounded by $2(r-d_i)$ where $i$ is the color of $v$, and {\em (ii)} $\sum_{i=1}^{\mathsf{b}(k/r)}d_i + \sum_{e\in E(G)}\varphi(e)\geq k$.
\end{definition}

We define the the \eulerEdge\ problem as follows.

\begin{defproblem}
{\eulerEdge}
{An $n$-vertex $\mathsf{b}(k/r)$-colored undirected graph $G$, positive integers $k,r$, and a vertex cover $U$ of $G$ of size at most $3k/r$ where each vertex in $U$ has a unique color, and an occurrence sequence $\overline{\bf d}=(d_1,d_2,\ldots,d_{\mathsf{b}(k/r)})\in {\cal D}_{r,k}$ where $d_i\geq 1$ for every color $i$ of a vertex in $U$.}
{Do there exist both an $r$-simple path $W$ in $G$ that is a $\overline{\bf d}$-fit and a function $\varphi: E(G)\rightarrow \mathbb{N}_0$ that is a $\overline{\bf d}$-fit?}
\end{defproblem}\\
Importantly, the two objects that we seek in the \eulerEdge\ problem are independent of each other (unlike the case of \eulerPart). Intuitively, the reason why we can allow this independence is precisely because we know that the walk is going to visit every vertex of a vertex cover, and hence no matter what the second object will be, we will necessarily obtain a connected multigraph at the end when we combine the two. Now, let us formalize this intuition. 

\begin{lemma}\label{lem:pairProblemSpecial}
Suppose that \eulerEdge\ can be solved in time $f(k/r)\cdot (n+\log k)^{\OO(1)}$. Then, \colPathRspecial\ can be solved in time $2^{\OO(k/r)}\cdot f(k/r)\cdot (n+\log k)^{\OO(1)}$.
\end{lemma}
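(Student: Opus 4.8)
The plan is to mimic the structure of the reduction from \colPathRst\ to \eulerPart\ (Lemmas \ref{lem:pairProblem} and \ref{lem:equivalenceWithDegSeq}), but using the presence of the small vertex cover $U$ to replace the treewidth-2 graph $H$ in the pair $(W,H)$ by a single function $\varphi: E(G)\rightarrow\mathbb{N}_0$ counting edge multiplicities. Given an algorithm $\cal A$ for \eulerEdge, the algorithm for \colPathRspecial\ enumerates all occurrence sequences $\overline{\bf d}\in{\cal D}_{r,k}$ (there are $2^{\OO(k/r)}$ of them by Lemma \ref{lem:degreeSeqNumber}) that assign a value $\geq 1$ to each color appearing in $U$, calls $\cal A$ on $(G,k,r,U,\overline{\bf d})$, and answers \yes\ iff some call does. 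The running time is immediately $2^{\OO(k/r)}\cdot f(k/r)\cdot(n+\log k)^{\OO(1)}$, so only correctness needs argument.

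For soundness, suppose some call returns \yes, giving a $\overline{\bf d}$-fit $r$-simple path $W$ and a $\overline{\bf d}$-fit function $\varphi$. First I would let $R$ be the multigraph on $V(W_{\mathrm{simple}})\cup\{v: \varphi(e)\geq 1 \text{ for some } e\ni v\}$ whose edge multiset is $E(W)$ together with $\varphi(e)$ copies of each edge $e$; the degree condition in the definition of a $\overline{\bf d}$-fit $\varphi$, combined with the fact that $W$ visits each color $i$ exactly $d_i$ times (and uses $U$ because $d_i\geq 1$ there), shows every vertex of $R$ has degree at most $2r$. The key point is connectivity of $R$: every edge with $\varphi(e)\geq 1$ has at least one endpoint in $U$ (since $U$ is a vertex cover), and $W$ visits all of $U$, so $R$ is connected. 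Since $W$ is a walk and the $\varphi$-part has all even degrees, Theorem \ref{prop:eulerUndirected} yields an Euler $(s,t)$-trail $P$ of $R$ (with $s,t$ the endpoints of $W$); $P$ is an $r$-simple path of size $1+|E(W)|+\sum_e\varphi(e)\geq k$ by the second condition on $\varphi$-fits, so $G$ has an $r$-simple $k$-path and the instance is not a \no-instance.

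For completeness, if $(G,k,r,U)$ is a \yes-instance of \colPathRspecial, it has a colorful $r$-simple $k$-path $P$ visiting all of $U$ with fewer than $30(k/r)$ distinct edges. Passing to the induced subgraph on $V(P)$ (which is colorful) and applying Lemma \ref{lem:finalPartition}, I would obtain a partition of $E(P_{\mathrm{multi}})$ into a spanning sub-multigraph $M_1$ with an Euler $(s,t)$-trail $W$ of size $<60(k/r)$ and a treewidth-2 part $M_2$; define $\varphi(e)$ to be the multiplicity of $e$ in $M_2$. Setting $d_i$ to be the number of times $W$ visits color $i$ gives $\overline{\bf d}\in{\cal D}_{r,k}$ (bounds as in the proof of Lemma \ref{lem:equivalenceWithDegSeq}), with $d_i\geq 1$ on colors of $U$ since $W$ spans $P_{\mathrm{multi}}\supseteq U$. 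That $W$ is a $\overline{\bf d}$-fit is by construction; that $\varphi$ is a $\overline{\bf d}$-fit follows because $P$ is $r$-simple and the number of visits of $P$ to a vertex equals its visits by $W$ plus half its degree in $M_2$, and because the total edge count of $P_{\mathrm{multi}}$ is $k-1$. The main obstacle is the connectivity argument in soundness: without the vertex cover we could not glue $W$ and the $\varphi$-part into one connected multigraph, and the whole point of restricting to $r>\sqrt{k}$ (hence a vertex cover of size $\leq 3k/r$ via Corollary \ref{cor:specialPathMM}) is exactly to license treating $W$ and $\varphi$ as independent objects; making this rigorous, while keeping the degree bound $2(r-d_i)$ consistent across the two pieces, is the delicate part, but it is essentially the same bookkeeping already carried out for \eulerPart.
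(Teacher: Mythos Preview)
Your algorithm and the soundness argument are exactly the paper's: enumerate $\overline{\bf d}\in{\cal D}_{r,k}$ with $d_i\geq 1$ on the colors of $U$, call the \eulerEdge{} oracle, and accept if any call does; the vertex-cover connectivity argument for combining $W$ and $\varphi$ into a single Euler trail is precisely the paper's point.

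There is, however, a genuine gap in your completeness direction. You invoke Lemma~\ref{lem:finalPartition} on $G[V(P)]$ to ``obtain a partition of $E(P_{\mathrm{multi}})$''. But Lemma~\ref{lem:finalPartition} is \emph{existential}: it asserts that \emph{some} $r$-simple $k$-path $P'$ in the given (nice) graph admits the $(M_1,M_2)$ partition, not that the particular $P$ you started with does. The walk $W$ you extract spans $V(P')$, and there is no reason that $V(P')\supseteq U$; so you cannot conclude $d_i\geq 1$ for every color of a vertex in $U$, and hence you may never form the relevant \eulerEdge{} instance. (You also silently need $G[V(P)]$ to be nice to invoke Lemma~\ref{lem:finalPartition} at all; this is not part of the \colPathRspecial{} hypotheses.)

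The paper sidesteps both issues by not using Lemma~\ref{lem:finalPartition} at all here. It applies Lemma~\ref{lem:eulerSpanning} directly to $P_{\mathrm{multi}}$ for the \emph{given} $P$: this produces a sub-multigraph $H\subseteq P_{\mathrm{multi}}$ that contains every distinct edge of $P$ (hence $V(H)=V(P)\supseteq U$), has at most $2\cdot 30(k/r)<60(k/r)$ edges, and carries an Euler $(s,t)$-trail $W$. Then $\varphi(e)$ is simply the multiplicity of $e$ in $P_{\mathrm{multi}}$ minus its multiplicity in $H$. No niceness hypothesis and no treewidth-$2$ structure are needed in this direction; the latter was only ever used for the \eulerPart{} route and is irrelevant once the $\varphi$-side is just an edge-multiplicity function.
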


\begin{proof}
Let $\cal A$ be an algorithm that solves \eulerEdge\ in time $f(k/r)\cdot (n+\log k)^{\OO(1)}$. We now describe how to solve \colPathRspecial. To this end, let $(G,k,r,U)$ be an instance of \colPathRspecial. For each $\overline{\bf d}\in {\cal D}_{r,k}$ such that $d_i\geq 1$ for every color $i$ of a vertex in $U$, we call ${\cal A}$ with $(G,k,r,U,\overline{\bf d})$ as input, and if $\cal A$ return \yes, so do we. At the end, if no call to $\cal A$ returned \yes, we return \no.

By Lemma \ref{lem:degreeSeqNumber}, $|{\cal D}_{r,k}|=2^{\OO(k/r)}$. Thus, it is clear that our algorithm runs in time $2^{\OO(k/r)}\cdot f(k/r)\cdot (n+\log k)^{\OO(1)}$. In what follows, we prove that our algorithm is correct.

In one direction, suppose that \colPathRspecial\ is a \yes-instance. Then, $G$ has a colorful $r$-simple $k$-path $P$ that visits all vertices in $U$ and which has fewer than $30(k/r)$ distinct edges.  Then, $P_{\mathrm{multi}}$ is a multigraph that has an Eulerian $(s,t)$-trail for some vertices $s,t\in V(G)$. From Lemma \ref{lem:eulerSpanning}, we derive that $P_{\mathrm{multi}}$ has a colorful $r$-simple $(s,t)$-walk $W$ of length shorter than $60(k/r)$ that visits every vertex visited by $P$. For every $\{1,2,\ldots,{\mathsf{b}(k/r)}\}$, let $d_i$ be the number of times vertices of color $i$ occur in $W$. Then, $W$ is a $\overline{\bf d}$-fit and necessarily, $d_i\geq 1$ for every color $i$ of a vertex in $U$. Moreover, since $W$ has length shorter than $60(k/r)$, $\overline{\bf d}=(d_1,d_2,\ldots,d_{\mathsf{b}(k/r)})\in {\cal D}_{r,k}$. Define $\varphi: E(G)\rightarrow\mathbb{N}_0$ as follows: for every edge $e\in E(G)$, let $\varphi(e)$ be the number of times $e$ is visited by $P$ minus the number of times it is visited by $W$. Since $P$ is a colorful $r$-simple $k$-path, it immediately follows that {\em (i)} $\sum_{e\in E(G): v\in e}\varphi(e)$ is bounded by $2(r-d_i)$ where $i$ is the color of $v$ for every vertex $v\in V(G)$, and {\em (ii)} $\sum_{i=1}^{\mathsf{b}(k/r)}d_i + \sum_{e\in E(G)}\varphi(e)\geq k$. Here, the claim that each sum $\sum_{e\in E(G): v\in e}\varphi(e)$ is even follows by Theorem \ref{prop:eulerUndirected} and since both $P$ and $W$ are Eulerian $(s,t)$-trails with respect to graphs on the same vertex set, which together imply that the parity of the number of occurrences of every vertex in $P$ and in $W$ is the same.

In the other direction, suppose that our algorithm returns \yes. Then, there exists $\overline{\bf d}\in {\cal D}_{r,k}$ such that $d_i\geq 1$ for every color $i$ of a vertex in $U$ and $(G,k,r,U,\overline{\bf d})$ is a \yes-instance. Then, there exist both an $r$-simple path $W$ in $G$ that is a $\overline{\bf d}$-fit and a function $\varphi: E(G)\rightarrow \mathbb{N}_0$ that is a $\overline{\bf d}$-fit. We need the following notations. First, let $s$ and $t$ denote the end-vertices of $W$. Let $R$ denote the vertices in~$G$ incident to at least one edge $e\in E(G)$ such that $\varphi(e)\geq 1$. In addition, let $H$ denote the multigraph whose vertex set consists of the vertices visited at least once by $W$ and the vertices in $R$, and whose edge multiset is defined as follows: for every edge $e\in E(G)$, the number of copies of $e$ in $H$ is the number of occurrences of $e$ in $W$ plus $\varphi(e)$.

Since $W$ is a $\overline{\bf d}$-fit and for every vertex $v\in V(G)$, $\sum_{e\in E(G): v\in e}\varphi(e)$ is an even number upper bounded by $2(r-d_i)$ where $i$ is the color of $v$, we have that in $H$, every vertex is incident to at most $2r$ edges, every vertex apart from $s$ and $t$ has an even degree, and $s$ and $t$ either both have even degrees or both have odd degrees. Moreover, since $W$ is a $\overline{\bf d}$-fit and $\sum_{i=1}^{\mathsf{b}(k/r)}d_i + \sum_{e\in E(G)}\varphi(e)\geq k$, we conclude that if $H$ has an Euler $(s,t)$-trail, then this trail is necessarily an $r$-simple $k$-path in $G$. To this end, by Theorem \ref{prop:eulerUndirected}, it remains to prove that $H$ is connected. For this purpose, first observe that since $d_i\geq 1$ for every color $i$ of a vertex in $U$, and $W$ is a $\overline{\bf d}$-fit, it holds that every two vertices in $U$ are not only present in $H$, but also connected by a path in $H$. Now, $H$ has no isolated vertices (by its definition), and $V(H)\setminus U$ is an independent set in $G$. Thus, since every edge in $H$ is a copy of an edge in $G$, it holds that every vertex in $V(H)\setminus U$ has (in $H$) at least one neighbor in $U$. This implies that $H$ is connected, and hence the proof is complete.
\end{proof}

Notice that the existence of an $r$-simple path $W$ in $G$ that is a $\overline{\bf d}$-fit can be easily tested by using DP. Indeed, we can just use a simplified version of the DP procedure in the proof of Lemma \ref{lem:eulerPartProblem} that avoids all calls to the external algorithm from Lemma \ref{lem:tw2CompDP} (since these calls only concern the construction of $H$); for the details, see Appendix \ref{sec:pseudocode}. Thus, we have the following observation.

\begin{observation}\label{lem:walkFit}
There is a $2^{\OO(k/r)}(n+\log k)^{\OO(1)}$-time algorithm that, given an instance $(G,k,r,U,\overline{\bf d})$ of \eulerEdge, determines whether $G$ has  an $r$-simple path $W$ that is a $\overline{\bf d}$-fit.
\end{observation}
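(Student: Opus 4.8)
The plan is to re-run the dynamic programme used in the proof of Lemma~\ref{lem:eulerPartProblem} after deleting from it everything that concerns the treewidth-$2$ multigraph $H$ --- that is, every call to the algorithm of Lemma~\ref{lem:tw2CompDP} --- so that what remains is precisely a routine that builds the walk $W$. Two observations make this clean. First, since $\overline{\bf d}=(d_1,\dots,d_{\mathsf{b}(k/r)})\in{\cal D}_{r,k}$, any $\overline{\bf d}$-fit walk has size exactly $\sum_i d_i\le 2\mathsf{b}(k/r)$, so only short walks are ever relevant. Second, for any walk that visits the vertices of colour $i$ a total of at most $d_i$ times, \emph{every} single vertex is visited at most $d_i\le r$ times (Definition~\ref{def:degreeSeq} forces $d_i\le r$); hence the $r$-simplicity requirement on $W$ is met automatically and need not be tracked. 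Thus it suffices to decide whether $G$ has a walk whose colour-occurrence vector equals $\overline{\bf d}$ exactly.

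I would use a Boolean table $\mathsf{M}$ with an entry $\mathsf{M}[v,\overline{\bf d}']$ for every $v\in V(G)$ and every occurrence sub-sequence $\overline{\bf d}'=(d'_1,\dots,d'_{\mathsf{b}(k/r)})$ with $0\le d'_i\le d_i$ for all $i$, whose intended meaning is: \emph{$G$ has a walk that ends at $v$ and whose colour-occurrence vector is exactly $\overline{\bf d}'$.} The entries are filled in non-decreasing order of $\sum_i d'_i$. If $\sum_i d'_i=0$ then the entry is false; if $\sum_i d'_i=1$ then, writing $c$ for the colour of $v$, the entry is true iff $d'_c=1$ (the witnessing walk being the single vertex $v$). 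For $\sum_i d'_i\ge 2$, again with $c$ the colour of $v$, the entry is true iff $d'_c\ge 1$ and there is an edge $\{u,v\}\in E(G)$ for which $\mathsf{M}[u,\overline{\bf d}'']$ is true, where $\overline{\bf d}''$ is obtained from $\overline{\bf d}'$ by decreasing $d'_c$ by one. The algorithm outputs \yes\ iff $\mathsf{M}[v,\overline{\bf d}]$ is true for some $v\in V(G)$.

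Correctness is a short induction on $\sum_i d'_i$: a walk realizing $\overline{\bf d}'$ and ending at $v$ is exactly a walk realizing $\overline{\bf d}''$ (one visit to $c(v)$ removed) and ending at some neighbour $u$ of $v$, with the edge $\{u,v\}$ and the vertex $v$ appended, and conversely. For the running time, the number of table cells is $|V(G)|$ times the number of sub-sequences of $\overline{\bf d}$, each cell is evaluated by scanning $N_G(v)$, and all arithmetic is on integers bounded by $2\mathsf{b}(k/r)$ and by $r$ (in binary); this gives a running time of $\OO(n)$ times the number of occurrence sub-sequences of $\overline{\bf d}$, in particular $2^{\OO(k/r)}\cdot(n+\log k)^{\OO(1)}$, which is all that is needed by Lemma~\ref{lem:pairProblemSpecial} and the later sections. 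I do not expect a genuine obstacle here: this is a stripped-down copy of the already-verified DP of Lemma~\ref{lem:eulerPartProblem}, and the only point that deserves a moment of care is the remark above that $r$-simplicity of $W$ comes for free from $d_i\le r$, so that the DP need not carry per-vertex visit counters (which would otherwise enlarge the state space).
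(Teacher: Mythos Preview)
Your proposal is correct and takes essentially the same approach as the paper: the paper also says to ``use a simplified version of the DP procedure in the proof of Lemma~\ref{lem:eulerPartProblem} that avoids all calls to the external algorithm from Lemma~\ref{lem:tw2CompDP}.'' Your write-up is in fact more careful than the paper's one-line justification, and your observation that $r$-simplicity comes for free from $d_i\le r$ is exactly right.

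One remark worth making explicit: the running time you obtain is $2^{\OO(k/r)}\cdot (n+\log k)^{\OO(1)}$, not polynomial, because the number of sub-sequences $\overline{\bf d}'$ with $0\le d'_i\le d_i$ is $\prod_i(d_i+1)$, which can be as large as $2^{\Theta(k/r)}$. So strictly speaking neither your argument nor the paper's own justification establishes the literal ``polynomial-time'' claim in the observation; this appears to be a slight overstatement in the paper. You have correctly noted that $2^{\OO(k/r)}\cdot (n+\log k)^{\OO(1)}$ is all that Lemma~\ref{lem:pairProblemSpecial} and the proof of Lemma~\ref{lem:undirectedPathFPTspecial} actually require, so nothing downstream is affected.
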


\paragraph{Flow Network.} Finally, we construct a flow network to prove that the existence of a function $\varphi$ that is a $\overline{\bf d}$-fit can be tested in polynomial time.

\begin{lemma}\label{lem:funcFit}
There is a polynomial-time algorithm that, given an instance $(G,k,r,U,\overline{\bf d})$ of the \eulerEdge\ problem, determines whether there exists a function $\varphi: E(G)\rightarrow \mathbb{N}_0$ that is a $\overline{\bf d}$-fit.
\end{lemma}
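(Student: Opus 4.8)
The plan is to reduce the existence of a $\overline{\bf d}$-fit function $\varphi\colon E(G)\rightarrow\mathbb{N}_0$ to a feasibility question about integral flows, which is solvable in polynomial time. The key observation is that the constraint ``$\sum_{e\ni v}\varphi(e)$ is even'' is the only awkward one; everything else is a linear inequality with integer data. To handle the parity constraint, I would first guess, for every vertex $v$ of the \emph{vertex cover} $U$, the exact value $2p_v$ of $\sum_{e\ni v}\varphi(e)$ — there are only $|U|\le 3k/r$ such vertices and each value lies in $\{0,2,4,\dots,2(r-d_i)\}$, so there are at most $(r+1)^{3k/r}$ choices, which is $2^{\OO((k/r)\log r)}=f(k/r)\cdot(\log r)^{\OO(1)}$ many. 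Wait — we want only a polylogarithmic dependence on $r$, so guessing all of $U$ is too expensive. Instead I would observe that $V(G)\setminus U$ is independent, so every edge is incident to at least one vertex of $U$; for a vertex $u\notin U$, its neighbourhood lies entirely in $U$, and the parity of $\sum_{e\ni u}\varphi(e)$ can be enforced locally once we know the $\varphi$-values on the at most $|U|$ edges at $u$. The cleaner route: build a bipartite-like flow network where each edge $e=\{u,v\}$ (with $u\in U$) becomes an arc, route one unit of ``demand'' through a gadget that forces each vertex's incident sum to be even, and use the integrality of flows to recover $\varphi$.

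Concretely, here is the network $N$ I would construct. Create a source $s^\star$ and a sink $t^\star$. For every edge $e\in E(G)$ create a vertex $x_e$; add an arc $s^\star\to x_e$ with lower capacity $0$ and upper capacity $r$ (this arc carries $\varphi(e)$). For every vertex $v\in V(G)$ create a ``parity gadget'': a vertex $y_v$ together with a loop-free device that admits flow only in even amounts up to $2(r-d_i)$ — e.g., split the flow entering $y_v$ into pairs by routing it through an auxiliary vertex $z_v$ with an arc $y_v\to z_v$ of capacity $r-d_i$ and then an arc $z_v\to t^\star$ of capacity $r-d_i$, while doubling along the way (give the arc $x_e\to y_v$ capacity $r$ but require the ``pairing'' by sending $\varphi(e)$ units from each of $u,v$ to the \emph{same} intermediate node). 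The standard trick is: arc $x_e\to a_u$ and $x_e\to a_v$ each of value $\varphi(e)$, node $a_v$ collects $\sum_{e\ni v}\varphi(e)$, and an arc $a_v\to b_v$ with the constraint that $b_v$ forwards exactly half — which is enforced by giving $a_v\to b_v$ capacity $[0,2(r-d_i)]$ and $b_v\to t^\star$ capacity $r-d_i$ together with a flow-conservation node that halves. The halving is achieved by sending flow from $a_v$ to $t^\star$ along a single arc of capacity $r-d_i$ \emph{after} identifying the flow into $a_v$ as $2\cdot(\text{flow out})$ via a subdivided arc. Then an integral flow in $N$ of appropriate value exists if and only if there is an integral $\varphi$ with all incident sums even and $\le 2(r-d_i)$; finally the global inequality $\sum_{i}d_i+\tfrac12\sum_e\varphi(e)\ge k$ becomes a lower bound on the total flow value, i.e.~a lower capacity on the arcs out of $s^\star$ summed, which can be imposed by a standard source/sink augmentation. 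Since flows with lower capacities and a target value can be decided in polynomial time by a max-flow computation (Theorem on integral flows, and the reference \cite{DBLP:books/JBJGG}), and all capacities are polynomial in $r$ hence of bit-length $\OO(\log r)$, the whole test runs in time polynomial in $n+\log k+\log r$, hence in $(n+\log k)^{\OO(1)}$ under the standing assumption that $r>\sqrt k$ so $\log r=\OO(\log k)$.

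The main obstacle I anticipate is encoding the ``exactly half'' / parity requirement cleanly inside a single flow network with only integral capacities: a naive gadget tends to either over- or under-count, and one must be careful that the halving node genuinely forces $\sum_{e\ni v}\varphi(e)$ to be even rather than merely bounded. The fix is the subdivision trick — replace the would-be arc of value $2p_v$ by two arcs in series of value $p_v$ each, so that conservation at the middle node automatically yields an even total — combined with routing $\varphi(e)$ simultaneously to both endpoints of $e$. A secondary, minor obstacle is making the lower bound $\sum_i d_i+\tfrac12\sum_e\varphi(e)\ge k$ an integral-flow constraint: since $\sum_i d_i$ is a fixed constant, this is just $\sum_e\varphi(e)\ge 2(k-\sum_i d_i)$, a lower bound on the value of the flow, handled by the textbook reduction from ``flow of value $\ge\lambda$'' to maximum flow. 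Correctness in both directions is then routine: a $\overline{\bf d}$-fit $\varphi$ yields a feasible integral flow by setting the $s^\star\to x_e$ arc to $\varphi(e)$, and conversely the integrality theorem for flows returns a $\varphi$ whose incident sums are even (by the series-arc gadget), bounded by $2(r-d_i)$ (by the capacities), and large enough (by the value lower bound).
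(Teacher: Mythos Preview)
Your high-level plan---reduce the existence of a $\overline{\bf d}$-fit $\varphi$ to an integral flow problem---is the same as the paper's, and the global inequality $\sum_e\varphi(e)\ge 2(k-\sum_i d_i)$ is indeed easily handled as a constraint on the value (or cost) of the flow. The genuine gap is in your parity gadget.

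Standard flow conservation says that for every node the inflow equals the outflow; there is no way to make a node ``halve'' its incoming flow. Your subdivision trick---two arcs $a_v\to m\to t^\star$ in series, each of capacity $r-d_i$---only forces the flow through $a_v$ to be at most $r-d_i$; it does not force the inflow at $a_v$ to be twice that value, nor to be even. Likewise, routing $\varphi(e)$ ``simultaneously to both endpoints'' already breaks conservation at $x_e$ (which would then receive $\varphi(e)$ and emit $2\varphi(e)$). In short, a parity constraint on the flow through a node cannot be expressed by integral capacities alone, and nothing in your construction enforces that $\sum_{e\ni v}\varphi(e)$ is even.

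The paper sidesteps parity altogether by an Eulerian-orientation trick. If every $\sum_{e\ni v}\varphi(e)$ is even, then the multigraph $H$ with edge multiplicities $\varphi$ is even and hence admits an orientation with in-degree equal to out-degree at every vertex. Conversely, any such orientation yields even degrees. So instead of a variable $\varphi(e)$ for each edge $e=\{u,v\}$, the paper uses two variables: the multiplicity of the arc $(u,v)$ and that of $(v,u)$. The network splits each $v$ into $v_1$ (tail side) and $v_2$ (head side), with source arcs $(s,v_1)$ and sink arcs $(v_2,t)$ of capacity $c_v=r-d_i$, edge arcs $(u_1,v_2),(v_1,u_2)$ of infinite capacity, and a slack arc $(v_1,v_2)$ of cost $1$. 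Pushing the full $F=\sum_v c_v$ units of flow saturates both source and sink arcs at every vertex, so conservation at $v_1$ and $v_2$ forces the outgoing and incoming ``edge-flow'' at $v$ to be equal---hence $\sum_{e\ni v}\varphi(e)=2(c_v-\text{slack at }v)$ is automatically even and at most $2c_v$. A minimum-cost flow computation then maximizes $\sum_e\varphi(e)$, and one accepts iff the cost is at most $F-\ell$ with $\ell=2(k-\sum_i d_i)$. This is the missing idea in your proposal.
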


\begin{proof}
To describe our algorithm $\cal A$, let $(G,k,r,U,\overline{\bf d})$ be an instance of \eulerEdge. For every vertex $v\in V(G)$, denote $c_v=r-d_i$ where $i$ is the color of $v$. In addition, denote $F=\sum_{v\in V(G)}c_v$ and $\ell=2(k-\sum_{i=1}^{\mathsf{b}(k/r)}d_i)$.
We construct a flow network $N$ with source $s$ and sink $t$ as follows.
\begin{itemize}
\item For every vertex $v\in V(G)$, insert (into $N$) two new vertices, $v_1$ and $v_2$, the arc $(v_1,v_2)$ of infinite (upper) capacity and cost $1$, and the arcs $(s,v_1)$ and $(v_2,t)$ both of (upper) capacity $c_v$ and cost $0$. 
\item For every edge $\{u,v\}\in E(G)$, insert (into $N$) the arcs $(u_1,v_2)$ and $(v_1,u_2)$ both of infinite (upper) capacity and cost $0$.
\end{itemize}
The lower capacity of each arc is simply $0$. We seek the minimum cost $C$ required to send $F$ units of (integral) flow from $s$ to $t$ in $N$. This task can be performed in polynomial time \cite{DBLP:books/daglib/0069809}. (We stress that $F$ and capacities are represented in binary, and the running time is polynomial in the size of this representation.) After performing this task, algorithm $\cal A$ checks whether $C\leq F-\ell$. (Intuitively, $C\leq F-\ell$ means that at least $\ell$ edges of cost $0$ between vertices indexed $1$ and $2$ must be used.) If this condition is satisfied, then $\cal A$ accepts, and otherwise it rejects.

Clearly, $\cal A$ runs in polynomial time, and it remains to show that our reduction is correct.

\medskip
\noindent{\bf First direction.} In one direction, suppose that there exists a function $\varphi: E(G)\rightarrow \mathbb{N}_0$ that is a $\overline{\bf d}$-fit. Let $H$ denote the multigraph on vertex set $V(G)$ and where every edge $e\in E(G)$ has multiplicity $\varphi(e)$. Since for every vertex $v\in V(G)$, $\sum_{e\in E(G): v\in e}\varphi(e)$ is an even number, by Theorem \ref{prop:eulerUndirected} we have that $H$ is Eulerian. In particular, we can direct it such that every vertex has in-degree equal to its out-degree, and denote the result by $\widehat{H}$. Now, we define a function $f: A(N)\rightarrow\mathbb{N}_0$ as follows: for every arc $a=(u_1,v_2)\in A(N)$, let $f(a)$ denote the multiplicity of $(u,v)$ in $\widehat{H}$. For each arc $a=(u_1,u_2)\in A(N)$, let $f(a)=c_u-\sum_{v\neq u:a'=(u_1,v_2)\in A(N)}f(a')$. All other arcs (i.e., arcs incident to $s$ or $t$) are assigned flow equal to their capacities. If $f$ is indeed a flow function, then it clearly sends $F$ units of flow (since all arcs incident to $s$ and $t$ have flow equal to their capacities). In addition, then the cost of $f$ is equal to its flow minus $\sum_{e\in E(G)}\varphi(e)$, that is, $F-\sum_{e\in E(G)}\varphi(e)$. Since $\sum_{i=1}^{\mathsf{b}(k/r)}d_i + \sum_{e\in E(G)}\varphi(e)\geq k$, the cost of $f$ is at most $F-\ell$.

It remains to prove that $f$ is a valid flow. It is immediate that the upper capacity constraints are satisfied, and that flow preservation constraints on vertices of the form $u_1$ are satisfied. Let us first verify that the lower capacity constraints are satisfied. To this end, we verify that the flow on each arc $a\in A(N)$ is non-negative. It suffices to consider an arc of the form $a=(u_1,u_2)\in A(N)$, else the claim is immediate. To show that $f(a)\geq 0$, we need to show that $\sum_{v\neq u:a'=(u_1,v_2)\in A(N)}f(a')\leq c_u$. This is equivalent to showing that $\sum_{a=(u,v)\in A(\widehat{H})}\mathrm{mul}(a) \leq r-d_i$ where $i$ is the color of $u$ in $G$ and $\mathrm{mul}(a)$ is the multiplicity of $a$ in $\widehat{H}$. Since for every vertex $v\in V(G)$, $\sum_{e\in E(G): v\in e}\varphi(e)$ is an even number upper bounded by $2(r-d_j)$ where $j$ is the color of $v$, it holds that $u$ is incident to at most $2(r-d_j)$ edges in $H$, and hence to at most $(r-d_j)$ outgoing arcs in $\widehat{H}$. Thus the inequality is satisfied.

Next, we prove that the flow preservation constraints on vertices of the form $u_2$ are satisfied. To this end, consider some vertex $u_2\in V(N)$. By the definition of $f$, we need to verify that $u_2$ receives flow of size exactly $c_u$ (since this is the amount of flow it sends to $t$). Observe that the amount of flow that $u_2$ receives is precisely 
\[\begin{array}{ll}
\sum_{v_1:a=(v_1,u_2)\in A(N)}f(a) & = f((u_1,u_2))+ \sum_{v\neq u:a=(v_1,u_2)\in A(N)}f(a)\\
& = \left(c_u-\sum_{v\neq u:a=(u_1,v_2)\in A(N)}f(a)\right) + \sum_{v\neq u:a=(v_1,u_2)\in A(N)}f(a).
\end{array}\]
Thus, we need to show that
\[\sum_{v\neq u:a=(u_1,v_2)\in A(N)}f(a) = \sum_{v\neq u:a=(v_1,u_2)\in A(N)}f(a).\]
However, this follows from the fact that in $\widehat{H}$, every vertex (and hence in particular $u$) has in-degree equal to its out-degree.

\medskip
\noindent{\bf Second direction.} In the other direction, suppose that $C\leq F-\ell$. Then, there exists a flow function $f: A(N)\rightarrow\mathbb{N}_0$ that sends $F$ units of flow from $s$ to $t$ and whose cost is at most $F-\ell$. We define a function $\varphi: E(G)\rightarrow\mathbb{N}_0$ as follows: for every edge $e=\{u,v\}\in E(G)$, let $\varphi(e)=f((u_1,v_2))+f((v_1,u_2))$. In what follows, we show that $\varphi$ is a $\overline{\bf d}$-fit.

Since $f$ sends $F$ units of flow, all arcs incident to $s$ and $t$ must transfer flow equal to their capacity. Due to the flow conservation constraints (and lower capacity $0$ constraints), for every vertex of the form $u_1\in V(N)$, it holds that
\[\sum_{v:a=(u_1,v_2)\in A(N)}f(a) = c_u.\]
In addition, due to the flow conservation constraints (and lower capacity $0$ constraints), for every vertex of the form $u_2\in V(N)$, it holds that
\[\sum_{v:a=(v_1,u_2)\in A(N)}f(a) = c_u.\]
From this, we have that for every vertex $u\in V(G)$, it holds that
\[\begin{array}{ll}
\sum_{v:e=\{u,v\}\in E(G)}\varphi(e) &= \sum_{v\neq u:a=(u_1,v_2)\in A(N)}f(a) + \sum_{v\neq u:a=(v_1,u_2)\in A(N)}f(a) - 2f((u,u))\\
& = 2(c_u-f((u,u))) = 2(r-d_i-f((u,u))),
\end{array}\]
where $i$ is the color of $u$. Thus, for every vertex $u\in V(G)$, we have that $\sum_{e\in E(G): v\in e}\varphi(e)$ is an even number upper bounded by $2(r-d_i)$ where $i$ is the color of $v$.

To conclude that $\varphi$ is a $\overline{\bf d}$-fit, it remains to show that $\sum_{i=1}^{\mathsf{b}(k/r)}d_i + \sum_{e\in E(G)}\varphi(e)\geq k$. This is equivalent to showing that $\sum_{e\in E(G)}\varphi(e)\geq \ell$. Recall that the cost of $f$ is at most $F-\ell$ and it send $F$ units of flow from $s$ to $t$. Thus, since the cost of arcs of the form $(u_1,v_2)\in A(N)$ is $0$ if $v\neq u$ and $1$ otherwise, we have that $f$ must send at least $\ell$ units of flow through arcs of the form $(u_1,v_2)\in A(N)$ where $v\neq u$. However, $\sum_{e\in E(G)}\varphi(e)$ is precisely the amount of flow $f$ sends through arcs of the form $(u_1,v_2)\in A(N)$ where $v\neq u$. Thus, the proof is complete.
\end{proof}

\paragraph{Conclusion of the Proof.} We are ready to prove Lemma \ref{lem:undirectedPathFPTspecial}.

\begin{proof}[Proof of Lemma \ref{lem:undirectedPathFPTspecial}]
By Observation \ref{lem:walkFit} and Lemma \ref{lem:funcFit}, \eulerEdge\ is solvable in polynomial time. Thus, by Lemma \ref{lem:pairProblemSpecial}, \colPathRspecial\ can be solved in time $2^{\OO(k/r)}\cdot (n+\log k)^{\OO(1)}$. In turn, by Lemma \ref{lem:colorProblemSpecial}, this means that \undiPathR\ can be solved in time $2^{\OO(k/r)}\cdot f(k/r)\cdot (n+\log k)^{\OO(1)}$.
\end{proof}

\section{$p$-Set $(r,q)$-Packing: FPT}\label{sec:packingFPT}

\newcommand{\cH}{{\mathcal{H}}}
\newcommand{\cF}{{\mathcal{F}}}

Recall that in the \setPackingR\ problem, the input consists of a ground set $V$, positive integers $p,q,r$, and a collection $\cH$ of sets of size  $p$ whose elements belong to $V$. The goal is to decide whether there exists a subcollection of $\cH$ of size $q$ where each element occurs at most $r$ times. Note that $\cH$ can contain copies of the same set, i.e., not all elements of $\cH$ are {\em distinct} sets.
In this section, we will show that \setPackingR\  parameterized by $\kappa=pq/r$ is FPT. This result is in sharp contrast with that for \multisetPackingR , where the elements of $\cH$ may be multisets rather than just sets. In Section \ref{sec:W1}, we will prove that \multisetPackingR\ parameterized by $\kappa$ is W[1]-hard.

In what follows, for convenience we will study a slight generalization of \setPackingR\ by allowing sets of $\cH$ to be of size {\em at most} $p.$

Let us consider an instance $(\cH, q, r)$ of \setPackingR, and denote $m=|\cH|$. Observe that if $q\le r$, then \setPackingR\ is trivial. Thus, in the rest of this section, we assume that $q>r$ and hence $p<\kappa.$

We show a reduction of the set-packing instance $(\cH, q, r)$ to a situation where the
ground set has size bounded by $f(\kappa).$  The reduction uses a tool
known as \emph{representative sets} to discard irrelevant parts of the
instance. Representative sets have important applications for both  
FPT algorithms~\cite{DBLP:journals/jacm/FominLPS16} and kernels~\cite{KratschW12FOCS};
see also~\cite[Ch.~12]{DBLP:books/sp/CyganFKLMPPS15}.
The full power of the tool emerges in a \emph{matroid} setting 
(see Lov\'asz~\cite{Lovasz1977} and Marx~\cite{Marx09-matroid}),
but we need only a restricted setting, which we summarize as follows. 
This follows from Theorem 1.1 of~\cite{DBLP:journals/jacm/FominLPS16}
when applied to the special case of \emph{uniform matroids}.
Note that a linear representation of a uniform matroid can be computed in
deterministic polynomial time using a Vandermonde
matrix~\cite[Section~2.5]{DBLP:journals/jacm/FominLPS16}, hence the
theorem can be applied.

\begin{theorem}[\cite{DBLP:journals/jacm/FominLPS16}]\label{thm:computeRepSets}
  Let $V$ be a ground set and $\cH$ a collection of $p$-sets
  in $V$. Let $\kappa \in \mathbb{N}$. 
  In time $(\binom{p+\kappa}{p} + |\cH|)^{\OO(1)}$
  we can compute a collection $\cH^* \subseteq \cH$
  with $|\cH^*| \leq \binom{p+\kappa}{p}$ 
  such that the following holds: For every $\kappa$-set $B \subseteq V$,
  there exists a set $A \in \cH$ disjoint from $B$
  if and only if there exists such a set $A \in \cH^*$. 
\end{theorem}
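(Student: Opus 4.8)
The plan is to recognize the claim as the classical statement that uniform matroids admit small, efficiently computable representative families, and then assemble the proof from three ingredients: a reformulation in matroid language, the Bollob{\'a}s set-pair inequality for the size bound, and the Fomin--Lokshtanov--Panolan--Saurabh representative-family algorithm for the running time.

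\emph{Reformulation.} First I would discard the elements of $V$ lying in no member of $\cH$: if a $\kappa$-set $B$ consists only of such elements then every $A\in\cH$ (and any nonempty candidate $\cH^*$) is disjoint from $B$, so these elements are irrelevant and we may assume $V=\bigcup_{A\in\cH}A$, hence $|V|\le p\,|\cH|$. Let $U=U_{|V|,\,p+\kappa}$ be the uniform matroid of rank $p+\kappa$ on $V$. For a $p$-set $A$ and a $\kappa$-set $B$ we have $A\cap B=\emptyset$ exactly when $A\cup B$ is an independent set of $U$ of size $p+\kappa$. Thus the family $\cH^*$ we want is precisely a $\kappa$-representative subfamily of $\cH$ in $U$, i.e.\ a subfamily $\widehat\cH\subseteq\cH$ such that whenever a $\kappa$-set $B$ extends some $A\in\cH$ to an independent set it also extends some $\widehat A\in\widehat\cH$ to one.

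\emph{Size bound.} Let $\cH^*$ be an inclusion-minimal subfamily of $\cH$ with the representative property ($\cH$ itself has the property, so a minimal one exists). For each $A\in\cH^*$, dropping $A$ destroys the property, so there is a $\kappa$-set $B_A$ disjoint from some member of $\cH$ but from no member of $\cH^*\setminus\{A\}$; since $\cH^*$ still has the property, that disjoint member must be $A$, so $A\cap B_A=\emptyset$ while $A'\cap B_A\neq\emptyset$ for all $A'\in\cH^*\setminus\{A\}$. The pairs $(A,B_A)_{A\in\cH^*}$ are cross-intersecting in the sense of Bollob{\'a}s's set-pair inequality, with $|A|=p$ and $|B_A|=\kappa$, which yields $|\cH^*|\le\binom{p+\kappa}{p}$.

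\emph{Computation and the hard part.} The matroid $U$ is representable over any field $\mathbb{F}$ with $|\mathbb{F}|\ge|V|$: choose distinct $x_v\in\mathbb{F}$ and take the column $(1,x_v,\dots,x_v^{\,p+\kappa-1})^{\top}$ for $v\in V$, so every $p+\kappa$ columns are independent (a Vandermonde determinant is nonzero); working over $\mathbb{Z}_{\mathsf p}$ for a prime $\mathsf p$ just above $|V|=O(p|\cH|)$ keeps each arithmetic operation polynomial in $|\cH|$. Feeding this representation together with $\cH$ (all members independent, as $p<p+\kappa$) into the representative-family algorithm for linear matroids of Fomin, Lokshtanov, Panolan and Saurabh produces, in $(\binom{p+\kappa}{p}+|\cH|)^{O(1)}$ field operations, a $\kappa$-representative subfamily of size at most $\binom{p+\kappa}{p}$, which by the reformulation is the desired $\cH^*$. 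The one non-routine step is this last one: a naive ``delete any member no longer needed'' loop stalls, because testing removability of a single $A$ is a transversal question (is there a $\kappa$-element hitting set of $\cH^*\setminus\{A\}$ avoiding $A$?), not known to be polynomial. The FLPS algorithm circumvents this by encoding each $p$-set through the $\binom{p+\kappa}{p}$-vector of $p\times p$ minors (Pl{\"u}cker coordinates) of its columns and pruning a set exactly when its minor-vector lies in the span of those already kept---a linear-algebra query of cost polynomial in the output size; reproving that step from scratch is the real content behind the cited theorem.
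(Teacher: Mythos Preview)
The paper does not prove this theorem at all: it is quoted as a black-box result from \cite{FominLPS16JACM}, with the only commentary being the remark that $\cH^*$ is technically a representative family for $\cH$ in the uniform matroid $U_{n,\kappa+p}$ and a pointer to \cite{DBLP:books/sp/CyganFKLMPPS15} for details. So there is nothing in the paper to compare your argument against.

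That said, your sketch is a faithful reconstruction of the standard proof. The reformulation via the uniform matroid is exactly the one the paper alludes to; the Bollob{\'a}s set-pair argument for the size bound and the Vandermonde representation feeding into the FLPS linear-matroid algorithm are the textbook route (indeed the one in \cite{DBLP:books/sp/CyganFKLMPPS15}). Your closing paragraph correctly identifies that the algorithmic content lies in the FLPS minor-vector/span test rather than in any naive deletion loop. One small point worth making explicit: if $|V|<p+\kappa$ after discarding unused elements, then no $p$-set and $\kappa$-set can be disjoint at all, so $\cH^*=\emptyset$ (or any singleton) already works; this disposes of the edge case where the uniform matroid would otherwise have rank exceeding the ground-set size.
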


We refer to $\cH^*$ as a \emph{$\kappa$-representative set} (or
\emph{representative family}) of $\cH$, although technically,
$\cH^*$ is \emph{representative for $\cH$ 
in the uniform matroid $U_{n,\kappa+p}$}, where $n=|V|$.
See~\cite{DBLP:books/sp/CyganFKLMPPS15} for details.

Given this result, we need only two simple reduction rules.
 
\begin{redrule}\label{rule:disc}
  Discard any element that occurs at most~$r$
  times.
  Exclude any empty sets and reduce $q$ by the number of empty sets. 
\end{redrule}

\begin{lemma} 
Reduction Rule~\ref{rule:disc} is sound.
\end{lemma}
\begin{proof}
Assume that the original instance has a subcollection of size $q$ where each element occurs at most $r$ times. Clearly, the reduction rule does not increase the number of sets in the subcollection and each element occurs at most $r$ times. Now we prove the opposite direction.
Let $q'$ be $q$ minus the number of empty sets and suppose that the reduced instance is positive, i.e., there is a subcollection ${\cH}'$ of size $q'$ where each element occurs at most $r$ times.
By adding elements discarded from the sets in ${\cH}'$ and the $q-q'$ empty sets, we obtain the required subcollection for the original instance. 
\end{proof}

We now have $m > nr/p$, i.e., $n < mp/r$. Our second rule will
decrease the value of $m$. 

\begin{redrule} \label{rule:pqr2}
  Pad $\cH$ to be $p$-uniform using dummy elements for smaller sets. 
  Compute and put aside $q$ disjoint $\kappa$-representative sets 
 as follows. Let $\cH_1=\cH$; for $i=1, \ldots, q$ compute
    a representative set $\cH_i^* \subseteq \cH_i$ 
    in the uniform matroid $U_{n,\kappa+p}$ 
    using Theorem~\ref{thm:computeRepSets};
    and let $\cH_{i+1}=\cH_i \setminus \cH_i^*$. 
    Finally discard any sets remaining in $\cH_{q+1}$. 
\end{redrule}

\begin{lemma} \label{lemma:rule-pqr2}
  Reduction Rule~\ref{rule:pqr2} is sound and leaves 
  at most 
  \[
  m \leq q \binom{\kappa+p}{p}\le q 4^{\kappa}
  \] 
  sets. The rule can be applied in time polynomial
  in the input size and $\binom{\kappa+p}{p}$. 
\end{lemma}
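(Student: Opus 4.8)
The plan is to settle the two claims in the statement separately, disposing of the size and running-time bounds first (they are immediate from the representative-set Theorem) and then concentrating on soundness, which is the real content. Write $\mathcal R$ for the sub-multiset of (the padded) $\cH$ consisting of all copies pulled into one of the $q$ successively extracted representative sets $\cH^*_1,\dots,\cH^*_q$, where $\cH^*_j$ is representative for the collection that remains after $\cH^*_1,\dots,\cH^*_{j-1}$ have each had one copy of their members removed. By the Theorem each $|\cH^*_j|\le\binom{\kappa+p}{p}$, so the surviving family $\mathcal R$ has at most $q\binom{\kappa+p}{p}$ sets; and since each of the $q$ representative-set computations runs in time $(\binom{\kappa+p}{p}+|\cH|)^{O(1)}$, the padding step is trivial, and $q$ is bounded by the input size, the whole rule runs in time polynomial in the input size and $\binom{\kappa+p}{p}$.

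For soundness, one direction is trivial: after unpadding, $\mathcal R\subseteq\cH$, so every solution of the reduced instance is a solution of the original. For the converse, given a solution $\{A_1,\dots,A_q\}$ of $(\cH,q,r)$ (a sub-multiset of $\cH$ in which every element occurs at most $r$ times), I would argue by induction on $\ell=0,1,\dots,q$ that there is a solution $\{B_1,\dots,B_\ell,A_{\ell+1},\dots,A_q\}$, a sub-multiset of $\cH$, in which $\{B_1,\dots,B_\ell\}$ is a sub-multiset of $\mathcal R$. In the step from $\ell$ to $\ell+1$, delete the copy $A_{\ell+1}$; the remaining $q-1$ sets carry at most $p(q-1)$ element-occurrences, so the set $B$ of elements that now occur exactly $r$ times has $|B|\le p(q-1)/r<pq/r\le\kappa$. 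Since $A_{\ell+1}$ together with those $q-1$ sets was a valid solution, $A_{\ell+1}$ is disjoint from $B$; the key lemma below then provides a copy $B_{\ell+1}\in\mathcal R$ that is disjoint from $B$ and does not already occur in the current collection $\{B_1,\dots,B_\ell,A_{\ell+2},\dots,A_q\}$. Re-inserting $B_{\ell+1}$ raises the occurrence count of each element by at most one and only for elements whose count was at most $r-1$, and adds a genuinely fresh copy of $\cH$, so $\{B_1,\dots,B_{\ell+1},A_{\ell+2},\dots,A_q\}$ is again a solution and a sub-multiset of $\cH$; at $\ell=q$ this yields a solution contained in $\mathcal R$.

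The key lemma is: for every $B$ with $|B|\le\kappa$, the number of copies in $\mathcal R$ disjoint from $B$ is at least $\min\{q,m_B\}$, where $m_B$ is the number of copies in $\cH$ disjoint from $B$. I would prove this by induction over the extraction layers: writing $\cH^{(0)}=\cH$ and $\cH^{(j)}=\cH^{(j-1)}\setminus\cH^*_j$, as long as fewer than $m_B$ of the $B$-disjoint copies have been extracted into $\cH^*_1,\dots,\cH^*_{j-1}$, the pool $\cH^{(j-1)}$ still contains a $B$-disjoint copy, so by the defining property of representative sets $\cH^*_j$ contains one too; hence after $q$ layers at least $\min\{q,m_B\}$ $B$-disjoint copies lie in $\mathcal R$. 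To turn this into the copy $B_{\ell+1}$ needed above, observe that $A_{\ell+1}$ is itself a $B$-disjoint copy of $\cH$ absent from the current collection, so $m_B$ strictly exceeds the number of $B$-disjoint copies present there (which is at most $q-1$); comparing this with the lower bound $\min\{q,m_B\}$ — split on whether $m_B\ge q$ — shows $\mathcal R$ contains a $B$-disjoint copy not in the current collection, and a $B$-disjoint copy can never coincide with a non-$B$-disjoint one, so it is genuinely fresh.

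The main obstacle I anticipate is the multiset bookkeeping: $\cH$ and its solutions may reuse the same $p$-set, whereas the representative-set Theorem speaks of collections of distinct sets. I would handle this by tracking \emph{copies} (elements of the multiset) rather than sets throughout the exchange, and by reading the rule's extraction as removing one copy of each chosen set, which is exactly what makes $\mathcal R$ a sub-multiset of $\cH$; with this convention the counting in the previous paragraph is per-copy and needs no assumption on multiplicities. Two routine points also need a line each: the dummy elements introduced by the padding are private to their set, so a padded set is disjoint from a subset of the original ground set precisely when its real part is — which is all the soundness argument ever tests — and if $pq/r$ is not an integer one simply reads $\kappa$ as $\lceil pq/r\rceil$, the inequalities above still holding.
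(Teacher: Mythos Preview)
Your proof is correct and follows essentially the same exchange-argument strategy as the paper. The paper organizes it as a maximality argument (take a solution $\cF$ with maximum intersection with $\cH'$; if some $E\in\cF\setminus\cH'$ exists, let $X'$ be the elements at count $r$ in $\cF$ minus those in $E$; since $E$ itself remains in every residual pool $\cH^{(j-1)}$, each layer $\cH^*_j$ yields a set disjoint from $X'$, giving $q$ candidates of which one is fresh, and the swap contradicts maximality), whereas you do an explicit induction swapping one $A_\ell$ at a time and abstract the counting into a separate ``key lemma'' that $\mathcal R$ contains at least $\min\{q,m_B\}$ $B$-disjoint copies. Both are the same exchange step under the hood; the paper's version is shorter because the single outlying set $E$ directly witnesses nonemptiness at every layer, while your lemma proves a slightly stronger statement than you need. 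Your handling of the multiset bookkeeping and the padding is more explicit than the paper's and is fine.
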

\begin{proof}
  Each representative set has size at most $\binom{\kappa+p}{p}$, 
  hence the total size of the output is 
  $
  m' \leq q \binom{\kappa+p}{p}.
  $
  Since $p<\kappa$, we have $\binom{\kappa+p}{p}\le 4^{\kappa}$ implying $m'\le q 4^{\kappa}.$
  
  We will argue correctness. Let $\cH'$ be the instance produced.
  Clearly, if $\cH'$ is positive, then so is $\cH$.
  Now assume that $\cH$ is positive, and let $\cF \subseteq \cH$,
  $|\cF|=q$, be a solution with maximum intersection with $\cH'$.
  Assume that there exists a set $E \in \cF \setminus \cH'$. Let $X$ be the 
  set of elements that occur precisely $r$ times in $\cF$, 
  and let $X'=X \setminus E$. Thus $|X'| \leq |X| \leq \kappa$. 
  Then, since $E \notin \cH'$,
  each representative family contains at least one set $E'$
  disjoint from $X'$, i.e., $q$ alternative sets $E'$ in total. Since
  $|\cF\setminus\{E\}|<q$, for at least one such set $E'$ it also holds that $E' \notin \cF$. 
  Then $(\cF\setminus\{E\})\cup\{E'\}$ is a packing of $q$ sets, where every element occurs
  in at most $r$ sets, and with a larger intersection with $\cH'$ than
  $\cF$, which contradicts that $\cF$ was maximal. Thus $\cF \subseteq \cH'$,
  and the output instance is positive. The running time follows from
  the computation of a representative set (see Theorem \ref{thm:computeRepSets}).
\end{proof}

In fact, these two simple rules give us a trivial parameter setting. 

\begin{lemma} \label{lemma:pqr-ground-set-reduction}
  Assume that the two rules have been applied exhaustively. Then 
  $n < f(pq/r)$ where $f(\kappa) = \kappa4^{\kappa}$.
\end{lemma}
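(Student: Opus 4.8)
The plan is to read off the bound at the common fixed point of the two rules. First I would note that once the first reduction rule can no longer be applied, every element of the ground set $V$ occurs in at least $r+1$ sets of $\cH$. Counting element--set incidences and using that every set has at most $p$ elements, this gives
\[
p m \;\ge\; \sum_{v\in V}\bigl|\{H\in\cH : v\in H\}\bigr| \;\ge\; n(r+1) \;>\; nr,
\]
so $n < mp/r$ --- exactly the inequality already recorded after the statement of the first rule.

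Next I would use that Reduction Rule~\ref{rule:pqr2} is also saturated. By Lemma~\ref{lemma:rule-pqr2} a single application of this rule leaves at most $q\binom{\kappa+p}{p}$ sets; since at a fixed point $\cH$ is unchanged by the rule, in fact $m=|\cH|\le q\binom{\kappa+p}{p}$ already. It is precisely because applying Reduction Rule~\ref{rule:pqr2} may create elements of low occurrence, re-enabling the first rule, that one must invoke \emph{exhaustive} application of both rules in order to have the two bounds hold at the \emph{same} instance.

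Combining the two inequalities,
\[
n \;<\; \frac{mp}{r} \;\le\; \frac{p}{r}\,q\binom{\kappa+p}{p} \;=\; \frac{pq}{r}\binom{\kappa+p}{p} \;=\; \kappa\binom{\kappa+p}{p}.
\]
It then remains to estimate the binomial coefficient. We are in the case $p<\kappa$ (recorded at the start of the section, as $q\le r$ makes the problem trivial), so $\kappa+p\le 2\kappa$ and hence $\binom{\kappa+p}{p}\le 2^{\kappa+p}\le 2^{2\kappa}=4^{\kappa}$, which gives $n < \kappa\cdot 4^{\kappa}=f(pq/r)$, as claimed.

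The argument has no genuinely hard step; the only things requiring a little care are the two just flagged: that the incidence bound and the representative-set bound must be taken at the same exhaustively reduced instance, and the minor integrality bookkeeping around $\kappa=pq/r$ not necessarily being an integer. For the latter one uses that the relevant parameter in the representative families (and hence in Lemma~\ref{lemma:rule-pqr2}) is only $\lfloor pq/r\rfloor$, together with $p\le\lfloor pq/r\rfloor$ (which again follows from $p<\kappa$ and $p\in\mathbb{N}$), so the estimate $\binom{\kappa+p}{p}\le 4^{\kappa}$ still goes through.
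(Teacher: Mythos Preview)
Your proof is correct and follows essentially the same approach as the paper's: combine the incidence inequality $n<mp/r$ from exhaustion of the first rule with the bound $m\le q\binom{\kappa+p}{p}$ from Lemma~\ref{lemma:rule-pqr2}, then estimate the binomial coefficient by $4^{\kappa}$ using $p<\kappa$. The paper's proof is terser (it jumps directly to $m<q4^{pq/r}$ without spelling out the binomial estimate), but the logic is identical; your extra remarks on needing both bounds at the same reduced instance and on the integrality of $\kappa$ are valid clarifications that the paper glosses over.
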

\begin{proof}
  On the one hand, since every element of the ground set occurs in
  more than $r$ sets of the input, there are $m > rn/p$ sets in the
  input, hence $n < mp/r$. On the other hand, by the representative
  sets reduction we have $m < q4^{pq/r}$. Then
  \[
  n < mp/r < q4^{pq/r}p/r = f(pq/r).
  \]
\end{proof}

It is now easy to solve the problem via an application of an ILP solver.

\begin{lemma} \label{lemma:pqr-ilp-alg}
  An instance of \setPackingR\ on a ground set of size $n$
  can be solved in time $\OO(n^{\OO(pn^p)}\log q).$
\end{lemma}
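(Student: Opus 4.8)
The plan is to encode the problem as a \textsc{Feasibility ILP} whose number of variables depends only on $n$ and $p$, and then invoke Theorem~\ref{prop:ILP}. The key observation is that, although $\cH$ may be large, the ``type space'' of sets is polynomially bounded in $n$: there are at most $\binom{n}{p}\le n^p$ distinct $p$-subsets of the ground set $V$ (and if we retain sets of size at most $p$ rather than padding as in Reduction Rule~\ref{rule:pqr2}, still at most $\sum_{i\le p}\binom{n}{i}\le n^{p}$ of them for $n\ge 2$). So the input collection $\cH$ consists of copies of at most $N:=n^p$ distinct sets $S_1,\dots,S_N$, and in a single pass over $\cH$ one computes the multiplicity $m_j$ of each $S_j$.

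Next I would set up an ILP with one nonnegative integer variable $x_j$ for each $j\in\{1,\dots,N\}$, meant to record how many copies of $S_j$ are taken into the solution. The constraints are: $x_j\le m_j$ for every $j$ (availability); $\sum_{j=1}^{N}x_j=q$ (the solution has size $q$); and, for every element $v\in V$, $\sum_{j:\,v\in S_j}x_j\le r$ (each element occurs in at most $r$ chosen sets). A routine argument shows this ILP is feasible exactly when $(\cH,q,r)$ is a \yes-instance: from any solution subcollection, counting how often each distinct set is used yields a feasible $x$; conversely, given a feasible $x$, choosing $x_j$ of the available copies of $S_j$ for each $j$ produces a valid subcollection of size $q$. (One may relax the equality to ``$\ge q$'' since the occurrence constraints only get easier when sets are deleted, but the equality form is harmless.)

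Finally I would bound the parameters of the ILP and apply Theorem~\ref{prop:ILP}. The number of variables is $N\le n^p$; the instance size $L$ is polynomial in $N$ and in the bit-lengths of $q$, $r$, and the $m_j$'s, hence polynomial in $n^p$ and in the input size, and the coefficients are $0/1$ except for the bounds $m_j,q,r$. By Theorem~\ref{prop:ILP}, a \textsc{Feasibility ILP} with $N$ variables and size $L$ is solvable in $O(N^{2.5N+o(N)}L)$ arithmetic operations and space polynomial in $L$; substituting $N\le n^p$ gives $n^{O(p n^p)}$ as claimed (the $o(N)$ term and the factor $L$, being $n^{O(1)}$ times a polynomial in the input, are absorbed into the exponent $O(pn^p)$ of $n$).

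The main point requiring care is purely bookkeeping: verifying that the size $L$ of the ILP instance, its coefficients, and the resulting arithmetic-operation count are all controlled by $n$ (the polynomial dependence on the input being swallowed by $O(n^{O(pn^p)})$), and handling the size-exactly-$p$ versus size-at-most-$p$ issue inherited from Reduction Rule~\ref{rule:pqr2}. There is no conceptual obstacle: the whole argument rests on the fact that the number of distinct $p$-sets is $n^{O(p)}$, so an ILP in that many variables settles the problem within the stated time bound. Combined with Lemma~\ref{lemma:pqr-ground-set-reduction}, which bounds $n$ in terms of $\kappa=pq/r$, this yields the desired \FPT\ algorithm for \setPackingR.
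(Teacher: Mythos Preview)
Your proposal is correct and follows essentially the same approach as the paper: encode the instance as a \textsc{Feasibility ILP} with one variable per distinct set type (of which there are at most $n^p$), impose multiplicity, packing, and size-$q$ constraints, and apply Theorem~\ref{prop:ILP}. The paper's proof is slightly terser but the idea and the resulting bound are identical.
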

\begin{proof}
Let   $M^*$ be the collection of all distinct sets in the input and let $m^*=|M^*|.$  
  Then $m^*=\OO(n^p)$. 
To write an instance of {\sc Feasibility ILP}  that encodes the problem,  let us introduce $m^*$ variables $x_E$ ($E\in M^*$) denoting the number of copies of $E$ to use in the
 solution. The constraints are as follows: (a) for each $E\in M^*$, $X_E\le \mu_E,$ the multiplicity
 of $E$ in the input; (b) $\sum_{E\in M^*_v} X_E\le r$ for each $v\in V,$ where $M^*_v=\{E\in M^*:\ v\in E\};$  (c) $\sum_{E\in M^*} X_E\ge q.$
 Thus, the number of variables, constraints and size of the {\sc Feasibility} ILP instance are $m^*$, $m^*+n+1$ and (since $q>r$) $\OO(n(m^*+\log q))=\OO(n^{p+1}+n\log q),$ respectively. 
Hence, by Theorem \ref{prop:ILP}, we can solve the instance in time  $\OO(n^{\OO(pn^p)}\log q).$  
\end{proof}

Now we can obtain the main result of this section.

\begin{theorem}
\setPackingR\ parameterized by $\kappa$ is FPT.
\end{theorem}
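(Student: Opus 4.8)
The plan is to compose the machinery already built in this section. First, dispose of the degenerate case: if $q \le r$, then (provided $|\cH|\ge q$) any $q$ sets of $\cH$ form a valid packing, since each element then occurs in at most $q\le r$ sets; so the instance is decided in polynomial time. Hence we may assume $q>r$, which yields the key inequality $p < pq/r = \kappa$. This is what makes the approach work: it lets us regard $p$ itself — not only the ground-set size — as controlled by the parameter.

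Next, apply Reduction Rule~1 and Reduction Rule~2 alternately until neither changes the instance. Each application strictly decreases $|V|$, $|\cH|$, or $q$, so only polynomially many rounds occur; and by Lemma~\ref{lemma:rule-pqr2} a single application of Reduction Rule~2 costs time polynomial in the input size and $\binom{\kappa+p}{p} \le \binom{2\kappa}{\kappa} \le 4^{\kappa}$, so the whole reduction phase runs in \FPT\ time. Observe that neither rule increases $p$ or $q$ or decreases $r$ (Reduction Rule~1 may only shrink $q$), so $\kappa$ can only go down, and in particular $p<\kappa$ is preserved. By Lemma~\ref{lemma:pqr-ground-set-reduction}, the resulting equivalent instance lives on a ground set of size $n < f(\kappa) = \kappa 4^{\kappa}$.

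Finally, invoke Lemma~\ref{lemma:pqr-ilp-alg} on the reduced instance, which solves it in time $O\!\left(n^{O(pn^{p})}\right)$. Substituting $n < \kappa 4^{\kappa}$ and $p < \kappa$ bounds this quantity entirely in terms of $\kappa$, so altogether we obtain a $g(\kappa)\cdot(\text{input size})^{O(1)}$-time algorithm, and the theorem follows. The only points requiring care — and they are routine rather than deep — are checking that the iterated reduction terminates after polynomially many rounds and that the relation $p<\kappa$ survives the reductions (so that the ILP of Lemma~\ref{lemma:pqr-ilp-alg} indeed has a parameter-bounded number $O(n^{p})$ of variables); all the substantive work, namely the representative-set shrinking and the ILP encoding, is already encapsulated in the earlier lemmas.
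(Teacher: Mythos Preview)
Your proposal is correct and follows essentially the same approach as the paper's proof: handle the trivial case $q\le r$ to secure $p<\kappa$, apply the two reduction rules exhaustively, invoke Lemma~\ref{lemma:pqr-ground-set-reduction} to bound the ground set, and finish with the ILP of Lemma~\ref{lemma:pqr-ilp-alg}. You have simply spelled out details (termination of the reduction loop, preservation of $p<\kappa$, the bound $\binom{\kappa+p}{p}\le 4^{\kappa}$) that the paper leaves implicit in its three-line argument.
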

\begin{proof}
Recall that $p<\kappa$. We may assume that our instance of \setPackingR\  has been reduced by the two reduction rules above.
By Lemma \ref{lemma:pqr-ground-set-reduction}, $n<\kappa4^{\kappa}$. Thus, by Lemma \ref{lemma:pqr-ilp-alg}, \setPackingR\ parameterized by $\kappa$ is FPT.
\end{proof}

We observe that the same reduction gives a polynomial kernel when $p$
is a constant.

\begin{theorem}\label{thm:gkernel}
  The \setPackingR\ problem for constant $p$ has a polynomial-time
  reduction to a ground set of size $\OO((q/r)^{p+1})$ and a generalized
  polynomial kernel of $\OO((q/r)^{p^2+p}\log r) = 
  \OO((q/r)^{2(p^2+p)} \log (q/r))$ bits. 
\end{theorem}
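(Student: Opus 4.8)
The plan is to retrace the two reduction rules used in the FPT algorithm and bound the bit-size of the resulting instance when $p$ is a fixed constant. First I would apply Reduction Rule~3.1 (discard elements occurring at most $r$ times, delete empty sets), which is clearly polynomial-time and sound. After this rule, if the instance still has $m$ sets, then as in Lemma~\ref{lemma:pqr-ground-set-reduction} we have $n < mp/r$. Next I would apply Reduction Rule~\ref{rule:pqr2}: compute $q$ disjoint representative families in the uniform matroid $U_{n,\kappa+p}$ and keep only the sets lying in one of them. By Lemma~\ref{lemma:rule-pqr2} this leaves $m \leq q\binom{\kappa+p}{p}$ sets. For constant $p$ we have $\binom{\kappa+p}{p} = O(\kappa^p) = O((q/r)^p)$ (using $\kappa = pq/r = O(q/r)$ since $p$ is constant), hence $m = O(q(q/r)^p)$ and therefore $n < mp/r = O((q/r)\cdot(q/r)^p) = O((q/r)^{p+1})$ elements. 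This establishes the first claim: a polynomial-time reduction to a ground set of size $O((q/r)^{p+1})$, since the number of sets is $m = O(q(q/r)^p)$, which is also polynomial in $q/r$ once we observe $q$ itself need not be large relative to $r$ — more carefully, if $q \le r$ the instance is trivial, so we may assume $q > r$, but that does not bound $q$; however the number of \emph{distinct} sets over a ground set of size $n$ is at most $\binom{n}{p} = O(n^p) = O((q/r)^{p(p+1)})$, and we only need to record, for each distinct set, its multiplicity capped at $r$ (larger multiplicities are equivalent to $r$ for feasibility purposes, though one should double-check this capping is sound — a set appearing more than $q$ times is also equivalent to appearing $q$ times).

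For the generalized kernel, I would encode the reduced instance explicitly: list the (at most $O((q/r)^{p(p+1)})$) distinct $p$-subsets of the ground set of size $O((q/r)^{p+1})$, each specified by $p$ indices of $O(\log(q/r))$ bits, together with a multiplicity in $\{0,1,\dots,\min(r,q)\}$ requiring $O(\log r)$ bits, plus the value $q$ which is $O(\log r + \log(q/r)) = O(\log q)$ bits (and $q = O(r\cdot(q/r))$ so this is absorbed). The dominant term is the number of distinct sets times the bits per set: $O((q/r)^{p(p+1)}) \cdot O(\log r)$. Re-deriving the exponent: the ground set has size $O((q/r)^{p+1})$, so distinct $p$-sets number $O((q/r)^{p(p+1)}) = O((q/r)^{p^2+p})$, and each costs $O(p\log(q/r) + \log r) = O(\log r)$ bits for constant $p$, giving $O((q/r)^{p^2+p}\log r)$ bits total. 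Finally, since we may assume $r \le q$ (otherwise trivial) we have $\log r \le \log q = \log(r \cdot q/r) \le \log r + \log(q/r)$; and in the regime where the kernel is nontrivial one checks $\log r = O((q/r)^{p^2+p}\log(q/r))$ so the bound simplifies to $O((q/r)^{2(p^2+p)}\log(q/r))$ bits, matching the statement. The output instance should be interpreted as an instance of the same problem \setPackingR\ (so it is a genuine kernel when viewed with parameter $\kappa' = pq'/r'$; one verifies $q' \le q$, $r' = r$, so $\kappa' \le \kappa$), though since the reduction is packaged as a \emph{generalized} kernel it suffices that the target problem be in NP.

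The main obstacle I anticipate is not any single computation but the bookkeeping of which quantities are polynomially bounded in $q/r$ versus which merely have bounded \emph{logarithm}: the ground set size and number of distinct sets are polynomial in $q/r$, but multiplicities and $q$ itself can be as large as $\sim r \cdot (q/r)$, so they contribute only additively via their bit-length $O(\log r)$. Getting the exponent exactly $p^2+p$ (rather than, say, $p^2+2p$) requires using the sharper bound $n = O((q/r)^{p+1})$ from the chain $n < mp/r$ and $m = O(q\kappa^p)$ together with $q/r = \Theta(\kappa)$, rather than naively bounding $m$ by $q\cdot 4^\kappa$ as in the FPT proof; so I would make sure to invoke the polynomial form $\binom{\kappa+p}{p} = O(\kappa^p)$ valid for constant $p$. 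A secondary subtlety is confirming that capping multiplicities at $\min(r,q)$ preserves yes/no status, which follows because in any feasible packing an element is used at most $r$ times and at most $q$ sets are chosen in total.
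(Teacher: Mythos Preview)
Your proposal is correct and follows essentially the same approach as the paper: apply the two reduction rules, bound $m \leq q\binom{\kappa+p}{p} = O(q(q/r)^p)$ and then $n = O(mp/r) = O((q/r)^{p+1})$, count distinct set types as $O(n^p) = O((q/r)^{p^2+p})$, and encode each with a multiplicity of $O(\log r)$ bits.

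The one place where your argument is softer than the paper's is the final step converting $O((q/r)^{p^2+p}\log r)$ into $O((q/r)^{2(p^2+p)}\log(q/r))$. You write ``in the regime where the kernel is nontrivial one checks $\log r = O((q/r)^{p^2+p}\log(q/r))$'' without saying how. The paper makes this concrete: since the input encodes $\cH$ without multiplicities and $r \leq q \leq m$, the integer $r$ is at most the input size; if the input size exceeds $2^{m^*\log m^*}$ (with $m^* = O((q/r)^{p^2+p})$) then the instance can be solved outright in polynomial time, and otherwise $\log r \leq m^*\log m^*$, which gives exactly the bound you asserted. This is a standard kernelization trick, so your sketch is morally complete, but it would be worth stating the threshold argument explicitly.
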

\begin{proof}
  By Lemma~\ref{lemma:rule-pqr2}, if the reduction rules have been
  applied then the number of sets is bounded by
  \[
  m \leq q \binom{pq/r+p}{p} \leq q(pq/r+p)^p,
  \]
  and as in Lemma~\ref{lemma:pqr-ground-set-reduction} with $p$ a
  constant we get $n = \OO(m/r)$. Putting them together,
  we get $n = \OO((q/r)^{p+1})$. This gives the first result.
  For the latter, we may observe that the reduction produces a
  multiset where at most 
  \[
  m^* \leq (n+1)^p = \OO((q/r)^{p(p+1)})
  \]
  distinct sets are possible (since sets have size at most $p$).
  Hence the instance can be described by giving the multiplicity in
  the input for each set type, keeping only the first $r$ copies of
  each set. This gives a description with $m^* \log r$ bits. 
  Finally, we note that $r \leq q \leq m$ and that the input instance
  of \setPackingR\ is coded without multiplicities; hence $r$ is
  bounded by the total input size. If the total input size 
  is at least $2^{m^* \log m^*}$ then we can solve the problem
  completely in polynomial time, otherwise we have 
  $\log r \leq m^* \log m^*$.
\end{proof}

We will use the following simple lemma.

\begin{lemma}[\cite{AlonGKSY11}]\label{lem:GKtoK}
Let $L,L'$ be a pair of decidable parameterized problems such that $L'$ is
 in \NP,  and $L$  is \NP-complete.  If there is a  general kernelization from $L$  to $L'$
producing a generalized kernel of polynomial size,  then $L$  has a polynomial-size kernel.
\end{lemma}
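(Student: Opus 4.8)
The plan is to \emph{compose} the given generalized kernelization from $L$ to $L'$ with a classical polynomial-time reduction \emph{from $L'$ back to $L$}, the latter being available precisely because $L$ is \NPC. The only real subtlety is that in order to invoke \NP-completeness one must pass through the unparameterized versions of the two problems and then re-attach a parameter at the very end; everything else is bookkeeping.

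First I would fix the classical languages associated with the parameterized problems, e.g. $\widehat{L}=\{\,I\#1^k : (I,k)\in L\,\}$ and $\widehat{L'}$ defined analogously, where $\#$ is a fresh symbol and $1^k$ encodes the parameter in unary; decidability of $L$ and $L'$ guarantees that $\widehat{L}$ and $\widehat{L'}$ are well-defined languages, so the hypotheses ``$\widehat{L}$ is \NPC'' and ``$\widehat{L'}\in\NP$'' are meaningful. Since $\widehat{L}$ is \NP-hard, there is a polynomial-time many-one reduction $R$ from $\widehat{L'}$ to $\widehat{L}$, and I would fix one such $R$.

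Next I would assemble the kernelization for $L$. Given an instance $(I,k)$ of $L$, run the assumed generalized kernelization $\mathcal{A}$ to obtain in polynomial time an instance $(I',k')$ of $L'$ with $|I'|\le g(k)$ and $k'\le g(k)$ for a fixed polynomial $g$, and with $(I,k)\in L \iff (I',k')\in L'$. Encode $(I',k')$ as the string $x=I'\#1^{k'}$; because $k'\le g(k)$ with $g$ polynomial, $|x|=k^{\OO(1)}$, and $x\in\widehat{L'} \iff (I,k)\in L$. Apply $R$ to get $y=R(x)$, so $|y|=|x|^{\OO(1)}=k^{\OO(1)}$ and $y\in\widehat{L} \iff (I,k)\in L$. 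Finally decode: if $y$ has the form $J\#1^m$, output the $L$-instance $(J,m)$; otherwise output a fixed trivial no-instance of $L$ (which is correct, since a malformed $y$ cannot lie in $\widehat{L}$, whence $(I,k)\notin L$, and such a trivial no-instance exists because an \NPC\ problem has both yes- and no-instances). In both cases the output is an $L$-instance whose size \emph{and} parameter are $k^{\OO(1)}$, it is a yes-instance of $L$ iff $(I,k)\in L$, and the whole pipeline runs in polynomial time, because $\mathcal{A}$ and $R$ are polynomial-time and the unary encoding/decoding of the parameter is polynomial precisely because $k'$ is polynomially bounded. This yields the desired polynomial kernel for $L$.

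The step I expect to be the main obstacle is not a deep argument but the careful bookkeeping at the parameterized/classical interface: keeping the unary encoding of $k'$ of polynomial length (which is exactly why we need the generalized kernel to bound \emph{both} $|I'|$ and $k'$ by $g(k)$, not merely $|I'|$), and correctly handling the case where the classical reduction $R$ produces a string that does not encode any parameterized instance of $L$, which is dispatched by the trivial-no-instance fallback above.
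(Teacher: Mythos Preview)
The paper does not give its own proof of this lemma; it simply cites it from \cite{AlonGKSY11}. Your argument is correct and is precisely the standard proof: compose the generalized kernelization $L\to L'$ with the Karp reduction $\widehat{L'}\to \widehat{L}$ guaranteed by \NP-completeness of $L$ and membership of $L'$ in \NP, using unary encoding of the parameter so that the intermediate string stays polynomial in $k$, and observe that the final parameter $m$ is bounded by $|y|=k^{\OO(1)}$ because it is encoded in unary inside $y$. Your handling of the edge cases (malformed output of $R$, existence of a trivial no-instance) is also fine.
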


Theorem  \ref{thm:gkernel} and Lemma \ref{lem:GKtoK} imply the following:

\begin{corollary}
The \setPackingR\ problem for constant $p$ admits a polynomial size kernel.
\end{corollary}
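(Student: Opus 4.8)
The plan is to combine Theorem~\ref{thm:gkernel} with Lemma~\ref{lem:GKtoK}, so the first thing to do is pin down the target problem $L'$ of the generalized kernelization. I would take $L'$ to be the succinctly-encoded variant of \setPackingR\ in which, after the reduction rules of Theorem~\ref{thm:gkernel} have been applied, the collection $\cH$ is given by listing, for each of the $m^* = O((q/r)^{p(p+1)})$ possible distinct $p$-sets over the reduced ground set, its multiplicity in binary (capped at $r$, since further copies can never help). This problem is clearly in \NP: a certificate is the vector recording how many copies of each set type the solution uses, which has size polynomial in the succinct encoding, and checking that the entries sum to $q$, respect the capped multiplicities, and leave every element in at most $r$ chosen sets is polynomial-time. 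By Theorem~\ref{thm:gkernel}, the reduction runs in polynomial time and outputs such an instance whose bit-size is $O((q/r)^{p^2+p}\log r)$, which for constant $p$ is polynomial in the parameter $\kappa = pq/r$; the parameter of the output is trivially bounded as well. Hence this is a generalized kernelization from \setPackingR\ (with constant $p$) to $L'$ of polynomial size.

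Next I would verify the remaining hypotheses of Lemma~\ref{lem:GKtoK}: both problems are decidable, $L' \in \NP$ as just argued, and \setPackingR\ with constant $p$ is \NP-complete. Membership in \NP\ is immediate (guess a subcollection of size $q$ and check the occurrence bound). For \NP-hardness I would invoke the result of Fernau et al.~\cite{DBLP:journals/toct/FernauLR15}, already cited in the introduction, showing that several very restricted versions of \setPackingR\ with $p=3$ are \NP-hard for every $r \ge 1$; alternatively one can give a direct reduction from ordinary \textsc{$3$-Set Packing} (the $r=1$ case). With all hypotheses in place, applying Lemma~\ref{lem:GKtoK} with $L = \setPackingR$ (constant $p$) and $L'$ as above yields a polynomial-size kernel for \setPackingR\ with constant $p$.

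The only real work here is bookkeeping rather than a genuine obstacle: one must make sure that the object produced by Theorem~\ref{thm:gkernel} literally meets the definition of a generalized kernel used in Section~\ref{sec:prelims} (polynomial-time computable, instance-equivalent, with both instance size and parameter bounded by a computable function of the parameter — polynomial for constant $p$), and that the succinct target problem $L'$ is placed in \NP. Once the \NP-completeness of the constant-$p$ case is settled via \cite{DBLP:journals/toct/FernauLR15}, Lemma~\ref{lem:GKtoK} supplies the polynomial kernel mechanically, completing the proof.
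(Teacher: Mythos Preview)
Your proposal is correct and follows exactly the approach the paper intends: combine Theorem~\ref{thm:gkernel} with Lemma~\ref{lem:GKtoK}, checking that the succinct output problem lies in \NP\ and that \setPackingR\ with constant $p$ is \NP-complete. The paper merely asserts that these two results imply the corollary without spelling out the hypotheses, so your version is a faithful expansion; one minor point you could add is that for $p\le 2$ the problem is solvable in polynomial time (it is a $b$-matching), so the corollary is trivial there and the \NP-completeness argument is only needed for $p\ge 3$.
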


Let us finally complement Theorem \ref{thm:gkernel} by showing that the lower bound for
$r=1$ carries over to the parameter $q/r$ for arbitrary values of $r$.

\begin{theorem}
  The \setPackingR{} problem with fixed value of $p \geq 3$ does not
  admit a generalized kernel of size $\OO((q/r)^{p-\varepsilon})$ for
  any $\varepsilon > 0$ unless the polynomial hierarchy collapses.
\end{theorem}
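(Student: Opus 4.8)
The plan is to reduce from the known kernelization lower bound for $p$-Set Packing (the case $r=1$), which is standard: for fixed $p\geq 3$, $p$-Set Packing parameterized by the number of sets $q$ (equivalently by the solution size) does not admit a generalized kernel of size $O(q^{p-\varepsilon})$ unless the polynomial hierarchy collapses. This is a cross-composition/weak-composition style result due to Dell and Marx (and Hermelin--Wu). The goal is to show that any generalized kernel of size $O((q/r)^{p-\varepsilon})$ for \setPackingR{} with unrestricted $r$ would yield, via a simple padding reduction, a generalized kernel of size $O(q^{p-\varepsilon'})$ for the $r=1$ case, contradicting the known bound.

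First I would recall the precise statement of the lower bound for $r=1$: there is no generalized kernelization for $p$-Set $q$-Packing (exact version: find $q$ pairwise disjoint $p$-sets) producing an instance of bit-size $O(q^{p-\varepsilon})$ for any $\varepsilon>0$, unless $\mathsf{NP}\subseteq\mathsf{coNP/poly}$. Next, given an instance $(\mathcal{H},q,1)$ of $p$-Set $1$-Packing on ground set $V$ with $|V|=n$ and $|\mathcal{H}|=m$, I would build an instance $(\mathcal{H}',q',r)$ of \setPackingR{} as follows: take $r$ to be some large integer (to be fixed), make $r$ "parallel" disjoint copies $V^{(1)},\dots,V^{(r)}$ of the ground set, and for every set $S\in\mathcal{H}$ include its $r$ copies $S^{(1)},\dots,S^{(r)}$ in $\mathcal{H}'$. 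Set $q'=qr$. Then a solution picking $q$ disjoint sets in $\mathcal{H}$ corresponds to picking the $r$ copies of each of those $q$ sets, giving $qr$ sets with each element (which lives in exactly one copy $V^{(j)}$) used at most $r$ times; conversely, an $r$-relaxed packing of $qr$ sets must, by a counting/pigeonhole argument on the $r$ coordinates, project down to a disjoint packing of $q$ sets in at least one copy. Crucially, $q'/r = q$, so a generalized kernel of size $O((q'/r)^{p-\varepsilon}) = O(q^{p-\varepsilon})$ for the \setPackingR{} instance is exactly a generalized kernel of the forbidden size for the original $r=1$ instance; composing the padding reduction (polynomial time) with the hypothetical kernelization and then a trivial reduction back to $p$-Set Packing contradicts the known lower bound.

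**The main obstacle** is making the reduction from an $r$-relaxed packing back to a disjoint packing actually work cleanly. The naive "$r$ disjoint copies" construction as stated does not immediately force a disjoint packing: an adversary could pick sets spread across the copies in a way that repeats a set of $\mathcal{H}$ several times (in different copies) without those projecting to $q$ distinct disjoint sets. I would fix this by a more careful construction: instead of fully independent copies, share the ground set but attach to each set $S\in\mathcal{H}$ a private "budget gadget" — e.g., a dedicated element $e_S$ of multiplicity handled so that $S$ can be taken at most once, combined with a global argument that with $q'=qr$ sets and each of the $n$ ground elements usable $\leq r$ times, the chosen multiset of sets must be $q$ distinct sets each taken $r$ times, which are then pairwise disjoint in $V$. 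Alternatively, and more robustly, I would look for a direct composition: adapt the Dell--Marx weak-composition argument for $p$-Set Packing so that it composes $t$ instances into a \setPackingR{} instance with $q/r = O(t^{1/(p-\varepsilon)} \cdot \mathrm{poly})$ directly, bypassing the reduction-from-$r=1$ route entirely.

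A clean way to organize the final write-up: (i) state and cite the $r=1$ lower bound; (ii) give the padding reduction $(\mathcal{H},q,1)\mapsto(\mathcal{H}',qr,r)$ with the gadget that enforces each original set is used in "all-or-nothing" fashion, and prove correctness in both directions via a short counting argument; (iii) observe $q'/r=q$ and that bit-sizes are preserved up to polynomial factors in $\log$ of the new parameters; (iv) conclude that a generalized kernel of size $O((q/r)^{p-\varepsilon})$ for \setPackingR{} yields one of size $O(q^{p-\varepsilon})$ for $p$-Set $q$-Packing, collapsing the polynomial hierarchy. I expect step (ii) — pinning down the gadget so that the "all-or-nothing" behavior is genuinely forced while keeping $p$ fixed and the instance size polynomial — to be where essentially all the work lies; the rest is bookkeeping.
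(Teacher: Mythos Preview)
Your high-level plan is exactly right and matches the paper: reduce from the Dell--Marx lower bound for the disjoint case ($r=1$) via a polynomial-time, parameter-preserving transformation to an \setPackingR{} instance with $q'/r = O(q)$. The gap is in the actual construction. Your ``$r$ disjoint copies'' idea is a dead end (as you yourself note): with fully disjoint copies the $r$-relaxed constraint becomes vacuous inside each copy, and there is no clean way to recover a disjoint packing in the original instance. You then defer the real work to an unspecified ``all-or-nothing gadget'' and correctly flag this as the crux --- but you never supply it.

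The paper's construction sidesteps all of this with a one-line padding trick. Start from a \textsf{Perfect $p$-Set Matching} instance $\cH$ on ground set $V$ with $|V|=n=pq$. Keep $V$ as is, and add $(r-1)n$ new \emph{padding} $p$-sets: for each $v\in V$, create $r-1$ sets each containing $v$ together with $p-1$ fresh dummy elements. Set $q'=q+(r-1)n$. Forward direction: a perfect matching of $q$ disjoint sets together with all padding sets gives a $(q',r)$-packing (each $v\in V$ is hit once by the matching and $r-1$ times by padding). Reverse direction: in any $(q',r)$-packing, if some $v$ is covered by two non-padding sets then at least one of the $r-1$ padding sets for $v$ is unused, so swap a non-padding set out for it; iterate until the non-padding sets are pairwise disjoint. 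Since at most $(r-1)n$ padding sets exist, at least $q' - (r-1)n = q$ sets come from $\cH$, and these are now disjoint. Finally $q' = q+(r-1)n < rn$, so $q'/r < n = pq = O(q)$.

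This is strictly simpler than what you were aiming for: no copies, no all-or-nothing enforcement, just per-element slack absorbed by dedicated padding sets. I recommend you replace your step~(ii) entirely with this construction; steps (i), (iii), (iv) of your outline then go through verbatim.
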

\begin{proof}
  Dell and Marx~\cite{DellM12packing} showed that \textsc{Perfect $p$-Set
    Matching} (i.e., the variant where $r=1$ and $n=pq$) does not admit a
  generalized kernel of $\OO(q^{p-\varepsilon})$ bits for any
  $\varepsilon > 0$ unless the polynomial hierarchy collapses. We show  
  a parameter-preserving reduction from the case of $r=1$ to the
  arbitrary case. Let $\cH$ be the input to an instance of 
  \textsc{Perfect $p$-Set Matching} where $\cH \subseteq 2^V$ is a 
  $p$-uniform hypergraph over some ground set $V$, $|V|=n=pq$.
  We produce an output instance of \setPackingR\ by padding $\cH$ 
  with $(r-1)n$ sets, each of which is incident with precisely one
  member of $V$ and which in total cover every element of $V$
  precisely $r-1$ times. (We pad these sets with arbitrary dummy
  elements to produce a $p$-uniform output.) We set $q'=q+(r-1)n$.
  We claim that the output has a $(q',r)$-packing if and only if $\cH$
  contains a $q$-packing.  This is not hard to see. On the one hand,
  any $q$-packing in $\cH$ can be padded to a $q'$-packing in the
  output by including all the padding sets; on the other hand, for any
  ($q', r)$-packing where some element $v \in V$ is covered by two
  non-padding sets, we can get a different $(q', r)$-packing by 
  discarding one set from $\cH$ and replacing it by a further padding
  set covering $v$. The value of $p$ is unchanged. Finally, since
  $q < n$ we have $q'=q+(r-1)n < rn$, hence $q'/r < n = pq = \OO(q)$,
  and the parameter is only increased by a constant factor.
\end{proof}

\section{$(r,k)$-Monomial Detection: para-NP-Hardness}\label{sec:monomDet}

In this section, we prove that if $k$ is not polynomially bounded in the input size, even an \XP\ algorithm for the special case of \monomDetR\ where only two distinct variables are present is out of reach. For this purpose, we present a reduction from the {\sc Partition} problem, which is known to be \NPH~\cite{GJ79}. In this problem, we are given a multiset $M$ of positive integers, and the goal is to determine whether $M$ can be partitioned into two multisets, $M_1$ and $M_2$, such that the sum of the integers in $M_1$ is equal to the sum of the integers in $M_2$.

\begin{theorem}\label{thm:monomParaNPhard}
\monomDetR\ is \paraH\ parameterized by $k/r$ even if the number of distinct variables is $2$ and the circuit is non-canceling.
\end{theorem}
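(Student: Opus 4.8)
The plan is to give a polynomial-time (indeed FPT-) reduction from the NP-hard \textsc{Partition} problem to the restriction of \monomDetR\ to instances in which the parameter $k/r$ equals the constant~$2$, the circuit is non-canceling, and only the two variables $x,y$ occur. By the standard characterization of \paraH\ problems (a parameterized problem whose slice at a fixed parameter value is NP-hard is \paraH, equivalently admits an FPT-reduction from every para-NP problem), this establishes the theorem.

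Concretely, given a \textsc{Partition} instance — a multiset $M=\{a_1,\dots,a_n\}$ of positive integers — I would first replace each $a_i$ by $2a_i$, which clearly preserves the answer, so that $S:=\sum_{i=1}^n a_i$ is even; then set $r:=S/2$ and $k:=S$, so that $k/r=2$. Over the field $\mathbb{Q}$ and the variable set $\{x,y\}$ I would build a circuit $C$ computing
\[
P \;=\; \prod_{i=1}^{n}\bigl(x^{a_i}+y^{a_i}\bigr).
\]
Each power $x^{a_i}$ and $y^{a_i}$ is produced by a sub-circuit of $O(\log a_i)$ product gates via repeated squaring (this is the only step that needs a little care, since the $a_i$ are given in binary and could be exponentially large); each factor $x^{a_i}+y^{a_i}$ uses one sum gate, and the outer product uses $n-1$ further product gates. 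Thus $C$ has size polynomial in the input length, it is non-canceling (no gate is labeled by a field element), it uses exactly two distinct variables, and $\log k=O(\log(nS))$ is polynomial in the input, so the output is a legitimate \monomDetR\ instance produced in polynomial time.

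For correctness I would observe that the monomials of $P$ are exactly $x^{\sigma(I)}y^{\,S-\sigma(I)}$ over all $I\subseteq\{1,\dots,n\}$, where $\sigma(I)=\sum_{i\in I}a_i$; each of these has total degree exactly $S=k$, and each has a strictly positive (hence nonzero, as $\operatorname{char}\mathbb{Q}=0$) coefficient, namely the number of subsets with that half-sum. Consequently $P$ has a monomial of total degree $k$ in which every variable has degree at most $r$ if and only if there is some $I$ with $\sigma(I)\le r$ and $S-\sigma(I)\le r$, i.e.\ (since $r=S/2$) with $\sigma(I)=S/2$ — which is precisely a valid balanced partition of $M$. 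Hence the constructed \monomDetR\ instance is a \yes-instance iff $M$ is a \yes-instance of \textsc{Partition}. Since $k/r=2$ throughout, this shows \monomDetR\ is NP-hard already with two variables and a non-canceling circuit when $k/r$ is the constant~$2$, so it is \paraH\ parameterized by $k/r$. The only genuinely delicate point is the polynomial bound on the circuit size for binary-encoded integers, handled by repeated squaring; everything else (matching the two degree inequalities with $r=S/2$, non-cancellation, the parameter value) is routine.
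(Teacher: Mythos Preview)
Your proof is correct and follows essentially the same approach as the paper: reduce from \textsc{Partition} with $k=\sum a_i$, $r=k/2$, and the non-canceling circuit for $\prod_i(x^{a_i}+y^{a_i})$ built via repeated squaring. Your preprocessing step of doubling the $a_i$ to force $S$ even (so that $r$ is an integer) is a nice bit of care that the paper omits, and your explicit remark about working over a field of characteristic~$0$ to avoid coefficient cancellation is likewise a detail the paper leaves implicit.
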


\begin{proof}
To prove this theorem, we give a reduction from {\sc Partition} to \monomDetR\ parameterized by $k/r$. To this end, let $M$ be an instance of {\sc Partition}. We define our set of variables as $\{x,y\}$ (that is, we have only two variables), and we define a polynomial {\sf POL} as follows:
\[\displaystyle{\mathsf{POL} = \sum_{M'\subseteq M}\left((\prod_{n\in M'}x^n)\cdot(\prod_{n\in M\setminus M'}y^n)\right)}.\]
We define $k=\sum_{n\in M}n$, and $r=k/2$. Then $k/r=2$  and our reduction shows that  \monomDetR\ is \NPH for $k/r=2$ which implies that \monomDetR\ is  \paraH\ by
\cite[Theorem 2.14]{DBLP:series/txtcs/FlumG06}.

We now prove that $M$ is a \yes-instance of {\sc Partition} if and only if {\sf POL} has a monomial of degree $k$ where each variable has degree at most $r$. To this end, notice that $M$ is a \yes-instance of {\sc Partition} if and only if there exists $M'\subseteq M$ such that $\sum_{n\in M'}n=k/2$. Now, for any $M'\subseteq M$, the following statement holds: $\sum_{n\in M'}n=k/2$ if and only if $(\prod_{n\in M'}x^n)\cdot(\prod_{n\in M\setminus M'}y^n)$ is a monomial (of degree $k$) where each variable has degree at most $r$. However, by the definition of {\sf POL}, the latter part of the statement is true if and only if {\sf POL} has a monomial of degree $k$ where each variable has degree at most $r$.

Next, we show that {\sf POL} can be encoded by a non-canceling arithmetic circuit of size polynomial in $\log k$. To this end, denote $M=\{n_1,n_2,\ldots,n_{\ell}\}$ where $\ell=|M|$, and let $n^\star$ be the largest number that occurs at least once in $M$. Then, for all $z\in\{x,y\}$ and $i\in\{0,1,\ldots,\lfloor\log_2 n^\star\rfloor\}$, we have a gate $\widehat{g}_{z,i}$ defined recursively as follows. First, for all $z\in\{x,y\}$, we set $\widehat{g}_{z,0}$ to be the input gate $z$. Second, for all $z\in\{x,y\}$ and  $i\in\{1,2,\ldots,\lfloor\log_2 n^\star\rfloor\}$, we set $\widehat{g}_{z,i} = \widehat{g}_{z,i-1}^2$. By simple induction on $i$, for all $z\in\{x,y\}$ and $i\in\{0,1,\ldots,\lfloor\log_2 n^\star\rfloor\}$, it holds that $\widehat{g}_{z,i}$ encodes $z^{2^i}$. Now, for all $z\in\{x,y\}$ and $n\in M$, we have a gate $g_{z,n}$ defined as follows:
\[\displaystyle{g_{z,n}=\prod_{i\in\{1,2,\ldots,\lfloor\log_2 n^\star\rfloor\} \atop \mathrm{s.t.}~\mathrm{digit}(n,i)=1}\widehat{g}_{z,i}},\]
where digit$(n,i)$ is the $i$-th least significant digit of $n$ when encoded in binary.
Then, for all $z\in\{x,y\}$ and $n\in M$, we have that $g_{z,n}$ encodes $z^n$.

For all $i\in\{1,2,\ldots,\ell\}$, we have gates $h'_i$ and $h_i$ defined recursively as follows. First, we set $h_1=h'_1=g_{x,n_1} + g_{y,n_1}$. Second, for all $i\in\{2,3,\ldots,\ell\}$, we set $h'_i=g_{x,n_i} + g_{y,n_i}$ and $h_i = h_{i-1}\cdot h'_i$. By simple induction on $i$, we have that for all $i\in\{1,2,\ldots,\ell\}$, $h_i$ encodes the following polynomial:
\[\displaystyle{\sum_{M'\subseteq \{n_1,n_2,\ldots,n_i\}}\left((\prod_{n\in M'}x^n)\cdot(\prod_{n\in \{n_1,n_2,\ldots,n_\ell\}\setminus M'}y^n)\right)}.\]
Thus, $h_\ell$ encodes {\sf POL}.

Finally, we argue that \monomDetR\ is \paraH\ parameterized by $k/r$. Suppose, by way of contradiction, that this claim is false. Then, \monomDetR\ admits an algorithm, say $\cal A$, that runs in time $|I|^{f(k/r)}$ on input $I$ for some function $f$ that depends only on $k/r$. Thus, we can solve any instance $M$ of {\sc Partition} by using the reduction above to construct (in polynomial time) an equivalent instance $I$ of \monomDetR, and then calling $\cal A$ with $I$. However, the parameter $k/r$ equals $2$ (since $r=k/2$), and hence $|I|^{f(k/r)}=|I|^{\OO(1)}$, that is, we solve {\sc Partition} in polynomial-time.  Since {\sc Partition} is \NPH, we have reached a contradiction. This completes the proof. 
\end{proof}

\section{$p$-Multiset $(r,q)$-Packing and $(r,k)$-Monomial Detection: W[1]-Hardness}\label{sec:W1}

In this section, we prove that \multisetPackingR\ is \WOH. To prove this theorem, we present a reduction from the {\sc Multicolored Clique} problem, which is known to be \WOH~\cite{DBLP:journals/jcss/Pietrzak03,DBLP:journals/tcs/FellowsHRV09}. In this problem, we are given a vertex-colored graph $G$ and a positive integer $k$, where each vertex has a color in $\{1,2,\ldots,k\}$, and our goal is to decide whether $G$ has a {\em multicolored $k$-clique}, that is, a clique with $k$ vertices where each vertex has a distinct color. Later in this section, we show that our theorem implies that a restricted case of \monomDetR\ is \WOH\ as well.

\begin{theorem}\label{thm:multiPackW1hard}
\multisetPackingR\ is \WOH\ parameterized by $pq/r$ even if the size of the universe is $pq/r$.
\end{theorem}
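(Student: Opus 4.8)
The plan is to reduce from {\sc Multicolored Clique}, which is \WOH\ parameterized by the number of colours $k$, in polynomial time, producing an instance of \multisetPackingR\ whose new parameter $pq/r$ is $O(k^2)$ and in fact equals $|V|$. Given a $k$-coloured graph $G$ --- we may assume $k\geq 5$, that $G$ has a vertex of every colour, and that every edge joins differently-coloured vertices --- I would give the vertices distinct identifiers $\mathrm{id}(v)\in\{1,\dots,N\}$ with $N=|V(G)|$ and set $r=N+1$. The universe $V$ consists of: an element $\rho_{ij}$ for every colour pair $\{i,j\}$; ``chain'' elements $\nu^{(l)}_i,\bar\nu^{(l)}_i$ for every colour $i$ and $l\in\{1,\dots,k-2\}$, where for each $i$ I fix a linear order $j^{(i)}_1,\dots,j^{(i)}_{k-1}$ of the remaining colours; and a few padding elements $\delta_{ij}$. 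The family $\mathcal H$ contains one multiset $E_{vw}$ per edge $\{v,w\}$ of $G$ (say $v$ has colour $i$, $w$ colour $j$): it takes $\rho_{ij}$ with multiplicity $r$; for the chain $l$ with $j^{(i)}_l=j$ it takes $\nu^{(l)}_i$ with multiplicity $\mathrm{id}(v)$ and $\bar\nu^{(l)}_i$ with multiplicity $N+1-\mathrm{id}(v)$, and for the chain $l$ with $j^{(i)}_{l+1}=j$ it takes these two elements with the complementary multiplicities $N+1-\mathrm{id}(v)$ and $\mathrm{id}(v)$; symmetrically for colour $j$ using $w$; and finally $\delta_{ij}$ to bring the total multiplicity up to $p:=5(N+1)$. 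Set $q=\binom{k}{2}$.

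Two mechanisms drive the construction. The element $\rho_{ij}$ (multiplicity $r$ in every gadget of pair $\{i,j\}$ and nowhere else) forces any packing of the target size $q=\binom{k}{2}$ to pick exactly one edge gadget per colour pair. The complementary multiplicities on $\nu^{(l)}_i,\bar\nu^{(l)}_i$ enforce consistency: if the gadgets picked for pairs $\{i,j^{(i)}_l\}$ and $\{i,j^{(i)}_{l+1}\}$ have colour-$i$ endpoints $v,v'$, then $\nu^{(l)}_i$ occurs $\mathrm{id}(v)+N+1-\mathrm{id}(v')$ times and $\bar\nu^{(l)}_i$ occurs $N+1-\mathrm{id}(v)+\mathrm{id}(v')$ times, so both are at most $r=N+1$ exactly when $\mathrm{id}(v)=\mathrm{id}(v')$. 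Chaining over $l=1,\dots,k-2$ forces all gadgets incident to colour $i$ to share a single vertex $v_i$, so the $\binom{k}{2}$ chosen edges are exactly $\{v_i,v_j\}$ and $\{v_1,\dots,v_k\}$ is a multicoloured clique. Conversely, a multicoloured clique $\{v_1,\dots,v_k\}$ yields the packing $\{E_{v_iv_j}:i<j\}$, in which each element occurs at most $r$ times (exactly $r$ on every $\rho$, $\nu$, $\bar\nu$). This gives the equivalence; since the reduction runs in polynomial time and the parameter is $O(k^2)$, it is an FPT-reduction.

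The main obstacle is meeting the constraint that all multisets have the same size $p$ while keeping $|V|$ bounded by a function of $k$ and, moreover, equal to $pq/r$. Padding each short gadget with dummies indexed by its edge would make $|V|=\Omega(|E(G)|)$; the only admissible fix is to share a single dummy $\delta_{ij}$ among all gadgets of a pair $\{i,j\}$, which is legal only if its per-gadget multiplicity is at most $r$. A gadget of pair $\{i,j\}$ is ``short'' by exactly $N+1$ for each of colours $i,j$ for which $j$ (resp.\ $i$) is at position $1$ or $k-1$ in the corresponding order. Hence I must choose the orders $j^{(i)}_1,\dots,j^{(i)}_{k-1}$ --- for instance the cyclic choice $j^{(i)}_1=i+1$, $j^{(i)}_{k-1}=i+2$ mod $k$, which is where $k\geq 5$ is needed --- so that no colour pair is an endpoint for both of its colours; then every $\delta_{ij}$ has multiplicity $0$ or $N+1\le r$, and all gadgets have size exactly $p=5(N+1)$.

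Finally I would verify the arithmetic that makes the universe size match: there are $2k(k-2)$ chain elements, $\binom{k}{2}$ elements $\rho_{ij}$, and exactly $2k$ pairs that are an endpoint of one (but not both) of their colours' orders and hence need a $\delta_{ij}$, for a total of $2k(k-2)+\binom{k}{2}+2k=\tfrac{5}{2}k(k-1)$. Since $pq=5(N+1)\binom{k}{2}=r\cdot\tfrac{5}{2}k(k-1)$, this equals $pq/r$, which is $O(k^2)$ and therefore bounded by a function of $k$. Thus the restriction ``the size of the universe is $pq/r$'' holds throughout the image of the reduction, and \multisetPackingR\ is \WOH\ parameterized by $pq/r$ even under this restriction.
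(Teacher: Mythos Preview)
Your reduction is correct, and it follows a genuinely different design from the paper's. The paper also reduces from {\sc Multicolored Clique}, but it uses \emph{both} vertex-selection gadgets $M^i_x$ (one per vertex) and edge-selection gadgets $M^{(i,j)}_{(x,y)}$ (one per edge), with $q=k+\binom{k}{2}$; consistency is enforced directly between each vertex gadget and each incident edge gadget via a pair of elements $c^{i\to j},\widehat{c}^{i\to j}$ carrying complementary multiplicities $x$ and $n-x$. This yields $|U|=pq/r=k\bigl(k+\binom{k}{2}\bigr)=\Theta(k^3)$.

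Your construction dispenses with vertex gadgets entirely, keeps $q=\binom{k}{2}$, and propagates the identity of the colour-$i$ endpoint along a \emph{chain} of $k-2$ consistency pairs $\nu^{(l)}_i,\bar\nu^{(l)}_i$ linking consecutive incident edge gadgets. The payoff is a tighter parameter $|V|=pq/r=5\binom{k}{2}=\Theta(k^2)$; the price is the extra combinatorial step of choosing the per-colour orders so that no colour pair sits at an endpoint of both chains, which is what keeps the shared padding element $\delta_{ij}$ within multiplicity $r$. The paper's star-shaped consistency avoids this subtlety at the cost of a larger (but still $k$-bounded) universe. Both arguments are standard patterns for encoding equality via complementary multiplicities; yours is the ``edge-only with chain propagation'' variant, the paper's is the ``vertex-plus-edge with direct checks'' variant.
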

\begin{proof}
Our source problem is {\sc Multicolored Clique}. Given an instance $(G,k)$ (each vertex of $G$ is assigned a color from $\{1,\ldots,k\}$) of {\sc Multicolored Clique}, we construct an instance $(U,{\cal S},p,q,r)$ of \multisetPackingR\ as follows. For each color $i\in \{1,\ldots,k\}$, let $C^i$ be the set of vertices in $G$ whose color is $i$. Let $n$ denote the size of a color class, that is, $n=|C^i|$ for any $i\in\{1,\ldots,k\}$. Moreover, for all $i\in\{1,\ldots,k\}$, denote $C^i=\{v^i_1,v^i_2,\ldots,v^i_n\}$. Define $r=n$, $p=kn$ and $q=k+{k\choose 2}$. Note that $pq/r = (kn)\left(k+{k \choose 2}\right)/n = k\left(k+{k \choose 2}\right)$.

The universe $U$ contains the following distinct elements:
\begin{itemize}
\item For each color $i\in \{1,\ldots,k\}$, we have an element $c^i$.
\item For each pair $(i,j)\in \{1,\ldots,k\}\times\{1,\ldots,k\}$ with $i\neq j$, we have an element $c^{i\rightarrow j}$ and an element $\widehat{c}^{i\rightarrow j}$.
\item For each pair $(i,j)\in \{1,\ldots,k\}\times\{1,\ldots,k\}$ with $i<j$, and for each $t\in\{1,\ldots,k-2\}$, we have an element $c^{(i,j)}_t$.
\end{itemize}
Observe that $|U| = k + 2k(k-1) + {k \choose 2}(k-2) = k\left(k + {k \choose 2}\right) = pq/r$.

Now, we construct $\cal S$ as follows.
\begin{itemize}
\item For each color $i\in\{1,\ldots,k\}$ and for each $x\in\{1,\ldots,n\}$, we insert the multiset
\[M^i_x = \displaystyle{\{[n] c^i\}\cup\left(\bigcup_{j\in \{1,\ldots,k\}\setminus\{i\}}\{[x]c^{i\rightarrow j},[n-x]\widehat{c}^{i\rightarrow j}\}\right)}.\]
Note that $|M^i_x|=n+(k-1)n=p$.
\item For each edge $e=\{v^i_x,v^j_y\}\in E(G)$ (where $v^i_x\in C^i$ and $v^j_y\in C^j$) with $i<j$, we insert the multiset 
\[M^{(i,j)}_{(x,y)} = \displaystyle{\left(\bigcup_{t\in\{1,\ldots,k-2\}}\{[n] c^{(i,j)}_t\}\right)\cup\{[n-x]c^{i\rightarrow j},[x]\widehat{c}^{i\rightarrow j},[n-y]c^{j\rightarrow i},[y]\widehat{c}^{j\rightarrow i}\}}.\]
Note that $|M^{(i,j)}_{(x,y)}|=(k-2)n+2n=p$.
\end{itemize}

\noindent{\bf Proof of Correctness.} In the forward direction, we suppose that we have a multicolored $k$-clique~$K$ in $G$. Let $v^i_{\phi(i)}$ be the (unique) vertex in $C^i$ that belongs to~$K$. Then, it holds that the subcollection ${\cal S}':=\{M^i_{\phi(i)}: i\in\{1,\ldots,k\}\}\cup\{M^{(i,j)}_{(\phi(i),\phi(j))}: (i,j)\in \{1,\ldots,k\}\times\{1,\ldots,k\},i<j\}$ of $\cal S$ is an $r$-relaxed packing of size $q$. (To see that this claim is true, observe that each element in $U$ occurs in this subcollection precisely $n$ times.)

In the reverse direction, we suppose that we have a subcollection ${\cal S}'$ of ${\cal S}$ that is an $r$-relaxed packing of size $q$. Then, we first observe that ${\cal S}'$ can contain at most one multiset from $\{M^i_1,\ldots,M^i_n\}$ for each $i\in\{1,\ldots,k\}$ (since otherwise the element $c^i$ occurs more than $r$ times), and at most one multiset from $\{M^{(i,j)}_{(x,y)}: \{v^i_x,v^j_y\}\in E(G)\}$ for each $(i,j)\in \{1,\ldots,k\}\times\{1,\ldots,k\}$ with $i<j$ (since otherwise the element $c^{(i,j)}_1$ occurs more than $r$ times). Then, because $|{\cal S}'|=q$, we have that ${\cal S}'$ contains exactly one multiset from $\{M^i_1,\ldots,M^i_n\}$ for each $i\in\{1,\ldots,k\}$, and exactly one multiset from $\{M^{(i,j)}_{(x,y)}: \{v^i_x,v^j_y\}\in E(G)\}$ for each $(i,j)\in \{1,\ldots,k\}\times\{1,\ldots,k\}$ with $i<j$. In particular, this means that it is well defined to let $\phi(i)$, $i\in\{1,\ldots,k\}$, denote the integer $x$ such that $M^i_x\in{\cal S}'$. Moreover, it is well defined to let $\varphi(i,j)$, $(i,j)\in \{1,\ldots,k\}\times\{1,\ldots,k\}$ with $i<j$, denote the pair $(x,y)$ such that $M^{(i,j)}_{(x,y)}\in{\cal S}'$.

Define $K=\{v^1_{\phi(1)},\ldots,v^k_{\phi(k)}\}$. Then, we claim that $K$ is a multicolored $k$-clique in $G$. It is clear that $|K|=k$ and that $K$ is multicolored. Thus, it remains to show that for each $(i,j)\in \{1,\ldots,k\}\times\{1,\ldots,k\}$ with $i<j$, it holds that $\{v^i_{\phi(i)},v^j_{\phi(j)}\}\in E(G)$. For this purpose, we arbitrarily select $(i,j)\in \{1,\ldots,k\}\times\{1,\ldots,k\}$ with $i<j$. To show that $\{v^i_{\phi(i)},v^j_{\phi(j)}\}\in E(G)$, it suffices to show that $\varphi(i,j)=(\phi(i),\phi(j))$. Let us denote $\varphi(i,j)=(x,y)$. We only show that $x=\phi(i)$, since the proof that $y=\phi(j)$ is symmetric. Suppose, by way of contradiction, that $x\neq \phi(i)$. We consider two cases.
\begin{itemize}
\item First, suppose that $x<\phi(i)$. Note that $c^{i\rightarrow j}$ occurs $\phi(i)$ times in $M^i_{\phi(i)}$, and it occurs $n-x$ times in $M^{(i,j)}_{\varphi(i,j)}$. However, $\phi(i)+(n-x) > n$, which implies that $c^{i\rightarrow j}$ occurs more than $r$ times in ${\cal S}'$. Thus, we have reached a contradiction.
\item Second, suppose that $x>\phi(i)$. Note that $\widehat{c}^{i\rightarrow j}$ occurs $n-\phi(i)$ times in $M^i_{\phi(i)}$, and it occurs $x$ times in $M^{(i,j)}_{\varphi(i,j)}$. However, $(n-\phi(i))+x > n$, which implies that $\widehat{c}^{i\rightarrow j}$ occurs more than $r$ times in ${\cal S}'$. Thus, we have reached a contradiction.
\end{itemize}
This completes the proof.
\end{proof}

Our reduction heavily relies on the inclusion of input instances that contain multisets rather than sets. In particular, it does not rule out the possibility that \setPackingR\ is \FPT\ parameterized by $(pq)/r$---that is, this proof does not contradict Section \ref{sec:packingFPT}.

As a consequence of Theorem \ref{thm:multiPackW1hard}, we obtain the following theorem.

\begin{theorem}\label{thm:monomW1hard}
\monomDetR\ is \WOH\ parameterized by $k/r$ even if {\em (i)} $k$ is polynomially bounded in the input length, {\em (ii)} the number of distinct variables is {\color{brown} at most} $k/r$, and  {\em (iii)} the circuit is non-canceling.
\end{theorem}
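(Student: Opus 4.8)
\textbf{Proof proposal for Theorem~\ref{thm:monomW1hard}.} The plan is to lift the \WOH\ reduction of Theorem~\ref{thm:multiPackW1hard} from \multisetPackingR\ to \monomDetR\ by the standard encoding of a packing instance as a polynomial, and to check that all three additional constraints in the statement are met. Concretely, given an instance $(U,{\cal S},p,q,r)$ of \multisetPackingR\ produced by that reduction, introduce one variable $x_e$ for every element $e\in U$; since $|U|=pq/r$ there, the number of distinct variables is exactly $pq/r$, which will become $k/r$ in the target instance (this is condition~(ii)). For each multiset $M\in{\cal S}$, form the monomial $\mathrm{mon}(M)=\prod_{e\in U}x_e^{\mathrm{mul}_M(e)}$, where $\mathrm{mul}_M(e)$ is the multiplicity of $e$ in $M$. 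Then the polynomial to detect is the sum, over all ways of choosing $q$ (not necessarily distinct) multisets from ${\cal S}$, of the product of their monomials — but written as a circuit, not expanded. This is achieved by a gadget of depth $\OO(\log q)$: first build a single sum gate $S=\sum_{M\in{\cal S}}\mathrm{mon}(M)$ (each $\mathrm{mon}(M)$ is itself computed with the repeated-squaring trick of Theorem~\ref{thm:monomParaNPhard} so that exponents up to $n$ cost only $\OO(\log n)$ gates), and then compute $S^q$ by repeated squaring, i.e.\ $\OO(\log q)$ product gates. All input gates are labelled by variables, so the circuit is non-canceling (condition~(iii)).

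\smallskip
The key correctness claim is: $S^q$ contains a monomial of total degree $k:=qn$ in which every variable $x_e$ has degree at most $r=n$ if and only if ${\cal S}$ has a subcollection of size $q$ that is an $r$-relaxed packing. For the forward direction, a degree-$k$ monomial of $S^q$ arises from choosing $q$ terms $\mathrm{mon}(M_1),\dots,\mathrm{mon}(M_q)$ of $S$ with $\sum_i |M_i| = k = qn$; since every $M\in{\cal S}$ has $|M|=p$... wait — here one must be careful: in the reduction of Theorem~\ref{thm:multiPackW1hard} all multisets have the \emph{same} size $p$, while the degree budget $k$ corresponds to $q$ of them summing to $q\cdot n$ \emph{per element}, not $q\cdot p$ in total. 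So the cleaner bookkeeping is: every monomial of $S^q$ obtained from exactly $q$ factors has total degree $qp$; set $k=qp$. Then the per-variable degree of $x_e$ in such a monomial is $\sum_{i=1}^q \mathrm{mul}_{M_i}(e)$, i.e.\ the number of occurrences of $e$ across the chosen multiset collection. Thus $x_e$ has degree at most $r$ for every $e$ exactly when $\{M_1,\dots,M_q\}$ is an $r$-relaxed packing. Since $S^q$ only has monomials coming from products of exactly $q$ factors of $S$, \emph{every} degree-$k$ monomial of $S^q$ with per-variable degree $\le r$ corresponds to such a packing, and conversely each packing yields such a monomial; hence the equivalence holds. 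Note $k=qp = r\cdot(pq/r)$, so the parameter $k/r$ of the target equals the parameter $pq/r$ of the source, and $k$ is polynomially bounded (indeed $k=qp \le |U|\cdot r$ and $r\le q\le|{\cal S}|$, all polynomial in the {\sc Multicolored Clique} instance), giving condition~(i).

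\smallskip
The remaining checks are routine bookkeeping: the circuit has size polynomial in the input (the $\mathrm{mon}(M)$ gadgets contribute $\OO(|{\cal S}|\cdot|U|\cdot\log n)$ gates, and the $S^q$ part contributes $\OO(\log q)$ gates), it is constructible in polynomial time, and it is a DAG as required; non-cancellation is immediate because no field constants are used as input labels. An \FPT\ algorithm for \monomDetR\ parameterized by $k/r$ restricted to these three conditions would therefore yield an \FPT\ algorithm for \multisetPackingR\ parameterized by $pq/r$, contradicting Theorem~\ref{thm:multiPackW1hard} under $\FPT\ne\WO$. I expect the main obstacle to be purely presentational: getting the degree accounting consistent — deciding once and for all that $k=pq$, that ``a monomial of $S^q$'' always means a product of exactly $q$ factors (which is automatic since $S$ has no constant term, every term having positive degree), and that the per-variable cap $r=n$ is precisely the relaxed-disjointness threshold — and then verifying that the circuit faithfully computes $S^q$ with no unintended cancellation. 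None of these is conceptually hard, but they must be stated precisely for the reduction to be airtight.
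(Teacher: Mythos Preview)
Your approach is the same as the paper's at the level of strategy: reduce from \multisetPackingR\ (via Theorem~\ref{thm:multiPackW1hard}), introduce one variable per universe element, encode each multiset $M$ as the monomial $\mathrm{mon}(M)=\prod_{e}x_e^{\mathrm{mul}_M(e)}$, and set $k=pq$. The difference is in the circuit. The paper builds a DP-style circuit with gates $g_{i,j}=g_{i-1,j}+g_{i-1,j-1}\cdot g_{M_i}$, so that the output computes $\sum_{{\cal S}'\subseteq{\cal S},\,|{\cal S}'|=q}\prod_{M\in{\cal S}'}\mathrm{mon}(M)$, a sum ranging precisely over size-$q$ \emph{subcollections} of~${\cal S}$. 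You instead compute $S^q$ with $S=\sum_{M\in{\cal S}}\mathrm{mon}(M)$, which sums over length-$q$ \emph{sequences} from ${\cal S}$, repetitions allowed.

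This leaves a gap in your backward direction. A degree-$k$ monomial of $S^q$ with all per-variable degrees at most $r$ witnesses only a tuple $(M_1,\ldots,M_q)\in{\cal S}^q$ whose combined multiplicities respect the $r$-bound; you write ``$\{M_1,\dots,M_q\}$ is an $r$-relaxed packing'' but never argue this set has $q$ elements. For arbitrary \multisetPackingR\ instances your reduction could therefore produce false positives. The fix is easy and specific to the instances produced by Theorem~\ref{thm:multiPackW1hard}: there every $M^i_x$ contains $c^i$ with multiplicity $n=r$, and every $M^{(i,j)}_{(x,y)}$ contains $c^{(i,j)}_1$ with multiplicity $n=r$, so selecting any multiset twice already forces some variable degree to $2r>r$. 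One sentence to this effect closes the gap; your final conclusion should then read ``yields an \FPT\ algorithm for {\sc Multicolored Clique}'' rather than for \multisetPackingR\ in general. Alternatively, switch to the paper's subset-DP circuit, which sidesteps the issue entirely and gives a reduction from arbitrary \multisetPackingR\ instances with $|U|\le pq/r$.
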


\begin{proof}
The proof of this theorem is based on the standard encoding of set packing problems using multivariate polynomials (see, e.g., \cite{DBLP:journals/talg/KoutisW16}). For the sake of completeness, we present the details. By Theorem \ref{thm:multiPackW1hard}, it suffices to give a reduction from \multisetPackingR\ with $|U|\leq (pq)/r$. To this end, let $(U,{\cal S},p,q,r)$ be an instance of \multisetPackingR\ with $|U|\leq (pq)/r$. Since $p$ is the size of each multiset in the input, it is polynomial in the input size. Moreover, $q$ (and hence also $r$) can be assumed to be polynomial in the input size, since if $q>|{\cal S}|$, then we have a \no-instance. 

We define our set of variables as $X=\{x_u: u\in U\}$ (that is, we have one variable for each element in $U$), and we define a polynomial {\sf POL} as follows:
\[\displaystyle{\mathsf{POL} = \sum_{{\cal S}'\subseteq{\cal S} \atop \mathrm{s.t.}~|{\cal S'}|=q}\prod_{M\in{\cal S}'}\prod_{u\in M}x_u}.\]

Define $k=pq$. For any choice of non-negative integers $d_u$ for each $u\in U$ whose sum is $k$, it holds that {\sf POL} has $\prod_{u\in U}x_{u}^{d_u}$ as a monomial if and only if there exists a subcollection ${\cal S}'\subseteq{\cal S}$ of size $q$ where each element $u\in U$ occurs exactly $d_u$ times. Thus, $(U,{\cal S},p,q,r)$ is a \yes-instance of \multisetPackingR\  if and only if {\sf POL} has a monomial (of total degree $k$) where the degree of each variable is at most $r$. 

Since $k=pq$ and $p$ and $q$ are polynomially bounded in the input length, so is $k$ showing (i). Since $k=pq$ and $|U|\le (pq)/r,$ we have $|U|\le k/r$ proving (ii). 
To complete the proof, it remains to show that {\sf POL} can be encoded by an arithmetic circuit of polynomial size. For this purpose, denote ${\cal S}=\{M_1,M_2,\ldots,M_{\ell}\}$ where $\ell=|{\cal S}|$. For each $M\in{\cal S}$, we have a gate $g_{M}$ which is the multiplication $\prod_{u\in M}x_u$. Now, for all $i\in\{1,2,\ldots,\ell\}$ and $j\in\{1,2,\ldots,q\}$, we have a gate $g_{i,j}$ that is defined as follows.
\begin{itemize}
\item If $j=1$, then $g_{i,j} = \sum^i_{t=1} g_{M_i}$ for all $i\in\{1,2,\ldots,\ell\}$.
\item If $i=1$ and $j>1$, then $g_{i,j} = 0$.
\item If $i>1$ and $j>1$, then $g_{i,j} = g_{i-1,j} + g_{i-1,j-1}\cdot g_{M_i}$.
\end{itemize}
The output of the arithmetic circuit is given by $g_{\ell,q}$.

To see that the circuit above encodes {\sf POL}, we claim that for all $i\in\{1,2,\ldots,\ell\}$ and $j\in\{1,2,\ldots,q\}$, it holds that
\[\displaystyle{g_{i,j} = \sum_{{\cal S}'\subseteq\{M_1,M_2,\ldots,M_i\} \atop \mathrm{s.t.}~|{\cal S'}|=j}\prod_{M\in{\cal S}'}\prod_{u\in M}x_u}.\]
The proof is by induction. In the basis, where $i=1$ or $j=1$, the claim clearly holds. Now, suppose that the claim holds for $i-1\geq 1$, and let us prove it for $i$. Then, by the inductive hypothesis,
\[\begin{array}{ll}
g_{i,j} & = g_{i-1,j} + g_{i-1,j-1}\cdot g_{M_i}\\
& = \displaystyle{\sum_{{\cal S}'\subseteq\{M_1,M_2,\ldots,M_{i-1}\} \atop \mathrm{s.t.}~|{\cal S'}|=j}\prod_{M\in{\cal S}'}\prod_{u\in M}x_u + \left(\sum_{{\cal S}'\subseteq\{M_1,M_2,\ldots,M_{i-1}\} \atop \mathrm{s.t.}~|{\cal S'}|=j-1}\prod_{M\in{\cal S}'}\prod_{u\in M}x_u\right)\cdot g_{M_i}}\\
& = \displaystyle{\sum_{{\cal S}'\subseteq\{M_1,M_2,\ldots,M_i\} \atop \mathrm{s.t.}~|{\cal S'}|=j}\prod_{M\in{\cal S}'}\prod_{u\in M}x_u}.
\end{array}\]
This completes the proof.
\end{proof}

In light of Theorems~\ref{thm:multiPackW1hard} and \ref{thm:monomW1hard}, the reader might wonder whether \multisetPackingR\ and the special case of \monomDetR\ where $r$ is polynomially bounded by the input size are at least in \XP. However, this question has already been resolved positively---the $2^{O((k/r)\log r)}\cdot n^{\OO(1)}$-time algorithms by Abasi et al.~\cite{DBLP:conf/mfcs/AbasiBGH14} and Gabizon et al.~\cite{DBLP:conf/esa/GabizonLP15} imply that this containment holds. 
\section{Conclusion}\label{sec:conclusion}

In this paper, we considered four problems, \diPathR, {\sc Undirected $r$-Simple $k$-Path},  \setPackingR, and \monomDetR, parameterized by $k/r.$ We proved that \diPathR, {\sc Undirected $r$-Simple $k$-Path},  and \setPackingR\ are \FPT, but \monomDetR\ is \paraH. In particular, we obtained a $2^{\OO((k/r)^2\log(k/r))}\cdot (n+\log k)^{\OO(1)}$-time algorithm for {\sc Directed $r$-Simple $k$-Path} and a $2^{\OO(k/r)}\cdot (n+\log k)^{\OO(1)}$-time algorithm for {\sc Undirected $r$-Simple $k$-Path}. 
Our work also resolved an open problem posed by Gabizon et al.~concerning the design of polynomial kernels for problems with relaxed disjointness constraints whose size becomes smaller as the relaxation parameter becomes~larger.

Let us conclude our paper with a couple of open problems. First, it would interesting to characterize input polynomials $P$ for which \monomDetR\ becomes \FPT\ or, at least, find  non-trivial sufficient conditions for $P$ such that  the restricted \monomDetR\ is \FPT\ and both \diPathR\ and \setPackingR\ can be easily reduced to it. Secondly, we would like to point out that the existence of a single-exponential \FPT\ algorithm for \diPathR\ remains an open problem. The question of the existence of a deterministic   $2^{\OO((n/d)\log d)}$-time algorithm for  {\sc Degree-Bounded Spanning Tree},  which we did not consider in this study, is also open.

In general, it would be interesting to study the parameterized complexity of other problems with relaxed disjointness constraints parameterized by $k/r$.  Indeed, we believe that much remains to be explored in the realm of problems with relaxed disjointness constraints. Such problems can enable to obtain substantially (sometimes super-exponentially) better solutions at the expense of allowing repetitions, sometimes with the great advantage of a time complexity that diminishes surprisingly fast as $r$ increases.


\appendix
\section{Pseudocode of the Algorithm}\label{sec:pseudocode}

Given that our algorithm for {\sc Undirected $r$-Simple $k$-Path} is optimal under the ETH, we present its pseudocode (in Algorithm \ref{undirectedPseuodo}) in case it is to be implemented. The pseudocode uses the algorithm in Lemma \ref{lem:tw2CompDP} as a black box. The precise details of the implementation of this black box are explicitly given in the beginning of the proof of Lemma \ref{lem:tw2CompDP}.

\noindent\fbox{\begin{algorithm}[H]
\caption{Algorithm for {\sc Undirected $r$-Simple $k$-Path}.}\label{undirectedPseuodo}
\eIf{$r\leq\sqrt{k}$}{
	Use Algorithm \ref{undirectedPseuodo1} to solve the input instance\;}{
	Use Algorithm \ref{undirectedPseuodo2} to solve the input instance\;}
\end{algorithm}}

\noindent\fbox{\begin{algorithm}[H]
\caption{Algorithm for {\sc Undirected $r$-Simple $k$-Path}: Case 1 ($r\leq \sqrt{k}$).}\label{undirectedPseuodo1}
	\For{$\mathsf{col}\in{\cal F}$, $\overline{\bf d}\in {\cal D}_{k,r}$}{
	Color $G$ by $\mathsf{col}$\;
    Allocate $\mathsf{N}$ with an entry $[v,\overline{\bf d}',C]$ for all $v\in V(G)$, $\overline{\bf d}'=(d'_1,\ldots,d'_{\mathsf{b}(k/r)})\in{\cal D}_{r,k}$ such that $d_i'\in\{0,\ldots,d_i\}$ for all $i\in\{1,\ldots,\mathsf{b}(k/r)\}$, and $C\subseteq\{1,\ldots,\mathsf{b}(k/r)\}$\;
	Initialize all $\mathsf{N}[v,\overline{\bf d}',C]$ where $\sum_{i=1}^{\mathsf{b}(k/r)}d'_i\leq 1$ as follows. If $d_{\mathsf{col}(v)}'\neq 1$, then $\mathsf{N}[v,\overline{\bf d}',C]=-\infty$. Otherwise, $\mathsf{N}[v,\overline{\bf d}',C]$ is the maximum of $0$ and the output of algorithm in Lemma \ref{lem:tw2CompDP} with input $(G,C,v,\overline{\bf d})$\;
		\For{$\mathsf{N}[v,\overline{\bf d}',C]$ in non-decreasing order on $\sum_{i=1}^{\mathsf{b}(k/r)}d'_i\geq 2$}{
			\eIf{$d'_{\mathsf{col(v)}}=0$}{		
				$\mathsf{N}[v,\overline{\bf d}',C]=-\infty$\;}
				{Let $\overline{\bf d}''=(d''_1,\ldots,d''_{\mathsf{b}(k/r)})$ where $d''_{\mathsf{col(v)}}=d'_{\mathsf{col(v)}}-1$ and $d''_i=d'_i$ for all $i\in\{1,\ldots,\mathsf{b}(k/r)\}\setminus\{\mathsf{col}(v)\}$\;
				\For{$C'\subseteq C$}{
					Let $A_{C'}$ be the output of algorithm in Lemma \ref{lem:tw2CompDP} with input $(G,C',v,\overline{\bf d})$\;
				}
				$\displaystyle{\mathsf{N}[v,\overline{\bf d}',C]=\max_{u:\{u,v\}\in E(G)}\left(\max\left\{1+\mathsf{N}[u,\overline{\bf d}'',C],\max_{C'\subseteq C}(A_{C'}+1+\mathsf{N}[u,\overline{\bf d}'',C\setminus C'])\right\}\right)}$\;
			}
		}
		\For{$\mathsf{N}[v,\overline{\bf d}',C]$ with $\overline{\bf d}'=\overline{\bf d}$}{
			\If{$\mathsf{N}[v,\overline{\bf d}',C]\geq k-1$}{
				\KwRet "Yes-instance"\;
			}
		}
	}
	\KwRet "No-instance''\;
\end{algorithm}}

\noindent\fbox{\begin{algorithm}[H]
\caption{Algorithm for {\sc Undirected $r$-Simple $k$-Path}: Case 2 ($r>\sqrt{k}$).}\label{undirectedPseuodo2}
	Compute a maximal matching $M$ in $G$\;
	Let $U$ be the set of endpoints of edges in $M$\;
	\eIf{$|U|>3k/r$}{
		\KwRet "Yes-instance"}
		{\For{$U'\subseteq U$, $\mathsf{col}\in{\cal F}$, $\overline{\bf d}\in {\cal D}_{k,r}$ with $d_{\mathsf{col(v)}}\geq 1$ for all $v\in U'$\label{step:loop}}
			{Color $G$ by $\mathsf{col}$\;
			Let $G'=G-X$ for $X=(U\setminus U')\cup\{v\in V(G)\setminus U: $ there exists a vertex in $U'$ with the same color as $v\}$\;	
			Allocate $\mathsf{N}$ with an entry $[v,\overline{\bf d}']$ for all $v\in V(G')$, and $\overline{\bf d}'=(d'_1,\ldots,d'_{\mathsf{b}(k/r)})\in{\cal D}_{r,k}$ such that $d_i'\in\{0,\ldots,d_i\}$ for all $i\in\{1,\ldots,\mathsf{b}(k/r)\}$\;
Initialize all $\mathsf{N}[v,\overline{\bf d}']$ where $\sum_{i=1}^{\mathsf{b}(k/r)}d'_i\leq 1$ as follows. If $d_{\mathsf{col}(v)}'\neq 1$, then $\mathsf{N}[v,\overline{\bf d}']=\mathsf{false}$. Otherwise, $\mathsf{N}[v,\overline{\bf d}']=\mathsf{true}$\;
		\For{$\mathsf{N}[v,\overline{\bf d}']$ in increasing order on $\sum_{i=1}^{\mathsf{b}(k/r)}d'_i\geq 2$}{
			\eIf{$d'_{\mathsf{col(v)}}=0$}{
				$\mathsf{N}[v,\overline{\bf d}']=\mathsf{false}$\;}
				{Let $\overline{\bf d}''=(d''_1,\ldots,d''_{\mathsf{b}(k/r)})$ where $d''_{\mathsf{col(v)}}=d'_{\mathsf{col(v)}}-1$ and $d''_i=d'_i$ for all $i\in\{1,\ldots,\mathsf{b}(k/r)\}\setminus\{\mathsf{col}(v)\}$\;
				$\displaystyle{\mathsf{N}[v,\overline{\bf d}']=\mathsf{OR}_{u:\{u,v\}\in E(G')}(1+\mathsf{N}[u,\overline{\bf d}''])}$\;
		}}
			\If{$\mathsf{OR}_{v\in V(G')}\mathsf{N}[v,\overline{\bf d}]=\mathsf{true}$}{		
		Let $c_v=r-d_{\mathsf{col(v)}}$ for all $v\in V(G')$,  $F=\sum_{v\in V(G')}c_v$ and $\ell=2(k-\sum_{i=1}^{\mathsf{b}(k/r)}d_i)$\;
		Construct a flow network $N$ with source $s$ and sink $t$ as follows.  For all $v\in V(G')$, insert (into $N$) two new vertices, $v_1$ and $v_2$, the arc $(v_1,v_2)$ of infinite (upper) capacity and cost $1$, and the arcs $(s,v_1)$ and $(v_2,t)$ both of (upper) capacity $c_v$ and cost $0$.  For all $\{u,v\}\in E(G')$, insert (into $N$) the arcs $(u_1,v_2)$ and $(v_1,u_2)$ both of infinite (upper) capacity and cost $0$\;
		Compute the minimum cost $C$ required to send $F$ units of (integral) flow from $s$ to $t$ in $N$ in polynomial time\;
		\If{$C\leq F-\ell$}{
			\KwRet "Yes-instance"\;}					
		}}	
		\KwRet "No-instance"\;
	}
\end{algorithm}}

\end{document}